\documentclass[sigconf,nonacm]{acmart}

\AtBeginDocument{%
  }

\setcopyright{none}
\renewcommand\footnotetextcopyrightpermission[1]{}
\newcommand{\etal}{\textit{et al.}}
\newcommand{\sysname}{Deckstacking Resistance\xspace}
\newcommand{\sysnameAc}{DS Resistance\xspace}
\newcommand{\deckstacking}{deckstacking\xspace}
\newcommand{\deckstackingAc}{DS\xspace}
\newcommand{\sysnameTool}{NODS\xspace}

\usepackage{xcolor}
\usepackage{todonotes}
\usepackage{comment}
\usepackage{hyperref}

\definecolor{MPG-dark-green}{HTML}{016C66}
\definecolor{MPG-dark-blue}{HTML}{01445F}
\definecolor{MPG-orange}{HTML}{D06516}
\definecolor{MPG-dark-gray}{HTML}{49494B}
\definecolor{MPG-light-green}{HTML}{96BE2D}
\definecolor{MPG-light-blue}{HTML}{01819D}
\definecolor{MPG-light-gray}{HTML}{EEEEEE}

\newcommand{\subheading}[1]{\vspace{0.5 em}\noindent \textbf{#1}}

\graphicspath{ {./images/} }

\usepackage{listings}
\usepackage{xcolor}

\usepackage{graphicx}
\usepackage{subcaption} %

\usepackage[shortlabels]{enumitem} %
\usepackage{mathpartir} %
\usepackage{listings}
\usepackage{mathtools} %
\usepackage{stmaryrd} %
\usepackage{nicefrac} %
\usepackage{multirow}
\usepackage{pifont}
\usepackage{mathrsfs} %
\usepackage{relsize}
\usepackage{booktabs} 
\usepackage{wrapfig}

\usepackage{algorithm} %
\usepackage[noend]{algorithmic}

\usepackage{xifthen}
\newcommand{\ifempty}[3]{%
  \ifthenelse{\isempty{#1}}{#2}{#3}%
}

\newcommand{\codefont}{\fontsize{9}{9}\selectfont}

\definecolor{Magenta}{HTML}{FF00FF}
\definecolor{ForestGreen}{HTML}{028A0F}

\def\tokColor{\color{Magenta}}
\def\txColor{\color{MidnightBlue}}
\definecolor{MidnightBlue}{HTML}{006795}
\definecolor{Plum}{HTML}{92268F}

\usepackage{xspace}

\newcommand{\procF}{\txcode{f}}
\newcommand{\procG}{\txcode{g}}

\newcommand{\code}[1]{{\tt\codefont{#1}}}

\newcommand{\txcode}[1]{{\text{\tt\codefont{\txColor{#1}}}}}

\def\txColor{\color{MidnightBlue}}

\newcommand{\txFmt}[1]{{\txColor{\sf #1}}}

\newcommand{\tx}[2][]{\txFmt{#2}_{\txColor{#1}}}
\newcommand{\txT}[1][]{\tx[#1]{tx}} %
\newcommand{\txB}[1][]{\tx[#1]{\tau}} %
\newcommand{\TxU}[1][]{\tx[#1]{\mathbb{X}}} %

\def\tokColor{\color{Magenta}}
\newcommand{\tokFmt}[1]{{\tokColor{\tt #1}}}
\newcommand{\tok}[2][]{\tokFmt{#2}_{\tokColor{#1}}\xspace}

\newcommand{\tokT}[1][]{\tok[{#1}]{T}}    %

\def\contrColor{\color{MidnightBlue}}
\newcommand{\contrFmt}[1]{{\contrColor{\code{#1}}}}
\newcommand{\contrC}[1][]{\mathord{\contrFmt{C}_{\contrColor{#1}}}}

\lstdefinelanguage{txscript}{
    basicstyle=\tiny,
	commentstyle=\color{Gray},
	morecomment=[l]{//},
	morecomment=[s]{/*}{*/},
	classoffset=0,
        escapechar=\$,
	morekeywords={if,then,else,contract,emit,skip,require,fun,return,for},
	keywordstyle=\color{Plum}\bfseries,
	classoffset=1,
	morekeywords={&&,||,!},
	keywordstyle=\color{Plum},
	classoffset=2,
	morekeywords={},
	keywordstyle=\color{MidnightBlue}\bfseries,
}

\lstdefinelanguage{txscriptbig}{
	commentstyle=\color{Gray},
	morecomment=[l]{//},
	morecomment=[s]{/*}{*/},
	classoffset=0,
        escapechar=\$,
	morekeywords={if,then,else,contract,emit,skip,require,fun,return,for},
	keywordstyle=\color{Plum}\bfseries,
	classoffset=1,
	morekeywords={&&,||,!},
	keywordstyle=\color{Plum},
	classoffset=2,
	morekeywords={},
	keywordstyle=\color{MidnightBlue}\bfseries,
}

\newcommand{\WmvA}[1][]{W_{#1}}
\newcommand{\WmvAi}[1][]{W'_{#1}}

\newcommand{\WalU}[1][]{\mathbb{W}_{#1}} %

\newcommand{\waldistrarrow}[1]{\approx_{\$}}

\newcommand{\ContrU}[1][]{\mathbb{C}_{#1}} %

\newlength\replength
\newcommand\repfrac{.1}

\newcommand\rulewidth{.6pt}
\newcommand\tdashfill[1][\repfrac]{\cleaders\hbox to \replength{%
  \smash{\rule[\arraystretch\ht\strutbox]{\repfrac\replength}{\rulewidth}}}\hfill}

\newcommand\tdotfill[1][\repfrac]{\cleaders\hbox to \replength{%
  \smash{\raisebox{\arraystretch\dimexpr\ht\strutbox-.1ex\relax}{.}}}\hfill}

\newcommand{\true}{\mathit{true}}
\newcommand{\false}{\mathit{false}}

\newtheorem*{remark}{Remark}
\theoremstyle{plain}
\newtheorem{assumption}{Assumption}

\newcommand{\config}{\Gamma}
\newcommand{\txv}{\textit{tx}}

\newcommand{\contract}{\contrC}
\newcommand{\cstrat}{\Sigma}

\newcommand{\astrat}{\textcolor{MPG-orange}{A}}
\newcommand{\sstrat}{\textcolor{MPG-light-blue}{S}}

\newcommand{\honusers}{\textcolor{MPG-dark-green}{\textit{Hon}}}
\newcommand{\ustrat}[3]{\cstrat^{#1}_{#3}}
\newcommand{\sustrat}[1]{\cstrat^{#1}}

\newcommand{\combine}[2]{(#1,#2)}
\newcommand{\semantics}{\vdash}

\newcommand{\mempool}{\mathcal{P}}
\newcommand{\bblock}{\mathcal{B}}
\newcommand{\isSim}[1]{\Pi(#1)}

\newcommand{\run}{R}

\newcommand{\symrun}{R^{\sstrat}}
\newcommand{\arun}{\run^{\astrat}}
\newcommand{\srun}{\symrun}

\newcommand{\conf}{\Gamma}
\newcommand{\trans}[2][]{\xrightarrow[#1]{#2}}

\newcommand{\obs}[0]{xs}
\newcommand{\listtype}[1]{\mathscr{L}(#1)}

\newcommand{\eqth}{\varepsilon}

\newcommand{\mdot}{.~}

\newcommand{\sseq}{\mempool}
\newcommand{\sender}[1]{\textit{sender}(#1)}

\newcommand{\getblocknumber}{\code{now}}
\newcommand{\blocknumber}{b}
\newcommand{\contracts}{\mathcal{C}}

\newcommand{\secD}[1][\eqth]{\vDash_{#1}}
\newcommand{\secrets}{\mathcal{N}}

\newcommand{\contrCall}[1]{\contrC.#1}
\newcommand{\contrS}[1][]{\mathord{\contrFmt{S}_{\contrColor{#1}}}}

\newcommand{\DSone}{\ensuremath{D_{\text{full}}}}
\newcommand{\DStwo}{\ensuremath{D_{\text{avail}}}}
\newcommand{\DSthree}{\ensuremath{D_{\text{fixes}}}}

\newcommand{\tick}[1]{\overline{#1}}

\newcommand{\cstate}[2]{{#1}.\contrS_{#2}}
\newcommand{\cblocknumber}[1]{\delta(#1)}
\newcommand{\cvars}[1]{\mathcal{X}_{#1}}
\newcommand{\ctx}[1]{\txv_{#1}}
\newcommand{\equivfor}[1]{\approx_{#1}}
\newcommand{\varset}{X}

\newcommand{\varfixpoint}{X^*}
\newcommand{\criticalvars}{X_\textit{Ev}}
\newcommand{\honuser}{\textcolor{MPG-dark-green}{\textit{u}}}
\newcommand{\ssteps}[2]{\xrightarrow[#1]{#2}^*}
\newcommand{\txlist}{\vec{\txv}}
\newcommand{\obslist}{\vec{\obs}}

\newcommand{\perm}{\pi}
\newcommand{\equalupto}[1]{=_{\backslash {#1}}}
\newcommand{\equaluptotx}[1]{\equalupto{\{ #1 \}}}
\newcommand{\itattacker}[4]{\textit{step}(#1, #2, #3, #4)}
\newcommand{\itattackerN}{\textit{step}}
\newcommand{\getAttackrun}[5][\confPred]{\textit{aRun}^{#1}(#2, #3, #4, #5)}
\newcommand{\getAttackrunN}{\textit{aRun}}
\newcommand{\bblocks}{\vec{\bblock}}
\newcommand{\blocks}{\mathcal{L}}

\newcommand{\ablockspre}{\blocks^{\astrat}_0}

\newcommand{\runforblocks}[2]{}
\newcommand{\initconf}{\conf_{\textit{init}}}
\newcommand{\arunpre}{\arun_{\textit{pre}}}
\newcommand{\arunpost}{\arun_{\textit{post}}}
\newcommand{\arunround}[1]{\arun_{#1}}
\newcommand{\srunpre}{\srun_{\textit{pre}}}
\newcommand{\srunpost}{\srun_{\textit{post}}}
\newcommand{\srunround}[1]{\srun_{#1}}
\newcommand{\aconf}{\conf_{\astrat}}
\newcommand{\sconf}{\conf_{\sstrat}}
\newcommand{\filter}[2]{{#1} \downarrow_{#2}}

\newcommand{\simsim}{\sim_{\sstrat}}
\newcommand{\gensstrat}[1][\confPred]{\sstrat^{#1}_{\astrat}}

\newcommand{\obsone}[2]{\langle #1: #2\rangle}
\newcommand{\eventvar}{x}

\newcommand{\obsc}[1]{\obsone{\eventvar}{#1}}

\newcommand{\blockvar}{\text{\lstinline|block.number|}}

\newcommand{\singlesubst}[2]{{#1} \rightarrow {#2}}
\newcommand{\nil}{\epsilon}
\newcommand{\applysubst}[2]{{#1}[{#2}]}
\newcommand{\applysubstblock}[2]{{#1}[{#2}]_{\bblock}}
\newcommand{\multisubst}[1]{#1}
\newcommand{\simsubst}[2]{#1 \rightarrow #2}

\newcommand{\getTerms}[1]{#1 \downarrow^{\mathcal{T}}}
\newcommand{\sterms}{\mathcal{T}}

\newcommand{\concat}[2]{{#1} \cdot {#2}}

\newcommand{\length}[1]{|#1|}
\newcommand{\getBlocks}{\textsf{blocks}} %

\newcommand{\lastBlockNumber}{\delta} %

\newcommand{\rstrat}{\Sigma_r}
\newcommand{\ablock}[1]{\bblock^{\astrat}_{0,#1}}

\newcommand{\amempool}[1]{\mempool^{\astrat}_{#1}}

\newcommand{\roundstrat}[1]{#1^{\rightarrow}}
\newcommand{\getTxs}{\textsf{TXs}}

\newcommand{\memcount}[1][]{\mempool_{#1}}
\newcommand{\acblock}[1][]{\bblock^{\astrat}_{#1}}

\newcommand{\hobs}{\obs_{\textsf{m}}}
\newcommand{\aobs}{\obs_{\astrat}}
\newcommand{\round}{\rho}

\newcommand{\htxv}{\txv_{\honuser}}
\newcommand{\atxv}{\txv_{\astrat}}

\newcommand{\obscV}[2]{\langle #1: #2 \rangle}

\newcommand{\tsMap}{\text{\lstinline|timestamps|}}

\newcommand{\token}{\texttt{T}}

\newcommand{\testFunc}{\text{\lstinline|test|}}
\newcommand{\triggerFunc}{\text{\lstinline|trigger|}}
\newcommand{\triggeredEvent}{\text{\lstinline|Triggered|}}
\newcommand{\shaFunc}{\text{\lstinline|sha256|}}

\newcommand{\intConds}{interaction conditions}
\newcommand{\intCond}{interaction condition}

\newcommand{\symExRel}[1]{\textit{SE}^{\indexvar}_{#1}} %
\newcommand{\symExReli}[2]{\textit{SE}^{#2}_{#1}} %
\newcommand{\seSub}{\theta} %
\newcommand{\seCond}{\varphi} %
\newcommand{\seRes}[2]{\langle #1 , #2 \rangle} %
\newcommand{\seAss}{\mathcal{V}} %
\newcommand{\confToAss}[2]{\seAss^{\indexvar}_{\contract}(#1, #2)} %
\newcommand{\confToAssi}[3]{\seAss^{#3}_{\contract}(#1, #2)} %
\newcommand{\fargs}{\overrightarrow{v}} %
\newcommand{\assComp}[2]{#1 \circ {#2}} %
\newcommand{\seExp}{e} %
\newcommand{\evAss}[2]{#1(#2)} %

\newcommand{\expset}{\mathcal{E}^{\indexvar}_{\funvar}} %
\newcommand{\expseti}[2]{\mathcal{E}_{#1}^{#2}} %
\newcommand{\seCvars}[1]{\mathcal{X}^{\indexvar}_{#1,\funvar}} %
\newcommand{\seCvarsi}[3]{\mathcal{X}^{#3}_{#1, #2}} %
\newcommand{\satis}[2]{#1 \models #2} %
\newcommand{\ddeps}[2]{\textit{DDeps}^{\indexvar}_{#1}(#2)} %
\newcommand{\cdeps}[2]{\textit{CDeps}^{\indexvar}_{#1}(#2)} %
\newcommand{\ddepsi}[3]{\textit{DDeps}^{#3}(#2 \xleftarrow{\scalebox{0.6}{#1}}\cdot)} %
\newcommand{\cdepsi}[3]{\textit{CDeps}^{#3}_{#1}(#2)} %
\newcommand{\expvars}[1]{\textit{Vars}(#1)} %
\newcommand{\expvarsi}[2]{\textit{Vars}^{#2}(#1)} %
\newcommand{\seCvar}{x} %
\newcommand{\condvars}[1]{\textit{Vars}(#1)} %
\newcommand{\sementails}{\vDash} %
\newcommand{\pathcond}[3]{\textit{PathCond}^{\indexvar}_{#1,#2}(#3)} %
\newcommand{\pathcondi}[4]{\textit{PathCond}^{#4}(#3 \xleftarrow{\scalebox{0.6}{#1}}#2)} %
\newcommand{\nowritecond}[2]{\textit{NoWrite}^{\indexvar}(#2\xleftarrow{\scalebox{0.6}{#1}}\cdot)} %
\newcommand{\nowritecondi}[3]{\textit{NoWrite}^{#3}(#2 \xleftarrow{\scalebox{0.6}{#1}}\cdot)} %
\newcommand{\nowriteset}{\overline{W}} %
\newcommand{\roundPred}{P_{r}} %
\newcommand{\roundinv}[3]{\textit{RoundInv}(#2, #3)} %
\newcommand{\execinv}[3]{\textit{Inv}^{\indexvar}(#1, #2, #3)} %
\newcommand{\roundinvi}[4]{\textit{RoundInv}^{#4}(#2, #3)} %
\newcommand{\execinvi}[4]{\textit{Inv}^{#4}(#1, #2, #3)} %
\newcommand{\inv}{\phi_i} %
\newcommand{\invpre}{\phi_{p}} %
\newcommand{\predIsSender}[1]{\sendervari{\userindexvar} = {#1}}
\newcommand{\predIsNoSender}[1]{\sendervari{\nuserindexvar} \neq {#1}}
\newcommand{\roundvar}{r} %
\newcommand{\procTick}{\txcode{tick}}
\newcommand{\writeseti}[5]{\textit{Writes}^{}_{}(#2, #3, #4)} %
\newcommand{\writeset}[3]{\writeseti{}{#1}{#2}{#3}{}}

\newcommand{\frCondPre}{\Phi^{\textit{fr}}_\textit{pre}} %
\newcommand{\frCondInv}{\Phi^{\textit{fr}}_\textit{inv}} %
\newcommand{\brCondPre}{\Phi^{\textit{br}}_\textit{pre}} %
\newcommand{\brCondInv}{\Phi^{\textit{br}}_\textit{inv}} %
\newcommand{\frWriteCond}[4]{\textit{FR}^{\honuser}(#3 \xleftarrow{\scalebox{0.6}{#1}} #2)}
\newcommand{\invaccumulator}{(\invpre, \inv)}
\newcommand{\frAssSet}[3]{\textbf{FR}^{\honuser}_{#1}(#2, #3)}
\newcommand{\frAssSetPre}[3]{\textbf{pFR}^{\honuser}_{#1}(#2, #3)}

\newcommand{\brAssWrite}[3]{\textbf{BR}^{\honuser}_{#1,#2}(#3)}
\newcommand{\brAssWritePre}[3]{\textbf{pBR}^{\honuser}_{#1,#2}(#3)}

\newcommand{\brWriteCond}[4]{\textit{BR}^{\honuser}({#2 {\perp} [#3 \xleftarrow{\scalebox{0.6}{#1}}]})} %

\newcommand{\cFuncs}[1]{\mathcal{F}_{#1}} %
\newcommand{\blocknumvar}{\texttt{b}} %
\newcommand{\sendervar}{\text{\lstinline|sender|}} %
\newcommand{\sendervari}[1]{\sendervar^{#1}}
\newcommand{\ownervar}{\text{\lstinline|owner|}} %

\newcommand{\dSet}{D} %
\newcommand{\cSet}{C} %
\newcommand{\vSet}{Z} %
\newcommand{\doneflag}{\textit{done}} %
\newcommand{\seSubst}[2]{[#1 \rightarrow #2]} %
\newcommand{\preCond}{\varphi_{\textit{pre}}}

\newcommand{\gvars}[1]{\mathcal{G}_{#1}} %
\newcommand{\lvars}[2]{\mathcal{L}_{#1}^{#2}} %
\newcommand{\indexvar}{i} %
\newcommand{\userindexvar}{\honuser} %
\newcommand{\nuserindexvar}{\neg\honuser} %
\newcommand{\funvar}{\procF}
\newcommand{\bigforall}[1]{\mathop{\mathlarger{\mathlarger{\mathlarger{\forall}}}}_{#1}}
\newcommand{\setforall}[2]{\bigforall{#1 \in #2}}
\newcommand{\setforallshort}[1]{\bigforall{#1}\mdot}
\newcommand{\locvar}{\ell} %
\newcommand{\condVar}{\phi}
\newcommand{\confPred}{\condVar}
\newcommand{\getRoundInv}{\textsf{get}_{\textit{RI}}}
\newcommand{\mergeRoundInv}{\textsf{merge}_{\textit{RI}}}
\newcommand{\select}{\textbf{select}}

\newcommand{\depclosed}[1]{\textit{closed}^{\textit{deps}}(#1)}
\newcommand{\WFset}{W_{\honuser}}
\newcommand{\NWFset}{\varfixpoint \setminus \WFset}
\newcommand{\WGset}{W_{\astrat}}
\newcommand{\NWGset}{\varfixpoint \setminus \WGset}
\newcommand{\NWcap}{\textit{NW}}

\newcommand{\relVars}[1]{\textit{RelVars}(#1)}

\newcommand{\frCond}[3]{\textit{FR}^{\honuser}(#1, #2, #3)}
\newcommand{\frCondR}[3]{\textit{FR}^{\honuser}_{\rho}(#1, #2, #3)}
\newcommand{\brCond}[3]{\textit{BR}^{\honuser}(#1, #2, #3)}
\newcommand{\brCondR}[3]{\textit{BR}^{\honuser}_{\rho}(#1, #2, #3)}

\newcommand{\cmark}{\ding{51}}%
\newcommand{\xmark}{\ding{55}}%
\newcommand{\lmark}{$\lightning$}%

\newcommand{\lastconfig}[1]{\config_{#1}}

\definecolor{verylightgray}{rgb}{.97,.97,.97}

\lstdefinelanguage{Solidity}{
	keywords=[1]{anonymous, assembly, assert, balance, break, call, callcode, case, catch, class, constant, continue, constructor, contract, debugger, default, delegatecall, delete, do, else, emit, event, experimental, export, external, false, finally, for, function, gas, if, implements, import, in, indexed, instanceof, interface, internal, is, length, library, log0, log1, log2, log3, log4, memory, modifier, new, payable, pragma, private, protected, public, pure, push, require, return, returns, revert, selfdestruct, send, solidity, storage, struct, suicide, super, switch, then, this, throw, transfer, true, try, typeof, using, value, view, while, with, addmod, ecrecover, keccak256, mulmod, ripemd160, sha256, sha3}, %
	keywordstyle=[1]\color{blue}\bfseries,
	keywords=[2]{address, bool, byte, bytes, bytes1, bytes2, bytes3, bytes4, bytes5, bytes6, bytes7, bytes8, bytes9, bytes10, bytes11, bytes12, bytes13, bytes14, bytes15, bytes16, bytes17, bytes18, bytes19, bytes20, bytes21, bytes22, bytes23, bytes24, bytes25, bytes26, bytes27, bytes28, bytes29, bytes30, bytes31, bytes32, enum, int, int8, int16, int24, int32, int40, int48, int56, int64, int72, int80, int88, int96, int104, int112, int120, int128, int136, int144, int152, int160, int168, int176, int184, int192, int200, int208, int216, int224, int232, int240, int248, int256, mapping, string, uint, uint8, uint16, uint24, uint32, uint40, uint48, uint56, uint64, uint72, uint80, uint88, uint96, uint104, uint112, uint120, uint128, uint136, uint144, uint152, uint160, uint168, uint176, uint184, uint192, uint200, uint208, uint216, uint224, uint232, uint240, uint248, uint256, var, void, ether, finney, szabo, wei, days, hours, minutes, seconds, weeks, years},	%
	keywordstyle=[2]\color{teal}\bfseries,
	keywords=[3]{block, blockhash, coinbase, difficulty, gaslimit, number, timestamp, round, msg, data, gas, sender, sig, value, now, tx, gasprice, origin},	%
	keywordstyle=[3]\color{violet}\bfseries,
	identifierstyle=\color{black},
	sensitive=true,
	comment=[l]{//},
	morecomment=[s]{/*}{*/},
	commentstyle=\color{gray}\ttfamily,
	stringstyle=\color{red}\ttfamily,
	morestring=[b]',
	morestring=[b]",
    mathescape= true,
    frame=single   
 }

\begin{document}

\title{On Identifying Sound Conditions for Frontrunning Resistance}

\author{Sebastian Holler}
\email{sebastian.holler@mpi-sp.org}
\affiliation{%
  \institution{MPI-SP}
  \city{Bochum}
  \country{Germany}
}

\author{Anna Piscitelli}
\email{anna.piscitelli@rub.de}
\affiliation{%
  \institution{Ruhr University Bochum}
  \city{Bochum}
  \country{Germany}
}

\author{Jannik Albrecht}
\email{jannik.albrecht@ruhr-uni-bochum.de}
\affiliation{%
  \institution{Ruhr University Bochum}
  \city{Bochum}
  \country{Germany}
}

\author{Stephan D{\"u}bler}
\email{stephan.duebler@mpi-sp.org}
\affiliation{%
  \institution{MPI-SP}
  \city{Bochum}
  \country{Germany}
}

\author{Ghassan Karame}
\email{ghassan@karame.org}
\affiliation{%
  \institution{Ruhr University Bochum}
  \city{Bochum}
  \country{Germany}
}

\author{Clara Schneidewind}
\email{clara.schneidewind@mpi-sp.org}
\affiliation{%
  \institution{MPI-SP}
  \city{Bochum}
  \country{Germany}
}

\begin{abstract}

Blockchains enable decentralized applications through smart contra\-cts---interactive programs executed 
through consensus.
However, the inherently asynchronous nature of blockchain transaction ordering introduces a class of vulnerabilities known as \emph{frontrunning attacks}, which have caused millions of dollars in losses in major blockchains, such as Ethereum.
Frontrunning attacks arise because users interact with smart contracts through transactions, which are added to the blockchain by designated nodes called \emph{miners}. Miners can exploit their ability to reorder, delay, or insert transactions to gain an advantage over honest users, effectively frontrunning them. 

Yet, to date, the field lacks a rigorous definition of what it even means for a contract to resist such attacks. Worse, we show that existing dynamic detection approaches are fundamentally inadequate: in a large-scale study comprising 287 smart contract audits, 55\% of the 393 reported vulnerabilities identified by leading smart contract auditors fall outside the scope of state-of-the-art detection criteria. 
 To address this gap, we propose the first formal definition of frontrunning vulnerability for smart contracts. Our definition captures a key insight: resistance to frontrunning is not an intrinsic property of a contract alone, but depends critically on how honest users interact with it. Grounded in this observation, we develop a sound algorithm for synthesizing secure interaction conditions, alongside a prototype implementation that we apply to audited real-world contracts—revealing previously undiscovered vulnerabilities in two Ethereum contracts.

\end{abstract}

\begin{CCSXML}
<ccs2012>
<concept>
<concept_id>10002978.10002986.10002989</concept_id>
<concept_desc>Security and privacy~Formal security models</concept_desc>
<concept_significance>500</concept_significance>
</concept>
<concept>
<concept_id>10002978.10003006.10003013</concept_id>
<concept_desc>Security and privacy~Distributed systems security</concept_desc>
<concept_significance>500</concept_significance>
</concept>
</ccs2012>
\end{CCSXML}

\ccsdesc[500]{Security and privacy~Formal security models}
\ccsdesc[500]{Security and privacy~Distributed systems security}

\keywords{MEV, Frontrunning resistance, Transaction Order Dependence}    %

\maketitle

\section{Introduction}
Blockchains allow mutually mistrusting entities to jointly maintain a fairly evolving and tamper-resistant ledger of transactions. 
To extend the ledger, a designated set of nodes, so-called \emph{miners}, verifies
user transactions from a public \emph{mempool}
and include them in blocks that are later appended to the ledger. This mechanism not only supports financial transactions but also the execution of smart contracts; these are (stateful) programs that can manage the transfer of funds according to some contract code that is invoked on the ledger. 
Smart contracts give rise to a rich environment for deploying decentralized financial applications (often referred to as DeFi), encompassing lending platforms, currency exchanges, or decentralized autonomous organizations.

Although smart contracts enable powerful deployment of such applications, they differ significantly from conventional program execution models, creating opportunities for so-called frontrunning attacks. In particular, since the results of contract invocations only come into effect once the corresponding transaction is appended to the ledger, miners have an unprecedented advantage in inspecting transactions in the mempool and manipulating their execution order~\cite{DBLP:conf/fc/EskandariMC19,daian2019flash,babel2023clockwork,DBLP:conf/fc/McCorryHM18, torres2021frontrunner, DBLP:conf/sp/ZhouQCLG21}.
Many real-world incidents show that miners can easily take advantage of slippage opportunities in currency exchanges in a targeted manner~\cite{babel2023clockwork,zhou2021high} or prevent unwanted user actions on demand by outrunning them with specially-crafted transactions when observing their publication~\cite{DBLP:conf/fc/EskandariMC19,DBLP:conf/uss/BreindenbachDTJ18}. For instance, \mbox{Qin \etal~\cite{DBLP:conf/sp/QinZG22}} report losses of \$540.54M over 32 months (Dec 2018–Aug 2021) solely due to arbitrage attacks. 

Interestingly, despite the prevalence of frontrunning attacks, as of today, there exists no precise formal understanding of when a smart contract is susceptible to such attacks. 
Several works observed a connection between frontrunning attacks and concurrency bugs, resulting in the notion of \emph{transaction order dependence} (short TOD). Intuitively, TOD deems contracts vulnerable whose concurrent execution does not safely commute. 
While TOD is a prerequisite for the existence of frontrunning attacks, almost all realistic smart contracts exhibit benign instances of TOD, making TOD an insufficient criterion for identifying frontrunning vulnerabilities. Several practical works~\cite{zhang2024nyx,bose2022sailfish} tried to narrow down TOD to identify exploitable instances, but all such attempts are of purely heuristic nature and turn out to be neither sound nor complete.

Complementing these attempts towards the static identification of frontrunning vulnerabilities, there exist monitoring approaches that aim to dynamically detect frontrunning attacks. 
Here, the commonly used criterion to identify frontrunning attacks is the notion of \emph{maximal-extractable value}~\cite{daian2019flash, babel2023clockwork} (short \emph{MEV}). MEV aims to quantify the maximum financial profit that a frontrunning adversary could make when being confronted with a specific set of user transactions to be scheduled. 
If an attacker has a positive MEV, this indicates an attacker's ability to carry out a profitable frontrunning attack.
This approach, thus, has the advantage of identifying concrete attacks, but disregards a large class of frontrunning attacks that do not yield an immediate financial gain for the attacker.
Indeed, in a large-scale analysis of 287 smart contract audits conducted by top auditing companies, we found that 55\% of the 393 %
reported vulnerabilities do not yield an immediate financial gain for the attacker and, thus, cannot be captured by MEV (among those, 61 (28\%) are critical or major vulnerabilities).
\emph{These findings reveal a significant gap in 
current approaches used to characterize a contract's susceptibility to frontrunning attacks.}

\begin{figure}[tp]
    \centering
    \includegraphics[width=0.7\columnwidth]{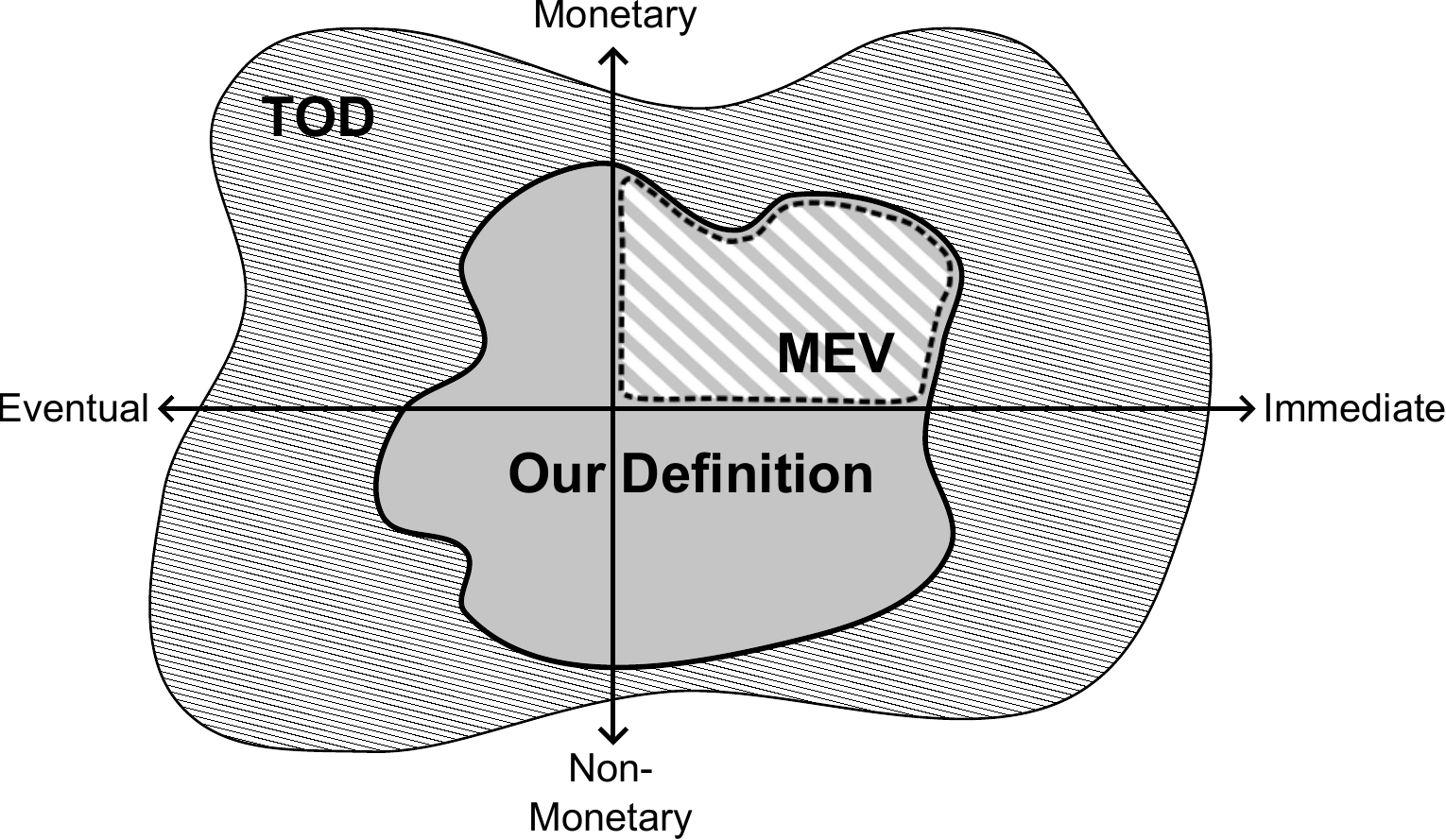}
    \caption{
        Our definition establishes a strong tradeoff between the narrow focus of MEV on monetary and immediate attacks and the too broad notion of TOD.
        The outer dashed area represents possible benign instances, while the inner gray region indicates where frontrunning attacks are possible. 
    }
    \label{fig:MAS_illustration}
\end{figure}

Motivated by these findings, we set forth in this work to develop the first definition that precisely characterizes when a smart contract is robust against frontrunning attacks.
We base our definition on the key observation that frontrunning attacks constitute a malicious interference of an attacker with the way that an honest user interacts with a contract. Consequently, a contract's resistance to frontrunning attacks cannot be solely determined by the contract code alone, but needs to consider the nature of user interactions. 
Following this insight, we propose a simulation-based definition of frontrunning resistance that is formulated w.r.t. a contract and the interaction strategy of an honest user.
Based on this definition, we then develop an algorithm to synthesize secure \emph{interaction conditions} that formulate requirements on the user strategy to provably evade frontrunning attacks.
In summary, we make the following contributions:
\begin{description}[leftmargin = 0.5 cm]
\item[Empirical Analysis of Frontrunning Attacks:] We study a real-world data set of frontrunning vulnerabilities reported in professional smart contract audits and analyze shortcomings of existing notions to capture these attacks.
In particular, we show that existing frameworks cannot capture a significant fraction of reported frontrunning instances by top auditors (cf.~Section~\ref{sec:problem-statement}).
\item[Novel Definition:] 
We introduce a new formal definition that captures frontrunning resistance as an attacker's inability to interfere in a targeted manner with the interactions between honest users and smart contracts (cf.~Section~\ref{sec:formal}). 
This definition addresses the shortcomings of the coarse notion of TOD by formally characterizing those scenarios where an attacker with mining capabilities can exploit TOD to purposefully cause interferences with honest user transactions.
At the same time, the definition is (as opposed to MEV) general enough to also capture such frontrunning attacks that do not result in immediate monetary attacker gains (cf.~Figure~\ref{fig:MAS_illustration}).

\item[Synthesis of Interaction Conditions:] We devise an algorithm to automatically synthesize secure \emph{interaction conditions} based on the code of a smart contract. We show that user strategies that respect those conditions are provably robust against the interference of a frontrunning attacker
 (cf. Section~\ref{section:conditions}).
\item[Tool Implementation and Evaluation:] We implement our approach in a prototype tool and evaluate it on a benchmark of real-world smart contracts (cf. Section~\ref{sec:evaluation}). In the course of the evaluation, we identified two fixes implemented in response to audit reports to remain susceptible to frontrunning attacks.
We responsibly disclosed these findings to the contract owners and the audit companies. 
We additionally disclosed our results to the Enterprise Ethereum Alliance,
resulting in an adjustment of
 their vulnerability classification in the EEA EthTrust Security Levels Specification~\cite{nevile_2023}. 
All the code and data of the prototype is available on GitHub~\cite{deckstacking2026nods}. 
\end{description}

\section{Critical Gaps in Frontrunning Detection} \label{sec:problem-statement}

\subsection{Background}

Frontrunning attacks pose a significant threat to the fairness and security of smart-contract-based applications.
These attacks are enabled due to the specific execution model of smart contracts on public blockchains, which allows adversaries to observe and manipulate pending user transactions before they are finalized on the blockchain.
While the considerations in this paper apply to various blockchain platforms, we focus in the following on Ethereum as it is the most widely used platform for smart contracts.

\subheading{Ethereum Smart Contracts.}
Ethereum smart contracts are written in a high-level language before being compiled into a bytecode representation that is run on the Ethereum blockchain.
For simplicity, all examples in this paper are given in the high-level language Solidity.
Solidity is a Java-like programming language featuring contracts (instead of classes).
We show an example of a Solidity smart contract implementing a \textit{hash puzzle} in Figure~\ref{fig:sc:hash-puzzle}. 
This smart contract is initialized with a \emph{challenge}, i.e., a hash value, and allows accounts to propose a \emph{solution} to this challenge by calling the \lstinline|submit| function. 
When submitting a valid \emph{solution} that is the preimage of the \emph{challenge} for the first time, the calling account (accessed via \lstinline|msg.sender|) is rewarded. 
This is enforced by the \lstinline|require(...)| statement in line \lstinline|5|, which checks that the challenge condition \lstinline|sha256(solution) == challenge| is met and reverts the transaction execution otherwise.

\subheading{Frontrunning Attacks.}
The \lstinline|HashPuzzle| contract (cf.~Figure~\ref{fig:sc:hash-puzzle}) is vulnerable to a frontrunning attack: 
Suppose a user finds the solution to the \emph{challenge} and submits a corresponding \lstinline|submit(solution)| transaction to redeem the reward.
Transactions submitted for inclusion in the blockchain enter the \emph{mempool}. 
Miners collect the transactions of network users and group them into blocks that extend the blockchain. 
Hence, an adversarial miner with access to the mempool learns the \emph{solution} in the honest user transaction before this transaction gets included in the blockchain.
Instead, the miner can craft their own \lstinline|submit(solution)| transaction and place it in front of the honest user transaction when creating a block. 
Thus, the adversary will get rewarded for solving the puzzle while the original user who submitted the \emph{solution} loses their rightful reward.

\begin{figure}[tp]
\scriptsize
    \begin{lstlisting}[basicstyle=\scriptsize\ttfamily, numberstyle=\tiny]
contract HashPuzzle {
  bytes challenge;
  
  function submit(bytes solution) {
    require(sha256(solution) == challenge);
    payable(msg.sender).transfer(this.balance);   } // ...
    \end{lstlisting}
\caption{Hash-puzzle contract vulnerable to frontrunning.}
\label{fig:sc:hash-puzzle}
\end{figure}

\begin{figure}[tp]
\scriptsize
    \begin{lstlisting}[basicstyle=\scriptsize\ttfamily, numberstyle=\tiny]
contract ProposeVoting is Voting {
  mapping (Id => Proposal) proposals;

  function proposeCandidate(Id id, Proposal prop){
    require(proposals[id] == null);
    proposals[id] = prop;   } // ...
\end{lstlisting}
\caption{Voting contract vulnerable to denial-of-service frontrunning attacks.}
\label{fig:sc:voting-dos}
\end{figure}

\subsection{Why Static and Dynamic Frontrunning Detection Falls Short} \label{ps-sota}
Despite the absence of a precise formal definition of frontrunning vulnerabilities, various approaches have been proposed to detect such vulnerabilities in smart contracts, both statically (based on the contract code) and dynamically (monitoring the blockchain and the mempool).
In the following, we analyze the shortcomings of these approaches, starting with dynamic analysis, as these are particularly evident in the case of frontrunning.

\subsubsection{Dynamic Analysis for Frontrunning Attacks.}
A prominent line of work towards dynamic frontrunning detection centers around the notion of \emph{Maximal-Extractable Value} (MEV)~\cite{daian2019flash}.
MEV quantifies the monetary gain for a frontrunning attacker who inspects the current mempool and the blockchain state. 
In practice, MEV is the basis for the implementation of so-called \emph{MEV bots}~\cite{babel2023clockwork,zhang2023your}, network nodes that monitor the mempool to identify profitable arbitrage and sandwiching opportunities and then bribe miners to enforce their determined order of execution.
Formally, MEV is defined as the maximum personal gain that a blockchain participant can obtain by reordering future transactions at a given point in time, provided a concrete mempool~\cite{babel2023clockwork}.

While MEV is effective in characterizing specific profitable attack opportunities, it is only designed to detect attacks with the following characteristics: 
(1) the adversary's goal needs to be of \emph{monetary} nature and measurable as an increase in the attacker's wealth, and (2) attacks need to be \emph{immediate}, meaning that an attacker can mount the attack using only the information from the current mempool to create a (limited) sequence of blocks that meets the attack goal.

\subheading{(Non-)Monetary Attacks.}
MEV identifies quantifiable attacks by measuring value changes in user-owned tokens.
However, not all frontrunning attacks incur direct financial gains for an attacker. 
For example, adversaries can have (personal) incentives to harm specific users even without causing state changes that provide them with benefits.
Namely, frontrunning can be used to provoke user- and smart-contract-specific denial-of-service attacks.
Such attacks can lead to significant harm, including monetary loss or permanent disincentives for the use of a service.
Figure~\ref{fig:sc:voting-dos} showcases the candidate proposal functionality of an e-voting contract that is vulnerable to such a DoS-Frontrunning attack.
Here, an attacker can prevent a new candidate from being proposed by frontrunning every \lstinline|proposeCandidate| call with their own call that uses the same identifier.
The attacker call will reserve the candidate identifier \lstinline|id| for a dummy proposal, thereby causing the intentionally user-proposed candidate to be dropped.
Any re-submission of the candidate with a new identifier could be frontrun again by an attacker monitoring the mempool, continuously preventing specific candidate proposals.

\subheading{Immediate vs. Eventual Attacks.}
The definition of MEV is inherently \emph{immediate} in that it identifies attacks that
(1) can be completed within the transaction scheduling window, under the control of the attacker, and 
(2) use transactions from the current mempool or such transactions that the attacker can craft independently.
However, some frontrunning attacks may have effects that are only observable after the attacker's scheduling window, e.g., because they incur a state change that becomes exploitable only in later phases of contract execution.
For instance, consider a voting contract in which a frontrunning attacker manipulates the voting settings (e.g., the maximum number of votes a user can cast) to their advantage during the voting phase. The possible effects of such an attack (e.g., winning the election and receiving a payout) may only materialize after the voting phase ends, which could fall outside the attacker's scheduling window.
Such non-immediate attacks lie outside the scope of MEV, and we refer to them as \emph{eventual} attacks.

\subsubsection{Static Analysis for Frontrunning Detection.}
Approaches to the static analysis of frontrunning vulnerabilities center on the notion of \emph{transaction order dependence} (TOD), which was first introduced and discussed as a potential source of vulnerabilities in~\cite{luu2016making}. 
A smart contract exhibits TOD if the execution order of different contract-invoking transactions can lead to different contract states.
In practice, most smart contracts exhibit some form of TOD, since contracts usually maintain state that can be updated by multiple users. 
TOD coarsely overapproximates the existence of frontrunning vulnerabilities by flagging all contracts that could have conflicting transactions---even if the concurrent execution of those transactions can be avoided when users respect the intended contract usage patterns~\cite{bose2022sailfish}.
To account for this imprecision, 
current works refer to the original notion of TOD as \emph{generalized TOD} (G-TOD) or event ordering (EO) bugs 
and try to apply heuristics to identify exploitable fragments of TOD, e.g., by focusing on concurrency bugs that directly affect the flow of Ether~\cite{bose2022sailfish,zhang2024nyx}. These heuristics are not formally defined but only evaluated empirically using tools 
aimed at identifying vulnerabilities that are immediately exploitable for monetary profit---the most recent tools being Sailfish \cite{bose2022sailfish} and Nyx \cite{zhang2024nyx}.
To detect TOD bugs,~Sailfish identifies read-write hazards (conflicting read/write accesses to the same variables) in smart contracts that indicate G-TOD bugs.
G-TOD bug candidates are then narrowed down to (potential) TOD bugs by applying heuristics to decide whether they can be exploited by a frontrunning attacker to influence a money transfer.
Similarly,~\cite{zhang2024nyx} presents the tool Nyx, which focuses on detecting \emph{exploitable} \mbox{(G-)TOD} bugs. 
A (G-)TOD bug is considered exploitable if reordering the invocations of two contract functions increases the profit of one specific user (the attacker) and decreases that of another (the victim).

Nyx \cite{zhang2024nyx} and Sailfish \cite{bose2022sailfish} implement this static contract analysis to detect arbitrage and sandwiching opportunities before contract deployment.
In this way, they purposefully target monetary frontrunning vulnerabilities to increase analysis precision.
However, we observe that the analysis underlying those tools additionally disregards eventual (monetary) frontrunning attacks.

Nyx and Sailfish both implement a symbolic validation phase, which verifies that an attack trace candidate (indicating a possible (G-)TOD bug) is indeed executable.
Such attack trace candidates consist of two potentially interfering function calls ($\textit{f}_1, \textit{f}_2$), out of which at least one ($\textit{f}_2$) needs to have a direct financial implication (e.g., trigger a money transfer).
The validation step then checks that $\textit{f}_1$ and $\textit{f}_2$ can be executed consecutively (enabling the critical interference). 
However, triggering an eventual attack requires the non-consecutive execution of $\textit{f}_1$ and $\textit{f}_2$, e.g., executions in different blocks or such that are separated by other function invocations.

\begin{figure}[tp]
    \centering
    \small
    \scalebox{0.99}{\begin{tabular}{|l|c|c|c|}%
    \hline%
    \textbf{Auditor}
    &\textbf{\#\DSone}
    &\textbf{\#\DStwo}
    &\textbf{\#\DSthree}\\%
    \hline
    \textit{Trail of Bits}&81   &2 &2\\%
    \hline%
    \textit{ConsenSys Diligence}&79 &11   &11\\%
    \hline%
    \textit{Nethermind}&11  &4  &4\\%
    \hline%
    \textit{Runtime Verification}&20 &0  &0\\%
    \hline%
    \textit{HashEx}&35  &0  &0\\%
    \hline%
    \textit{OpenZeppelin}&65    &6  &6\\%
    \hline%
    \textit{Hacken}&37  &0  &0\\%
    \hline%
    \textit{Quantstamp}&65  &1  &1\\%
    \hline%
    \hline%
    \textit{\textbf{Total}}&393&24&24\\%
    \hline%
    \end{tabular}}
    \captionof{table}{Overview of our datasets comprised of 287 publicly available smart contract audits by 8 auditors. }
    \label{tab:datasets}
\end{figure}

\subsubsection{Evaluating Frontrunning in the Wild.}\label{sec:eval_MEV}
To verify the prevalence of the aforementioned attack types, we \emph{manually} reviewed 287 publicly available professional smart contract audits
based on the auditors
listed by the Ethereum Foundation~\cite{ethereumAuditors}.  %

In particular, in our dataset $\DSone$, we chose the top eight smart contract audit companies with the least number of exploited projects according to the ``rekt.news'' leaderboard~\cite{rektleaderboard}. 
Subsequently, we analyzed all audits that were publicly accessible and filtered out those that were not related to frontrunning vulnerabilities.
Additionally, we curated datasets $\DStwo$ and $\DSthree$ with executable smart contracts including all those audits where complete Solidity source code is available for both the vulnerability and the fix ($ \DSone \supseteq \DStwo \sim \DSthree$), i.e., for each vulnerable audit in $\DStwo$ there is an implemented fix in $\DSthree$, and the fixes were seeking to completely repair the described frontrunning vulnerability.
Based on $\DSone$, there were a total of 393 frontrunning vulnerabilities identified by the auditors in question within the analyzed 287 contracts. 
Of these, 24 vulnerabilities fall in the category of $\DStwo$ and $\DSthree$.
From the vulnerability descriptions in the audits, we collected attack traces and the exploitable function pairs involved in the attack for further analysis. 
Additionally, we tagged those vulnerabilities that are non-monetary or eventual.
Table~\ref{tab:datasets} provides an overview of our three datasets. %

\subheading{Evaluating Static Frontrunning Analysis.}
To validate our observation on static frontrunning analysis, we ran Nyx and Sailfish on each vulnerability in $\DStwo$ using the flattened version of the contract, with all contract dependencies included in a single Solidity file that can be statically compiled.
We checked whether the exploitable function pair reported in the audit is flagged as vulnerable by the respective tool.

\vspace{0.5 em}
\noindent \fbox{
 \parbox{0.945\columnwidth}{
Unfortunately, as already hinted in \cite{zhang2024nyx}, Sailfish's real-world performance is limited, resulting in execution errors on 67\% of the contracts in the dataset.
In the remaining portion, Sailfish detected only two vulnerabilities.
Nyx failed in considerably fewer cases (21\%), but detected only {two} out of all frontrunning vulnerabilities in $\DStwo$.
In particular, Nyx and Sailfish did not detect any eventual-only attacks, and both failed to detect 89\% of non-monetary-only frontrunning attacks.}
}
\vspace{0.5 em}

To cross-validate the evaluation results, we ran both tools on $\DSthree$ to understand their performance on secure contracts. 
Interestingly, of the 24 fixed contracts, most cases in which a tool correctly labeled a fixed contract as ``safe'' were those in which the same tool had already misclassified the corresponding vulnerable contract as secure (four instances for Sailfish and 16 for Nyx). Across both tools, only a single contract was correctly classified at both stages: identified as ``vulnerable'' before the fix and ``safe'' after it, and this was achieved by Sailfish alone.

\subheading{Evaluating Dynamic Frontrunning Analysis.}
Lastly, for all 393 frontrunning vulnerabilities in $\DSone$, we manually applied the MEV definition 
to the attack traces reported in the audits.
We aimed to 
verify whether the adversary could effectively earn monetary rewards by exploiting the vulnerability. 
In case MEV was not applicable to the particular vulnerability, we analyzed whether this was due to \emph{non-monetary}, \emph{eventual}, or \emph{other} possible limitations of the MEV definition.
Our MEV analysis results on dataset $\DSone$ confirm that the limitations of qualifying frontrunning attacks using MEV are not only theoretical but also affect a significant number of real-world contracts. 

\vspace{0.5 em}
\noindent \fbox{
 \parbox{0.945\columnwidth}{
In particular, 
(a) among the 393 vulnerabilities, a total of 217 (55.2\%) could not be captured by MEV, 
(b) 18.6\% (21.9\%) of the vulnerabilities could not be captured by MEV because they are eventual (non-monetary) only, while 14.8\% of the vulnerabilities could not be captured by MEV because they are both eventual and non-monetary, and
(c) no other limitations (besides eventual and non-monetary vulnerabilities) hindered MEV's applicability.}}
\vspace{0.5 em}

A summary of these results (and the corresponding severity rating by the respective auditors) is shown in Figure~\ref{fig:summary-of-results}.

\begin{figure}[t]
    \centering
    \includegraphics[width=0.8\linewidth]{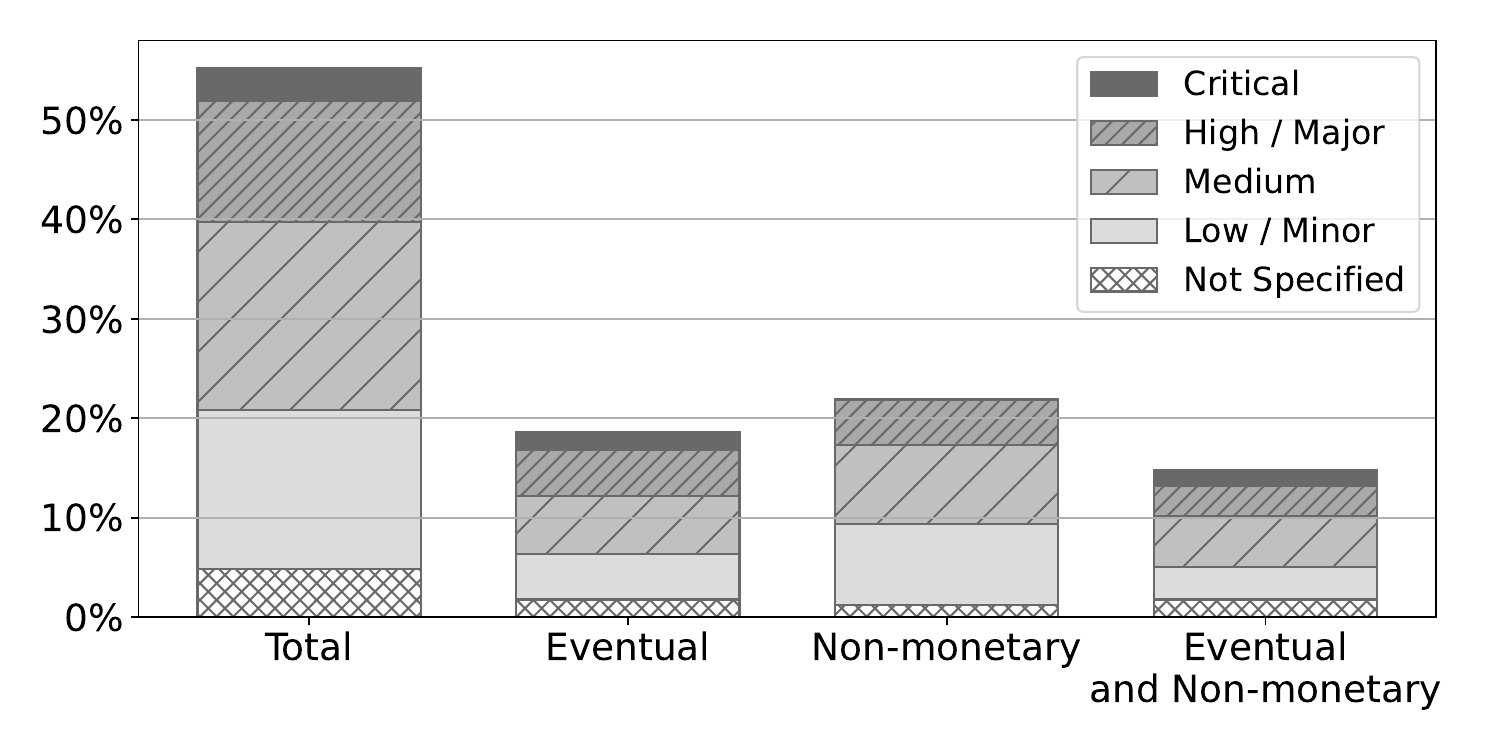}
    \captionof{figure}{Summary of the vulnerabilities not captured by MEV due to their eventual or non-monetary nature (in \DSone). %
    }
    \label{fig:summary-of-results}
\end{figure}

\section{Defining \sysname}\label{sec:formal} \label{sec:deckstacking_resistance}
Informed by our empirical analysis, we make the following observation:
Smart contracts usually play the role of a trusted party that mediates a protocol between multiple mutually distrusting users. Consequently, frontrunning attacks constitute an interference of an attacker with the interaction of a user with this trusted party.
Following this intuition, we will set forth a simulation-based security notion similar to security definitions for cryptographic protocols to characterize that user-contract interactions should be ideal.
We will call this notion \sysname{} (short \sysnameAc) accounting for the fact that attackers with miner-like capabilities can not only harm user-contract interactions by literally \emph{frontrunning} honest user transactions but can also manipulate transaction execution in multiple ways, similar to how a cheating user in a card game may peek into the card deck, drop cards, insert cards, or reorder the entire deck in their favor (so `stacking the deck').

\subsection{Blockchain Model} \label{subsec:blockchain-model}
We base our definition on a formal model of smart contract execution.
Our model combines and extends concepts from~\cite{DBLP:conf/ccs/BartolettiZ18}, which gives a computationally sound model for executing Bitcoin-based smart contracts, 
and~\cite{bartoletti2023theoretical}, which defines a core calculus for Solidity. 
We lift the blockchain execution model from~\cite{DBLP:conf/ccs/BartolettiZ18} to support block-based scheduling and transaction inclusion time guarantees.

\vspace{0.5 em}\noindent \textbf{Modeling Transactions. }
To model the blockchain state, we assume a set of blockchain users $\honusers$ 
and a set of contracts 
where every contract $\contrC$ has a contract state $\contrS$ that maps every variable that occurs in $\contrC$ to its current value.
We model the evolution of blockchain states with a transition system on configurations $\conf$ where a configuration describes the current snapshot of the blockchain state.
Configurations can be advanced with transactions $\txv$
whose execution impacts the states of users and contracts.
To capture the effects of executing a transaction $\txv$ (which may involve multiple asset transfers or the logging of different events), each transition $\conf_0  \trans[\obs]{\txv} \conf_1$ emits a list of observables $\obs$ 
and we call a sequence of (valid) state transitions
\noindent
{\small
$
\conf_0  \trans[\obs_0]{\txv_0} \conf_1 \trans[\obs_1]{\txv_1} \cdots \trans[\obs_{n-1}]{\txv_{n-1}} \conf_n
$}
a \emph{run} (also denoted by $\run$).
Since our transaction framework is modular to the exact smart contract execution semantics, it can be instantiated for different formal smart contract languages such as~\cite{hildenbrandt2018kevm,jiao2020semantic,marmsoler2021denotational}.  
Additional details about the blockchain model and accompanying formalizations for the following technical subsections on the definition of \sysname are given in Appendix~\ref{appendix:model}.

\subsection{Overview \& Key Insights}\label{sec:dsr_definition:overview}

In cryptography, simulation-based security definitions characterize the security of a cryptographic protocol by comparing real-world interactions with this protocol in the presence of a real-world attacker with interactions in an idealized world. 
In such an idealized world, the ideal workings of the protocol are characterized by explicitly describing what information an attacker (then usually called the \emph{simulator}) may learn and in which way they may interfere with the (secure) protocol execution.
To prove the security of a protocol with respect to such a simulation-based security notion, one needs to construct for each real-world adversary a corresponding ideal-world simulator that produces an output in the ideal world that is indistinguishable from the one that would have been produced in the real world. 
In this way, it is guaranteed that all executions possible in the real world can be mapped to some behavior in the ideal world (which is deemed unproblematic by definition).

We adopt this real-ideal paradigm to define \sysnameAc of a contract $\contract$ by contrasting the executions of $\contract$ in a real world where the attacker is assumed to be a malicious block generator (a.k.a.\ miner) with full access to the mempool and full reordering capabilities with executions of $\contract$ in an ideal world where the simulator has restricted knowledge and capabilities to append blocks.
We write $\astrat$ to refer to the \textcolor{MPG-orange}{real-world} block generator (also called \mbox{{\textcolor{MPG-orange}{attacker}} $\astrat$}), and $\sstrat$ for the \textcolor{MPG-light-blue}{ideal-world} block generator (\mbox{{\textcolor{MPG-light-blue}{simulator}} $\sstrat$}).

In our model, a \textcolor{MPG-orange}{real-world} block generator $\astrat$ is a function that maps a 
mempool $\mempool = [tx^\mempool_{1}, \dots, tx^\mempool_{m}]$ of unprocessed transactions
and a run $\run_0$ to a valid sequence $[\txv_0, \txv_1,\dots, \txv_{n-1}]$ of blockchain transactions (a \textit{block}).
Transactions within the block are either chosen from $\mempool$ or crafted by the block generator (for better readability, we denote such transactions in the following with $\txv_{i}^{\astrat}$ or $\txv_{i}^{\sstrat}
$).
For example, when querying a block generator with $\astrat(\run_0, [ tx^{\mempool}_1 ])$, both $[ \txv^{\mempool}_1 ]$ and $[ \txv^{\astrat}, \txv^{\mempool}_1 ]$ could be valid transaction sequences that extend the blockchain execution $\run_0$.
In contrast, an \textcolor{MPG-light-blue}{ideal-world} block generator $\sstrat$ is a function that maps the past blockchain execution $\run_0$ directly to a \emph{block template} without any such access to the mempool. 
The block template here is a transaction sequence with gaps (wildcards) consisting of simulator-generated transactions.
Intuitively, wildcards serve as placeholders for transactions from the mempool. 
For instance, a simulator $\sstrat(\run_0)$ can return the template block $[ \txv^{\sstrat}, \_ ]$ indicating that the first transaction will be its own $\txv^{\sstrat}$ followed by some transaction from the mempool.
After the simulator has crafted a block template including its own transactions, transactions from the mempool are assigned to the template's wildcards, and the block is published.

\subheading{Illustrative Example.}\label{sec:deckstacking_resistance:ideal}
Consider the example of DoS-frontrunning depicted in Figure~\ref{fig:sc:voting-dos}. 
To carry out the attack, 
the attacker observes the call \mbox{\lstinline|proposeCandidate(Id, Candidate)|$^{\honuser}$} by honest user $\honuser$ in the mempool $\mempool_1$ and outruns it with their own transaction, copying the single-use candidate identifier \lstinline|Id|.

{    
    \centering
    \includegraphics[width=0.6\linewidth]{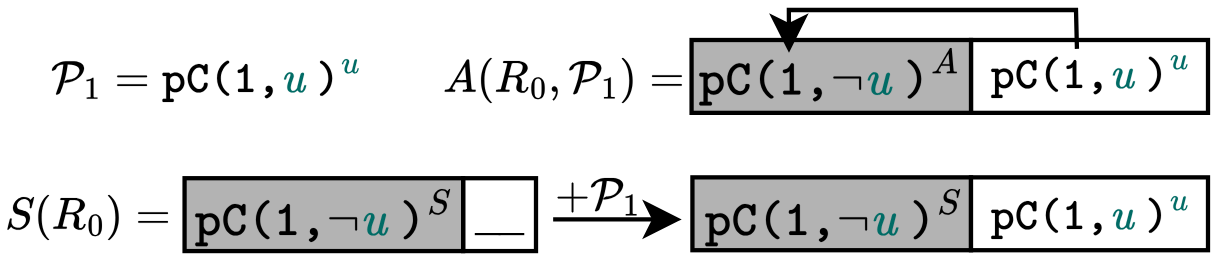}
    
}%
The attacker invalidates the user's proposal by scheduling their own \lstinline|proposeCandidate| transaction with the same identifier (\mbox{here \lstinline|1|}) and a different candidate $\neg \honuser$ at the beginning of their block, causing it to be executed before the honest user's transaction. 
As a result, the honest user's transaction will fail instead of resulting in a valid candidate proposal. 
The same behavior can be simulated in the ideal world by a simulator that proposes a candidate with \mbox{identifier \lstinline|1|} (regardless of the mempool) within its block template.

However, our \sysnameAc definition will require that for each real-world attacker, there is a single ideal-world simulator that can simulate an attacker for \emph{all possible user behaviors}.
This means, in particular, that a real-world attacker may not gain any advantage from observing user actions in the mempool since an ideal-world simulator could not mimic this \emph{adaptability} of the real-world attacker.
Consider, for instance, that the honest user chooses a different identifier \lstinline|Id| for their \lstinline|proposeCandidate| call:

{    
    \centering
    \includegraphics[width=0.66\linewidth]{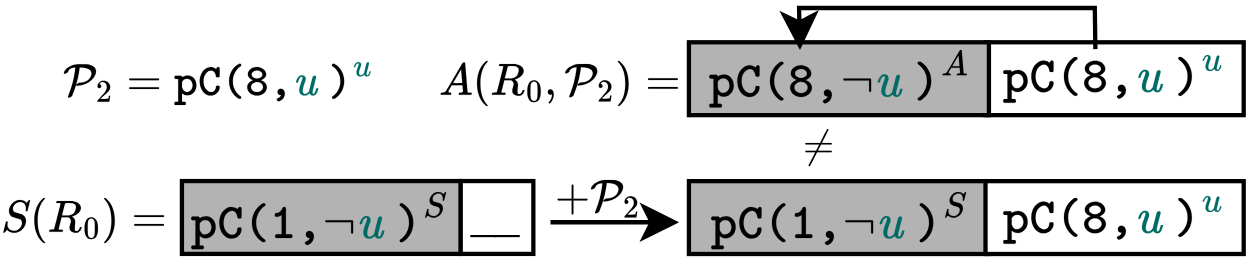}

}

In this case, the input to the simulator (consisting only of the past blockchain execution) would still be unchanged, making them output the same block template as before. 
In contrast, the real-world attacker would exploit the transaction observed in the mempool $\mempool_2$ and copy the new identifier.
In fact, we cannot find a single simulator that simulates the real-world attacker for all valid user behaviors (proposing different identifiers) for the same initial blockchain \mbox{run $\run_0$}.
Therefore, the smart contract from Figure~\ref{fig:sc:voting-dos}
is not classified to satisfy \sysnameAc. 
Intuitively, we define \sysname~as, 
\textit{for every malicious real-world block generator (\mbox{{\textcolor{MPG-orange}{attacker}} $\astrat$}), there exists a non-adaptive ideal-world block generator (\mbox{{\textcolor{MPG-light-blue}{simulator}} $\sstrat$}) such that every valid smart contract execution of the {\textcolor{MPG-dark-green}{honest user}} in the real world, can be simulated in the ideal world such that the executions agree 
on their observable behavior.
}

\subsection{Real-World Blockchain Execution}
The \sysnameAc of a contract $\contract$ does not only rely on the contract code but also on the behavior of honest users $\honusers$ interacting \mbox{with $\contract$}. 
E.g., in the illustrative example, honest users that refrain from calling the \lstinline|proposeCandidate| function, are clearly not subject to the described attack.
One can think of \sysnameAc as the guarantee given to users $\honusers$ on the behavior of $\contract$ when they interact \mbox{with $\contract$}.
For example, many frontrunning mitigation techniques rely on honest users complying with restrictions, such as submitting certain transactions only in well-defined time windows or not leaking authentication secrets to the attacker. 
Thus
we define \sysnameAc with respect to a set $\sustrat{\contract}$ of honest user strategies that describe the legitimate behavior of the users $\honusers$  when interacting with $\contract$.

\vspace{0.5 em}\noindent \textbf{Modeling Cryptographic Interactions. } \label{model:crypto}
To simplify presentation and reasoning, we will in the following treat the security of these building blocks in a \emph{symbolic} fashion, meaning that we will assume ideal security (e.g., assuming that it is impossible to find the preimage of a hash function without having learned it before).
This is a common approach to abstract the security of cryptographic primitives when modeling and analyzing cryptographic protocols (see~\cite{dolev1983security,blanchet2016modeling,meier2013tamarin}). 
    We incorporate this notion of cryptography by assuming a dedicated set of secrets $\secrets_{\honusers}$ of honest users and a concrete smart contract semantics that supports cryptographic operations with respect to an equational theory $\eqth$ on symbolic terms.
    Attackers may only learn and use an honest user secret after it got revealed in the mempool or the previous blockchain run.
As a consequence, transactions, as well as observables emitted in runs, may contain symbolic terms.

\vspace{0.5 em}\noindent \textbf{Modeling Mempools. } \label{model:mempool}
We define a mempool $\mempool$ as a list \linebreak \mbox{$[\txv_0, \dots, \txv_i]$} of valid transactions by honest blockchain users $\honusers$.
The choices of all honest users $\honusers$ for a contract $\contract$ are modeled by a set of symbolic user strategies $\ustrat{\contract}{\honusers}{\secrets}$.
Each strategy $\cstrat \in \ustrat{\contract}{\honusers}{\secrets}$ is a function mapping the current blockchain run $\run$ to the next mempool $\mempool_{\cstrat}$ expressing the combined strategy of all honest users while only using their shared set of symbolic secrets $\secrets$.

We define a number of correctness properties on a mempool $\mempool_{\cstrat}$ and user strategy $\cstrat$ to capture realistic blockchain behavior:
A mempool can only contain valid transactions sent by honest users, i.e., transactions must be executable and may be only composed of knowledge available to such honest users.
Additionally, the content of mempools must be persistent until inclusion in the run and, correspondingly, the honest user strategy needs to consistently publish the same transactions as long as they are valid (reflecting that a transaction once submitted to the mempool stays there).

\vspace{0.5 em}\noindent\textbf{Real-World Scheduling Strategies. } \label{model:scheduling}
We model the mining of new blockchain blocks (cf. block generators in Section \ref{sec:dsr_definition:overview}) with \emph{symbolic scheduling strategies} that are based on the current blockchain run and mempool and return a valid next block.
Formally, a symbolic scheduling strategy is a function $\astrat(\run, \sseq)_{\secrets}$, parameterized with a set of symbolic secrets $\secrets$ accessible by the function, that produces a valid transaction sequence $\bblock = [\txv_0, \dots, \txv_{n-1}, \txB] $ where each transaction $\txv_i \in \bblock$ 
when sent by an honest user is unique and stems from the mempool,
and otherwise needs to be constructible using attacker knowledge. 
Further, it is guaranteed that each pending transaction $\txv \in \mempool$ from the mempool is included in block $\bblock$ if the time of its proposal by an honest user dates back at least $k$ blocks.

Given a scheduling strategy $\astrat$ and a user strategy $\cstrat$, we can now characterize the blockchain runs $\run$ that result from the interactions of $\astrat$ and $\cstrat$ (written $\combine{\astrat}{\cstrat} \semantics \run$).
In this case, we say that \emph{$\run$ conforms to $\combine{\astrat}{\cstrat}$ in the real world}.
First, it holds that $\combine{\astrat}{\cstrat} \semantics \initconf$ where $\initconf$ here denotes the run starting in the initial blockchain configuration where no state transitions have been performed. 
Further, runs with $\combine{\astrat}{\cstrat} \semantics \run$ can be derived given that $\combine{\astrat}{\cstrat}  \semantics \run'$ holds, by appending a new block $\astrat(\run', \cstrat(\run')) = \bblock$ with $\run = \run' \trans[\textit{xs}]{\bblock}^* \conf_{\run}$.
So, a new block $\bblock$ with observables $\textit{xs}$ can be added to the blockchain run $\run$ provided that scheduler $\astrat$ selects this block based on its knowledge of $\run$ and the mempool created by $\cstrat$.
Note that, for the sake of readability, we only explicitly annotate the symbolic secret sets of strategies when relevant.

\subsection{Ideal-World Blockchain Execution} \label{model:simulator}
To clearly specify the limits of the ideal world, we define ideal-world scheduling to be a restricted version of real-world block scheduling:
Crucially, an ideal-world scheduling should be \emph{non-adaptive} to the mempool, meaning that its scheduling strategy is not influenced by the content of the mempool. 

In particular, we use the intuition of template blocks to restrict the set of symbolic scheduling strategies $\sstrat$ to be translucent towards certain mempool changes and aim to characterize the strongest ideal-world model with a non-adaptive scheduler.
Note that comparing against an ideal-world model that is too strong (e.g., one that does not allow for any transaction reordering) yields a security notion that is too restrictive (e.g., forbidding any concurrent interactions with a contract).
Similarly, an ideal-world model that is too weak (equipping the simulator with too much knowledge and capabilities) renders smart contracts that are intuitively insecure resistant to \deckstacking. 
Formally, ideal-world scheduling strategies are symbolic scheduling strategies that fulfill a stability requirement under substitution:
We define a set of substitution functions operating on a mempool $\mempool$ or blockchain block $\bblock$ that is defining a sequence of honest user transactions in $\mempool$/$\bblock$ to be substituted for other valid transactions ($\{ \txv / \txv' \}$) or removing them.
Every substitution function $\rho$ is defined such that its application $\rho(\mempool)$ on a mempool $\mempool$ returns a valid mempool $\mempool'$.
Such substitutions allow us to define the following stability criterion $\isSim{\sstrat}$ that rules out simulators that create blocks based on \textit{length}, \textit{content} and \textit{order} of the mempool $\mempool$: $\forall \rho. \sstrat(\run, \rho(\sseq)) = \rho(\sstrat(\run, \sseq))$

This effectively eliminates any adaptive \deckstackingAc capabilities of the simulating scheduler $\sstrat$ in the ideal-world.
In fact, this stability requirement on substitutions is the formalization of the template block intuition in Section~\ref{sec:dsr_definition:overview}.
Intuitively, wildcards in a template are spots that should be filled with arbitrary transactions from the mempool. 
If the mempool changes (e.g., a transaction $\txv_1$ being replaced with a transaction $\txv_2$), then this change should be reflected when filling the template (by transaction $\txv_2$ being filled into the same wildcard slot that $\txv_1$ took before).
Using this intuition, we define valid blockchain executions in the \textcolor{MPG-light-blue}{ideal-world} by accepting any simulator $\sstrat$ that is a non-adaptive real-world attacker
w.r.t. the stability requirement on substitutions.
In this case, we write $\combine{\sstrat}{\cstrat} \semantics \run$ and say that $\run$ \emph{conforms to $\combine{\sstrat}{\cstrat}$ in the ideal world} if $\sstrat$ is a scheduling strategy that satisfies $\isSim{\sstrat}$. 
Recall that a complete formalization of the stability requirement on substitutions and previous subsections is available in Appendix \ref{appendix:model}.

\subsection{Definition}

We next define when a contract $\contract$ is \deckstackingAc Resistant with respect to a set of symbolic user strategies $\ustrat{C}{\honusers}{\secrets_{\honusers}}$. 
Each strategy describes one of the legitimate behaviors of the users in $\honusers$ when interacting with $\contract$ and owning secrets $\secrets_{\honusers}$. 
To capture different forms of observables in the most general fashion, we parametrize \sysnameAc with a similarity relation $\sim$ that describes when a run $\arun$ and a run $\srun$ are considered to show the same observable behavior. 
More formally, this leads us to the following characterization:

\vspace{1 em}
\noindent \fbox{
 \parbox{0.935\columnwidth}{
\textit{\textbf{\sysname:} A contract $\contract$ satisfies \sysname w.r.t. a set of honest user strategies $\ustrat{\contract}{\honusers}{\secrets_{\honusers}}$ if }
{\small
\begin{align*}
    \forall \astrat_{\secrets}\mdot~ \exists &\sstrat_{\secrets} \textit{~with~} \isSim{\sstrat_{\secrets}} \mdot~  
    \forall \cstrat \in \ustrat{\contract}{\honusers}{\secrets_{\honusers}}\mdot~  %
    \\ 
     &\forall\arun\mdot \combine{\astrat}{\cstrat} \semantics  \arun 
     \Longrightarrow \exists\srun\mdot  
     \combine{\sstrat}{\cstrat} %
            \semantics \srun 
     \wedge \arun \sim \srun
\end{align*}
}%
\textit{where the scheduling strategies ($\astrat$ and $\sstrat$) and the user strategies do not share any secrets ($\secrets \cap \secrets_{\honusers} = \emptyset$). }
} }
\vspace{1 em}

The simulation-based definition enables us to
distinguish aimless, disadvantageous behavior (caused by asynchronous blockchain execution) from targeted malicious behavior.
For instance, considering the example from Figure~\ref{fig:sc:voting-dos}, $\sustrat{\contract}$ would contain all strategies $\cstrat_{\texttt{i}}$ invoking the \lstinline|proposeCandidate| function with identifier \lstinline|i|, so in particular $\cstrat_{\texttt{1}}$ invoking \lstinline|proposeCandidate| with identifier \lstinline|1| and $\cstrat_{\texttt{8}}$ invoking \lstinline|proposeCandidate| with identifier \lstinline|8|. 
This exemplifies the chosen order of quantification such that a \emph{single} simulator must be able to simulate every concrete user behavior and to restrain the adaptiveness of the simulator $\sstrat$ to such untargeted frontrunning.

\subheading{Discussion.}
\sysnameAc is purposefully defined such that it can be instantiated with any smart contract semantics, equational theory $\eqth$ and similarity relation $\sim$.
Indeed, the concrete choice of the similarity relation determines whether a contract is considered resistant to \deckstacking or not.
This reflects that, depending on the concrete smart contract, certain deviations from the ideal contract behavior caused by \deckstackingAc may be deemed acceptable (e.g., if such deviations can at most benefit honest users). 
Identifying security-critical behavior (for which no deviations with respect to the ideal behavior should occur) is part of the contract development process, where developers usually explicitly log critical events in smart contracts. 
This gives rise to a generic similarity relation requiring that the same sequences of critical events 
(e.g., modifications of high-integrity contract state) are exhibited in the real- and ideal-world execution.
\sysnameAc, however, is flexible enough to incorporate more fine-grained use-case-specific similarity notions, e.g., assigning and comparing (quantitative) utilities of different contract executions.

This flexibility, conversely, implies that there is not one universal instantiation of the similarity relation $\sim$ that is adequate for all scenarios.
E.g., consider a smart contract that allows a user to donate a fraction of their own assets under some miner-controlled condition (e.g., whenever the block number is even at the time of invoking the donation function). With an instantiation of $\sim$ that considers such runs similar, which provide the same payouts to all users, this contract would be classified as violating \deckstackingAc{}. This is because a deckstacking attacker could decide upon the execution/blocking of donations by placing calls to the donation function in even or odd blocks, respectively.
However, depending on the context, one may argue that the ability to deny donations does not constitute harmful behavior (e.g., as such interference does not induce direct money losses). To reflect this interpretation, one could simply adjust the $\sim$ relation to ignore financial transfers triggered by invocations of the donation function. 
This example illustrates how the similarity relation $\sim$ serves to express the degree of interference of an adaptive, \deckstacking{} attacker that is still deemed acceptable.

\newcommand{\nwcond}{\condVar_\text{NW}}

\section{Interaction Condition Generation}\label{section:conditions} %
A key insight of the previously presented definition is that \sysnameAc is a property that does not concern a smart contract alone but the interplay between a contract and an honest user. This poses the question of how a user can decide whether their intended contract interactions can be subject to \deckstacking attacks. To this end, in this section, we propose a procedure to generate provably secure \emph{interaction conditions} from the code of a smart contract $\contract$. More specifically, such interaction conditions formulate requirements on the user strategy and guarantee that $\contract$ together with any user strategy that meets these requirements satisfies \sysnameAc{}. 

The key property of a user strategy to resist \deckstacking attacks is that it only submits transactions $\htxv$ to the blockchain that have a predictable effect (meaning that a \deckstacking attacker cannot interfere with them).
Since transactions can be only guaranteed to be included within a fixed number $k$ of blocks, user strategies, to achieve predictable transaction behavior, usually proceed in rounds of length $k$: whenever a user $\honuser$ submits a transaction $\htxv$, they want to ensure that within the next $k$ blocks, 
independently of how an adaptive attacker includes $\htxv$, the execution result will be the same. 

\begin{figure}[t]
  \raggedleft
  \footnotesize
  \begin{minipage}{0.925\linewidth}
  \begin{lstlisting}[basicstyle=\scriptsize\ttfamily, numberstyle=\tiny]
contract TimelockedFeeMinted { 
  function scheduleFeeChange(uint _newFee) onlyOwner {  
    bytes32 id = sha256("newFee", _newFee);
    timestamps[id] = block.number + k;    }
  
  function executeFeeChange(uint _newFee) onlyOwner {
    bytes32 id = sha256("newFee", _newFee);
    require(timestamps[id] > 1);
    require(timestamps[id] <= block.number);
    fee = _newFee / 100;
    timestamps[id] = 1; }
  
  function mint() {
    uint minted = msg.value * (1 - fee);  // ...
    emit Minted(msg.sender, minted); 
  } // ...
  \end{lstlisting}
  \end{minipage}
      \caption{Simplified OpenZeppelin TimelockController~\cite{openzeppelin-timelock}.}
      \label{fig:sc:timlock-fix}
  \end{figure}

As an example, consider the smart contract depicted in Figure~\ref{fig:sc:timlock-fix}.
This contract allows buying (aka minting) contract-specific tokens for a \lstinline|fee|, which is determined by the contract owner, using the \lstinline|mint| function. 
Such a contract can be easily prone to \deckstacking attacks since a malicious contract owner, when observing a \lstinline|mint| transaction, can preempt this transaction with a corresponding transaction that increases the fee, causing the honest user to buy less tokens than expected. 
The presented contract circumvents this problem by making a fee change subject to a staged process, where a contract owner first needs to invoke the \lstinline|scheduleFeeChange| function to announce a fee change. 
The fee change can only be executed (using the \lstinline|executeFeeChange| function) after time $k$. 
Consequently, an honest user can ensure not to be subject to unexpected fee changes by only invoking the \lstinline|mint| function in rounds where initially no fee change has been scheduled. 
 More precisely, a condition for securely invoking the \lstinline|mint| function would be that $\forall i.~\cstate{\conf}{\contract}(\tsMap[i]) \leq 1$. %
In this case, the blockchain inclusion time ensures that by the end of the round (so after $k$ blocks), a \lstinline|mint| transaction submitted at the beginning of the round will be included.
The contract logic would ensure that the execution of \lstinline|executeFeeChange| could only impact the \lstinline|fee| in the following round, as $\cstate{\conf}{\contract}(\tsMap[i]) > \text{\lstinline|block.number|}$ would hold for the remainder of the current round, even if the attacker executed \lstinline|scheduleFeeChange|.

\subsection{Overview \& Key Insights}
Our algorithm makes this reasoning explicit by generating a pair of conditions 
$(\invpre, \inv)$ for the \lstinline|Minted| event, which ensure (a) that if the condition $\invpre$ holds at the beginning of a round, then also 
the condition $\inv$ holds throughout the whole round, and 
(b) $\inv$ is sufficiently strong to ensure that the attacker cannot interfere with the execution of \lstinline|Minted| during the round.
For the previous example, $\forall i.~\cstate{\conf}{\contract}(\tsMap[i]) \leq 1$ would be such a candidate for $\inv$, since it satisfies requirement (b) by excluding that executions of \lstinline|executeFeeChange| may change the value of \lstinline|fee|, which is the only (global) variable that may influence \lstinline|Minted|.
However, to satisfy requirement (a), the challenge also lies in
(i) finding a condition $\invpre$ that can be checked at the beginning of the round so that the user can decide based on $\invpre$ whether it is safe to execute the \lstinline|mint| function;
(ii) ensuring that $\inv$ is indeed upheld throughout the whole round, even if the attacker executes arbitrary other transactions; 
(iii) finding a formulation of $(\invpre, \inv)$ that ideally characterizes all possible situations where it is safe for the user to execute the \lstinline|mint| function.
For example, the condition $\forall i.~\cstate{\conf}{\contract}(\tsMap[i]) \leq 1$ would immediately violate requirement (ii) since after the attacker executes \lstinline|scheduleFeeChange|, it is no longer valid.
Further, the condition is not sufficiently general (violating requirement (iii)), since it does not capture that executing \lstinline|mint| would also always be safe if it could be excluded that the attacker is the owner of the contract.

To solve these challenges, our algorithm proceeds by first generating candidate conditions $(\invpre, \inv)$, which are then stepwise refined to satisfy requirements (a) and (b). We first illustrate the algorithm with the example: 

\vspace{0.5 em}\noindent \textbf{Synthesis Example. }
The algorithm starts from an event, which shall not be influenced by the attacker (here the \lstinline|Minted| event) 
and uses the condition for reaching this event (also called the \emph{path condition}) as the first candidate condition $\inv^0$. 
In our example, $\inv^0 = \top$ since the \lstinline|Minted| event can be reached unconditionally.
We next check if $\invpre^0$ set to $\inv^0$ is a reasonable candidate for a round precondition, by checking 
whether $\invpre^0$ holding at the beginning of the round (so for some value $b$ of \lstinline|block.number| such that $b \bmod{k} = 0$) implies that $\inv^0$ also holds during the round (so for values $b + i$ of \lstinline|block.number| such that $i < k$).
In this case, we say that $(\invpre^0, \inv^0)$ is a \emph{round invariant}. 
Indeed, $(\top, \top)$ trivially is a round invariant since it is fully independent of \lstinline|block.number|. 
In the next step, we refine $(\invpre^0, \inv^0)$ towards satisfying requirement (b): 
To this end, we determine the part of the global state (global contract variables and blockchain variables like \lstinline|block.number|) that may directly impact the arguments of the \lstinline|Minted| event, i.e., \lstinline|Minted|'s \emph{data dependencies}. 
In the example, the variable \lstinline|fee| is the only data dependency of \lstinline|Minted|.
The algorithm then strengthens $(\invpre^0, \inv^0)$ to enforce that \lstinline|fee| cannot be written by the attacker by 
adding the following condition to both $\invpre^0$ and $\inv^0$:

{%
\scriptsize
\begin{align*}
 \nwcond = \forall &\text{\lstinline|_newFee|}^{\nuserindexvar},~ \sendervari{\nuserindexvar}.~\sendervari{\nuserindexvar} \neq \honuser\rightarrow  
  (\sendervari{\nuserindexvar} \neq \ownervar) \\
    & \lor (\tsMap[\shaFunc(\text{"newFee"}, \text{\lstinline|_newFee|}^{\nuserindexvar})] \leq 1) \\
    & \lor  (\tsMap[\shaFunc(\text{"newFee"}, \text{\lstinline|_newFee|}^{\nuserindexvar})] > \blockvar)
\end{align*}
}%
The condition $\nwcond$ is derived from the function \lstinline|executeFeeChange| by forming the \emph{no-write condition}, so the disjunction of the path conditions of all execution paths that leave \lstinline|fee| unchanged (also denoted by $\nowritecondi{\text{\lstinline|executeFeeChange|}}{\text{\lstinline|fee|}}{}$).
$\nwcond$ further adds the precondition $\sendervari{\nuserindexvar} \neq \honuser$, which requires that the sender (indicated by the variable $\sendervari{\nuserindexvar}$) of the transaction invoking \lstinline|executeFeeChange| shall be different from the honest user (represented by variable $\userindexvar$).
Note that we use the superscript ${}^{\nuserindexvar}$ to index variables that are local to the execution of an attacker transaction (such as the function argument $\text{\lstinline|_newFee|}^{\nuserindexvar}$ and the transaction sender $\sendervari{\nuserindexvar}$).
Intuitively, $\nwcond$ states that if \lstinline|executeFeeChange| is invoked by a user different from $\userindexvar$, its execution will never write \lstinline|fee|.

Thus, the refinement leaves us with the new candidate condition $(\invpre^1, \inv^1) = (\nwcond, \nwcond)$.
Note, however, that $(\invpre^1, \inv^1)$ does not constitute a round invariant, since incrementing the block number within the round can cause the third disjunct {$\scriptsize \tsMap[\shaFunc(\dots)] > \blockvar$} to become false.
To correct this, we apply a generic strengthening to $\invpre^1$, replacing occurrences of $\blockvar$ with $\blockvar + k -1$:

{%
\scriptsize
\begin{align*}
 \nwcond' = \forall &\text{\lstinline|_newFee|}^{\nuserindexvar},~ \sendervari{\nuserindexvar}.~\sendervari{\nuserindexvar} \neq \honuser\rightarrow  
  (\sendervari{\nuserindexvar} \neq \ownervar) \\
    & \lor (\tsMap[\shaFunc(\text{"newFee"}, \text{\lstinline|_newFee|}^{\nuserindexvar})] \leq 1) \\
    & \lor  (\tsMap[\shaFunc(\text{"newFee"}, \text{\lstinline|_newFee|}^{\nuserindexvar})] > \blockvar + k - 1)
\end{align*}
}%

This strengthening ensures that if {$\scriptsize\tsMap[\shaFunc(\dots)] >$\linebreak$\blockvar + k -1$} holds at the beginning of the round, then also { $\scriptsize\tsMap[\shaFunc(\dots)] > \blockvar$} (and thus $\inv^1$) holds throughout the whole round.
After verifying that the strengthening establishes a round invariant, we can use $(\invpre^2, \inv^2) = (\nwcond', \nwcond)$ as new round invariant.

Finally, we check towards ensuring requirement (ii) whether $\nwcond'$ is indeed upheld when executing other contract functions by users different from $\honuser$, so whether $\nwcond'$ is an \emph{invariant} w.r.t. these functions. 
This check holds because $\nwcond'$ could only be invalidated by writes to the $\tsMap$ array, which may only occur in \lstinline|scheduleFeeChange| and \lstinline|executeFeeChange|. 
For \mbox{\lstinline|executeFeeChange|,} the write to $\tsMap$ is already excluded due to the \lstinline|require| statements given that $\nwcond'$ holds. In contrast, \lstinline|scheduleFeeChange| may write $\tsMap$ unconditionally. However, $\nwcond'$ ensures that even after updating any $\tsMap[i]$ to $\blockvar + k$, the condition $\tsMap[\shaFunc(\dots)] > \blockvar + k -1$ will hold.

At this point, the algorithm terminates with the round invariant $(\invpre^2, \inv^2)$, which cannot be invalidated by any attacker transaction in the same round (satisfying requirement (a)).
Note that $(\invpre^2, \inv^2)$ is constructed such that $\inv^2$ 
enforces the path condition of \lstinline|Minted| (so as long as $\inv^2$ holds, \lstinline|Minted| gets executed)
and such that it prevents concurrent writes to data dependencies of \lstinline|Minted|.
For the given example, this ensures that a concurrent attacker transaction cannot change the way \lstinline|Minted| is triggered (satisfying requirement (b)).

\begin{figure}[t]
\centering
\small
\begin{tabular}{@{}l@{~} p{6.2cm}@{}}
\toprule
\textbf{Notation} & \textbf{Description} \\
\midrule
$\frWriteCond{\procF}{\seExp}{\seCvar}{\roundPred}$ & Procedure: a round invariant $(\invpre, \inv)$ ensuring that $\procF$ deterministically assigns variable $\seCvar$ to expression $\seExp$ \\
$\cFuncs{\contract}$ & Set of functions of contract $\contract$ \\
$\gvars{\contract}$ & Global variables of contract $\contract$ \\
$\lvars{\procF}{\userindexvar}$ & Local variables of function $\procF$ (e.g., arguments) for user $\userindexvar$ \\
$\setforallshort{\lvars{\procG}{\nuserindexvar}}$ & Universal quantification over all local variables $\lvars{\procF}{\nuserindexvar}$\\
\midrule
\multicolumn{2}{c}{\textbf{Symbolic Execution Operations}} \\
$\pathcondi{\procF}{\seExp}{\seCvar}{\userindexvar}$ & Path condition where executing $\procF$ assigns $\seCvar$ to $\seExp$ \\
$\ddepsi{\procF}{\seCvar}{\userindexvar}$ & Data dependencies of variable $\seCvar$ being written in $\procF$ \\
$\nowritecondi{\procG}{y}{\userindexvar}$ & Condition ensuring that executing $\procG$ does not write~$y$ \\
$\execinvi{\procG}{\psi}{\invpre}{\userindexvar}$ & Checks whether $\invpre$ is an invariant under executions of function $\procG$ that initially satisfy predicate $\psi$ \\
\midrule
\multicolumn{2}{c}{\textbf{Round Invariant Operations}} \\
$\getRoundInv(\condVar)$ & Constructs a round invariant $(\invpre, \condVar)$ from $\condVar$ by checking $\condVar$ and possibly strengthening $\condVar$ to $\invpre$ \\
$\mergeRoundInv(r_1, r_2)$ & Conjoins components of two round invariants \\
\midrule
$\dSet$ & Initial set of global data dependencies \\
$\nowriteset = \{ (y, \procG),~\dots~$ & Set of pairs where writes to $y$ by $\procG$ have been excluded \\
$\condvars{\cdot}$ & Returns the set of variables occurring in a condition \\
\bottomrule
\end{tabular}
\captionof{table}{Legend for Section~\ref{section:conditions} and Algorithm~\ref{algo:fr}}
\label{tab:algo-legend}
\end{figure}

\begin{algorithm}[t]
    \caption{Simplified synthesis of condition $\frWriteCond{\procF}{\seExp}{\seCvar}{\roundPred}$}
    \begin{algorithmic}[1]
\REQUIRE Contract $\contract$, user $\honuser$, function $\procF \in \cFuncs{\contract}$, symbolic execution relation 
$\symExReli{\procG}{}$ for $\procG \in \cFuncs{\contract}$, 
variable $\eventvar \in \gvars{\contract}$, 
 and expression $\seExp$ in $\procF$ %
\STATE $\invaccumulator \gets \getRoundInv(\pathcondi{\procF}{\seExp}{\seCvar}{\userindexvar})$ \label{line:init}
\STATE $\nowriteset \gets \emptyset$; $\dSet \gets \ddepsi{\procF}{\seCvar}{\userindexvar} 
\setminus \lvars{\procF}{\userindexvar}$ \label{line:d-init} %
\FORALL{$y \in \dSet$, $\procG \in \cFuncs{\contract} 
        $} \label{line:ddeps-loop-begin}
            \STATE $\invaccumulator \gets 
            \mergeRoundInv(
                \invaccumulator,$ \\
                \hspace{\algorithmicindent}\hspace{\algorithmicindent}
                $\getRoundInv \left(
                    \setforallshort{\lvars{\procG}{\nuserindexvar}} \sendervari{\nuserindexvar} \neq \honuser \rightarrow \nowritecondi{\procG}{y}{\nuserindexvar}
                \right)
            )$
            \STATE $\nowriteset \gets \nowriteset \cup \{(y, \procG)\}$ \label{line:ddeps-loop-end}
    \ENDFOR
\STATE $\doneflag \gets \false$
    \WHILE{$\condvars{\invaccumulator} \times \cFuncs{\contract} \not \subseteq \nowriteset \land \doneflag = \false$} \label{line:main-loop-begin}
        \STATE $\doneflag \gets \true$
        \FORALL{$\procG \in \cFuncs{\contract}$} \label{line:g-loop-begin}
            \IF{$\execinvi{\procG}{~\predIsNoSender{\honuser}}{~\invpre}{\nuserindexvar}$} \label{line:main:algo-fr:inv-check}
                \STATE \textbf{continue}
            \ENDIF
            \STATE $\doneflag \gets \false$
            \STATE $y \stackrel{\text{choose}}{\gets}\condvars{\invaccumulator} \setminus \{ z ~|~ (z, \procG) \in \nowriteset \}$ \label{line:choose-y}
            \STATE $\invaccumulator \gets 
            \mergeRoundInv (
                    \invaccumulator,$\\
                    \hspace{\algorithmicindent}\hspace{\algorithmicindent}
                    $\getRoundInv\left(
                        \setforallshort{\lvars{\procG}{\nuserindexvar}} \sendervari{\nuserindexvar} \neq \honuser \rightarrow \nowritecondi{\procG}{y}{\nuserindexvar}
                    \right)
            )$ \label{line:merge-no-write-main-loop}
            \STATE $\nowriteset \gets \nowriteset \cup \{(y, \procG)\}$ \label{line:main-loop-end}
        \ENDFOR
    \ENDWHILE
    \RETURN $\invaccumulator$
    \end{algorithmic}
    \label{algo:fr}
\end{algorithm}

\subsection{Generation Algorithm}
Technically, we build the synthesis generation algorithm illustrated in the previous subsection upon a symbolic execution engine for smart contracts. 
The symbolic execution engine provides a logical description for the execution logic of a contract. 
More formally, for each contract function $\procG$, it describes a relation of tuples of the form $\seRes{\seSub}{\seCond}$ which indicate the state updates $\seSub$ that $\procG$ will conduct along the execution path enabled by path condition $\seCond$.
To this end, state updates map contract variables to symbolic expressions $\seExp$ (over global state variables).

Our algorithm (described in Algorithm~\ref{algo:fr}) takes as arguments the symbolic execution relations $\symExReli{\procG}{}$ for the functions $\procG$ of a contract $\contract$, a contract function $\procF$, a user variable $\userindexvar$, a contract variable $\seCvar$ and an expression $\seExp$
and generates conditions $\frWriteCond{\procF}{\seExp}{\seCvar}{\roundPred} = (\invpre, \inv)$. 
The generated condition $\frWriteCond{\procF}{\seExp}{\seCvar}{\roundPred}$ will then ensure that if the user (indicated by variable $\userindexvar$) executes a transaction $\htxv$ that invokes function $\procF$ at the beginning of a round where $\invpre$ holds, then executing $\htxv$ will deterministically assign the variable $\seCvar$ according to the symbolic expression $\seExp$.
The expression $\seExp$ here represents the concrete execution path that shall be taken when executing $\procF$.
In the example generation, we showed 
$$\frWriteCond{\text{\lstinline|mint|}}{(\text{\lstinline|msg.sender|}, \text{\lstinline|msg.value|} * (1-\text{\lstinline|fee|}))}{\text{\lstinline|Minted|}}{\roundPred}$$
i.e., the conditions for deterministically assigning the \lstinline|Minted| event according to the expression $(\text{\lstinline|msg.sender|}, \text{\lstinline|msg.value|} * (1-\text{\lstinline|fee|}))$, when executing \lstinline|mint|.
The expression $(\text{\lstinline|msg.sender|}, \text{\lstinline|msg.value|} * (1-\text{\lstinline|fee|}))$ is obtained by inlining the local assignments in the \lstinline|mint| function.

The symbolic execution relations allow us to formally define the notions of \emph{path conditions}, \emph{data dependencies}, \emph{no-write conditions}, \emph{invariants}, and \emph{round invariants} used in the previous example.
We additionally assume two functions $\getRoundInv$ and $\mergeRoundInv$ that abstract the generation of round invariants: 
The function $\getRoundInv$, provided a condition $\condVar$, generates a round invariant $(\invpre, \inv)$ such that $\condVar \equiv\inv$. 
In particular, $\getRoundInv$ may attempt to strengthen the invariant as described in the example, or default to $(\bot, \condVar)$ in case  the strengthening fails. The latter case would indicate that there is no safe interaction condition. 
The function $\mergeRoundInv$, provided two round invariants $(\invpre^1, \inv^1)$ and $(\invpre^2, \inv^2)$, creates a new round invariant 
$(\invpre, \inv)$ such that $\invpre \equiv \invpre^1 \land \invpre^2$ and $\inv \equiv \inv^1 \land \inv^2$, and, in particular, may simplify the conditions in the process.
We overview these notions and functions in Table~\ref{tab:algo-legend}.

Algorithm~\ref{algo:fr} proceeds as follows: 
It first computes a round invariant (using $\getRoundInv$) from the path condition $\pathcondi{\procF}{\seExp}{\seCvar}{\userindexvar}$ for writing $\seCvar$ to $\seExp$ using function $\procF$ (line~\ref{line:init}). 
Then, it iteratively refines $\invaccumulator$ by ensuring that for all contract functions $\procG$ and all variables $y$ that $\seExp$ directly depends on (as per its data dependencies $\ddepsi{\procF}{\seCvar}{\userindexvar}$), no transaction executing $\procG$ by a user different from $\honuser$ can write to $y$ (lines~\ref{line:ddeps-loop-begin}--\ref{line:ddeps-loop-end}).
This is realized by conjoining (using $\mergeRoundInv$) the current condition with round invariants derived from the conditions $\nowritecondi{\procG}{y}{\nuserindexvar}$, which ensure that function $\procG$ cannot write to $y$. 
Here, the universal quantification over all local variables of $\procG$ in $\nowritecondi{\procG}{y}{\nuserindexvar}$ and the premise $\sendervari{\nuserindexvar} \neq \honuser$ ensure that writes to $y$ are excluded regardless of the call parameters of $\procG$ as long as $\procG$ is invoked by a user different from $\honuser$.
Next, the algorithm enters a loop (lines~\ref{line:main-loop-begin}--\ref{line:main-loop-end}) where it checks for all contract functions $\procG$ whether the current $\invpre$ is an invariant under executions of functions $\procG$ by a user different from $\honuser$ (checked by $\execinvi{\procG}{\predIsNoSender{\honuser}}{\invpre}{\nuserindexvar}$, line~\ref{line:main:algo-fr:inv-check}).
If this is not the case, $\invaccumulator$ is further refined by selecting a variable $y$ from $(\invpre, \inv)$ that had not yet been recorded in $\nowriteset$ for function $\procG$ (line~\ref{line:choose-y}) and conjoining the corresponding condition $\nowritecondi{\procG}{y}{\nuserindexvar}$ (line~\ref{line:merge-no-write-main-loop}) to the current invariant. The set $\nowriteset$ here tracks pairs $(y, \procG)$ for which it has been excluded that $\procG$ writes $y$.
The algorithm terminates if all contract functions have been checked to maintain $\invpre$ or if 
all variables in $\invaccumulator$ got included in $\nowriteset$ for all contract functions $\procG$.
Note that also in the latter case, $\invaccumulator$ is ensured to be an invariant under all possible adversarial invocations of any $\procG$, since 
all variables in $(\invpre, \inv)$ have been shown unalterable by any function $\procG$.

\subheading{Backrunning.}
A deckstacking attacker may not only interfere with user transactions through \emph{frontrunning} (changing the behavior of honest transactions $\htxv$ by placing adversarial transactions \emph{before} them) but also through \emph{backrunning}. 
Backrunning refers to the ability of an attacker to cause divergent behavior of an attacker transaction $\atxv$ by placing it strategically \emph{after} honest user transactions.
To avoid backrunning, we devise an algorithm similar to Algorithm~\ref{algo:fr} (specified in Appendix~\ref{appendix:tool-soundness}).
This algorithm generates conditions, which ensure that the way an honest transaction $\htxv$ writes to a variable $\seCvar$ cannot change how an attacker transaction $\atxv$ writes variables from a set $\vSet$.
The resulting condition $(\invpre, \inv) =\brWriteCond{\procF}{\vSet}{\seCvar}{\roundPred}$ constitutes a round invariant enforcing that attacker transactions cannot change the values of variables in $\vSet$ based on an update to $\seCvar$ provoked by the honest user executing $\procF$. In other words: the assignments to variables in $\vSet$ are independent of how the honest user interacts with $\procF$ to write $\seCvar$.

\subsection{Soundness Proof}
\label{section:conditions:round-based}
To formally reflect the round-based nature of user strategies, we define \emph{round-based} user strategies $\roundstrat{\rstrat}$.
Round-based user strategies (identifiable by the over-arrow) schedule new transactions only at the beginning of a round and keep on submitting the transactions throughout the whole round until they are published.
They are derived from a round strategy $\rstrat$ which determines the transactions to be submitted during the round based on the round's initial configuration $\conf_r$.
In the following construction, we restrict individual users to only submit a single transaction $\htxv$ per round, i.e., only after $k$ blocks, they will submit consecutive transactions (as only then $\htxv$ got certainly executed). 
However, the results naturally extend to multiple transactions per user and round for independent transactions or blockchains that enforce transaction ordering of multiple transactions with a nonce mechanism (such as Ethereum).

\definecolor{figgreen}{RGB}{150,190,45}
\definecolor{figcyan}{RGB}{1,129,157}
\definecolor{figred}{RGB}{229,20,0}
\definecolor{figteal}{RGB}{1,108,102}
\definecolor{figblue}{RGB}{1,68,95}
\begin{figure}[t]  %
\centering 
\includegraphics[width=0.85\columnwidth]{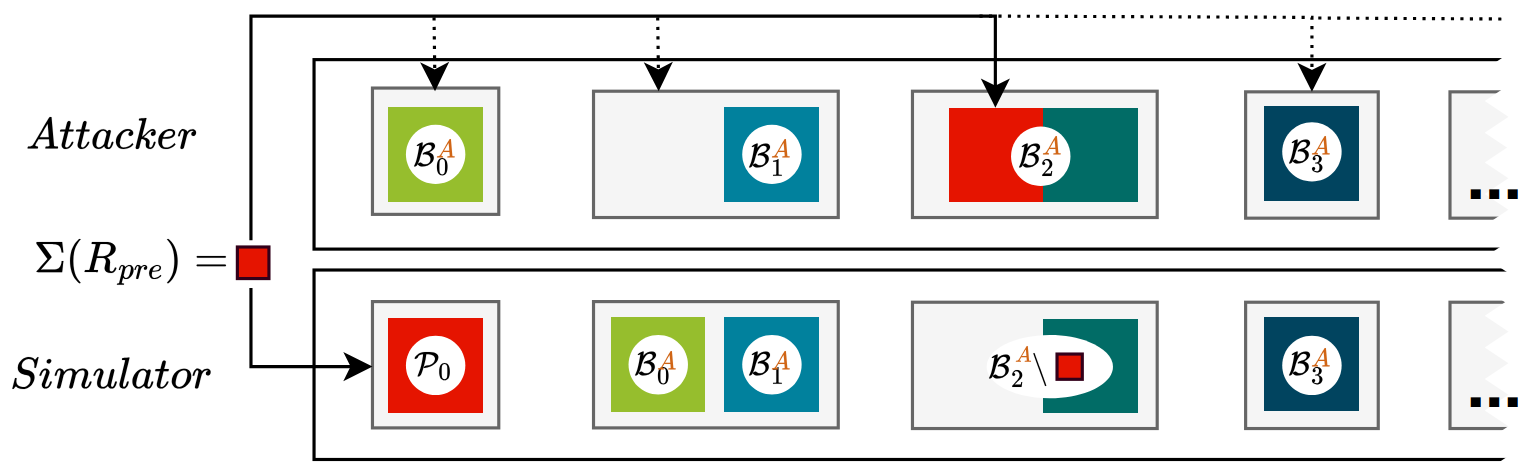}
\caption{Illustration of generic simulator construction: The generic simulator publishes the complete honest user mempool \mbox{\textcolor{red}{\rule{0.6em}{0.6em}}$~= \memcount[0]$} in its first block. Afterwards, the simulator inspects the already published block "\textcolor{red}{\rule{0.6em}{0.6em}}" to catch up (\textcolor{figgreen}{\rule{0.6em}{0.6em}}) with the attacker and to simulate the remaining blocks (\textcolor{figcyan}{\rule{0.6em}{0.6em}}, \textcolor{figteal}{\rule{0.6em}{0.6em}}, \textcolor{figblue}{\rule{0.6em}{0.6em}}) that it generates (by querying the attacker internally).}
\label{fig:proof}
\end{figure}

We show that the interaction conditions generated by Algorithm~\ref{algo:fr} (and the corresponding backrunning algorithm from Appendix~\ref{appendix:tool-soundness}) are sound, meaning that any user strategy $\roundstrat{\rstrat}$ that abides by these conditions ensures that the contract $\contract$ together with $\roundstrat{\rstrat}$ satisfies \sysnameAc{}.
To this end, we construct a generic simulator $\gensstrat$ that relies on the attacker $\astrat$ as a black box, as well as on the contract code $\contract$. 
This simulator will feature two modes of operation:
Based on the interaction conditions (computed from the code of $\contract$, and provided to $\gensstrat$ via $\confPred$), $\gensstrat$ will decide whether there exist transactions that can be safely executed by user $\honuser$ in the current round. 
If this is not the case, then $\gensstrat$ knows that the mempool at this point will be empty (and will stay empty for the whole round), and $\gensstrat$ can simulate the attacker $\astrat$ perfectly. 
If there do exist transactions that can be safely executed by $\honuser$, then $\gensstrat$ will enter a second mode where it first publishes the whole mempool $\mempool$ in its first block and then subsequently reconstructs the attacker's run based on this initial mempool publication, exploiting its access to the previous run. 
An overview of this behavior is illustrated in Figure~\ref{fig:proof}, and we defer the full simulator definition to Appendix~\ref{appendix:tool-soundness}. %
After publishing the initial mempool $\mempool$ in the first block, $\sstrat$ proceeds by invoking $\astrat$ on the empty run with mempool $\memcount[0] = \mempool$ that $\sstrat$ reads from the run.
From this, $\sstrat$ obtains $\acblock[0]$, the last block published by $\astrat$. 
Due to the round-based nature of $\roundstrat{\rstrat}$, $\sstrat$ can also reconstruct the mempool 
$\memcount[1] = \memcount[0] \backslash \acblock[0]$ that $\astrat$ used as input for generating the current block and hence also generate the next attacker block $\acblock[1]$. 
From this point on, $\sstrat$ exactly mimics the actions of $\astrat$ except for excluding mempool transactions from $\mempool$ already published in the first block of the round. 
This construction ensures (due to the required blockchain inclusion time) that at the end of each round both $\arun$ and $\srun$ contain all transactions of the original mempool, as well as all the attacker actions scheduled by $\astrat$ during this run. 

\subheading{Observable Behavior.}
The interaction conditions will enforce a strong observable relation $\simsim$ between the attacker run $\arun$ produced in the presence of $\astrat$ and the simulated run $\srun$ produced in the presence of $\gensstrat$ that covers many practical use cases.
We give a formal definition of $\simsim$ in Appendix~\ref{appendix:tool-soundness}.
Intuitively, $\arun \simsim \srun$ denotes that $\arun$ and $\srun$ executed the same number of rounds and in each completed round all observables (produced by the contract semantics) agree up to reordering. 

\subheading{Soundness.}
Henceforth, we assume contracts $\contract$ where all critical behavior is explicitly marked by an event. 
Concretely, invocations of the function $\procF$ by $\honuser$ will write a specific event variable $\eventvar_{\procF}$, which is assigned all relevant information to be recorded.
We denote the set of all variables that may influence event variables by $\varfixpoint$. 

To state soundness, we make use of the predicate $\frCondR{\conf}{\txv}{\varfixpoint}$ that checks whether executing $\txv$ in a round starting in configuration $\conf$ will have a deterministic effect on variables in $\varfixpoint$ (so either write them to the same value or leave them unchanged).
Intuitively, $\frCondR{\conf}{\txv}{\varfixpoint}$ checks that 
if symbolically executing $\txv$ in $\config$ results in assigning a variable $\seCvar$ from $\varfixpoint$ to an expression $\seExp$
then $\frWriteCond{\procF}{\seExp}{\seCvar}{\roundPred}$ holds. 
Similarly, we use a predicate $\brCondR{\conf}{\txv}{\varfixpoint}$ checking whether for all variables $\seCvar$ from $\varfixpoint$, which will be written when executing $\txv$ in $\conf$, it holds that $\brWriteCond{\procF}{\varfixpoint}{\seCvar}{\roundPred}$. So, intuitively, $\brCondR{\conf}{\txv}{\varfixpoint}$ ensures that an attacker may not rely on variable changes induced by $\txv$ to provoke changes in $\varfixpoint$ (through backrunning).

\begin{theorem}[Soundness of Interaction Conditions]\label{theorem:soundness-simplified}
Let $\contract$ be a contract,
   $\honuser$ be a user and $\varfixpoint$ be the set of variables influencing event variables in contract $\contract$.
   Let $\symExReli{\procG}{}$ be sound and complete symbolic execution relations for functions of $\contract$.
    Let $\cstrat = \roundstrat{\rstrat}$ be the user strategy for a round strategy $\rstrat$ (for $\honuser$ and $\contract$) such that for all%
    \begin{enumerate}[leftmargin=*] %
        \item configurations $\conf, \conf'$ with $\conf \equivfor{\varfixpoint} \conf'$ it holds that $\rstrat(\conf) = \rstrat(\conf')$
        \item configurations $\conf$, $\rstrat(\conf) \neq \emptyset$ implies that $\rstrat(\conf) = \{ \txv \}$ for some $\txv$ 
        and $\frCondR{\conf}{\txv}{\varfixpoint}$, $\brCondR{\conf}{\txv}{\varfixpoint}$
        and $\brCondR{\conf}{\txB}{\varfixpoint}$ hold.     
    \end{enumerate}
    Further, let $\confPred$ be defined as 
    $$\confPred(\conf) :\Leftrightarrow \brCond{\conf}{\txB}{\varfixpoint}~\land~\exists \txv.~ \frCond{\conf}{\txv}{\varfixpoint} ~\land~ \brCond{\conf}{\txv}{\varfixpoint}$$
    Then for all runs $\arun$ such that $\combine{\astrat}{\cstrat} \semantics \arun$
    there exists a run $\srun$ such that $\combine{\gensstrat}{\cstrat} \semantics \srun$
    and $\arun \simsim \srun$.
\end{theorem}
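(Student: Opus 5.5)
We prove the statement by induction on the number of completed rounds of $\arun$, keeping a round-level simulation invariant. Split $\arun$ into rounds $\arunround{0}, \arunround{1}, \dots$ of $k$ blocks each, where the last round may be incomplete. We then build $\srun$ round by round as the run that $\gensstrat$ produces against $\cstrat$. The invariant is that after each completed round $r$, the final configurations $\aconf^r$ and $\sconf^r$ satisfy $\aconf^r \equivfor{\varfixpoint} \sconf^r$, and the observables of the two rounds agree up to reordering. By condition (1), this invariant forces $\rstrat(\aconf^r) = \rstrat(\sconf^r)$. The honest user therefore submits the same mempool $\mempool$ at the start of round $r+1$ in both worlds. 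This is what allows $\gensstrat$ to reconstruct the attacker's view: $\memcount[0] = \mempool$ can be read from the first block it published, and it can recompute $\memcount[i+1] = \memcount[i] \setminus \acblock[i]$ because the round-based strategy does not change its submission within the round.

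Before the induction, we check that $\gensstrat$ is admissible, i.e.\ that $\isSim{\gensstrat}$ holds. In mode one, $\confPred$ fails; by condition (2) and the definition of $\confPred$, the mempool is then empty throughout the round, so every substitution $\rho$ acts trivially. In mode two, the first block is the whole mempool in its given order, so $\rho$ commutes with it. All later blocks contain only attacker-crafted transactions, because honest transactions from $\mempool$ are filtered out, and those blocks depend only on the run. We also check that the secret and knowledge constraints hold, since $\gensstrat$ uses honest-user terms only after they appear in the run. Finally, we check inclusion within $k$ blocks, which is immediate because $\mempool$ is published in the first block.

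For the inductive step, take round $r+1$. If $\rstrat$ returns $\emptyset$, then $\gensstrat$ replays $\astrat$ exactly, the two runs coincide on this round, and the invariant is preserved. Otherwise $\rstrat$ returns $\{\txv\}$ with $\frCondR{\conf}{\txv}{\varfixpoint}$, $\brCondR{\conf}{\txv}{\varfixpoint}$ and $\brCondR{\conf}{\txB}{\varfixpoint}$. In $\arun$, the round is an interleaving in which $\txv$ sits at some position among the attacker transactions $\atxv_1,\dots,\atxv_m$ and the block-closing transactions $\txB$. In $\srun$, $\txv$ comes first and is followed by the same sequence of attacker transactions; this sequence is literally the same because $\gensstrat$ queries $\astrat$ on the reconstructed views. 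We then argue a commutation lemma: moving $\txv$ to the front preserves the $\varfixpoint$-projection of the final state and the multiset of emitted observables. This uses the round-invariant property (a) of the Algorithm~\ref{algo:fr} output. Because $(\invpre,\inv)$ holds at the start of the round and $\invpre$ is maintained by every non-$\honuser$ call, as established by the $\textit{Inv}$ check or by the termination condition $\condvars{\invaccumulator}\times\cFuncs{\contract}\subseteq\nowriteset$, $\inv$ holds at every point of the round in both runs. Property (b) then says that $\inv$ fixes the path condition of $\txv$ and forbids attacker writes to its data dependencies. Hence $\txv$ writes each $\seCvar \in \varfixpoint$ to the same value $\seExp$ wherever it is placed; soundness and completeness of $\symExReli{\procG}{}$ transfer this symbolic argument to concrete executions. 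The backrunning conditions supply the converse direction: each attacker transaction and each $\txB$ writes the same values to $\varfixpoint$ whether it runs before or after $\txv$. A swap-adjacent-transactions argument, carried out by induction on the position of $\txv$, then yields $\equivfor{\varfixpoint}$-equal end states and the same observables up to permutation. These are exactly $\simsim$ together with the invariant for round $r+1$.

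Incomplete final rounds are handled by truncating $\srun$ at the same number of blocks. The relation $\simsim$ only constrains completed rounds, so the truncation is harmless.

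The main obstacle will be the commutation lemma. Its hard part is showing that attacker transactions cannot break $\inv$ mid-round while the block number advances; this relies on the $\getRoundInv$ strengthening from $\blockvar$ to $\blockvar+k-1$. A second difficulty is that the attacker's choices may depend on $\txv$'s content. This dependence is harmless, because the simulated sequence is produced by the same black-box $\astrat$ on the same inputs, so only the position of $\txv$ differs between the two runs.
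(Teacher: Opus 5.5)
Your skeleton matches the paper's: induction over rounds with $\conf_{\arun} \equivfor{\varfixpoint} \conf_{\srun}$, condition (1) to equalize the mempools, and adjacent swaps justified by front- and backrunning conditions that persist through the round. However, you describe the round that $\gensstrat$ produces incorrectly, and this hides a kind of swap the proof needs. Write $\acblock[0], \dots, \acblock[k-1]$ for the attacker's blocks of the round.

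In a $\confPred$-round with $\rstrat = \{\txv\}$, $\gensstrat$ spends its first block on the mempool, i.e.\ it publishes $[\txv,\txB]$. It then has only $k-1$ blocks left for the attacker's $k$. So its second block packs the attacker transactions of $\acblock[0]$ and $\acblock[1]$, and each later block $i \geq 2$ copies $\acblock[i]$ (always minus the mempool transactions). This shift is unavoidable: $\acblock[0]$ may react to $\txv$ and reuse its terms, so it can only be reproduced once $\txv$ is on chain.

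Hence the simulated round is not just ``$\txv$ moved to the front''. The first $\txB$ of the round also moves in front of all attacker transactions of $\acblock[0]$, which therefore execute one block later than in $\arun$. Your induction on the position of $\txv$ never reaches this sequence. You also need swaps of $\txB$ with attacker transactions (Lemma~\ref{lem:com-tau-attacker}). These are exactly what the hypothesis $\brCondR{\conf}{\txB}{\varfixpoint}$ and the matching conjunct of $\confPred$ are for. Without them, consider an attacker function that writes a variable of $\varfixpoint$ only when the block number is a multiple of $k$: it makes $\arun$ and the simulated run emit different observables.

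In your text this hypothesis only backs the claim that $\txB$ behaves the same before or after $\txv$. That claim is trivial, since $\txB$ only increments the block number. The substance of the $\txv$/$\txB$ swap is that the effect of $\txv$ does not depend on the block number within the round, and the frontrunning round invariant provides that.

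The same oversight breaks your empty case. The mode of $\gensstrat$ is selected by $\confPred$, not by whether $\rstrat$ is empty. Condition (2) gives $\neg\confPred(\conf) \Rightarrow \rstrat(\conf) = \emptyset$, but not the converse. When $\confPred$ holds and $\rstrat$ returns $\emptyset$, $\gensstrat$ still publishes a first block consisting only of $\txB$ and then merges $\acblock[0]$ and $\acblock[1]$. So it does not replay $\astrat$ exactly, and you again need the $\txB$/$\atxv$ permutation argument (Lemma~\ref{lem:round-permutation-tau}).

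To repair the proof:
\begin{enumerate}
\item Distinguish the three cases $\neg\confPred$, $\confPred \land \rstrat = \emptyset$ and $\confPred \land \rstrat = \{\txv\}$.
\item In the last two cases, show that the simulated round is a permutation $\perm$ of the attacker round with $\perm(\txlist) \equaluptotx{\txv, \txB} \txlist$.
\item Justify all three kinds of adjacent swaps ($\txv$ with $\atxv$, $\txv$ with $\txB$, $\txB$ with $\atxv$) by the persistence of the respective conditions through the round.
\item Add to your invariant that the attacker run which $\gensstrat$ reconstructs internally (via $\getAttackrunN$) equals $\arun$. Your claim that $\astrat$ is queried on the same inputs needs the entire attacker run, not merely agreement on $\varfixpoint$.
\end{enumerate}
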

The (slightly simplified) theorem states that for any round-based user strategy that makes scheduling decisions only based on variables in $\varfixpoint$ 
while respecting the interaction conditions, every attacker run can be simulated by our simulator $\gensstrat$. 
In particular, this implies \sysnameAc for the set of all user strategies that satisfy this requirement.
The complete theorem and soundness proof can be found in Appendix~\ref{appendix:tool-soundness}.
Note that the theorem, in addition to respecting the frontrunning and backrunning conditions for the honest user transaction, also requires that $\brCond{\conf}{\txB}{\varfixpoint}$, so that attacker transactions may not impact variables in $\varfixpoint$ by backrunning the block finalizing transaction $\txB$. This requirement excludes an attacker from influencing the values of $\varfixpoint$ through the positioning of attacker transactions within a round.
An attacker may use such power to perform changes to $\varfixpoint$ based on their mempool knowledge, which would break \sysnameAc.
\begin{proof}[Proof Sketch]
At the beginning of each round, the simulator $\gensstrat$ decides whether to simulate the attacker based on the empty mempool (in case that $\confPred$ does not hold) or to publish the mempool in the first block of the round (in case that $\confPred$ holds).
In the first case, by the definition of $\confPred$ it is indeed ensured that $\rstrat$ will not schedule any transaction for the whole round. Consequently, the simulator can perfectly mimic the attacker $\astrat$ since no honest user transactions need to be scheduled.
In the second case, it is still guaranteed that the transactions scheduled by $\gensstrat$ at the end of each round are a permutation of those scheduled by $\astrat$ such that the transaction sequences in both runs can be transformed into each other by pairwise swapping of either (i) the honest user transactions $\txv$ and attacker transactions $\atxv$; or (ii) honest user transactions $\txv$ and block-finalizing transactions $\txB$; or (iii) block-finalizing transactions $\txB$ and attacker transactions $\atxv$. 
In all three cases, we can show that the transactions commute pairwise in that the transactions will still produce the same observables and the variables in $\varfixpoint$ will remain unchanged when swapping the transactions.
This follows from $\frCond{\conf}{\txv}{\varfixpoint}$, $\brCond{\conf}{\txv}{\varfixpoint}$ and $\brCond{\conf}{\tau}{\varfixpoint}$, which enforce that $\txv$ and $\atxv$ only write variables if their critical dependencies are ensured to be left untouched by the other transactions of the round.
In this way, $\varfixpoint$ stays unchanged, ensuring (by its definition) that also observables produced from events are the same independently of the transaction ordering.
\end{proof}

\section{Implementation}
\label{sec:evaluation}

\label{sec:evaluation:impl}
\newcommand{\na}{n/a}

\begin{figure*}
    \footnotesize
    \hfill
    
    \begin{minipage}{0.67\textwidth}
    \centering
    \begin{tabular}{l c c c cc cc ccc}
        \toprule
        Contract &
        Aud. &
        Issue &
        Sev. &
        \multicolumn{2}{c}{LOC} &
        \multicolumn{2}{c}{$\DStwo$} &
        \multicolumn{3}{c}{$\DSthree$} 
        \\
        \cmidrule(lr){5-6}
        \cmidrule(lr){7-8}
        \cmidrule(lr){9-11}
        & & ID & &
        Vuln & \phantom{~}Fix\phantom{~} &
        Sat & Sound &
        Sat & Sound & Func 
        \\
        \midrule
            \texttt{AgentRegistryCore} & CD & \href{https://diligence.security/audits/2022/11/forta-delegated-staking/forta-audit-2022-10.pdf}{5.5} & M & 1394 &1385& unsat & \cmark & sat & \cmark&zero-day \\

            \texttt{BatchedBancorMM} & CD & \href{https://diligence.security/audits/2019/11/aragonblack-fundraising/#fees-can-be-changed-during-the-batch}{6.2} & - &1982 &1998& \na & - & \na & -  & -  \\

            \texttt{ETHRegistrarContr.} & CD & \href{https://diligence.security/audits/2019/03/ens-permanent-registrar/#ethregistrarcontrollerregister-is-vulnerable-to-front-running}{3.3} & C &1313 &1387& unsat & \cmark & sat& \cmark& \cmark\\

            \texttt{Funds} & CD & \href{https://github.com/ConsenSysDiligence/atomic-loans-audit-report-2019-07?tab=readme-ov-file#615-funds-contract-initialization-race-condition}{6.15} & M & 921&935& unsat & \cmark & sat&\cmark& \cmark \\

            \texttt{KeepRandomBeaconOp.} & CD & \href{https://diligence.security/audits/2020/02/thesis-tbtc-and-keep/thesis-tbtc-audit-2020-01.pdf}{5.7} & H & 1997 &2191& unsat & \cmark & unsat &\cmark& \lmark\\
        
            \texttt{Loans} & CD & \href{https://github.com/ConsenSysDiligence/atomic-loans-audit-report-2019-07?tab=readme-ov-file#613-loanspay-allows-anyone-to-force-the-borrower-to-repay-a-loan}{6.16} & M & 932& 937& sat & \cmark & sat& \cmark & \cmark \\

            \texttt{Pool} & CD & \href{https://diligence.security/audits/2023/05/tidal/tidal-audit-2023-04.pdf}{3.9} & M & 1239& 1111& unsat & \cmark &  unsat & \cmark & \xmark  \\

            \texttt{RocketStorage} & CD & \href{https://diligence.security/audits/2021/04/rocketpool/#rocketstorage---anyone-can-setupdate-values-before-the-contract-is-initialized}{6.5} & C & 108&112& sat & \cmark & sat& \cmark & \cmark\\

            \texttt{RocketTokenRPL} & CD & \href{https://diligence.security/audits/2021/04/rocketpool/#rockettokenrpl---inaccurate-inflation-rate-and-potential-for-manipulation-lowering-the-real-apy}{6.9} & H & 377& 407& unsat & \cmark & unsat & \cmark & \xmark \\

            \texttt{TransactionManager} & CD & \href{https://diligence.security/audits/2021/07/connext-nxtp-noncustodial-xchain-transfer-protocol/connext-audit-2021-07.pdf}{4.14} & M & 1241&1248& unsat & \cmark & unsat&\cmark&\lmark\\

            \texttt{Factory} & N & \href{https://github.com/NethermindEth/PublicAuditReports/blob/main/NM0072-FINAL_ARCADIA.pdf}{6.7} & M & 460&460&  unsat & \cmark & unsat&\cmark&\lmark\\

            \texttt{PolygonWorldID} & N & \href{https://github.com/NethermindEth/PublicAuditReports/blob/main/NM0122-FINAL_WORLDCOIN.pdf}{5.4.4} & M & 1517& 1527& unsat & \cmark & sat&\cmark&\cmark\\

            \texttt{PWNSimple} & N & \href{https://github.com/NethermindEth/PublicAuditReports/blob/main/NM0074-FINAL_PWN.pdf}{7.9} & L & 218& 265& sat & \cmark & sat & \cmark & \cmark\\  

            \texttt{StateBridge} & N & \href{https://github.com/NethermindEth/PublicAuditReports/blob/main/NM0122-FINAL_WORLDCOIN.pdf}{5.4.4} & M & 749& 699& unsat & \cmark & sat &\cmark &\cmark \\

            \texttt{ChildRegistrar} & Q & \href{https://certificate.quantstamp.com/full/rara.pdf}{QSP-7} & L & 444&479& unsat & \cmark & sat&\cmark&\cmark\\ 

            \texttt{RootRegistrar} & Q & \href{https://certificate.quantstamp.com/full/rara.pdf}{QSP-7} & L & 689&722& unsat & \cmark & sat&\cmark& \cmark \\ 

            \texttt{GnosisSafeRegistry} & OZ & \href{https://www.openzeppelin.com/news/augur-core-v2-audit-components#critical-severity}{C01} & C & 644& 690& sat & \cmark &sat & \cmark & \cmark\\

            \texttt{OptionsContract} & OZ & \href{https://www.openzeppelin.com/news/opyn-contracts-audit#medium-severity}{M01} & M & 557&927& unsat & \cmark &sat &\cmark&\cmark\\ 

            \texttt{Governance} & OZ & \href{https://www.openzeppelin.com/news/audius-contracts-audit}{Intro} & - & 1017& 1317& sat & \cmark  & sat& \cmark & \cmark \\

            \texttt{ArroToken} & OZ & \href{https://www.openzeppelin.com/news/arrotoken-audit#medium-severity}{M-1} & M & 104&282& unsat & \cmark &unsat&\cmark & \xmark \\

            \texttt{AccountingEngine} & OZ & \href{https://www.openzeppelin.com/news/geb-protocol-audit}{M01} & M & 492&497& sat & \cmark & sat& \cmark & \cmark \\

            \texttt{ACOToken} & OZ & \href{https://www.openzeppelin.com/news/aco-protocol-audit#medium-severity}{M03} & M & 890&912& sat & \cmark & sat&\cmark& zero-day \\

            \texttt{LoanProposalImpl} & ToB & \href{https://github.com/trailofbits/publications/blob/master/reviews/2023-04-mysoloans-securityreview.pdf}{3} & M & 993& 1031&sat & \cmark & unsat & \cmark & \lmark \\

            \texttt{SpoolAccessControl} & ToB & \href{https://github.com/trailofbits/publications/blob/master/reviews/2023-03-spool-platformv2-securityreview.pdf}{2} & H & 658& 661& unsat & \cmark &  \na & - & - \\
        
         \bottomrule
    \end{tabular}
    \end{minipage}
    \begin{minipage}{0.27\textwidth}
    \centering
    \textbf{Auditors (Aud.)}
    \begin{tabular}{ll}
    \midrule
    CD & ConsenSys Diligence \\
    N  & Nethermind \\
    Q  & Quantstamp \\
    OZ & OpenZeppelin \\
    ToB & Trail of Bits \\
    \bottomrule
    \\
    \multicolumn{2}{c}{\textbf{Severity (Sev.)}} \\
    \midrule
    C & Critical \\
    H & High/Major \\
    M & Medium \\
    L & Low/Minor  \\
    - & Unranked \\
    \bottomrule
    \\
    \multicolumn{2}{c}{\textbf{Verdict}} \\
    \midrule
    sat & Satisfiable \\
    unsat & Unsatisfiable \\
    \cmark & Yes (for soundness \& functionality) \\
    \xmark & No (for soundness \& functionality) \\
    \lmark & Overapproximation in symbolic exec. \\
    \na & Timeout/undecidable by Z3 \\
    \bottomrule
    \end{tabular}
    \end{minipage} 
    
    \captionof{table}{\sysnameTool results for datasets $\DStwo$ and $\DSthree$ for satisfiability, soundness and functionality of synthesized condition.
    }
    \label{tab:appendix:tool-evaluation}
\end{figure*}

We implemented the synthesis Algorithm \ref{algo:fr} and refer to this prototype implementation as \sysnameTool (short for \textit{No Deckstacking Synthesizer}).
\sysnameTool synthesizes frontrunning-resistant interaction conditions $\frWriteCond{\procF}{\seExp}{\seCvar}{\roundPred}$ for an honest user $\honuser$ for all events in a smart contract function $\procF$.
The implementation of \sysnameTool builds on top of the existing symbolic execution engine from \cite{zhang2024nyx} and the static analysis framework in~\cite{feist2019slither}.
In particular, we used the symbolic execution engine to implement the path condition generation,
dependency analysis 
and the invariant verification from Section~\ref{section:conditions}.
The implementation is agnostic to the actual symbolic execution engine, closely follows the description of the algorithm and uses Z3~\cite{de2008z3} as its underlying constraint solver.

\sysnameTool utilizes smart strategies during round-invariant strengthening ($\getRoundInv$) and merging ($\mergeRoundInv$):
To avoid over-strengthening of path conditions and losing precision, round-invariant checks and strengthening attempts materialize only on subterms of a path condition.
In detail, we (lazily) compute the disjunctive normal form (DNF) of a path condition and enforce round invariants on every literal of that DNF---literals that are already round-invariant remain unchanged and will thereby not be over-strengthened.
Within bounds, \sysnameTool keeps the DNF structure intact during the merging of conditions to produce more naturally structured output conditions.
For example the output condition in Figure \ref{fig:cond:timlock-fix} for the \lstinline|TimelockedFeeMinted| contract from Figure \ref{fig:sc:timlock-fix} closely resembles the manually derived condition from Section \ref{section:conditions}. 
Note that extracting all but the rightmost bit and checking equivalence with \lstinline|0| is equivalent to checking that the value is less than or equal to \lstinline|1|. 

\begin{figure}[t]
  \raggedleft
  \footnotesize
  \begin{minipage}{0.925\linewidth}
\begin{lstlisting}[basicstyle=\scriptsize\ttfamily, numberstyle=\tiny]
ForAll(ATTACKER_executeFeeChange(uint256)__newFee,
 Or(Extract(31, 1, 
      timestamps[sha256(abi.encodePacked())("newFee",
        ATTACKER_executeFeeChange(uint256)__newFee)]) == 0),
    attacker_sender != owner,
    Not(ULE(timestamps[sha256(abi.encodePacked()("newFee",
              ATTACKER_executeFeeChange(uint256)__newFee))],
           block_number_honest + (k - 1)))))
\end{lstlisting}
  \end{minipage}
      \caption{\sysnameTool output condition for the \lstinline|TimelockedFeeMinted| contract from Figure \ref{fig:sc:timlock-fix}.}
      \label{fig:cond:timlock-fix}
  \end{figure}

Lastly, \sysnameTool has full support for additional user-supplied preconditions on the contract state to compute more situational interaction conditions.
E.g., a realistic precondition that would simplify the above condition even further is \lstinline|honest_sender != owner|.

\subsection{\sysnameTool Evaluation}\label{section:eval}

To validate \sysnameTool, 
we run \sysnameTool on all 48 contracts in $\DStwo$ and $\DSthree$ (cf. Section~\ref{sec:problem-statement}) with a time limit of two hours per contract.
After that, we manually evaluated the computed interaction conditions of all contract pairs, i.e., the vulnerable and fixed versions, for (1) soundness and whether the conditions (2) do not restrict the intended sound functionality of the fixed contract.

Table~\ref{tab:appendix:tool-evaluation} presents the detailed results of our tool evaluation. 
For each contract pair, we report the condition and soundness judgments, as well as the functional correctness of the conditions synthesized for the fixed contract.
For (1) soundness, we concluded that all computed conditions are indeed sound w.r.t. \sysnameAc.
Checking soundness, in particular, included checking that the interaction condition for vulnerable contracts does not allow the described frontrunning attack.
Regarding (2) functionality, for 13 out of 22 fixed contracts, we found the conditions to be functional.
In two of the remaining cases, through an investigation of the interaction condition, we found the contract to remain vulnerable to the described attack.
Therefore, \sysnameTool correctly synthesized non-functional conditions as no sound functional condition for the behavior could exist. 
\sysnameTool was unable to compute a satisfiable condition in seven cases.
Four of those seven cases can be attributed to overapproximations in the symbolic analysis, e.g., due to complex language features such as dynamic contract creations. 
The analysis hit the timeout in the remaining three contract executions.
These timeouts can be partially attributed to Z3 being unable to solve satisfiability queries due to the undecidability of the theories used in the symbolic execution (e.g., quantification over arrays).

\vspace{0.5 em} 
\noindent \textbf{Zero-day Vulnerabilities.}
Both vulnerable contracts that we found in $\DSthree$ were zero-day vulnerabilities.

\begin{figure}[t]
\raggedleft
\footnotesize
\begin{minipage}{0.925\linewidth}
\begin{lstlisting}
contract CollateralProtocol {  // from ACOToken
  mapping (address => uint) members;
  mapping (uint => uint) debt;
  mapping (uint => uint) collateral;
  
  function sellCollaterals(uint payoutAmount,
                    uint salt) onlyOwner {
    uint start = salt % | collateral |;
    for(uint i = start; payoutAmount > 0; ++i)
      if (0 < debt[i]) {
        payoutAmount -= collateral[i];
        collateral[i] = 0;
      } 
  } // ...
\end{lstlisting}
\end{minipage}
    \caption{Simplified \lstinline|CollateralProtocol| vulnerability~\cite{acoprotocol-fix}.}
    \label{fig:sc:collateral}
\end{figure}
The first zero-day vulnerability is present in a collateral protocol DeFi smart contract~\cite{acoprotocol-fix} that provides functionalities to buy options
by depositing a collateral that can be liquidated again when redeeming the option.
However, in the given contract, a malicious user can prevent their own collateral from being liquidated using frontrunning.
The deployed liquidation algorithm liquidates the collaterals of users (whose deposits are not backed by options) sequentially until an accumulated \lstinline|payoutAmount| is reached by iterating through an array of collateral holders starting at some position \lstinline|start|. Prior to the audit, this starting position \lstinline|start| of the liquidation sequence was a hard-coded constant.
This would have enabled malicious users who were monitoring the mempool for liquidation calls to frontrun any liquidation sequence to back users that appear earlier in the liquidation sequence with options, thereby circumventing their liquidation.
The auditor identified this \deckstackingAc vulnerability.

As a consequence, the developers modified the contract to allow the liquidator to dynamically set the \lstinline|start| of the liquidation sequence to a user-defined \lstinline|salt| value.
We sketched the essence of the modified contract in Figure~\ref{fig:sc:collateral}.
However, a \deckstackingAc attacker with access to the mempool can precompute all transactions before their inclusion in the blockchain and hence foresee the effect on the liquidation behavior for any given \lstinline|salt| provided by the liquidator.
Consequently, the malicious user can adapt to the provided \lstinline|salt| and continue frontrunning the liquidation sequence to prevent their own liquidation.

The second zero-day vulnerability is present in a staking contract~\cite{forta-fix} where new participants are registered using a \lstinline|createAgent| function.
The \lstinline|createAgent| function takes an \lstinline|agentId|, an \lstinline|owner|, and additional metadata as parameters and creates a new agent entry in the contract state when the given \lstinline|agentId| is not registered yet.
This poses a series of denial-of-service risks caused by malicious users monitoring the mempool for \lstinline|createAgent| function calls and frontrunning them with duplicated \lstinline|createAgent| calls with altered metadata (cf. Figure~\ref{fig:sc:voting-dos}).
This effectively reverts the user call as their transaction would register a duplicate \lstinline|agentId|.
In particular, the malicious user could register their agent with the same \lstinline|owner| address as the original transaction.
While this strengthens the attack's impact, it could trick users into thinking that their original transaction has been accepted.
This attack vector was pointed out by the audit company.

The contract developers implemented a fix ensuring that the \lstinline|owner| parameter cannot be frontrun anymore, i.e., the fix adds a condition \lstinline|msg.sender == owner| to the \lstinline|createAgent| function.
A simplified implementation of the allegedly fixed version can be found in Figure~\ref{fig:sc:staking}.
However, this does not mitigate the underlying DoS attack on the \lstinline|createAgent| function.
The malicious user can continue frontrunning \lstinline|createAgent| transactions and permanently prevent the registration of new agents.

\begin{figure}
    \raggedleft
    \begin{minipage}{0.925\linewidth}
    \begin{lstlisting}
contract StakingProtocol {  // from AgentRegistryCore
    mapping(uint => address) agents;
    
    function createAgent
            (address owner, uint agentId, /*...*/) {
        require(msg.sender == owner);
        require (agents[agentId] == 0);
        agents[agentId] = owner;
        // ...
    \end{lstlisting}
    \end{minipage}
        \caption{Simplified \lstinline|StakingProtocol| vulnerability \cite{forta-fix}.}
        \label{fig:sc:staking}
\end{figure}

We responsibly disclosed both vulnerabilities to the corresponding contract developers and auditors.

\section{Related Work}

\noindent \textbf{Frontrunning Attacks and Mitigations.}
Different forms of frontrunning attacks and defenses have been widely discussed in the literature and systematized by
Eskandari \etal~\cite{DBLP:conf/fc/EskandariMC19}, Baum \etal~\cite{DBLP:journals/iacr/BaumCDFG21}, and Zhou \etal~\cite{DBLP:journals/iacr/ZhouXECWWQWSG22}. 
In particular, Eskandari \etal~\cite{DBLP:conf/fc/EskandariMC19} provide a taxonomy of frontrunning attack methods as well as techniques to mitigate frontrunning incidents. 
All these works study the concept of frontrunning empirically and do not aim to provide a formal characterization of the underlying problem.

\vspace{0.5 em}\noindent \textbf{Characterizing Frontrunning.}
As discussed in Section~\ref{ps-sota}, the existing approaches for characterizing a smart contract's vulnerability to frontrunning attacks center around the concepts of MEV and TOD.
MEV~\cite{daian2019flash, babel2023clockwork} has been commonly used as a metric to quantify the monetary gains of frontrunning attackers and has been formally defined in~\cite{babel2023clockwork}. 
However, this definition assumes knowledge of the concrete mempool and blockchain state and, hence, cannot be used to statically assess a contract's susceptibility to MEV extraction.
In more recent work, Bartoletti \etal~\cite{bartoletti2023theoretical} study MEV in a formal model of smart contract execution and introduce the notion of \emph{universal MEV}, which characterizes the maximal gain that can be achieved by any arbitrary adversary, and study formal proofs of MEV freedom. The limitations of MEV, discussed in Section~\ref{ps-sota}, still apply to universal MEV.

The notion of TOD has been first introduced in~\cite{luu2016making} and then formally defined in~\cite{grishchenko2018semantic}.
Since TOD coarsely overapproximates a contract's vulnerability to frontrunning attacks, it has been used as a basis of several practical tools (most notably Nyx~\cite{zhang2024nyx} and Sailfish~\cite{bose2022sailfish}) for the static detection of frontrunning vulnerabilities. These tools proceed by localizing the causes of transaction order dependencies (e.g., read-write hazards among variables, which could be accessed in concurrent transactions) and then apply heuristics to identify whether an identified cause gives rise to an exploitable frontrunning attack. 
Since those heuristics are not formally defined, it is unclear which precise security notion the resulting tools are checking.

\vspace{0.5 em}\noindent \textbf{Fair Ordering.}
An orthogonal line of work is mitigating \deckstacking on the system level by enforcing a mining process that provides fair ordering guarantees.
In its generality, it has been shown in~\cite{DBLP:conf/crypto/Kelkar0GJ20} that a strong fair ordering guarantee cannot be achieved in an asynchronous system. 
However, multiple works propose protocols to achieve weaker notions of fair ordering both in the context of a permissioned Byzantine Fault Tolerance (BFT) setting (which assumes a fixed set of miners)~\cite{DBLP:conf/crypto/Kelkar0GJ20,kelkar2023themis} and in the permissionless blockchain setting (where the set of miners can change dynamically)~\cite{DBLP:conf/asiapkc/KelkarDK22,cachin2022quick}. 
These works are orthogonal to the results presented in this paper in that \sysnameAc aims to characterize the sensitivity of a smart contract to the influences of a strong reordering miner.
If a contract satisfies \sysnameAc, then this property would also hold for any miner model that assumes weaker reordering capabilities.

\vspace{0.5 em}\noindent \textbf{Architectural-level mitigations.}
Alternative approaches to mitigate \deckstacking on an architectural level center around proposals for cryptographic protocols that aim to decouple the knowledge of mempool transactions and the attacker's ability to react upon this knowledge.
F3B~\cite{zhang2022f3b} achieves this goal by encrypting transactions under a secret key that is only revealed by a dedicated decentralized committee upon transaction inclusion.
FIRST~\cite{sariboz2025first} makes use of verifiable delay functions to ensure that users can only publish contract transactions after an amount of time that allows all pending mempool transactions to be included in the blockchain.
Both solutions rely on additional (partially) trusted committees and lack rigorous analysis.
In contrast to these architectural-level mitigation attempts, our work targets \deckstacking at the smart-contract level, aiding the design of provably secure contracts within the existing blockchain architecture.

\section{Conclusion}
In this paper, we introduced \sysname, the first formal and precise characterization of a contract's robustness against frontrunning attacks. The definition is motivated by the fundamental shortcomings of existing frontrunning detection techniques, which we empirically demonstrate. More precisely, we showed that existing notions are particularly restricted to cases where: 
(1) attacks need to be immediate, and 
(2) the adversary's goal needs to be \emph{monetary}. Our empirical study showed that these limitations prevent MEV from capturing more than 55\% of audited vulnerabilities with confirmed frontrunning. 

\sysname is a simulation-based notion that contrasts how the executions of a contract interacting with an unrestricted block-generating attacker differ from those executions possible in the presence of a limited attacker that makes scheduling decisions without inspecting the mempool.
In particular, our definition highlights that resistance to frontrunning attacks depends on user interactions. Based on this insight, we propose an algorithm to compute sound interaction conditions from a contract's code that are sufficient to prove resistance to frontrunning attacks, and we provide a prototype implementation, which we validate against a benchmark of real-world smart contracts.

\begin{acks}
    We would like to thank the reviewers for their helpful
    feedback.
    This work has been supported by the Heinz Nixdorf Foundation through a Heinz Nixdorf Research Group (HN-RG) and funded by the Deutsche Forschungsgemeinschaft (DFG, German Research Foundation) under Germany's Excellence Strategy -- EXC 2092 CASA -- 390781972.
\end{acks}

\bibliographystyle{ACM-Reference-Format}        %
\bibliography{bib.bib}

\appendix

\section*{Generative AI Usage}
This paper was edited for grammar using Grammarly and ChatGPT. 

The development of \sysnameTool~was aided by GitHub Copilot, which generated skeleton implementations from high-level descriptions of the toolchain and inferred interface bindings for calls into existing codebases used for symbolic analysis. During development, most of this generated code was replaced with human-written implementations. An exception is a set of self-contained modules for SMT formula transformations, authored entirely by GitHub Copilot; these modules underwent thorough testing and fuzzing, with Z3 serving as an external correctness oracle. We additionally developed a hand-crafted test suite to validate the correctness of \sysnameTool~as a whole.

\section{Open Science}
We provide three different kinds of artifacts with this paper:
\begin{enumerate}
    \item Datasets: As described in Section~\ref{ps-sota}, we leverage 287 publicly available professional smart contract audits to create three datasets (\DSone, \DStwo, \DSthree) containing audits mentioning frontrunning vulnerabilities, the source code of those vulnerabilities, and the source code of their respective fixes. 
    \item Code: The source code of the \sysnameTool~prototype implementing the synthesis algorithm (cf. Section~\ref{section:conditions}) for frontrunning secure interaction conditions as described in Section~\ref{sec:evaluation:impl}.
    \item Evaluation-Results: The results of the evaluation of static and dynamic frontrunning analysis approaches in Section~\ref{ps-sota} and the evaluation results of \sysnameTool~in Section~\ref{section:eval} and Table~\ref{tab:appendix:tool-evaluation}.
    
\end{enumerate}

To ensure reproducibility of our results, our complete datasets (\DSone, \DStwo, \DSthree), the codebase of \sysnameTool, and the evaluation results are available on GitHub here~\cite{deckstacking2026nods} and here~\cite{deckstacking2026mevnyxsailfish}.

\section{Ethical Considerations}

This work studies frontrunning vulnerabilities in Ethereum. As these classes of vulnerabilities are already well documented, our emphasis is on developing a precise definition to capture them and to responsibly advance the fairness and security of blockchain systems.

Guided by our analysis and definition, we identified that two fixes previously implemented in response to audit reports remain susceptible to frontrunning attacks.
We responsibly disclosed these findings to the corresponding Ethereum contract owners and the audit companies. 

We additionally disclosed our results to the Enterprise Ethereum Alliance and have worked with them to adjust their vulnerability classification in the EEA EthTrust Security Levels Specification~\cite{nevile_2023}.

\section{\sysname Model}\label{appendix:model}

In this section, we provide further details and formalizations that accompany the definition of \sysname presented in Section \ref{sec:deckstacking_resistance}.

\subsection{Transaction Semantics}
\label{appendix:inference-rules}

The complete blockchain state is modeled as a configuration $\conf = (\WmvA, \contracts, \blocknumber)$ where $\WmvA$ maps users to their wallet states, $\contracts$ maps all contracts to their states, and $\blocknumber$ is the current block number.
We also write $\conf.\blocknumber$ to access the block number $\blocknumber$ of a configuration $\conf$.

Since \sysnameAc is a general definition that applies to smart contracts in different languages, 
we do not fix the concrete layout of contracts   (i.e., the modeling of program execution, functions, constructors, etc.) or of its storage here.

We model the evolution of blockchain states with a transition system on such configurations $\conf$.
Configurations can be advanced with transactions $\txv$.
Those transactions can either be a contract call $\txT$ 
whose execution impacts the states of users and contracts
or the distinct block-finalization transaction $\txB$ that increments the block number.
To capture the effects of executing a transaction $\txv$ (which may involve multiple asset transfers or the logging of different events), each transition $\conf_0  \trans[\obs]{\txv} \conf_1$ emits a list of observables $\obs$. 
Formally, transitions are defined by inference rules 
such as the following rule of contract calls:

{\small
\begin{mathpar}
    \infer{
    \conf = (\WmvA, \contracts, \blocknumber) \\
    (\WmvA,\contrS)_{\blocknumber} \xmapsto[\obs]{\txT} (\WmvAi,\contrS')_{\blocknumber} \\
    \conf' = (\WmvA', \contracts', \blocknumber) \\
    \contracts(\contrC) = \contrS \\
    \txT = \contrCall{\procF(\overrightarrow{\textit{txarg}})} \\
    \contracts' = \contracts[\contrC :=  \contrS'] \\
    }
    {
    \conf  \trans[\obs]{\txT} \conf'
    } 
\end{mathpar}
}

The rule states that when transaction $\txT$ is a contract call \linebreak $\contrCall{\procF(\overrightarrow{\textit{txarg}})}$ of contract $\contrC$'s function $\procF$ with arguments $\overrightarrow{\textit{txarg}}$, then the contract state $\contrS$ of $\contrC$ and the wallet state $\WmvA$ of all blockchain users are updated according to some smart contract execution semantics $(\WmvA,\contrS)_{\blocknumber} \xmapsto[xs]{\txT} (\WmvAi,\contrS')_{\blocknumber}$.
The updates are applied to $\conf$, resulting in $\conf'$.
The smart contract execution semantics $\xmapsto[xs]{\txT}$ is a formal way of specifying the smart contract execution behavior and can be instantiated for different smart contract languages such as \cite{hildenbrandt2018kevm,jiao2020semantic,marmsoler2021denotational}.
Block finalization is modeled by a second inference rule  \mbox{$(\WmvA, \contracts, \blocknumber) \trans{\txB} (\WmvA, \contracts, \blocknumber')$} that increases the block counter by one ($\blocknumber' = \blocknumber +1$).
We call a sequence of (valid) state transitions
\noindent
{\small
$
\conf_0  \trans[\obs_0]{\txv_0} \conf_1 \trans[\obs_1]{\txv_1} \cdots \trans[\obs_{n-1}]{\txv_{n-1}} \conf_n
$}
a \emph{run} and also write $\conf_0 \trans[\concat{\obs_0}{\concat{\obs_1}{\concat{\dots}{\obs_{n-1}}}}]{[\txv_0, \txv_1, \dots, \txv_{n-1}]}^{*} \conf_n$ for short (where $\concat{\obs_1}{\obs_2}$ denotes list concatenation).

Note that we will assume in the following that the sender of a transaction $\txv$ is always provided in the function arguments and can be accessed via a function $\sender{\txv}$.

\subsection{Symbolic Model of Cryptography}
\label{appendix:crypto}

The definition of \sysname incorporates a symbolic notion of cryptography by instantiating the smart contract semantics with an equational theory $\eqth$.

\vspace{0.5 em}\noindent \textbf{Equational Theory.}
We assume a standard setup for an equational theory $\eqth$ on symbolic terms including functions, equations and constants \cite{dershowitz1990rewrite}.
Functions are defined as congruent rewriting rules ${(\rho,\blocknumber,\sigma)~ \vdash_{\eqth} \alpha \rightarrow \beta}$ with access to an environment consisting of a variable mapping $\rho$, the current block number $\blocknumber$ and the wallet state $\sigma$.
The base set of functions of the equational theory is enriched by blockchain-specific access to this environmental information, e.g.,
\[
    \begin{array}{c}
        (\rho,\blocknumber,\sigma)~ \vdash_{\eqth} \getblocknumber \rightarrow \blocknumber 
        \\[5pt]
        (\rho,\blocknumber,\sigma)~ \vdash_{\eqth} \#\tokT \rightarrow \sigma(\tokT) 
        \\[5pt]
  \end{array}
  \]
where $\getblocknumber$ ($\#\tokT$) is the smart contract language expression to access the current block number (resp. the current token balance of token kind $\tokT$). 
Equational theories needed to model cryptographic primitives used in the context of frontrunning protection in smart contracts include hashing, signing and zero-knowledge verification, which have been studied in detail for different domains~\cite{bursuc2022contingent, meier2013tamarin, backes2008zero}.

\vspace{0.5 em}\noindent \textbf{Secrets.}
Transactions and observables emitted in runs may contain symbolic terms and we write 
$\getTerms{\run}$ to denote the set of all (top-level) terms occurring in run $\run$.
These terms can include symbolic secrets and we use the notation $\sterms_1 \secD \sterms_2$ to indicate that a set of (symbolic) terms $\sterms_2$ can be derived from a set of terms $\sterms_1$ using theory $\eqth$. 

Additionally, attackers and honest users each have a dedicated set of fresh cryptographic secrets $\secrets_{\astrat}$ ($\secrets_{\honusers}$).
The attacker (honest user) may only use secrets in their own transaction $\txv^{\astrat}$ from $\secrets_{\astrat}$ ($\secrets_{\honusers}$) or if they are derivable from public knowledge, i.e., if it is revealed in the mempool or the previous blockchain run: 
\[
\getTerms{\run} \cup~ \getTerms{\mempool} \cup~ \secrets \secD \{ \txv^{\astrat} \}
\]
The notion of secrets enables us to faithfully model knowledge-based \deckstacking-attacks when the smart contract logic relies on cryptographic operations.

\subsection{Real-World Blockchain Execution}
\vspace{0.5 em}\noindent \textbf{Modeling Mempools. } \label{appendix:model:mempool}
We define a mempool $\mempool$ as a list \linebreak \mbox{$[\txv_0, \dots, \txv_i]$} of valid transactions by honest blockchain users $\honusers$.
The choices of all honest users $\honusers$ for a contract $\contract$ are modeled by a set of symbolic user strategies $\ustrat{\contract}{\honusers}{\secrets}$.
Each strategy $\cstrat \in \ustrat{\contract}{\honusers}{\secrets}$ is a function mapping the current blockchain run $\run$ to the next mempool $\mempool_{\cstrat}$ expressing the combined strategy of all honest users while only using their shared set of symbolic secrets $\secrets$.
To ensure correctness of every mempool $\mempool_{\cstrat}$, for each transaction $\txv \in \mempool_{\cstrat}$ it needs to hold that $\txv$ is 

\begin{enumerate}[(i)]
    \item \label{const:semanvalid} valid, i.e., $\exists \conf': \conf_\run \trans{\txv } \conf'$ (with $\conf_\run$ being $\run$'s last configuration),
    \item \label{const:ownmoves} authentic, i.e., $\sender{\txv } \in \honusers$,
    \item \label{const:ownsecrests} $\eqth$-derivable, i.e., 
    $\getTerms{\run} \cup~ \secrets \secD \{ \txv \} $,
    \item \label{const:global} persistent, i.e., if $\txv \in \cstrat_i(\run)$ and $\conf_{\run} \trans{\txv' } \conf_{\run'}$ with $\txv  \neq \txv' $ then $\txv  \in \cstrat_i(\run')~$ ($\run'$ extends $\run$ with $\txv'$),
\end{enumerate}

\vspace{0.5 em}\noindent\textbf{Real-World Scheduling Strategies. } %
Formally, a symbolic scheduling strategy is a function $\astrat(\run, \sseq)_{\secrets}$, parameterized with a set of symbolic secrets $\secrets$ accessible by the function, that produces a valid transaction sequence $\bblock = [\txv_0, \dots, \txv_{n-1}, \txB] $ where each transaction $\txv_i \in \bblock$ 
when sent by an honest user is unique and stems from the mempool ($\txv_i \not \in \sseq \Rightarrow \sender{\txv_i} \not \in \honusers$), and otherwise needs to be constructible using attacker knowledge ($\txv_i \not \in \sseq \Rightarrow \getTerms{\run} \cup~ \getTerms{\mempool} \cup~ \secrets \secD \{ \txv_i \}$).
Further, it is guaranteed that a transaction $\txv \in \mempool$ is included in block $\bblock$ if it was proposed at least $k$ blocks earlier by an honest user.
Formally, this blockchain inclusion guarantee (for a fixed $k$) is ensured by requiring that 
whenever $\txv_i \not \in \bblock =\astrat(\run, \cstrat(\run))$, then there is also no previous sub-run $\run'$ with $\conf_{\run'} \rightarrow^* \conf_{\run}$ that is $k$ blocks behind (denoted by $\lastBlockNumber(\run') + k \leq \lastBlockNumber(\run)$ where $\lastBlockNumber(\run)$ gives the last block number of $\run$) and where the user strategy already proposed $\txv_i$ ($\txv_i \in \cstrat(\run')$).

The inductive transition rule for $\combine{\astrat}{\cstrat} \semantics \run$ is given in the following rule:
\begin{mathpar}
    \infer{
        \combine{\astrat}{\cstrat}  \semantics \run' \\
     \astrat(\run', \cstrat(\run')) = \bblock = [\txv_0, \txv_1, \dots, \txv_j ]\\
     \txv_i = \txB \leftrightarrow i = j \\
     \run = \run' \trans[\obs_0]{\txv_0} \conf^1 \trans[\obs_1]{\txv_1} \dots \trans[\obs_j]{\txB} \conf^j_{\run} 
    }
    {
    \combine{\astrat}{\cstrat}  \semantics \run
    }
\end{mathpar}

\subsection{Substitution and Stability Criterion}
\label{appendix:substitution} 

What complicates the definition of non-adaptivity is the fact that while the $\sstrat$'s scheduling strategy itself should be uniform (w.r.t $\mempool$), this does not hold for the strategy's output. The blocks generated by $\sstrat$ may include honest user transactions, and hence, immediately depend on $\mempool$.
We will capture this by requiring that any modification $\mempool'$ of $\mempool$, i.e., exchanging transactions with different ones, or shortening $\mempool$, should similarly be reflected in the block $\bblock = \sstrat(\run, \mempool')$.

To this end, we consider that $\mempool$ can be modified by a sequence of individual substitutions of the form
\begin{itemize}
    \item $\singlesubst{\txv_1}{\txv_2}$ (indicating that $\txv_1$ gets replaced by $\txv_2$)
    \item $\singlesubst{\txv}{\bot}$ (indicating that $\txv$ in a final position gets removed) 
\end{itemize}
such that these substitutions can also be applied to $\bblock = \sstrat(\run, \mempool)$. 
So, in particular, any transaction $\txv_1$ from $\mempool$ that occurs in $\bblock$ should be replaced in the same way in $\bblock$ as it is replaced in $\mempool$.
Similarly, if $\mempool$ is shortened by transaction $\txv$ then $\txv$ should also be removed from $\bblock$. 

Formally, we define the application $\applysubst{\mempool}{\singlesubst{\txv}{x}}$ of a substitution $\singlesubst{\txv}{x}$ on mempool $\mempool$ recursively as follows: 

{\small
\begin{align*}
    \applysubst{\mempool}{\singlesubst{\txv}{x}}
    := 
    \begin{cases}
        \concat{[\txv_1]}{(\applysubst{\mempool'}{\singlesubst{\txv}{x}})}& \mempool = \concat{[\txv_1]}{\mempool'} ~\land~ \txv \neq \txv_1 \\
        & ~\land~ x \not \in \mempool\\
        \concat{[x]}{\mempool'}& 
        \mempool = \concat{[\txv_1]}{\mempool'} ~\land~ \txv = \txv_1 \\
        & ~\land~ x \neq \bot ~\land x \notin \mempool \\
        [] & \mempool = [\txv] ~\land~ x = \bot 
    \end{cases}
\end{align*}
}%

Note that this function is only defined on mempools $\mempool$ that 
(1) contain the transaction $\txv$ to be substituted at least once; 
(2) contain the transaction $\txv$ to be deleted in end position; 
(3) for which the substitution will not produce collisions (so will not insert a transaction that already exists in $\mempool$).
Further, assuming that all mempools $\mempool$ are free of duplicates also $\mempool' = \applysubst{\mempool}{\singlesubst{\txv}{x}}$ is duplicate-free (since the substitution is only defined given that it introduces a new transaction).
Now, for any two mempools $\mempool_1$ and $\mempool_2$, there exists a sequence of substitutions that transforms $\mempool_1$ into $\mempool_2$ (or vice versa). We write for such a sequential application $\applysubst{((\applysubst{\mempool}{\singlesubst{\txv_1}{x_1}})\dots)}{\singlesubst{\txv_n}{x_n}}$ of a sequence of substitutions also $\applysubst{\mempool}{\multisubst{\simsubst{\txv_1}{x_1}, \dots, \simsubst{\txv_n}{x_n}}}$ for short.

To reflect the effects of mempool changes on blocks, we define the application $\applysubstblock{\bblock}{\singlesubst{\txv}{x}}$ of a substitution $\singlesubst{\txv}{x}$ on a block $\bblock$ as follows:
\begin{align*}
\small
    \applysubstblock{\bblock}{\singlesubst{\txv}{x}}
    := 
    \begin{cases}
       [\tau] & \bblock = [\tau] \\
        \concat{[\txv_1]}{(\applysubstblock{\bblock'}{\singlesubst{\txv}{x}})}& \bblock = \concat{[\txv_1]}{\bblock'} \land \txv \neq \txv_1 \\
        \concat{[x]}{\bblock'}& \bblock = \concat{[\txv_1]}{\bblock'} \\& ~\land~  \txv = \txv_1
        ~\land~ x \neq \bot \\
        \applysubstblock{\bblock'}{\singlesubst{\txv}{x}} & \bblock = \concat{[\txv_1]} {\bblock'} \\
        & ~\land~ \txv = \txv_1 
         ~\land~ x = \bot 
    \end{cases}
\end{align*}
As opposed to substitution on mempools, block substitution allows for deleting transactions that are not in the end position. 
This is needed to reflect that shortening a mempool by transactions $\txv$ results in the deletion of $\txv$ in the block, where it may not necessarily occur as the last element.

With these definitions in place, we can formally define the non-adaptiveness of simulators:

\begin{definition}[Non-adaptiveness]\label{appendix:lemma:sc}
    Let $\honusers$ be a set of honest users. 
    An attacker strategy $\sstrat$ (against $\honusers$) is \emph{non-adaptive} (written $\isSim{\sstrat}$) if and only if the following holds: 
    For all runs $\run$ and all mempools
    $\mempool$ and $\mempool'$ such that 
    there exist $n \in \mathbb{N}$ and 
    substitutions $\singlesubst{\txv_1}{x_1}, \dots, \singlesubst{\txv_n}{x_n}$ with 
    $x_1, \dots, x_n \in \{ \bot\} \cup \{ \txv \in \TxU ~|~ \sender{\txv} \in \honusers\}$
    and 
    $\mempool' = \applysubst{\mempool}{\multisubst{\simsubst{\txv_1}{x_1}, \dots \simsubst{\txv_n}{x_n}}}$, it holds that
    
    {\small
  \[
    \sstrat(\run, \mempool') = \applysubstblock{\sstrat(\run, \mempool)}{\multisubst{\simsubst{\txv_1}{x_1}, \dots \simsubst{\txv_n}{x_n}}}.  
    \] }
\end{definition}

Intuitively, this definition requires that if a mempool $\mempool'$ resulted from applying a sequence of substitutions $\singlesubst{\txv_1}{x_1}, \dots, \singlesubst{\txv_n}{x_n}$ (which only substitutes honest user transactions) to a mempool $\mempool$, then the simulator $\sstrat$ invoked on $\mempool'$ will produce the same block that would result from invoking $\sstrat$ on $\mempool$ and applying substitutions $\singlesubst{\txv_1}{x_1}, \dots, \singlesubst{\txv_n}{x_n}$ afterwards.
Note that even in the cases of untargeted frontrunning or when $\sustrat{\contract}$ consists only of a single honest user strategy, the simulator's capabilities are limited by non-adaptiveness.
In contrast to the attacker $\astrat$, the simulator $\sstrat$ can never rely on mempool knowledge to construct transaction $\txv$ ($\txv \not \in \sseq \rightarrow \getTerms{\run} \cup~ \secrets \secD \{ \txv \}$).
The non-adaptiveness of the simulator, in particular, requires the simulator to produce a strategy that is uniform w.r.t. the one that was created for an empty mempool (where no information about honest user transactions is available).

\section{Soundness Proof}

\label{appendix:tool-soundness}
In this section, we prove the soundness of the proposed algorithm for synthesizing \intConds{}.

\subsection{Preliminaries}

In the following, we are concerned with a concrete contract $\contract$ and will denote transactions invoking this contract with $\ctx{\contract}$. 
Further, we will use $\cstate{\conf}{\contract}$ to refer to the state of contract $\contract$ in configuration $\conf$ (which maps each contract variable to its value) and will use $\cvars{\contract}$ to refer to the set of contract variables of $\contract$ (constituting the domain of $\cstate{\conf}{\contract}$)
and $\cFuncs{\contract}$ to denote the set of functions of $\contract$.
We will use $\cstate{\conf}{\contract} \equivfor{\varset} \cstate{\tick{\conf}}{\contract}$ to denote that the states $\cstate{\conf}{\contract}$ and $\cstate{\tick{\conf}}{\contract}$ coincide on all variables from $\varset \subseteq \cvars{\contract}$.

We will assume that the observables produced by the semantics record the values of a specific \emph{event variable} (at the end of transaction execution) and are of the form $\obsc{v}$ where $\eventvar \in \criticalvars^{\contract}$ is the event variable and $v$ is its value.
For the sake of simplicity, we will assume every transaction emits at most one observable.
Note that, effectively, we are not losing generality with this since we can simply assume that there is a specific (local) event variable $\eventvar_{\procF} \in \criticalvars^{\contract} = \{ \eventvar_{\procF} ~|~ \procF \in \cFuncs{\contract} \}$ for each function, which contains the list of all events emitted during execution of the function.
In particular, this means that
if the execution of a transaction $\ctx{\contract} =  \contrCall{\procF(\fargs)}$ that writes function $\procF$ of contract $\contract$ produces an observable $\obsone{\eventvar_{\procF}}{v}$, then the transaction execution changes the value of the event variable $\eventvar_{\procF}$ to $v$.
Correspondingly, every change of the event variable $\eventvar_{\procF}$ results in an observable $\obsone{\eventvar_{\procF}}{v}$ where $v$ is the new value of $\eventvar_{\procF}$.
We make this assumption more formal in Section~\ref{subsec:conditions-correctness}.

\subsubsection{Symbolic Execution}

In the following, we will assume the existence of sound and complete symbolic execution for smart contracts. 
To this end, we will assume a set $\expset$ of programming expressions over the extended variables $\seCvars{\contract} = \gvars{\contract} \cup \lvars{\funvar}{\indexvar}$ of a contract $\contract$.
The extended variables $\seCvars{\contract}$ consist of the global variables $\gvars{\contract}$ that contain all contract variables $\cvars{\contract}$ and the variable to access the block number ($\blocknumvar$), as well as the contract's local variables $\lvars{\funvar}{\indexvar}$ containing the variables of the function argument and sender for a function $\funvar$.
Note that we assume all variables in $\lvars{\funvar}{\indexvar}$ to be indexed with an index $\indexvar$ to allow for correct scoping during the analysis.
In particular, this will allow us to distinguish between different transactions invoking the same function (and therefore assigning different values to the local variables, such as function arguments).
Note that we also write $\config \equivfor{\varset} \tick{\config}$ for $\varset \subseteq \gvars{\contract}$
provided that $\cstate{\conf}{\contract} \equivfor{\varset \setminus \{ \blocknumvar \}} \cstate{\tick{\conf}}{\contract}$
and if $\blocknumvar \in \varset$ also $\conf.\blocknumber = \tick{\conf}.\blocknumber$.

\begin{remark}
For the sake of simplicity, we consider the block number $\blocknumvar$ as sole global variables. 
Realistic smart contract languages may also provide access to other global variables such as the block timestamp or the state of the wallets of other users. 
It is easy to extend our analysis to such other global variables, which will need to be considered fully attacker-controlled. This can be done by synthesizing the condition $\bot$ when encountering dependencies of critical variables to any such global variable different from $\bot$.
\end{remark}

We will write $\expvarsi{\seExp}{\indexvar}$ (where $\expvars{\seExp} \subseteq \seCvars{\contract}$) to denote the set of variables occurring in expression $\seExp \in \expset$.
Formally, we will assume that for each function $\procF$ of a smart contract $\contract$ that there exists a relation $\symExRel{\procF}$ containing pairs of the form $\seRes{\seSub}{\seCond}$ where $\seSub \in \seCvars{\contract} \to \expset$ is a substitution from variables to expressions and $\seCond$ is a logical condition of the variables in $\seCvars{\contract}$.
We will further assume that there exists a function $\confToAss{\conf}{\txT}$ that maps a concrete configuration $\conf$ and a transaction $\txT$ to a corresponding assignment $\seAss$ over the extended variables $\seCvars{\contract}$.
In particular, we will assume that for the assignment of global variables, $\confToAss{\conf}{\txT}$ is agnostic to the transaction $\txT$. So, more formally: 

\begin{assumption}
Let $\config$ be a configuration, $\txv_1$ and $\txv_2$ be transactions, $\contract$ be a contract
and $x \in \gvars{\contract}$ be a variable. Then for any index $i$
$$\confToAss{\conf}{\txv_1}(x) = \confToAss{\conf}{\txv_2}(x)$$
\end{assumption}

In the following, we write $\evAss{\seAss}{\seExp}$ to denote the evaluation of expression $\seExp \in \expset$ under assignment $\seAss$.

In the remainder of the paper, we will assume the symbolic execution relations to be sound and complete.

\begin{assumption}[Soundness of symbolic execution]
    Let $\contract$ be a contract and $\procF \in \cFuncs{\contract}$.
Then the symbolic execution relation $\symExRel{\procF}$ for function $\procF$ and indexed by some index $i$, satisfies the following soundness property:
\begin{align*}
    \forall \txT ~\procF ~\fargs.&~ \txT = \contrCall{\procF(\fargs)} \nonumber \\
    &\Rightarrow \forall \conf ~\seSub ~ \seCond.~ \seRes{\seSub}{\seCond} \in \symExRel{\procF} ~\land~ \satis{\confToAss{\conf}{\txT}}{\seCond} \nonumber \\
    &\Rightarrow \exists \conf'.~ \conf \trans[]{\txT} \conf'~\land~ 
    \confToAss{\conf'}{\txT} = \assComp{\confToAss{\conf}{\txT}}{\seSub}
\end{align*}
\label{eq:se-soundness}
where $\assComp{\seAss}{\seSub}$ denotes the assignment obtained by first applying the substitution $\seSub$ and then evaluating the resulting expression under $\seAss$ (so $\assComp{\seAss}{\seSub}(x) = \evAss{\seAss}{\seSub(x)}$).
Note that, in particular, it holds that $\evAss{\assComp{\seAss}{\seSub}}{\seExp} = \evAss{\seAss}{\seExp \seSub}$ where $\seExp \seSub$ denotes the expression obtained by applying the substitution $\seSub$ to the expression $e \in \expset$. 
\end{assumption}
This statement expresses the soundness of the symbolic execution relation $\symExRel{\procF}$: for each concrete configuration $\conf$ and transaction $\txT$ invoking function $\procF$, if there exists a symbolic execution result $\seRes{\seSub}{\seCond}$ such that the condition $\seCond$ is satisfied under the assignment corresponding to $\conf$ and $\txT$, then there exists a concrete transition executing $\txT$ from $\conf$ to some configuration $\conf'$ such that the assignment obtained from $\conf'$ corresponds to the application of the substitution $\seSub$ to the assignment corresponding to $\conf$ and $\txT$.

We define completeness of the symbolic execution relation $\symExRel{\procF}$ correspondingly:

\begin{assumption}[Completeness of symbolic execution]
    Let $\contract$ be a contract and $\procF \in \cFuncs{\contract}$.
Then the symbolic execution relation $\symExRel{\procF}$ for function $\procF$ and indexed by some index $i$, satisfies the following completeness property:
\begin{align*}
    \forall \conf~ \conf'~ \txT~ \fargs.&~ \conf \trans[]{\txT} \conf' ~\land~ \txT = \contrCall{\procF(\fargs)} \nonumber \\
    & \Rightarrow \exists \seSub ~ \seCond.~ \seRes{\seSub}{\seCond} \in \symExRel{\procF} ~\land~ \satis{\confToAss{\conf}{\txT}}{\seCond} \nonumber\\ 
    & \qquad \qquad ~\land~\confToAss{\conf'}{\txT} = \assComp{\confToAss{\conf}{\txT}}{\seSub}
\end{align*}
 \label{eq:se-completeness}
\end{assumption}

This statement expresses that for each concrete execution of transaction $\txT$ invoking function $\procF$ from configuration $\conf$ to configuration $\conf'$, there exists a symbolic execution result $\seRes{\seSub}{\seCond}$ such that the condition $\seCond$ is satisfied under the assignment corresponding to $\conf$ and $\txT$, 
and the assignment corresponding to $\conf'$
corresponds to the application of the substitution $\seSub$ to the assignment corresponding to $\conf$ and $\txT$.

Note that we also can assume the following property:

\begin{assumption}[Agreement on condition variables]
    \label{lem:agreement-rel-vars}
    Let $\condVar$ be a condition and $\seAss_1$ and $\seAss_2$ be assignments such that for all $\seCvar \in \condvars{\condVar}$
    it holds that $\seAss_1(\seCvar) = \seAss_2(\seCvar)$.
    Then it holds that 
    \begin{align*}
    \satis{\seAss_1}{\condVar} \Leftrightarrow \satis{\seAss_2}{\condVar}
\end{align*}
\end{assumption}

This property states that if two assignments agree on the variables occurring in a condition, then they equivalently satisfy the condition.
To avoid a full formalization of the concrete formula language with expressions, we assume the correctness of this lemma from now on.

Finally, since we consider transaction execution to be deterministic, we will also require the symbolic execution relation to satisfy a strong determinism property, namely that if two path conditions can be satisfied by the same assignment, then they will also be mapped to the same expression.

\begin{assumption}[Determinism of symbolic execution]
Let $\contract$ be a contract and $\procF \in \cFuncs{\contract}$.
Then the symbolic execution relation $\symExRel{\procF}$ for function $\procF$ and indexed by some index $i$, satisfies the following  property:
\begin{align*}
\forall \seRes{\seSub_1}{\seCond_1}, \seRes{\seSub_2}{\seCond_2} \in \symExRel{\procF}.~ &
(\exists \seAss. \satis{\seAss}{\seCond_1} ~\land~ \satis{\seAss}{\seCond_2}) \\
&\Rightarrow \forall \seCvar \in \seCvars{\contract}. \seSub_1(\seCvar) = \seSub_2(\seCvar)
\end{align*}
\end{assumption}

\begin{remark}
This property is not directly a consequence from the soundness of the symbolic execution relation and the determinism of the smart contract semantics because technically a sound symbolic execution could still map path conditions that can be satisfied to the same assignment to syntactically different but semantically equivalent expressions.
\end{remark}

\subsubsection{Dependency Analysis}

We define the data and control flow dependencies of variables in a contract $\contract$ in terms of the symbolic execution relation $\symExRel{\procF}$. %

\begin{definition}[Data Flow Dependency]
    Let $\contract$ be a smart contract and $\procF$ be a function of $\contract$ and $\symExRel{\procF}$ be a symbolic execution relation of $\procF$.
    We define the set of data flow dependencies $\ddeps{\procF}{\seCvar} \subseteq \seCvars{\contract}$ of variable $\seCvar \in \seCvars{\contract}$ in function $\procF$ of contract $\contract$ as
    \begin{align*}
        \ddeps{\procF}{\seCvar} = \bigcup_{
            \seRes{\seSub}{\seCond} \in \symExRel{\procF}
        } \expvars{\seSub(\seCvar)}
    \end{align*}
\end{definition}

To define control flow dependencies, we first introduce the notions of \emph{path conditions} and \emph{relevant variables}. 

\begin{definition}[Path Condition]
Let $\contract$ be a smart contract and $\procF$ be a function of $\contract$ and $\symExRel{\procF}$ be a symbolic execution relation of $\procF$.
We define the path condition $\pathcond{\procF}{\seExp}{\seCvar}$ for variable $\seCvar \in \seCvars{\contract}$, expression $\seExp$ and function $\procF$ of contract $\contract$ as
\begin{align*}
    \pathcond{\procF}{\seExp}{\seCvar} := \bigvee_{\seRes{\seSub}{\seCond} \in \symExRel{\procF} : \seSub(\seCvar) = \seExp}{\seCond}
\end{align*}
\end{definition}

Intuitively, $\pathcond{\procF}{\seExp}{\seCvar}$ is the condition under which variable $\seCvar$ is assigned the expression $\seExp$ in function $\procF$ of contract $\contract$.

\begin{definition}[Relevant variables]
Let $\contract$ be a smart contract and $\condVar$ be a logical condition over variables in $\seCvars{\contract}$.
We define the relevant variables $\relVars{\condVar}$ of condition $\seCond$ as 
\begin{align*}
    \relVars{\condVar} : \{ \seCvar \in \condvars{\condVar}  ~|~ & \exists \seAss \seAss'.~ \satis{\seAss}{\condVar} 
    ~\land~ \neg \satis{\seAss'}{\condVar} \\
    &~\land~ \forall y \in \condvars{\condVar} \setminus \{ \seCvar\}.~ \seAss(y) = \seAss'(y) \}
\end{align*}
\end{definition}

Intuitively, $\relVars{\condVar}$ denotes all those variables that occur in $\condVar$, which may change the satisfiability of $\condVar$.

We now can define the control dependencies $\cdeps{\procF}{\seCvar}$  of a variable $\seCvar$ in a function $\procF$ 
as the relevant variables of the possible path conditions for assigning the variable $\seCvar$ to different expressions.

\begin{definition}[Control Flow Dependency]
    Let $\contract$ be a smart contract and $\procF$ be a function of $\contract$, and $\symExRel{\procF}$ be a symbolic execution relation of $\procF$.
    We define the set of control flow dependencies $\cdeps{\procF}{\seCvar} \subseteq \seCvars{\contract}$ of variable $\seCvar \in \seCvars{\contract}$ in function $\procF$ of contract $\contract$ as
    \begin{align*}
        \cdeps{\procF}{\seCvar} := \bigcup_{\seExp \in \{ \seExp ~|~  \seRes{\seSub}{\seCond} \in \symExRel{\procF} ~\land~ \seSub(\seCvar) = \seExp \}} \relVars{\pathcond{\procF}{\seExp}{\seCvar}} 
    \end{align*}
\end{definition}

Note that path conditions may have many irrelevant variables since they are simply the disjunctions of the conditions for different execution paths.
For defining the control dependencies, we are only interested in those variables that are pivotal for deciding which expression the variable $\seCvar$ gets assigned.

\newcommand{\depset}{D}

We can show that the dependency analysis defined above is sound with respect to concrete executions, meaning that whenever a transaction $\txT$ is executed in two configurations $\conf_1$ and $\tick{\conf_1}$ that agree on the dependencies of variables $\vSet$, then this will result in the same assignments to $\vSet$. In particular, if $\vSet$ contains the event variable $\eventvar$, then the two executions will also produce the same observables.

\begin{lemma}[Soundness of Dependency Analysis]
    Let $\contract$ be a smart contract and $\procF$ be a function of $\contract$, and $\symExRel{\procF}$ be a symbolic execution relation of $\procF$ for some index $i$.
    Let $\conf_1$ and $\tick{\conf_1}$ be two configurations and $\txT = \contrCall{\procF(\fargs)}$ be a transaction invoking function $\procF$ of contract $\contract$ and $\vSet \subseteq \cvars{\contract}$ be a set of contract variables such that $\eventvar \in \vSet$. 
    Further, let $\depset = \gvars{\contract} \cap \bigcup_{\seCvar \in \vSet}\ddeps{\procF}{\seCvar} \cup \cdeps{\procF}{\seCvar} \cup \vSet$
    and assume 
    $\conf_1 \equivfor{\depset} \tick{\conf_1}$.
    Then for all configurations $\conf_2$ and $\tick{\conf_2}$ and observables $\obs$ and $\tick{\obs}$ such that $\conf_1 \trans[\obs]{\txT} \conf_2$ and $\tick{\conf_1} \trans[\tick{\obs}]{\txT} \tick{\conf_2}$, it holds that $\conf_2 \equivfor{\vSet} \tick{\conf_2}$ and $\obs = \tick{\obs}$. 
\label{lem:dep-soundness}
\end{lemma}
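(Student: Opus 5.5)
The argument goes through the sound and complete symbolic execution of $\procF$ on the two related configurations. Fix $\txT = \contrCall{\procF(\fargs)}$ and let $\seAss = \confToAss{\conf_1}{\txT}$ and $\tick{\seAss} = \confToAss{\tick{\conf_1}}{\txT}$. Both assignments come from the same transaction, so they agree on the local variables $\lvars{\procF}{\indexvar}$ (arguments and sender). Because $\conf_1 \equivfor{\depset} \tick{\conf_1}$, they also agree on every global variable in $\depset$.

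Completeness of symbolic execution applied to both concrete transitions gives pairs $\seRes{\seSub}{\seCond}, \seRes{\tick{\seSub}}{\tick{\seCond}} \in \symExRel{\procF}$ with $\satis{\seAss}{\seCond}$ and $\satis{\tick{\seAss}}{\tick{\seCond}}$. They also give $\confToAss{\conf_2}{\txT} = \assComp{\seAss}{\seSub}$ and $\confToAss{\tick{\conf_2}}{\txT} = \assComp{\tick{\seAss}}{\tick{\seSub}}$. It then suffices to show that for each $\seCvar \in \vSet$,
\[ \evAss{\seAss}{\seSub(\seCvar)} = \evAss{\tick{\seAss}}{\tick{\seSub}(\seCvar)}. \]
Equality of the post-states on $\vSet$ follows immediately. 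Equality of the observables then follows from $\eventvar \in \vSet$ and the assumption that the observable records exactly the new value of the event variable.

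The main step, and the expected obstacle, is showing that both runs take a path assigning the \emph{same} expression to $\seCvar$, i.e.\ $\seSub(\seCvar) = \tick{\seSub}(\seCvar)$. Let $\seExp = \seSub(\seCvar)$ and $\condVar = \pathcond{\procF}{\seExp}{\seCvar}$. Then $\satis{\seAss}{\condVar}$, since $\seCond$ is one of its disjuncts. I want to transfer this to $\tick{\seAss}$. The two assignments may differ on variables of $\condvars{\condVar}$ outside $\depset$, but all such variables are irrelevant, since $\relVars{\condVar} \subseteq \cdeps{\procF}{\seCvar} \subseteq \depset$ (locals agree anyway).

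Transfer goes by changing the differing irrelevant variables one at a time. Each single change preserves satisfaction by the definition of $\relVars{\cdot}$: if it flipped satisfaction while all other variables stayed fixed, the variable would be relevant. This gives an intermediate assignment that agrees with $\tick{\seAss}$ on $\condvars{\condVar}$, so $\satis{\tick{\seAss}}{\condVar}$ by Assumption~\ref{lem:agreement-rel-vars}. Hence some disjunct $\seCond''$ of $\condVar$, with $\seSub''(\seCvar) = \seExp$, is satisfied by $\tick{\seAss}$. Since $\tick{\seAss}$ also satisfies $\tick{\seCond}$, the determinism assumption yields $\tick{\seSub}(\seCvar) = \seSub''(\seCvar) = \seExp$.

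The remaining worry is that the one-at-a-time argument needs the intermediate assignments to be legitimate assignments over $\condvars{\condVar}$, which holds because $\relVars{}$ quantifies over arbitrary assignments. With the expressions equal, $\expvars{\seExp} \subseteq \ddeps{\procF}{\seCvar}$. Every global variable of $\seExp$ therefore lies in $\depset$, and every local one agrees because the transaction is the same. So $\seAss$ and $\tick{\seAss}$ agree on $\expvars{\seExp}$, and hence $\evAss{\seAss}{\seExp} = \evAss{\tick{\seAss}}{\seExp}$, using the analogous agreement property for expression evaluation. This concludes the argument.
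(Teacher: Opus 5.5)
Your proof is correct and follows the route the paper only gestures at (``follows directly from the definitions and the soundness and completeness of symbolic execution''). You make explicit the chaining over irrelevant variables, the use of determinism to pin down $\tick{\seSub}(\seCvar)$, and the data-dependency argument for equality of the values.

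One small refinement concerns the observables. Equality of post-state values alone would not suffice under the write-set formulation of Assumption~\ref{asm:observables}. However, your main step already gives the syntactic equality $\seSub(\eventvar) = \tick{\seSub}(\eventvar)$, so both executions agree on whether $\eventvar$ is written at all, and the observables coincide.
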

\begin{proof}
This follows directly from the previous definitions as well as the soundness and completeness of symbolic execution.
\end{proof}

Note that the set of dependencies $\depset$ here is defined to (i) also contain the variables $\vSet$ itself to account for the case that $\txT$ does not change the value variables in $\vSet$ and (ii) only considers those dependencies within the global variables $\gvars{\contract}$. This last point accounts for the fact that the dependencies of $\seCvar$ may also contain local variables (denoting the local transaction execution context, like the sender of the transaction, or the arguments provided to the called function). The values of these local variables are fully determined by the transaction $\txT$, so when considering two executions of $\txT$, then the local variables will be assigned to the same values.

\subsubsection{Time Progress}
We will, in the following, assume the existence of a specific function $\procTick$. 
The function $\procTick$ models the specific transaction $\txB$ that increases the block number and is used to model the passage of time in the blockchain.
We model $\procTick$ as a function to enable a unified treatment within the symbolic execution framework.
We can consider the $\procTick$ function simply as a special function, with a well-defined semantics that can be freely scheduled by the attacker and that only modifies the block number variable by incrementing it by one and does not modify any other variable.

The corresponding symbolic execution relation $\symExRel{\procTick}$ is defined as follows: 
\begin{definition}[Symbolic Execution of $\procTick$]
\begin{align*}
    \symExRel{\procTick} := \{ \seRes{\seAss_\procTick}{\top}\}
\end{align*}
with $\seAss_\procTick$ defined by
\begin{align*}
\seAss_\procTick(\seCvar) := 
\begin{cases}
    \blocknumvar + 1 & \seCvar = \blocknumvar \\
    \seCvar & \text{otherwise}
\end{cases}
\end{align*}
\end{definition}

Note that $\blocknumvar$ denotes the variable for the block number and is contained in $\seCvars{\contract}$ for any contract $\contract$.
It is straightforward to show that the symbolic execution for $\procTick$ is sound and complete for the semantics of the transaction $\txB$ that increases the block number.

\begin{lemma}[Completeness of $\symExRel{\procTick}$]
    Let $\contract$ be an arbitrary contract.
\begin{align*}
    \forall \conf~ \conf'.~ \conf \trans[\obs]{\txB} \conf'
    & \Rightarrow \exists \seSub ~ \seCond.~ \seRes{\seSub}{\seCond} \in \symExRel{\procTick} ~\land~ \satis{\confToAss{\conf}{\txB}}{\seCond} \nonumber\\ 
    & \qquad \qquad ~\land~ \cblocknumber{\conf'} = (\assComp{\confToAss{\conf}{\txB}}{\seSub})(\blocknumvar)
\end{align*}
\end{lemma}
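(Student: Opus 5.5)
This is a direct verification against the definitions of $\symExRel{\procTick}$, the transition rule for $\txB$, and the assignment function $\confToAss{\cdot}{\cdot}$. Since $\symExRel{\procTick}$ is the singleton $\{\seRes{\seAss_\procTick}{\top}\}$, the witnesses are forced. We choose $\seSub := \seAss_\procTick$ and $\seCond := \top$.

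\textbf{Step 1: membership and the path condition.} Let $\conf \trans[\obs]{\txB} \conf'$ be given. Membership $\seRes{\seAss_\procTick}{\top} \in \symExRel{\procTick}$ holds by definition. The condition $\satis{\confToAss{\conf}{\txB}}{\top}$ holds trivially for any assignment.

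\textbf{Step 2: the block-number equation.} We unfold the block-finalization rule $(\WmvA, \contracts, \blocknumber) \trans{\txB} (\WmvA, \contracts, \blocknumber+1)$. This is the only rule whose label is $\txB$, since the contract-call rule requires the label to be a call $\txT$. By inversion, $\cblocknumber{\conf'} = \cblocknumber{\conf} + 1$.

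On the symbolic side, $(\assComp{\confToAss{\conf}{\txB}}{\seAss_\procTick})(\blocknumvar) = \evAss{\confToAss{\conf}{\txB}}{\blocknumvar + 1} = \confToAss{\conf}{\txB}(\blocknumvar) + 1$. This uses the definition of $\assComp{\cdot}{\cdot}$ and the standard evaluation of the expression $\blocknumvar + 1$.

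\textbf{Main obstacle.} The only point needing care is that $\confToAss{\conf}{\txB}(\blocknumvar) = \cblocknumber{\conf}$. That is, the assignment induced by a configuration must interpret the global variable $\blocknumvar$ as the configuration's block number. This is the intended meaning of $\confToAss{}{}$ on global variables $\gvars{\contract}$, which contain $\blocknumvar$. It is also consistent with the assumption that the global part of $\confToAss{\conf}{\cdot}$ is independent of the transaction. I would state this explicitly as the defining property of $\confToAss{}{}$ on $\blocknumvar$, which the paper leaves implicit.

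With that property in hand, both sides equal $\cblocknumber{\conf}+1$, and the lemma follows.
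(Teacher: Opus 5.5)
Your proof is correct and is the direct unfolding the paper has in mind; the paper gives no proof beyond calling soundness and completeness of the tick relation straightforward. Your argument takes the unique element $\seRes{\seAss_\procTick}{\top}$ and compares block numbers via the $\txB$ rule, and your explicit remark that the configuration-induced assignment must interpret the block-number variable as the configuration's block number correctly names an assumption the paper leaves implicit.
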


\begin{lemma}[Soundness of $\symExRel{\procTick}$]
    Let $\contract$ be an arbitrary contract.
\begin{align*}
    &\forall \conf ~\seSub ~ \seCond.~ \seRes{\seSub}{\seCond} \in \symExRel{\procTick} ~\land~ \satis{\confToAss{\conf}{\txB}}{\seCond} \\
    & \qquad \Rightarrow \exists \conf'.~ \conf \trans[]{\txB} \conf'~\land~ \cblocknumber{\conf'} = (\assComp{\confToAss{\conf}{\txB}}{\seSub})(\blocknumvar)
\end{align*}
\end{lemma}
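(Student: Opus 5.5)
The plan is to unfold the definition of $\symExRel{\procTick}$ and read off the required successor configuration from the block-finalization inference rule; no induction or case analysis is needed.

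\textbf{Step 1: only one pair to consider.} Fix $\conf$, $\seSub$, $\seCond$ with $\seRes{\seSub}{\seCond} \in \symExRel{\procTick}$ and $\satis{\confToAss{\conf}{\txB}}{\seCond}$. Since $\symExRel{\procTick} = \{ \seRes{\seAss_\procTick}{\top} \}$ is a singleton, necessarily $\seSub = \seAss_\procTick$ and $\seCond = \top$. The satisfaction hypothesis is therefore trivial and plays no further role.

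\textbf{Step 2: construct $\conf'$.} Write $\conf = (\WmvA, \contracts, \blocknumber)$. The block-finalization rule $(\WmvA, \contracts, \blocknumber) \trans{\txB} (\WmvA, \contracts, \blocknumber+1)$ has no side conditions. So I take $\conf' := (\WmvA, \contracts, \blocknumber+1)$, which gives $\conf \trans[]{\txB} \conf'$ and $\cblocknumber{\conf'} = \blocknumber + 1$.

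\textbf{Step 3: compute the symbolic side.} By the definition of composition, $(\assComp{\confToAss{\conf}{\txB}}{\seAss_\procTick})(\blocknumvar) = \evAss{\confToAss{\conf}{\txB}}{\seAss_\procTick(\blocknumvar)}$. This equals $\evAss{\confToAss{\conf}{\txB}}{\blocknumvar + 1}$, which is $\confToAss{\conf}{\txB}(\blocknumvar) + 1$. It remains to use that the assignment induced by a configuration maps the global variable $\blocknumvar$ to the block number of that configuration, i.e.\ $\confToAss{\conf}{\txB}(\blocknumvar) = \conf.\blocknumber = \blocknumber$. With this, both sides equal $\blocknumber + 1$.

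\textbf{Main obstacle.} The only delicate point is the last step. The paper never states explicitly that $\confToAss{\conf}{\cdot}(\blocknumvar) = \conf.\blocknumber$; it is only implicit in how $\seCvars{\contract}$ and $\equivfor{\varset}$ (with $\blocknumvar \in \varset$) are set up. I would make this a stated convention on $\seAss$, which is consistent with the transaction-agnosticism Assumption for global variables. I would likewise state that evaluating $\blocknumvar + 1$ under an assignment is ordinary successor on block numbers. Once that is fixed, the lemma is immediate. Completeness of $\symExRel{\procTick}$ follows by the same computation, because $\txB$ is the only transition producing $\conf'$ and it always yields $\blocknumber+1$.
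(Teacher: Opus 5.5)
Your proposal is correct and takes the same route the paper intends. The paper gives no written proof and only remarks that soundness is straightforward: the relation is the singleton $\{\seRes{\seAss_{\procTick}}{\top}\}$, the $\txB$ rule always yields block number $\blocknumber+1$, and the symbolic side evaluates to the same value. You are also right that the single unstated ingredient is the convention $\confToAss{\conf}{\txB}(\blocknumvar) = \conf.\blocknumber$, which the paper leaves implicit.
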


 \subsubsection{Round-based user strategies}
In the following, we will consider user strategies that proceed in rounds of $k$ blocks. 
More precisely, these user strategies decide upon a single contract transaction $\ctx{\contract}$ to schedule at the beginning of each round and then keep scheduling it for the following $k$ blocks. 
From now on, we will treat the parameter $k$, which denotes the length of a round, as a global parameter.
We will further use notation 
$\lastBlockNumber(\run)$ to denote the block number $\lastconfig{\run}.\blocknumber$ of the last configuration $\lastconfig{\run}$ of a run $\run$. 
We define $\round(\config) := \config.\blocknumber - (\config.\blocknumber \bmod{k})$ to denote the round of a configuration $\config$ and correspondingly 
$\round(\run) := \round(\lastconfig{\run})$. 

The idea behind enforcing strategies to proceed in rounds is that they enable users to ensure that their transactions get processed before they decide upon the next transaction to schedule. 
This, in particular, ensures that an attacker cannot frontrun users with their own transactions. %

\begin{definition}[Round-based user strategies]
    Let $\rstrat \in \WalU \times \ContrU \times \mathbb{N} \rightarrow \listtype{\TxU}$ (see Sec. \ref{subsec:blockchain-model}) be an honest user strategy operating on configurations $\config = (\WmvA, \contracts, \blocknumber)$ and returning a list of transactions $\vec{\txv} = \rstrat(\config)$ such that $\length{\vec{\txv}} \leq 1$ for all configurations $\config$. 
    Then we define the round-based strategy derived from $\rstrat$ (written $\roundstrat{\rstrat}$) as follows: 
    {\small
    \begin{align*}
      \roundstrat{\rstrat} (\run) 
       &:= \{ \txv \in \rstrat(\config_{\run_1}) \backslash  \getTxs(\run_2) \mid \\  
       & \run = \concat{\run_1}{\run_2} 
        ~\land~ 
        \length{\getBlocks(\run_1)} = \round(\run)
        \}
    \end{align*}}
    where 
    $\getBlocks(\run)$ returns the blocks of run $\run$ and 
    $\getTxs(\run)$ the transactions of $\run$.
\end{definition}

Intuitively, a round-based user strategy $\roundstrat{\rstrat}$ simply applies its round strategy $\rstrat$ at the beginning of each round and keeps on scheduling the transactions as mandated by $\rstrat$ (with exception of those that have already been successfully included in the blockchain) until the end of the round. 

Given a set of configuration-based strategies $\{ \rstrat^i \}$, we can canonically lift it to arrive at a set $\{ \roundstrat{{\rstrat^{i}}} \}$ of round-based honest user strategies.

\subsection{Soundness of \intCond{} synthesis}\label{appendix:cond:cond}

We now define the soundness of the \intCond{} synthesis algorithm.
Note that in the following, we may omit proofs of simple lemmas that are direct implications of existing definitions or previously established results.
For stating the algorithm, we will make use of the following helper definitions:

\begin{definition}[No Write Condition]
Let $\contract$ be a smart contract and $\procF$ be a function of $\contract$ and $\symExRel{\procF}$ be a symbolic execution relation of $\procF$.
We define the \emph{no write condition} $\nowritecond{\procF}{\seCvar}$ for variable $\seCvar \in \seCvars{\contract}$ in function $\procF$ of contract $\contract$ as
\begin{align*}
    \nowritecond{\procF}{\seCvar} := \pathcond{\procF}{\seCvar}{\seCvar}
\end{align*}
\end{definition}
Intuitively, $\nowritecond{\procF}{\seCvar}$ is the condition under which variable $\seCvar$ is not modified in function $\procF$ of contract $\contract$.

\paragraph{Round Invariants}

We define the notion of a round invariant.
Intuitively, a pair $(\invpre, \inv)$ denotes a round invariant if the fact that $\invpre$ holds at the beginning of a round implies that $\inv$ holds throughout the whole round.

More formally, we define this property as follows:
\begin{definition}[Round Invariant]
Let $k$ be the length of a round. 
We say that a condition $\inv$ is a round invariant under precondition $\invpre$ (written $\roundinv{\roundvar}{\invpre}{\inv}$) if the following holds
\begin{align*}
    \forall \roundvar \in \mathbb{N}.~\forall \seAss.&~
    \satis{\seAss}{\invpre}
    ~\land~ \seAss(\blocknumvar) = r k \\
&\Rightarrow \forall i \in [0, \dots, k-1].~\satis{\assComp{\seSubst{\seAss(\blocknumvar)}{\blocknumvar+i}}{\seAss}}{\inv}
\end{align*}
Here $\blocknumvar \in \gvars{\contract}$ denotes the variable of the block number.
\label{def:round-invariant}
\end{definition}

We will additionally make use of the notion of a \emph{block-oblivious invariant}.
Intuitively, a logical condition $\inv$ is a block-oblivious invariant w.r.t. a function $\procF$ if $\inv$ still holds after applying the possible effects that could have resulted from executing $\procF$ in an arbitrary block of the current round.
More formally 

\begin{definition}[Block-oblivious Invariant]
We say that a condition $\inv$ is a block-oblivious invariant of a function $\procF$ under a condition $\preCond$ (written $\execinv{\procF}{\preCond}{\inv}$) if the following holds
\begin{align*}
\forall \seAss, \seSub, \seCond, b. 
& \round(b) = \round(\seAss(\blocknumvar)) ~\land~ \seRes{\seSub}{\seCond} \in \symExRel{\procF}
~\land~ \satis{\seAss}{\inv} \\
&~\land~ \satis{\assComp{\seSubst{\blocknumvar}{b}}{\seAss}}{\seCond} \\
&~\land~ \satis{\assComp{\seSubst{\blocknumvar}{b}}{\seAss}}{\preCond} \\
& \Rightarrow \satis{\assComp{\seSubst{\blocknumvar}{\seAss(\blocknumvar)}}{(\assComp{\seSub}{(\assComp{\seSubst{\blocknumvar}{b}}{\seAss})})}}{\inv}
\end{align*}
We use $\round(b) = b - (b \bmod{k})$ to denote the round of a block number.
\label{def:bo-inv}
\end{definition}

Intuitively, this definition states that if an assignment $\seAss$ satisfies the invariant $\inv$ and this assignment would satisfy the prerequisites for execution $\procF$ in some block $b$ of the same round (denoted by satisfaction of the path condition $\seCond$ and the precondition $\preCond$), then performing the same updates to $\seAss$ would leave the invariant intact.

In particular, we will use the previously stated properties to make use of the following lemma: 
\begin{lemma}[Preservation of invariants through rounds]
Let $\inv$ and $\invpre$ be conditions such that $\roundinv{\roundvar}{\invpre}{\inv}$.
Let $\config_1$, $\config_2$, \dots, $\config_n$ be configurations such that 
$\round(\config_j) = \round(\config_l)$ for $j, l \in \{ 1, \dots, n \}$ and $\lastBlockNumber(\config_1) \bmod{k} = 0$.
Further, let 
$\config_1 \trans{\txv_1} \config_2 \trans{\txv_2} \dots \trans{\txv_{n-1}} \config_{n}$
for transaction $\txv_1, \dots \txv_{n-1}$ such that for all $j \in \{1, \dots n-1 \}$, it holds that
either $\txv_j = \txB$ or 
$\txv_j = \contrCall{\procF(\fargs)}$ for some $\procF \in \cFuncs{\contract}$ s.t. 
$\execinv{\procF}{\preCond}{\invpre}$ holds for some index $\indexvar$ and $\preCond$ with $\satis{\confToAss{\config_j}{\txv_j}}{\preCond}$.
Then if $\satis{\confToAss{\config_1}{\txv}}{\invpre}$ for some $\txv$ and index $\indexvar$ it also holds
that $\satis{\confToAss{\config_n}{\txv}}{\inv}$. 
\label{lem:inv-preservation}
\end{lemma}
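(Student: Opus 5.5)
We prove Lemma~\ref{lem:inv-preservation} by induction along the run $\config_1 \trans{\txv_1} \cdots \trans{\txv_{n-1}} \config_n$. The key observation is that the property to be propagated is not $\inv$ itself but a ``frozen'' form of $\invpre$. Let $b_0 := \lastBlockNumber(\config_1)$, so $b_0 \bmod k = 0$ and $\round(b_0) = b_0$. Since all $\config_j$ lie in the same round, $\config_j.\blocknumber = b_0 + i_j$ for some $i_j \in [0, k-1]$.

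The induction invariant for $j \in \{1,\dots,n\}$ is
\[
\satis{\assComp{\seSubst{\blocknumvar}{b_0}}{\confToAss{\config_j}{\txv}}}{\invpre},
\]
that is, $\invpre$ holds on the contract state of $\config_j$ with the block number reset to the start of the round. By the assumption that $\confToAss{\cdot}{\cdot}$ is agnostic to the transaction on global variables, and by Assumption~\ref{lem:agreement-rel-vars} (we may assume $\invpre$ mentions only global variables, or otherwise fix the local part of the assignment throughout), the choice of transaction in $\confToAss{\config_j}{\cdot}$ is irrelevant. We can therefore freely switch between $\txv$ and $\txv_j$.

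The base case $j = 1$ is exactly the hypothesis, since $\config_1.\blocknumber = b_0$. For the step there are two cases. If $\txv_j = \txB$, only $\blocknumvar$ changes (by soundness and completeness of $\symExRel{\procTick}$), so the frozen assignment is unchanged. If $\txv_j = \contrCall{\procF(\fargs)}$, completeness of symbolic execution gives a pair $\seRes{\seSub}{\seCond} \in \symExRel{\procF}$ with $\satis{\confToAss{\config_j}{\txv_j}}{\seCond}$ and $\confToAss{\config_{j+1}}{\txv_j} = \assComp{\confToAss{\config_j}{\txv_j}}{\seSub}$. We then apply Definition~\ref{def:bo-inv} with
\[
\seAss := \assComp{\seSubst{\blocknumvar}{b_0}}{\confToAss{\config_j}{\txv_j}}, \qquad b := b_0 + i_j.
\]
Its premises hold: $\round(b) = \round(b_0)$, and $\satis{\seAss}{\invpre}$ is the induction hypothesis. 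Moreover $\assComp{\seSubst{\blocknumvar}{b}}{\seAss} = \confToAss{\config_j}{\txv_j}$, which satisfies both $\seCond$ and $\preCond$ by assumption. The conclusion of Definition~\ref{def:bo-inv} is that the post-state with the block number reset to $\seAss(\blocknumvar) = b_0$ satisfies $\invpre$. This reset post-state coincides with $\assComp{\seSubst{\blocknumvar}{b_0}}{\confToAss{\config_{j+1}}{\txv}}$, which closes the step.

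To conclude, we apply Definition~\ref{def:round-invariant} at $j = n$. Take $\seAss$ to be the frozen assignment of $\config_n$, so that $\seAss(\blocknumvar) = b_0 = \roundvar k$ and $\satis{\seAss}{\invpre}$. Choose $i := i_n \in [0, k-1]$. This gives $\satis{\assComp{\seSubst{\blocknumvar}{b_0+i_n}}{\seAss}}{\inv}$, and the assignment here is precisely $\confToAss{\config_n}{\txv}$.

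The main obstacle is the bookkeeping of block-number substitutions in the contract step. We must check that $\assComp{\seSubst{\blocknumvar}{\seAss(\blocknumvar)}}{(\assComp{\seSub}{\cdots})}$ equals the reset post-state. This requires that $\seSub$ does not itself rewrite $\blocknumvar$ for contract functions (only $\procTick$ does), and that the local variables indexed by $\indexvar$ are handled consistently when switching between $\txv$ and $\txv_j$. I would state both facts as small auxiliary observations before the induction.
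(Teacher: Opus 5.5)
Your proof is correct and follows the paper's argument: induction shows that $\invpre$ holds on the assignment of each $\config_j$ with $\blocknumvar$ reset to the round start $b_0$ (via Definition~\ref{def:bo-inv} and soundness/completeness of symbolic execution), and a single application of Definition~\ref{def:round-invariant} with $i = i_n$ then gives $\inv$. Your explicit instantiation $b := b_0 + i_j$ spells out what the paper leaves implicit, and your concern that $\seSub$ must not rewrite $\blocknumvar$ is unnecessary, since the conclusion of Definition~\ref{def:bo-inv} resets the block number to $\seAss(\blocknumvar) = b_0$ anyway.
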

\begin{proof}
To prove this lemma, we first show by induction over the length of $\vec{\txv}$ that if $\satis{\confToAss{\config_1}{\txv}}{\invpre}$
then also $\satis{\assComp{\seSubst{\blocknumvar}{\confToAss{\config_1}{\txv}(\blocknumvar)}}{\confToAss{\config_n}{\txv}}}{\invpre}$. 
The cases follow immediately by the Definition~\ref{def:bo-inv} (and the soundness and completeness of the symbolic execution). 
Finally, we can use Definition~\ref{def:round-invariant} to conclude from $\satis{\assComp{\seSubst{\blocknumvar}{\confToAss{\config_1}{\txv}(\blocknumvar)}}{\confToAss{\config_n}{\txv}}}{\invpre}$ that also $\satis{\confToAss{\config_n}{\txv}}{\inv}$ (since $\confToAss{\config_n}{\txv} \allowbreak = \assComp{\seSubst{\blocknumvar}{\confToAss{\config_n}{\txv}(\blocknumvar)}}{(\assComp{\seSubst{\blocknumvar}{\confToAss{\config_1}{\txv}(\blocknumvar)}}{\confToAss{\config_n}{\txv}})}$).
\end{proof}

Intuitively, this lemma states that if $(\invpre, \inv)$ is a round invariant and if $\invpre$ holds at the beginning of a round, then $\inv$ holds throughout the whole round as long as only transactions get executed for which $\invpre$ is a block-oblivious invariant.

\paragraph{Constructing Interaction Conditions}
Our algorithms will iteratively build pairs of conditions $(\invpre, \inv)$ that together constitute round invariants. 
To this end, we will define two functions $\getRoundInv$ and $\mergeRoundInv$ such that for a condition $\condVar$, $\getRoundInv(\condVar)$ returns a round invariant $(\invpre, \inv)$ such that $\inv$ is equivalent to $\condVar$ ($\inv \equiv \condVar$) and $\mergeRoundInv((\invpre', \inv'), (\tick{\invpre}, \tick{\inv}))$ returns a round invariant $(\invpre, \inv)$ such that $\inv \equiv \inv' \land \tick{\inv}$.

For the theoretical treatment, we will axiomatize the properties of $\getRoundInv$ and $\mergeRoundInv$ as follows:

\begin{assumption}[Round Invariant Generation]
Let $\condVar$ be a logical condition (over $\seCvars{\contract}$ for some $\procF \in \cFuncs{\contract}$ and index $\indexvar$). 
Further, let $(\invpre, \inv) = \getRoundInv(\condVar)$ for logical conditions $\invpre$, $\inv$ (over $\seCvars{\contract}$). 
Then, the following hold:
\begin{itemize}
\item $\roundinv{\roundvar}{\invpre}{\inv}$
\item $\forall \seAss.~ \satis{\seAss}{\inv} \Leftrightarrow \satis{\seAss}{\condVar}$
\end{itemize}
\end{assumption}

Note that $\getRoundInv$ is guaranteed to exist, since $\getRoundInv(\condVar) := (\bot, \condVar)$ is a trivial implementation that satisfies the above properties. 
In practice, we can implement $\getRoundInv$ by using heuristics to compute more interesting round preconditions $\invpre$, which we then check to indeed satisfy the desired properties.
E.g., in cases where $\condVar$ does not contain the block number $\blocknumvar$, $\getRoundInv(\condVar) := (\condVar, \condVar)$ provides relevant round invariants. 
Simple strengthenings that we attempt in case that $\condVar$ contains the block number $\blocknumvar$ is to substitute $\blocknumvar$ in $\condVar$ with $\blocknumvar + k - 1$.

\begin{assumption}[Round Invariant Composition]
Let $\invpre^1$, $\invpre^2$, $\inv^1$, $\inv^2$  be logical conditions (over $\seCvars{\contract}$ for some $\procF \in \cFuncs{\contract}$ and index $\indexvar$) such that
 $\roundinv{\roundvar}{\invpre^1}{\inv^1}$ and $\roundinv{\roundvar}{\invpre^2}{\inv^2}$.
Further, let $(\invpre, \inv) = \mergeRoundInv((\invpre^1, \inv^1), (\invpre^2, \inv^2))$ 
for logical conditions $\invpre$, $\inv$  (over $\seCvars{\contract}$). 
Then, the following hold:
\begin{itemize}
\item $\roundinv{\roundvar}{\invpre}{\inv}$
\item $\forall \seAss.~ \satis{\seAss}{\inv} \Leftrightarrow \satis{\seAss}{\inv^1 ~\land~ \inv^2}$
\end{itemize}
\end{assumption}

Note that a trivial implementation of $\mergeRoundInv$ would be 
$$\mergeRoundInv((\invpre^1, \inv^1), (\invpre^2, \inv^2)) := (\invpre^1 \land \invpre^2, \inv^1 \land \inv^2)$$
In practice, we perform simplifications to reduce the number of variables occurring in the final round invariant $(\invpre, \inv)$.

We now devise algorithms for computing interaction conditions that cover two different scenarios
\begin{itemize}
\item \emph{Frontrunning conditions} $\frWriteCond{\procF}{\seExp}{\seCvar}{\roundPred}$ that ensure that a call of honest user $\honuser$ to function $\procF$ will deterministically assign the value of variable $\seCvar$ according to expression $\seExp$. In particular, this means that concurrent attacker actions may neither deviate the control flow of the execution so that the execution will assign another expression $\seExp' \neq \seExp$ to $\seCvar$, nor may attacker actions change the variables in $\seExp$ so that the execution of $\procF$ would lead to a different assignment for $\seCvar$.
\item \emph{Backrunning conditions} $\brWriteCond{\procF}{\vSet}{\seCvar}{\roundPred}$ that ensure that when an honest user $\honuser$ writes a variable $\seCvar$ using function $\procF$, then this will not interfere with how the attacker can concurrently assign variables in the set $\vSet$. In particular, this means that the attacker may not perform concurrent actions in which the assignment of variables in $\vSet$ may depend on the variable $\seCvar$, either directly (via data dependencies) or indirectly (via control dependencies).
\end{itemize}

\begin{algorithm}[t]
    \caption{Synthesis of condition $\frWriteCond{\procF}{\seExp}{\seCvar}{\roundPred}$}
    \begin{algorithmic}[1]
\REQUIRE Smart contract $\contract$, user $\honuser$, function $\procF \in \cFuncs{\contract}$, symbolic execution relations 
$\{ \symExRel{\procG} \}_{\procG \in \cFuncs{\contract} \cup \{ \procTick \}, \indexvar \in \{ \userindexvar, \nuserindexvar\} }$ over the respective $\seCvarsi{\contract}{\procG}{\indexvar}$, 
variable $\eventvar \in \gvars{\contract}$, 
expression $\seExp \in \expseti{\procF}{\userindexvar}$
\STATE $\frWriteCond{\procF}{\seExp}{\seCvar}{\roundPred} \gets \getRoundInv(\pathcondi{\procF}{\seExp}{\seCvar}{\userindexvar})$
\STATE $\nowriteset \gets \emptyset$, $\dSet \gets \emptyset$
\IF{$\seExp \neq \seCvar$}
    \STATE $\dSet \gets \ddepsi{\procF}{\seCvar}{\userindexvar} \cup \{ \seCvar \}$
\ENDIF
\FORALL{$y \in \dSet$, $\procG \in \cFuncs{\contract} \cup \{ \procTick \}$} 
            \STATE $\varphi \gets \setforall{\locvar}{\lvars{\procG}{\nuserindexvar}} \sendervari{\nuserindexvar} \neq \honuser \rightarrow \nowritecondi{\procG}{y}{\nuserindexvar}$
            \STATE $\frWriteCond{\procF}{\seExp}{\seCvar}{\roundPred} \gets 
            \mergeRoundInv(
                \frWriteCond{\procF}{\seExp}{\seCvar}{\roundPred},
                \getRoundInv \left(
                    \varphi
                \right)
            )$
            \STATE $\nowriteset \gets \nowriteset \cup \{(y, \procG)\}$
    \ENDFOR
\STATE $\doneflag \gets \false$
    \WHILE{$\condvars{\frWriteCond{\procF}{\seExp}{\seCvar}{\roundPred}} \times \cFuncs{\contract} \not \subseteq \nowriteset \land \doneflag = \false$}
        \STATE $\doneflag \gets \true$
        \FORALL{$\procG \in \cFuncs{\contract}$}
            \STATE $(\invpre, \inv) \gets \frWriteCond{\procF}{\seExp}{\seCvar}{\roundPred}$
            \IF{$\execinvi{\procG}{\predIsNoSender{\honuser}}{\invpre}{\nuserindexvar}$} \label{line:algo-fr:inv-check}
                \STATE \textbf{continue}
            \ENDIF
        \STATE $\doneflag \gets \false$
        \STATE $y \gets \condvars{\frWriteCond{\procF}{\seExp}{\seCvar}{\roundPred}} \setminus \{ z ~|~ (z, \procG) \in \nowriteset \}$
        \STATE $\varphi \gets \setforall{\locvar}{\lvars{\procG}{\nuserindexvar}} \sendervari{\nuserindexvar} \neq \honuser \rightarrow \nowritecondi{\procG}{y}{\nuserindexvar}$
        \STATE $\frWriteCond{\procF}{\seExp}{\seCvar}{\roundPred} \gets 
        \mergeRoundInv (
                \frWriteCond{\procF}{\seExp}{\seCvar}{\roundPred},
                \getRoundInv\left(
                    \varphi
                \right)
            )$
        \STATE $\nowriteset \gets \nowriteset \cup \{(y, \procG)\}$
        \ENDFOR
    \ENDWHILE
    \RETURN $\frWriteCond{\procF}{\seExp}{\seCvar}{\roundPred}$
    \end{algorithmic}
    \label{algo:fr-app}
\end{algorithm}

Algorithm~\ref{algo:fr-app} for generating $\frWriteCond{\procF}{\seExp}{\seCvar}{\roundPred}$ proceeds as follows: 
Starting from the path condition for assigning $\seCvar$ to expression $\seExp$, it refines the condition by 
(1) excluding writes (by the attacker) to all variables on which $\seCvar$ has a data dependency (provided that $\seExp$ indicates a proper assignment to $\seCvar$);
(2) excluding writes (by the attacker) to all variables on which $\seCvar$ has control dependencies until an invariant is found that holds for all possible attacker interactions.
Note that the algorithm is guaranteed to terminate, since if not terminating before (due to successfully finding an invariant), eventually $\nowriteset$ will contain all combinations of contract variables and contract functions, causing a termination of the outer while loop.

\begin{algorithm}[t]
    \caption{Synthesis of condition $\brWriteCond{\procF}{\vSet}{\seCvar}{\roundPred}$}
    \begin{algorithmic}[1]
        \REQUIRE Smart contract $\contract$, user $\honuser$, function $\procF \in \cFuncs{\contract} \cup \{ \procTick \}$, symbolic execution relations 
$\{ \symExRel{\procG} \}_{\procG \in \cFuncs{\contract} \cup \{ \procTick \}, \indexvar \in \{ \userindexvar, \nuserindexvar\} }$ over the respective $\seCvarsi{\contract}{\procG}{\indexvar}$, 
variable $\eventvar \in \gvars{\contract}$, 
set $\vSet \subseteq \gvars{\contract}$ %
\STATE $\brWriteCond{\procF}{\vSet}{\seCvar}{\roundPred} \gets (\top, \top) $
\STATE $\dSet \gets \{ (y, \procG) \in \vSet \times \cFuncs{\contract} ~|~ \seCvar \in \ddepsi{\procG}{y}{\nuserindexvar} ~\lor~ \seCvar = y \}$ %
\FORALL{$(y, \procG) \in \dSet$} 
            \STATE $\varphi \gets \setforall{\locvar}{\lvars{\procG}{\nuserindexvar}} \sendervari{\nuserindexvar} \neq \honuser \rightarrow \nowritecondi{\procG}{y}{\nuserindexvar}$
            \STATE $\brWriteCond{\procF}{\vSet}{\seCvar}{\roundPred} \gets 
            \mergeRoundInv(
                \brWriteCond{\procF}{\vSet}{\seCvar}{\roundPred},
                \getRoundInv\left(
                    \varphi
                \right)
            )$
            \STATE $\nowriteset \gets \nowriteset \cup \{(y, \procG)\}$
    \ENDFOR
\STATE $\cSet \gets \{ (y, \procG) \in \vSet \times \cFuncs{\contract} ~|~ \seCvar \in \cdeps{\procG}{y} \}$
\FORALL{$(y, \procG) \in \cSet$} 
    \STATE $\seExp \gets \select(\{ \seExp ~|~ \seRes{\seSub}{\seCond} \in \symExReli{\procG}{\nuserindexvar} ~\land~ \seSub(y) = \seExp \})$
    \STATE $\varphi \gets \setforall{\locvar}{\lvars{\procG}{\nuserindexvar}} \sendervari{\nuserindexvar} \neq \honuser \rightarrow \pathcondi{\procG}{\seExp}{y}{\nuserindexvar}$
    \STATE $\brWriteCond{\procF}{\vSet}{\seCvar}{\roundPred} \gets 
                \mergeRoundInv(
                    \brWriteCond{\procF}{\vSet}{\seCvar}{\roundPred},
                    \getRoundInv\left(
                        \varphi
                    \right)
                )$ 
    \ENDFOR
\STATE $\doneflag \gets \false$
\WHILE{$\condvars{\brWriteCond{\procF}{\vSet}{\seCvar}{\roundPred}} \times \cFuncs{\contract} \not \subseteq \nowriteset \land \doneflag = \false$}
    \STATE $\doneflag \gets \true$
    \STATE $(\invpre, \inv) \gets \brWriteCond{\procF}{\vSet}{\seCvar}{\roundPred}$
        \IF{$\neg \execinvi{\procF}{\predIsSender{\honuser}}{\invpre}{\userindexvar}$}
            \STATE $\doneflag \gets \false$
            \STATE $y \gets \condvars{\brWriteCond{\procF}{\vSet}{\seCvar}{\roundPred}} \setminus \{ z ~|~ (z, \procF) \in \nowriteset \}$
            \STATE $\brWriteCond{\procF}{\vSet}{\seCvar}{\roundPred} \gets 
                \mergeRoundInv(
                    \brWriteCond{\procF}{\vSet}{\seCvar}{\roundPred},$ \\
                    \hspace{4\algorithmicindent}$\getRoundInv(\nowritecondi{\procF}{y}{\userindexvar})
                )$
            \STATE $\nowriteset \gets \nowriteset \cup \{(y, \procF)\}$
        \ENDIF
        \FORALL{$\procG \in \cFuncs{\contract}$}
        \STATE $(\invpre, \inv) \gets \brWriteCond{\procF}{\vSet}{\seCvar}{\roundPred}$
            \IF{$\execinvi{\procG}{\predIsNoSender{\honuser}}{\invpre}{\nuserindexvar}$}
                \STATE \textbf{continue}
            \ENDIF
        \STATE $\doneflag \gets \false$
        \STATE $y \gets \condvars{\brWriteCond{\procF}{\vSet}{\seCvar}{\roundPred}} \setminus \{ z ~|~ (z, \procG) \in \nowriteset \}$
        \STATE $\varphi \gets \setforall{\locvar}{\lvars{\procG}{\nuserindexvar}} \sendervari{\nuserindexvar} \neq \honuser \rightarrow \nowritecondi{\procG}{y}{\nuserindexvar}$
        \STATE $\brWriteCond{\procF}{\vSet}{\seCvar}{\roundPred} \gets 
            \mergeRoundInv(
                \brWriteCond{\procF}{\vSet}{\seCvar}{\roundPred},
                \getRoundInv\left(
                    \varphi
                \right)
            )$
        \STATE $\nowriteset \gets \nowriteset \cup \{(y, \procG)\}$
        \ENDFOR
\ENDWHILE
    \RETURN $\brWriteCond{\procF}{\vSet}{\seCvar}{\roundPred}$
    \end{algorithmic}
    \label{algo:br}
    \end{algorithm}

Algorithm~\ref{algo:br} for computing $\brWriteCond{\procF}{\vSet}{\seCvar}{\roundPred}$ proceeds similarly to the previous one.
It starts with the condition $\top$ and first excludes writes to all variables $y$ that have a data dependency on $\seCvar$.
For variables $y$ that have a control dependency on $\seCvar$, it is ensured that the condition implies a path condition of a specific expression $\seExp$, ensuring that writing $\seCvar$ (using $\procF$) can not change the way how a(nother) function $\procG$ writes $y$.
Finally, writes to all variables of the conditions are prohibited until establishing an invariant both for adversarial function executions as well as the execution of $\procF$ by the honest user.

\subsubsection{Characteristic Properties}

We state the characteristic properties of the frontrunning and backrunning conditions.

We first provide the characteristic properties of the frontrunning conditions:

\begin{lemma}[Characteristic properties of the frontrunning condition]
    Let $\contract$ be a smart contract, $\honuser \in \honusers$, \break
     $\procF \in \cFuncs{\contract}$, 
$\{ \symExRel{\procG} \}_{\procG \in \cFuncs{\contract} \cup \{ \procTick \}, \indexvar \in \{ \userindexvar, \nuserindexvar\} }$ symbolic execution relations over the respective $\seCvarsi{\contract}{\procG}{\indexvar}$, 
 $\eventvar \in \gvars{\contract}$, 
$\seExp \in \expseti{\procF}{\userindexvar}$. 
 Let $\frWriteCond{\procF}{\seExp}{\seCvar}{\roundPred} = (\frCondPre, \frCondInv)$ be the result of Algorithm~\ref{algo:fr}.
The following properties hold
\begingroup%
\allowdisplaybreaks%
\begin{align}
    &\frCondInv \sementails \pathcondi{\procF}{\seExp}{\seCvar}{\userindexvar} \label{eq:fr-inv-pathcond} \\
    \nonumber \\
    &\seCvar \neq \seExp 
     \Rightarrow \forall y \in \ddepsi{\procF}{\seCvar}{\userindexvar} \cup \{ \seCvar\}, \procG \in \cFuncs{\contract} \cup \{\procTick\}. \nonumber \\ 
    & \qquad \qquad \quad \frCondInv  \sementails \left ( \setforall{\locvar}{\lvars{\procG}{\nuserindexvar}} \sendervari{\nuserindexvar} \neq \honuser \rightarrow  \nowritecondi{\procG}{y}{\nuserindexvar} \right ) \label{eq:fr-inv-ddeps} \\
     \nonumber \\
        &\forall\procG \in \cFuncs{\contract}. \execinvi{\procG}{\sendervari{\nuserindexvar} \neq \honuser}{\frCondPre}{\nuserindexvar} \label{eq:fr-pre-cdeps} \\
        \nonumber \\
        & \roundinvi{\roundvar}{\frCondPre}{\frCondInv}{\nuserindexvar} \\
        \nonumber \\
        & \condvars{\frCondPre} \subseteq \seCvarsi{\contract}{\procF}{\userindexvar} \\
        \nonumber \\
        & \condvars{\frCondInv} \subseteq \seCvarsi{\contract}{\procF}{\userindexvar}
\end{align}
\endgroup
\end{lemma}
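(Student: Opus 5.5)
The plan is to reason directly about Algorithm~\ref{algo:fr-app} via a loop invariant on the accumulator $(\invpre, \inv) = \frWriteCond{\procF}{\seExp}{\seCvar}{\roundPred}$. The only operations that modify the accumulator are the initial call to $\getRoundInv$ and subsequent calls to $\mergeRoundInv$. So I would establish, by induction over the sequence of updates, the following invariant: (a) $\roundinvi{\roundvar}{\invpre}{\inv}{\nuserindexvar}$ holds; and (b) $\inv$ is semantically equivalent to the conjunction of $\pathcondi{\procF}{\seExp}{\seCvar}{\userindexvar}$ with all conditions $\varphi$ merged so far.

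The base case is exactly the first clause of the Round Invariant Generation assumption. The inductive step combines the Round Invariant Composition assumption with a second application of Round Invariant Generation to the freshly built $\varphi$. This single invariant already yields the round-invariant property of the lemma. It also yields property \eqref{eq:fr-inv-pathcond}: the final $\inv$ is equivalent to a conjunction whose first conjunct is the path condition, so it entails that path condition. Conjunctions only ever grow, so entailment of earlier conjuncts is preserved.

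For property \eqref{eq:fr-inv-ddeps}, I would assume $\seCvar \neq \seExp$. Then the algorithm sets $\dSet = \ddepsi{\procF}{\seCvar}{\userindexvar} \cup \{\seCvar\}$, and the first \textbf{for all} loop iterates over every pair $(y,\procG) \in \dSet \times (\cFuncs{\contract}\cup\{\procTick\})$. In each iteration it merges exactly the required quantified no-write condition $\varphi$. By clause (b) of the loop invariant, the final $\inv$ entails each such $\varphi$.

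The variable-scope properties follow by a syntactic induction. I would check that every merged formula is built only from variables in $\seCvarsi{\contract}{\procF}{\userindexvar}$ together with the attacker-indexed locals $\lvars{\procG}{\nuserindexvar}$. The attacker locals are universally bound by $\setforall{\locvar}{\lvars{\procG}{\nuserindexvar}}$, so they are not free. This requires the side assumption that $\getRoundInv$ and $\mergeRoundInv$ introduce no new free variables. I would state that assumption explicitly, since the trivial implementations and the $\blocknumvar \mapsto \blocknumvar+k-1$ strengthening satisfy it.

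The main obstacle is property \eqref{eq:fr-pre-cdeps}, the block-oblivious invariance of the final $\invpre$ under every attacker call of every $\procG$. I would split on how the while loop exits.

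In the first case, the loop exits with $\doneflag = \true$. Then during the last pass every $\procG$ passed the check $\execinvi{\procG}{\predIsNoSender{\honuser}}{\invpre}{\nuserindexvar}$ without any modification to the accumulator. Within that pass, earlier $\procG$s were checked against the same $\invpre$, because a modification sets $\doneflag$ to false. So the property holds directly.

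In the second case, the loop exits because $\condvars{\cdot} \times \cFuncs{\contract} \subseteq \nowriteset$. Here I would argue that every variable of $\invpre$ is covered, for each $\procG$, by a conjunct of $\invpre$ forbidding $\procG$ (called by a non-$\honuser$ sender) from writing it. The hardest point is that these no-write conjuncts must sit in $\invpre$, not only in $\inv$; Round Invariant Generation only relates $\inv$ to $\varphi$. I would try to show that a satisfied $\invpre$ at a block $b$ of the round implies the corresponding no-write conjunct of $\inv$ at $b$, using the round-invariant property instantiated at offset $b \bmod k$. Then any path $\seCond$ of $\procG$ satisfied at $b$ leaves all of $\condvars{\invpre}$ unchanged, apart from $\blocknumvar$, which Definition~\ref{def:bo-inv} resets. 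Assumption~\ref{lem:agreement-rel-vars} then transfers satisfaction of $\invpre$ to the post-state. If the offset argument fails, the fallback is to observe that the first case needs no such argument and to restrict the proof of the second case to $\getRoundInv$ implementations whose $\invpre$ entails $\inv$.

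Termination, which is needed to speak of "the result", follows from the finiteness of $\nowriteset$ as noted after the algorithm.
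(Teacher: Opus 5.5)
Most of your plan is a more explicit version of the paper's own argument, which only says that the lemma follows from the construction together with the axioms on $\getRoundInv$ and $\mergeRoundInv$. Your accumulator invariant correctly gives \eqref{eq:fr-inv-pathcond}, \eqref{eq:fr-inv-ddeps} and the round-invariant clause. Reasoning about Algorithm~\ref{algo:fr-app} is actually necessary for \eqref{eq:fr-inv-ddeps}, since the simplified Algorithm~\ref{algo:fr} omits $\seCvar$ from $\dSet$ and never ranges over $\procTick$. Making the ``no fresh free variables'' assumption explicit is the right call for the two scoping clauses. Your first exit case of \eqref{eq:fr-pre-cdeps} is also correct.

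The gap is in the second exit case of \eqref{eq:fr-pre-cdeps}.

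\emph{The obstacle you name is not the real one.} That the no-write conjuncts sit in $\inv$ rather than $\invpre$ is exactly what round invariance bridges.

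\emph{The real problem.} Your offset step applies Definition~\ref{def:round-invariant} to the assignment $\seAss$ of Definition~\ref{def:bo-inv}. That is only licensed when $\seAss(\blocknumvar)$ is a multiple of $k$. Definition~\ref{def:bo-inv}, however, ranges over every $\seAss$ with $\round(b)=\round(\seAss(\blocknumvar))$, including mid-round $\seAss$ and $b<\seAss(\blocknumvar)$. The axioms on $\getRoundInv$ and $\mergeRoundInv$ constrain $\invpre$ only at round starts. So for such $\seAss$ nothing makes $\inv$ true at block $b$, no no-write conjunct is available there, and $\procG$ may write a covered variable.

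\emph{Your fallback does not help.} $\invpre\sementails\inv$ yields $\inv$ at block $\seAss(\blocknumvar)$, not at $b$. The no-write conjuncts typically mention $\blocknumvar$ (as $\nwcond$ does), so determinism of symbolic execution cannot be invoked.

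\emph{The literal reading is not what the paper uses.} Already the running example's $\invpre=\nwcond'$ violates it. Take all timestamps $\le 1$ and $\seAss(\blocknumvar)=rk+1$. An owner different from $\honuser$ calling \texttt{scheduleFeeChange} at $b=rk$ writes the timestamp $rk+k$, which is not $>\seAss(\blocknumvar)+k-1$.

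\emph{The repair.} State and prove \eqref{eq:fr-pre-cdeps} for block-oblivious invariance restricted to assignments with $\seAss(\blocknumvar)\equiv 0 \pmod{k}$. This is all that Lemma~\ref{lem:inv-preservation} consumes, since its induction always resets $\blocknumvar$ to $\lastBlockNumber(\config_1)$. Under that restriction your offset argument closes the case:
\begin{enumerate}
\item round invariance gives $\inv$ at $\seAss[\blocknumvar\mapsto b]$;
\item the covered no-write conjuncts together with determinism give $\seSub(y)=y$ for every $y\in\condvars{\invpre}\setminus\{\blocknumvar\}$;
\item Assumption~\ref{lem:agreement-rel-vars} transfers $\invpre$ to the post-state.
\end{enumerate}

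\emph{Termination needs the same argument.} Your termination remark relies on finiteness of $\nowriteset$, but that alone does not suffice. You also need that a $\procG$ all of whose pairs are already in $\nowriteset$ passes the check, so that the ``choose $y$'' step never faces an empty set. The restricted offset argument above is exactly what shows this.
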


Correspondingly, we can define the characteristic properties of the backrunning conditions:

\begin{lemma}[Characteristic properties of the backrunning condition]
    Let $\contract$ be a smart contract, $\honuser \in \honusers$, \break
     $\procF \in \cFuncs{\contract}$, 
     $\vSet \subseteq \seCvarsi{\contract}{\procF}{\userindexvar}$,
$\{ \symExRel{\procG} \}_{\procG \in \cFuncs{\contract} \cup \{ \procTick \}, \indexvar \in \{ \userindexvar, \nuserindexvar\} }$ symbolic execution relations over the respective $\seCvarsi{\contract}{\procG}{\indexvar}$, 
 $\eventvar \in \gvars{\contract}$, 
$\seExp \in \expseti{\procF}{\userindexvar}$. 
Let $\brWriteCond{\procF}{\vSet}{\seCvar}{\roundPred} = (\brCondPre, \brCondInv)$ be the result of Algorithm~\ref{algo:br}.
The following properties hold
\begin{align}
&\forall y \in \vSet, \procG \in \cFuncs{\contract}.~ (\seCvar \in \ddepsi{\procG}{y}{\nuserindexvar}  \lor y = \seCvar) \nonumber\\
&\qquad \Rightarrow \brCondInv \sementails \left ( \setforall{\locvar}{\lvars{\procG}{\nuserindexvar}} \sendervari{\nuserindexvar} \neq \honuser \rightarrow \nowritecondi{\procG}{y}{\nuserindexvar} \right ) \\
\nonumber \\
&\forall y \in \vSet, \procG \in \cFuncs{\contract}.~ \seCvar \in \cdepsi{\procG}{y}{\nuserindexvar} \Rightarrow \exists \seExp \in \expseti{\procG}{\nuserindexvar}. \nonumber \\
&\qquad \qquad~ \brCondInv \sementails \left (  \setforall{\locvar}{\lvars{\procG}{\nuserindexvar}} \sendervari{\nuserindexvar} \neq \honuser \rightarrow \pathcondi{\procG}{\seExp}{y}{\nuserindexvar} \right ) \\
\nonumber \\
&\execinvi{\procF}{\sendervari{\userindexvar} = \honuser}{\brCondPre}{\userindexvar} \\
\nonumber \\
&\forall\procG \in \cFuncs{\contract}. \execinvi{\procG}{\sendervari{\nuserindexvar} \neq \honuser}{\brCondPre}{\nuserindexvar} \\
\nonumber \\
        & \roundinvi{\roundvar}{\brCondPre}{\brCondInv}{\nuserindexvar} \\
        \nonumber \\
        & \condvars{\brCondPre} \subseteq \seCvarsi{\contract}{\procF}{\userindexvar} \\
        \nonumber \\
        & \condvars{\brCondInv} \subseteq \seCvarsi{\contract}{\procF}{\userindexvar}
\end{align}
\end{lemma}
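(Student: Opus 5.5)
The plan is to prove each property by tracking a \emph{loop invariant} over the three phases of Algorithm~\ref{algo:br}: the data-dependency loop, the control-dependency loop, and the main while loop. The guiding observation is monotonicity. Every update has the form $\mergeRoundInv(\brWriteCond{\procF}{\vSet}{\seCvar}{\roundPred}, \getRoundInv(\varphi))$. By the axioms Round Invariant Generation and Round Invariant Composition, the $\inv$-component after each update is semantically equivalent to the conjunction of $\top$ and every $\varphi$ added so far, and each intermediate pair is a round invariant.

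\textbf{Round invariance.} By induction on the number of updates I would show that the pair is always a round invariant, i.e.\ $\roundinvi{\roundvar}{\brCondPre}{\brCondInv}{\nuserindexvar}$ holds throughout. The base case is $(\top,\top)$, which is trivially a round invariant. The induction step is exactly the first clause of Round Invariant Composition applied to the first clause of Round Invariant Generation.

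\textbf{Entailment properties.} Along the same induction, $\inv$ entails every $\varphi$ that was ever merged. The data-dependency property then follows directly, since the first loop iterates over precisely the set $\dSet = \{ (y,\procG) \mid \seCvar \in \ddepsi{\procG}{y}{\nuserindexvar} \lor y = \seCvar \}$. The control-dependency property follows from the second loop: for every $(y,\procG)$ with $\seCvar \in \cdepsi{\procG}{y}{\nuserindexvar}$, the expression $\seExp$ produced by $\select$ witnesses the required existential. Monotonicity ensures that these entailments survive all later strengthenings in the while loop.

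\textbf{Variable containment.} This is a syntactic induction. Each added $\varphi$ universally binds the local variables of $\procG$ (indexed by $\nuserindexvar$), so its free variables are global. The conditions $\nowritecondi{\procF}{y}{\userindexvar}$ range over $\seCvarsi{\contract}{\procF}{\userindexvar}$. I would additionally assume, as is implicit in the axioms, that $\getRoundInv$ and $\mergeRoundInv$ introduce no fresh variables.

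\textbf{Block-oblivious invariance: termination with $\doneflag = \true$.} The while loop exits either because $\doneflag = \true$ or because $\condvars{\cdot}\times\cFuncs{\contract} \subseteq \nowriteset$. In the first case the final pass performed no merge. Hence the invariance check $\execinvi{\procF}{\sendervari{\userindexvar} = \honuser}{\brCondPre}{\userindexvar}$ and all checks $\execinvi{\procG}{\sendervari{\nuserindexvar} \neq \honuser}{\brCondPre}{\nuserindexvar}$ in that pass were evaluated on the final $\brCondPre$ and all succeeded. This gives both invariance properties immediately.

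\textbf{Block-oblivious invariance: the covering case (main obstacle).} In the second case I must argue directly that no update $\seSub$ from $\procF$ (honest sender) or from any $\procG$ (sender $\neq \honuser$) changes a variable of $\brCondPre$, other than possibly $\blocknumvar$. Definition~\ref{def:bo-inv} resets $\blocknumvar$ afterwards, so if every variable of $\brCondPre$ is otherwise untouched, Assumption~\ref{lem:agreement-rel-vars} yields $\brCondPre$ in the post-state.

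The difficulty is that the merged no-write conditions are guaranteed on $\brCondInv$, not on $\brCondPre$. Moreover, the block-oblivious invariant definition evaluates the path condition at a possibly different block $b$ of the same round. I would handle this in three steps. First, let $\brCondPre$ together with $\roundinvi{\roundvar}{\brCondPre}{\brCondInv}{\nuserindexvar}$ give $\brCondInv$ at block $b$. Second, use that $\brCondInv$ entails the relevant $\nowritecondi{\procG}{y}{\nuserindexvar}$ at $b$, so the path condition actually taken must be one with $\seSub(y) = y$, by determinism of symbolic execution. Third, conclude $y$ is unchanged.

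The gap is that the round-invariant definition only fires from a round-start assignment. I would first try to close it by evaluating at the round start $\round(\seAss(\blocknumvar))$, which requires $\brCondPre$ to be insensitive to the in-round position of $\blocknumvar$. If that fails, I would strengthen the axiom on $\getRoundInv$ so that $\invpre$ entails $\inv$ at every block of the current round; this is satisfied by both implementations discussed, namely $(\bot,\condVar)$ and the $\blocknumvar + k - 1$ substitution.
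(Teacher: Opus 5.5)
\textbf{Overall.} Your arguments for round invariance, for the data- and control-dependency entailments (monotonicity of $\brCondInv$ under $\mergeRoundInv$), for variable containment, and for the two \textit{Inv}-properties when the loop exits with $\doneflag=\true$ are all correct. Together they are a detailed version of the paper's one-line justification that everything ``follows immediately from the construction''. The gap is the exit through $\condvars{\cdot}\times\cFuncs{\contract}\subseteq\nowriteset$. You leave it open, and it cannot be closed for the lemma as stated.

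\textbf{First step.} Your first step is not available. Definition~\ref{def:round-invariant} yields $\brCondInv$ from $\brCondPre$ only at assignments with $\seAss(\blocknumvar)=rk$. Definition~\ref{def:bo-inv}, however, quantifies over every in-round $\seAss$ and every $b$ of that round, including earlier blocks. Here is a concrete run:
\begin{itemize}
\item Take $\seCvar=s$ and $\vSet=\{s\}$.
\item Let an adversarial $\procG$ set $s:=s+k$ on the path $\blocknumvar<s$ and skip otherwise.
\item Let $\procF$ set $s$ to its argument exactly when its sender equals a global $p$.
\end{itemize}
The first loop merges the adversarial no-writes, which are equivalent to $p=\honuser$ (for $\procF$) and $s\le\blocknumvar$ (for $\procG$). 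The choice $\getRoundInv(s\le\blocknumvar)=(s\le\blocknumvar,\,s\le\blocknumvar)$ is admissible, since the condition is monotone in $\blocknumvar$. Let \select{} pick the non-writing expression for $(s,\procG)$ in the control-dependency loop. Now take $k\ge 2$, $\seAss(\blocknumvar)=\seAss(s)=rk+1$ and $b=rk$. The increment fires and $\brCondPre$ is false afterwards, so the $\procG$-check fails. The honest $\procF$-check fails too, because $p=\honuser$ makes the honest call write $s$. Each failure only records a pair with a trivial no-write condition, since neither function writes $p$ or $\blocknumvar$. So the loop can leave through the covering condition and return $(p=\honuser\land s\le\blocknumvar,\ p=\honuser\land s\le\blocknumvar)$, which violates the \textit{Inv}-property for $\procG$. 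Hence option (a) fails. Option (b) is a change of hypothesis, and this admissible $\getRoundInv$ does not meet it.

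\textbf{Second step.} Suppose you grant (b), or restrict Definition~\ref{def:bo-inv} to round-start assignments; the latter is all Lemma~\ref{lem:inv-preservation} uses, since it resets $\blocknumvar$ to the round start. Your second step is still unjustified for $\procF$. $\nowriteset$ records the pair $(y,\procF)$ in two different situations: when the honest branch merges $\nowritecondi{\procF}{y}{\userindexvar}$, and when the first loop (or the inner loop with $\procG=\procF$) merges the adversarial no-write. Covering therefore gives you neither of the facts you need:
\begin{itemize}
\item $\brCondInv\models\nowritecondi{\procF}{y}{\userindexvar}$, needed for the honest \textit{Inv}-property;
\item the adversarial no-write for $\procF$, needed for the other property at $\procG=\procF$.
\end{itemize}
In the example, $(s,\procF)$ enters $\nowriteset$ through $p=\honuser$. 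This \emph{forces} the honest call to write $s$, and for a large argument that breaks $s\le\blocknumvar$. The honest branch can never select $s$ afterwards, so the returned pair also violates the honest \textit{Inv}-property.

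\textbf{What is needed.} The paper's one-line proof passes over this exit as well. The lemma needs a modified algorithm, for example one of:
\begin{itemize}
\item return $(\bot,\brCondInv)$ whenever the loop exits without a pass in which all checks succeeded;
\item use round-start block-oblivious invariance together with separate bookkeeping of honest and adversarial exclusions.
\end{itemize}
With such a change, your monotonicity argument plus the $\doneflag=\true$ case is a complete proof.
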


Note that both lemmas follow immediately from the construction of the algorithm provided that $\getRoundInv$ provides round invariant and $\mergeRoundInv$ preserves round invariants. 

\subsubsection{Semantic notions of frontrunning and backrunning conditions}

The frontrunning and backrunning conditions produced by the algorithms are specific to individual variables $\seCvar$, which may be changed by an honest user transaction. 
For formulating requirements for the kind of transactions $\txv$ that an honest user may submit at the beginning of a round, we want to ensure that those transactions keep a specific relevant set of variables (denoted as $\vSet$ in the following) to be unaffected by the possible deckstacking attacks.
To this end, on the one side, it needs to be ensured that the effects that $\txv$ has on $\vSet$ are deterministicht within the round (so scheduling attacker transactions before $\txv$) may not change how $\txv$ assigns variables in $\vSet$. 
This can be realized by assuring that frontrunning conditions hold for all variables $\seCvar$ in $\vSet$. 
On the other side, it also needs to be ensured that if $\txv$ changes variables (that could also lie outside of $\vSet$), this cannot affect how attacker transactions again write variables in $\vSet$. 
To ensure this, it needs to be established that the backrunning conditions (for $\vSet$) hold for all variables $\seCvar$ that are written by $\txv$.

Following these intuitions, we establish predicates $\frCond{\conf}{\txv}{\vSet}$ and $\brCond{\conf}{\txv}{\vSet}$, which capture this intuition. Namely $\frCond{\conf}{\txv}{\vSet}$ will express that submitting transaction $\txv$ in configuration $\conf$ will ensure that writes to variables in $\vSet$ cannot be affected by frontrunning, and $\brCond{\conf}{\txv}{\vSet}$ will express that by backrunning the transaction $\txv$ an attacker cannot change the variables in $\vSet$.

In the following, we consider frontrunning conditions $\frWriteCond{\procF}{\seExp}{\seCvar}{\roundPred}$ to be of the form $(\frCondPre, \frCondInv)$ where $\frCondPre \sementails \frCondInv$ and $\frCondPre$ denotes the precondition that needs to hold at the beginning of the round in order to ensure that $\frCondInv$ holds within the whole round. 

We define the following helper predicates operating on assignments that we will then use to define the predicates $\frCond{\cdot}{\cdot}{\cdot}$ and $\brCond{\cdot}{\cdot}{\cdot}$.

\begin{definition}
    Let $\contract$ be a smart contract, $\honuser \in \honusers$, $\procF \in \cFuncs{\contract}$, 
    $\vSet \subseteq \gvars{\contract}$, 
    $\seAss$ with an assignment over $\seCvarsi{\contract}{\procF}{\userindexvar}$, and
$\symExReli{\procF}{\userindexvar}$ a symbolic execution relation over $\seCvarsi{\contract}{\procF}{\userindexvar}$.
We define 
\begin{align*}
\frAssSet{\procF}{\seAss}{\vSet} := \bigwedge_{(\frCondPre, \frCondInv) \in S} \frCondInv
\end{align*}
where 
\begin{align*}
S = \{ \frWriteCond{\procF}{\seExp}{\seCvar}{\roundPred} &~|~ 
\seCvar \in \vSet 
~\land~
\seRes{\seSub}{\seCond} \in \symExReli{\procF}{\userindexvar}
~\land~
\seSub(\seCvar) = \seExp
~\land~ 
\satis{\seAss}{\seCond}
\}
\end{align*}
\end{definition}

Intuitively, $\frAssSet{\procF}{\seAss}{\vSet}$ denotes the conjunction for all frontrunning conditions that enforce that variables in $\vSet$ are assigned as mandated by $\seAss$ (which in particular contains the assignment of the argument variables for the function call to $\procF$).

\begin{remark}
Note that we implicitly assume the existence of a smart contract $\contract$ and corresponding symbolic execution relations for all of its functions and do not always make this assumption explicit in the lemmas of this section.
\end{remark}

We provide a definition that caters to the transaction setting (instead of operating on assignments): 
\begin{definition}[Frontrunning condition]
    Let $\contract$ be a smart contract, $\honuser \in \honusers$, $\procF \in \cFuncs{\contract}$ and $\symExReli{\procF}{\userindexvar}$ a symbolic execution relation over $\seCvarsi{\contract}{\procF}{\userindexvar}$. 
    Further let $\txv = \contrCall{\procF(\fargs)}$ for some arguments $\fargs$ and $\sender{\txv} = \honuser$ and $\conf$ be a configuration. We define
    \begin{align*}
\frCond{\conf}{\txv}{\vSet} := 
\satis{\confToAssi{\conf}{\txv}{\userindexvar}}{\frAssSet{\procF}{\confToAssi{\conf}{\txv}{\userindexvar}}{\vSet}}
    \end{align*}
\end{definition}

So intuitively, $\frCond{\conf}{\txv}{\vSet} $ denotes the condition on a configuration $\conf$ and transaction $\txv$, which ensures that concurrent attacker transactions cannot interfere with how $\txv$ writes the variables in $\vSet$.

We can provide similar definitions for the backrunning condition:

\begin{definition}
Let $\contract$ be a smart contract, $\honuser \in \honusers$, $\procF \in \cFuncs{\contract}$, 
    $\vSet \subseteq \gvars{\contract}$, 
    $\seAss$ with an assignment over $\seCvarsi{\contract}{\procF}{\userindexvar}$, and
$\symExReli{\procF}{\userindexvar}$ a symbolic execution relation over $\seCvarsi{\contract}{\procF}{\userindexvar}$.
We define 
\begin{align*}
    \brAssWrite{\procF}{\vSet}{\seAss} := \bigwedge_{(\brCondPre, \brCondInv) \in S} \brCondInv
\end{align*}
where 
\begin{align*}
S = \{ \brWriteCond{\procF}{\vSet}{\seCvar}{\roundPred} &~|~ 
\seCvar \in \gvars{\contract}
~\land~
\seRes{\seSub}{\seCond} \in \symExReli{\procF}{\userindexvar} \\
&~\land~ 
\seSub(\seCvar) = \seExp
~\land~ 
\satis{\seAss}{\seCond}
~\land~ 
\seExp \neq \seCvar
\}
\end{align*}
\end{definition}

Intuitively, $\brAssWrite{\procF}{\vSet}{\seAss}$ denotes the conjunction for all backrunning conditions that enforce that variables in $\vSet$ may not be influenced by variables that are written when executing $\procF$ following the assignment $\seAss$ (which in particular contains the assignment of the argument variables for the function call to $\procF$).
Note that this definition considers that $\seAss$ may induce changes in any variable $\seCvar$ of the contract (when executing $\procF$), but it ensures that for all such variables $\seCvar$, $\brWriteCond{\procF}{\vSet}{\seCvar}{\roundPred}$ holds, ensuring that changes to such variables leave the variables in $\vSet$ unaffected.

Again, we provide a definition that caters to the transaction setting (instead of operating on assignments): 
\begin{definition}[Backrunning condition]
    Let $\contract$ be a smart contract, $\honuser \in \honusers$, $\procF \in \cFuncs{\contract}$ and $\symExReli{\procF}{\userindexvar}$ a symbolic execution relation over $\seCvarsi{\contract}{\procF}{\userindexvar}$. 
    Further let $\txv = \contrCall{\procF(\fargs)}$ for some arguments $\fargs$ and $\sender{\txv} = \honuser$ and $\conf$ be a configuration. We define
    \begin{align*}
\brCond{\conf}{\txv}{\vSet} := \satis{\confToAssi{\conf}{\txv}{\userindexvar}}{\brAssWrite{\procF}{\confToAssi{\conf}{\txv}{\userindexvar}}{\vSet}}
    \end{align*}
\end{definition}

So intuitively, $\brCond{\conf}{\txv}{\vSet} $ denotes the condition on a configuration $\conf$ and transaction $\txv$, which ensures that the way that concurrent attacker transactions write variables in $\vSet$ does not depend on how the transaction $\txv$ changes any (!) variables.

We provide analogous definitions to the previous ones for the round preconditions for frontrunning and backrunning.
\begin{definition}
    Let $\contract$ be a smart contract, $\honuser \in \honusers$, $\procF \in \cFuncs{\contract}$, 
    $\vSet \subseteq \gvars{\contract}$, 
    $\seAss$ with an assignment over $\seCvarsi{\contract}{\procF}{\userindexvar}$, and
$\symExReli{\procF}{\userindexvar}$ a symbolic execution relation over $\seCvarsi{\contract}{\procF}{\userindexvar}$.
We define 
\begin{align*}
\frAssSetPre{\procF}{\seAss}{\vSet} := \bigwedge_{(\frCondPre, \frCondInv) \in S} \frCondPre
\end{align*}
where 
\begin{align*}
S = \{ \frWriteCond{\procF}{\seExp}{\seCvar}{\roundPred} &~|~ 
\seCvar \in \vSet 
~\land~
\seRes{\seSub}{\seCond} \in \symExReli{\procF}{\userindexvar}
~\land~
\seSub(\seCvar) = \seExp
~\land~ 
\satis{\seAss}{\seCond}
\}
\end{align*}
\end{definition}

\begin{definition}[Frontrunning round condition]
    Let $\contract$ be a smart contract, $\honuser \in \honusers$, $\procF \in \cFuncs{\contract}$ and $\symExReli{\procF}{\userindexvar}$ a symbolic execution relation over $\seCvarsi{\contract}{\procF}{\userindexvar}$. 
    Further let $\txv = \contrCall{\procF(\fargs)}$ for some arguments $\fargs$ and $\sender{\txv} = \honuser$ and $\conf$ be a configuration. We define
    \begin{align*}
\frCondR{\conf}{\txv}{\vSet} := \satis{\confToAssi{\conf}{\txv}{\userindexvar}}{\frAssSetPre{\procF}{\confToAssi{\conf}{\txv}{\userindexvar}}{\vSet}}
    \end{align*}
\end{definition}

\begin{definition}
Let $\contract$ be a smart contract, $\honuser \in \honusers$, $\procF \in \cFuncs{\contract}$, 
    $\vSet \subseteq \gvars{\contract}$, 
    $\seAss$ an assignment over $\seCvarsi{\contract}{\procF}{\userindexvar}$, and
$\symExReli{\procF}{\userindexvar}$ a symbolic execution relation over $\seCvarsi{\contract}{\procF}{\userindexvar}$.
We define 
\begin{align*}
    \brAssWritePre{\procF}{\vSet}{\seAss} := \bigwedge_{(\brCondPre, \brCondInv) \in S} \brCondPre
\end{align*}
where 
\begin{align*}
S = \{ \brWriteCond{\procF}{\vSet}{\seCvar}{\roundPred} &~|~ 
\seRes{\seSub}{\seCond} \in \symExReli{\procF}{\userindexvar}
~\land~
\seSub(\seCvar) = \seExp
~\land~ 
\satis{\seAss}{\seCond}
~\land~ 
\seExp \neq \seCvar
\}
\end{align*}
\end{definition}

\begin{definition}[Backrunning condition]
    Let $\contract$ be a smart contract, $\honuser \in \honusers$, $\procF \in \cFuncs{\contract}$ and $\symExReli{\procF}{\userindexvar}$ a symbolic execution relation over $\seCvarsi{\contract}{\procF}{\userindexvar}$. 
    Further let $\txv = \contrCall{\procF(\fargs)}$ for some arguments $\fargs$ and $\sender{\txv} = \honuser$ and $\conf$ be a configuration. We define
    \begin{align*}
\brCondR{\conf}{\txv}{\vSet} := \satis{\confToAssi{\conf}{\txv}{\userindexvar}}{\brAssWritePre{\procF}{\confToAssi{\conf}{\txv}{\userindexvar}}{\vSet}}
    \end{align*}
\end{definition}

We can establish the following properties for the combined frontrunning and backrunning conditions, which are immediately inherited from the definitions of the individual frontrunning and backrunning conditions.

\begin{lemma}
    \label{lem:fr-br-big-roundinv}
    Let $\contract$ be a smart contract, $\honuser \in \honusers$, $\procF \in \cFuncs{\contract}$, 
    $\vSet \subseteq \gvars{\contract}$, 
    $\seAss$ an assignment over $\seCvarsi{\contract}{\procF}{\userindexvar}$, and
$\{ \symExRel{\procG} \}_{\procG \in \cFuncs{\contract} \cup \{ \procTick \}, \indexvar \in \{ \userindexvar, \nuserindexvar\} }$ symbolic execution relations over the respective $\seCvarsi{\contract}{\procG}{\indexvar}$.
Then it holds that 
$$\roundinvi{}{\frAssSetPre{\procF}{\seAss}{\vSet}}{\frAssSet{\procF}{\seAss}{\vSet}}{}$$
and 
$$\execinvi{\procG}{\sendervari{\nuserindexvar} \neq \honuser}{\frAssSetPre{\procF}{\seAss}{\vSet}}{\nuserindexvar}$$
 \label{lem:fr-lift-inv}
\end{lemma}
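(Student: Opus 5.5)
The plan is to reduce both claims to the per-condition properties in the characteristic-properties lemma for the frontrunning condition. Both round invariance and block-oblivious invariance are preserved under finite conjunction.

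First I fix the assignment $\seAss$. The index set $S$ in the definitions of $\frAssSet{\procF}{\seAss}{\vSet}$ and $\frAssSetPre{\procF}{\seAss}{\vSet}$ is then a fixed, finite set of pairs $(\frCondPre, \frCondInv)$. It is finite because $\vSet$ and $\symExReli{\procF}{\userindexvar}$ are finite, and each pair is the output of Algorithm~\ref{algo:fr-app}. So $\frAssSetPre{\procF}{\seAss}{\vSet}$ and $\frAssSet{\procF}{\seAss}{\vSet}$ are ordinary finite conjunctions of conditions over $\seCvarsi{\contract}{\procF}{\userindexvar}$. The degenerate case $S = \emptyset$ gives the pair $(\top,\top)$, which trivially satisfies both properties.

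For the round invariant, I take any round $\roundvar$ and any $\seAss'$ with $\satis{\seAss'}{\frAssSetPre{\procF}{\seAss}{\vSet}}$ and $\seAss'(\blocknumvar) = \roundvar k$. Then $\seAss'$ satisfies every conjunct $\frCondPre$. By the characteristic property $\roundinvi{\roundvar}{\frCondPre}{\frCondInv}{\nuserindexvar}$ (Definition~\ref{def:round-invariant}), each $\frCondInv$ holds under $\assComp{\seSubst{\seAss'(\blocknumvar)}{\blocknumvar+i}}{\seAss'}$ for every $i \in [0,\dots,k-1]$. Taking the conjunction over $S$ for each fixed $i$ gives the round invariant for the big pair.

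For block-oblivious invariance, I fix $\procG \in \cFuncs{\contract}$, an assignment, a pair $\seRes{\seSub}{\seCond} \in \symExReli{\procG}{\nuserindexvar}$ and a block $b$ in the same round that satisfy the hypotheses of Definition~\ref{def:bo-inv} with precondition $\sendervari{\nuserindexvar} \neq \honuser$ and invariant $\frAssSetPre{\procF}{\seAss}{\vSet}$. Satisfaction of the conjunction gives satisfaction of each $\frCondPre$. The remaining hypotheses (the path condition $\seCond$ and the sender precondition at block $b$) do not mention the invariant, so they transfer verbatim. The characteristic property (\ref{eq:fr-pre-cdeps}), $\execinvi{\procG}{\sendervari{\nuserindexvar} \neq \honuser}{\frCondPre}{\nuserindexvar}$, then shows that each $\frCondPre$ holds in the updated assignment. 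Conjoining again yields the claim.

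The only step I expect to need care is scoping, i.e.\ that the assignments in Definition~\ref{def:bo-inv} range over both index sets. The conditions live over $\seCvarsi{\contract}{\procF}{\userindexvar}$, since $\condvars{\frCondPre} \subseteq \seCvarsi{\contract}{\procF}{\userindexvar}$ by the characteristic lemma. The update $\seSub$ stems from an attacker execution indexed by $\nuserindexvar$, so it only touches global variables and $\nuserindexvar$-indexed locals. The honest user's local variables, such as the arguments of $\procF$ fixed by $\seAss$, are therefore left unchanged. Together with Assumption~\ref{lem:agreement-rel-vars}, this makes the per-conjunct instantiations consistent with a single common assignment, so they can be combined. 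I would also note that since $\getRoundInv$ and $\mergeRoundInv$ are only specified semantically, all reasoning above is about satisfaction and never relies on syntactic form.
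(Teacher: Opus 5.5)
Your proposal is correct and matches the paper's argument. The paper's proof simply says that the lemma follows immediately from the definitions of the conjoined conditions and from the characteristic properties of the frontrunning condition; what you add is the explicit observation that both round invariance and block-oblivious invariance (for each $\procG \in \cFuncs{\contract}$) are preserved under conjunction over the common index set $S$, while your finiteness and scoping remarks are harmless but unnecessary, since the argument is purely pointwise in satisfaction.
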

\begin{proof}
Follows immediately from the definitions of $\frAssSetPre{\procF}{\seAss}{\vSet}$, $\frAssSet{\procF}{\seAss}{\vSet}$, $\brAssWritePre{\procF}{\vSet}{\seAss}$, $\brAssWrite{\procF}{\vSet}{\seAss}$, and the characteristic properties of the frontrunning and backrunning conditions.
\end{proof}

\begin{lemma}
Let $\contract$ be a smart contract, $\honuser \in \honusers$, $\procF \in \cFuncs{\contract} \cup \{ \procTick \}$, 
    $\vSet \subseteq \gvars{\contract}$, 
    $\seAss$ an assignment over $\seCvarsi{\contract}{\procF}{\userindexvar}$, and
    $\{ \symExRel{\procG} \}_{\procG \in \cFuncs{\contract} \cup \{ \procTick \}, \indexvar \in \{ \userindexvar, \nuserindexvar\} }$ symbolic execution relations over the respective $\seCvarsi{\contract}{\procG}{\indexvar}$.
Then it holds that 
 $$\roundinvi{}{\brAssWritePre{\procF}{\vSet}{\seAss}}{\brAssWrite{\procF}{\vSet}{\seAss}}{}$$
 and 
$$\execinvi{\procF}{\sendervari{\nuserindexvar} \neq \honuser}{\brAssWritePre{\procF}{\seAss}{\vSet}}{\nuserindexvar}$$
and for all $\procG \in \cFuncs{\contract}$ that
$$\execinvi{\procG}{\sendervari{\nuserindexvar} \neq \honuser}{\brAssWritePre{\procF}{\seAss}{\vSet}}{\nuserindexvar}$$
\label{lem:br-lift-inv}
\end{lemma}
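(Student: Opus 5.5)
The lemma is the backrunning analogue of Lemma~\ref{lem:fr-lift-inv}. I would prove it the same way: lift the characteristic properties of the individual backrunning conditions $\brWriteCond{\procF}{\vSet}{\seCvar}{\roundPred}$ to the finite conjunctions $\brAssWritePre{\procF}{\vSet}{\seAss}$ and $\brAssWrite{\procF}{\vSet}{\seAss}$.

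First, fix $\seAss$ and let $S$ be the finite set of pairs $(\brCondPre^j, \brCondInv^j)$ from the definition. Each pair is produced by Algorithm~\ref{algo:br}, so by the characteristic properties of the backrunning condition we get, for every $j$, the following: $\roundinvi{\roundvar}{\brCondPre^j}{\brCondInv^j}{\nuserindexvar}$; $\execinvi{\procF}{\sendervari{\userindexvar} = \honuser}{\brCondPre^j}{\userindexvar}$; and $\execinvi{\procG}{\sendervari{\nuserindexvar} \neq \honuser}{\brCondPre^j}{\nuserindexvar}$ for all $\procG \in \cFuncs{\contract}$. If $S$ is empty, both conjunctions are $\top$ and every claim is trivial.

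For the round invariant claim, I would show directly from Definition~\ref{def:round-invariant} that round invariants are closed under finite componentwise conjunction. Take $\seAss'$ with $\seAss'(\blocknumvar) = rk$ satisfying $\bigwedge_j \brCondPre^j$. Then $\seAss'$ satisfies each $\brCondPre^j$, so every shifted assignment satisfies each $\brCondInv^j$, and hence satisfies their conjunction. This argument mirrors the Round Invariant Composition assumption and is just induction on $|S|$.

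For the block-oblivious invariant claims, I would likewise show from Definition~\ref{def:bo-inv} that $\execinv{\procG}{\preCond}{\cdot}$ is closed under conjunction for fixed $\procG$ and $\preCond$. The hypotheses $\satis{\seAss'}{\bigwedge_j \brCondPre^j}$, $\seCond$ and $\preCond$ specialise to each conjunct, and the conclusion for each conjunct gives the conclusion for the conjunction. This immediately gives the statement for every $\procG \in \cFuncs{\contract}$.

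The main obstacle is the middle claim, which concerns $\procF$ itself under the premise $\sendervari{\nuserindexvar} \neq \honuser$. The characteristic property only provides $\procF$-invariance under $\sendervari{\userindexvar} = \honuser$, for the honest execution, whereas the claim asks about an adversarial $\nuserindexvar$-indexed execution. I would discharge it in two cases. When $\procF \in \cFuncs{\contract}$, it is an instance of the preceding all-$\procG$ claim. When $\procF = \procTick$, the path condition is $\top$ and the substitution only touches $\blocknumvar$. Definition~\ref{def:bo-inv} then restores $\blocknumvar$ to $\seAss(\blocknumvar)$ after applying $\seSub$, so the final assignment equals $\seAss$ and the invariant holds trivially. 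I expect this $\procTick$ case, with its bookkeeping of indices $\userindexvar$ versus $\nuserindexvar$ and the argument order in $\brAssWritePre{\procF}{\seAss}{\vSet}$, to need the most care, though it is still routine.
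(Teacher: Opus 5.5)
Your proposal is correct and essentially the paper's argument. The paper treats the lemma as immediate from the definitions and the characteristic properties of the backrunning condition, i.e.\ lifting those properties through the conjunctions defining $\brAssWritePre{\procF}{\vSet}{\seAss}$ and $\brAssWrite{\procF}{\vSet}{\seAss}$; your explicit closure-under-conjunction arguments for round invariants and block-oblivious invariants are exactly what that step needs. Your handling of the second claim is sound for the literal statement: it is the $\procG = \procF$ instance of the third claim when $\procF \in \cFuncs{\contract}$, and it is trivial for $\procTick$ because Definition~\ref{def:bo-inv} restores $\blocknumvar$. However, its downstream use in Lemma~\ref{lem:brcond-through-rounds}, where $\txv_j = \htxv$ is allowed, needs the honest-user version $\execinvi{\procF}{\sendervari{\userindexvar} = \honuser}{\cdot}{\userindexvar}$. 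You get that version by the same conjunction argument directly from the characteristic property you already cite.
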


We further provide the following helper notion of a write set that will come in handy.

\begin{definition}[Write sets]
    Let $\contract$ be a contract with variables $\cvars{\contract}$.
Let $\config$ be a configuration, $\txv$ a transaction such that $\txv = \contrCall{\procF(\fargs)}$, $\indexvar$ be an index variable and $\symExRel{\procF}$ be a symbolic execution relation over $\seCvars{\contract}$ and $\vSet \subseteq \cvars{\contract}$.
\begin{align*}
    \writeseti{\procF}{\config}{\txv}{\vSet}{\indexvar} := \{ \seCvar \in \vSet ~|~ & \seRes{\seSub}{\seCond} \in \symExReli{\procF}{\userindexvar} ~\land~ \seSub(\seCvar) = \seExp \\
    & ~\land~  \satis{\confToAssi{\config}{\txv}{\indexvar}}{\seCond} 
    ~\land~ \seExp \neq \seCvar \}
\end{align*}
\end{definition}

Intuitively, $\writeseti{\procF}{\config}{\txv}{\vSet}{\indexvar}$ denotes the set of variables from $\vSet$, which are written by the transaction $\txv$ when executed in configuration $\config$.

We establish some useful properties of write sets. 

First, we can show that all variables from $\vSet$, which are not contained in the write set $\writeset{\config}{\txv}{\vSet}$ stay unchanged when executing $\txv$ in $\config$:
\begin{lemma}[Preservation of variables outside write set]
Let $\contract$ be a contract and $\vSet \subseteq \gvars{\contract}$ a set of variables.
Let $\config_1$ and $\config_2$ be configurations and $\txv$ a transaction such that 
$\config_1 \trans{\txv} \config_2$. 
Then it holds that 
$$\config_1 \equivfor{\vSet \setminus \writeset{\config_1}{\txv}{\vSet}} \config_2$$
\label{lem:nowrite-outside-writeset}
\end{lemma}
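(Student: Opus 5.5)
The plan is to reduce the statement to the definition of $\writeseti{\procF}{\config_1}{\txv}{\vSet}{\indexvar}$ together with completeness of the symbolic execution. First, treat the case $\txv = \txB$ separately. Here $\symExRel{\procTick}$ has the single entry $\seRes{\seAss_\procTick}{\top}$, and it changes only $\blocknumvar$. The block-finalization rule leaves all contract states untouched, so $\config_1$ and $\config_2$ agree on every contract variable in $\vSet$. If $\blocknumvar \in \vSet$, then $\seAss_\procTick(\blocknumvar) = \blocknumvar + 1 \neq \blocknumvar$, and since $\top$ is satisfied, $\blocknumvar$ lies in the write set. Hence it is excluded, and the claim follows directly from the definition of $\equivfor{\cdot}$ on configurations.

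For a contract call $\txv = \contrCall{\procF(\fargs)}$, I will apply completeness of symbolic execution to $\config_1 \trans{\txv} \config_2$. This yields some $\seRes{\seSub}{\seCond} \in \symExRel{\procF}$ such that $\satis{\confToAss{\config_1}{\txv}}{\seCond}$ and $\confToAss{\config_2}{\txv} = \assComp{\confToAss{\config_1}{\txv}}{\seSub}$. Now fix $\seCvar \in \vSet \setminus \writeset{\config_1}{\txv}{\vSet}$. Since this particular path is enabled in $\config_1$, the definition of the write set forces $\seSub(\seCvar) = \seCvar$; otherwise $\seCvar$ would be a member of the write set. It follows that $\confToAss{\config_2}{\txv}(\seCvar) = \evAss{\confToAss{\config_1}{\txv}}{\seSub(\seCvar)} = \confToAss{\config_1}{\txv}(\seCvar)$.

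The last step transfers this equality of assignments on global variables back to the states $\cstate{\config_1}{\contract}$ and $\cstate{\config_2}{\contract}$, and to the block number if $\blocknumvar \in \vSet$. This requires that $\confToAss{\cdot}{\txv}$ faithfully reads off global variables from the configuration. That property is implicit in the setup, and I will state it explicitly as the intended meaning of $\seAss^{\indexvar}_{\contract}$ on $\gvars{\contract}$.

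I expect the main obstacle to be a mismatch of indices. The write set is phrased with $\symExReli{\procF}{\userindexvar}$ but evaluated under $\confToAssi{\config_1}{\txv}{\indexvar}$. I would resolve this by working with a single index throughout. This is legitimate because global variables are unindexed and their assignment is independent of both the transaction and the index (Assumption on $\confToAss{\cdot}{\cdot}$), so the argument goes through verbatim for either index.

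A second subtlety is that the write set is defined existentially over all enabled paths, whereas we only use the one path supplied by completeness. For the direction we need, a variable outside the write set is unchanged on every enabled path, so in particular on the one from completeness. This direction holds without invoking determinism.
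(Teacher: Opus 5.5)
Your proof is correct and is the direct argument the paper has in mind. The paper omits this proof as an immediate consequence of the write-set definition together with completeness of symbolic execution; your separate treatment of $\txB$ via $\symExRel{\procTick}$ and your explicit assumption that $\seAss^{\indexvar}_{\contract}$ faithfully reads off the global variables just make that implicit reasoning precise. One small correction: the assumption you cite only gives independence from the transaction, not from the index, but you never need index-independence once the write set is read with a single consistent index.
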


Next, we can show that two configurations that agree on the control dependencies of some set of variables $\vSet$ also write the same variables from $\vSet$:
\begin{lemma}[Agreement of write sets]
    Let $\contract$ be a contract and $\vSet \subseteq \gvars{\contract}$ a set of variables.
Let $\config$ and $\tick{\config}$ be configurations and $\txv = \contrCall{\procF(\fargs)}$ be a transaction invoking function $\procF$ of contract $\contract$.
If for $C = \gvars{\contract} \cap \bigcup_{\seCvar \in \vSet} \cdeps{\procF}{\seCvar}$ (for some index $i$) it holds that
$$\config \equivfor{C} \tick{\config}$$ 
then 
also 
$$\writeset{\config}{\txv}{\vSet} = \writeset{\tick{\config}}{\txv}{\vSet}$$
\label{lem:writeset-agreement}
\end{lemma}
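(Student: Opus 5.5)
The plan is to reduce the lemma to the definitions of the write set and of control dependencies, using determinism of symbolic execution together with the agreement property for condition variables. Let $W = \writeset{\config}{\txv}{\vSet}$ and $\tick{W} = \writeset{\tick{\config}}{\txv}{\vSet}$, and write $\seAss = \confToAss{\config}{\txv}$ and $\tick{\seAss} = \confToAss{\tick{\config}}{\txv}$. By symmetry it suffices to show $W \subseteq \tick{W}$, so we fix $\seCvar \in W$. By definition there is $\seRes{\seSub}{\seCond} \in \symExRel{\procF}$ with $\satis{\seAss}{\seCond}$ and $\seSub(\seCvar) = \seExp \neq \seCvar$. Hence $\satis{\seAss}{\pathcond{\procF}{\seExp}{\seCvar}}$.

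The key step is to transfer satisfaction of $\pathcond{\procF}{\seExp}{\seCvar}$ from $\seAss$ to $\tick{\seAss}$. The two assignments agree on three kinds of variables:
\begin{itemize}
\item local variables, since the same transaction $\txv$ fixes the arguments and the sender;
\item the global variables in $C$, by $\config \equivfor{C} \tick{\config}$;
\item in particular, every global variable in $\relVars{\pathcond{\procF}{\seExp}{\seCvar}} \subseteq \cdeps{\procF}{\seCvar}$.
\end{itemize}
Variables in $\condvars{\cdot}$ that are not relevant cannot flip satisfaction when changed one at a time. We would therefore change the assignment from $\seAss$ to $\tick{\seAss}$ one irrelevant variable at a time. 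By the definition of $\relVars{\cdot}$, no single-variable change of an irrelevant variable alters satisfaction, so $\satis{\tick{\seAss}}{\pathcond{\procF}{\seExp}{\seCvar}}$. Finitely many steps suffice, because conditions have finitely many variables and Assumption~\ref{lem:agreement-rel-vars} handles variables outside $\condvars{\cdot}$.

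From $\satis{\tick{\seAss}}{\pathcond{\procF}{\seExp}{\seCvar}}$ we obtain a disjunct $\seRes{\seSub'}{\seCond'}$ with $\seSub'(\seCvar) = \seExp$ and $\satis{\tick{\seAss}}{\seCond'}$. Since $\seExp \neq \seCvar$, this gives $\seCvar \in \tick{W}$. Note that determinism is not even needed here, because the write set is defined existentially over paths. It would be needed only if the write set were read as the actual path taken, in which case determinism of symbolic execution ensures that every path satisfied by $\tick{\seAss}$ assigns $\seCvar$ the same expression.

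The main obstacle is the middle step. Changing irrelevant variables one at a time is sound under the definition of $\relVars{\cdot}$ as given: each single change is between two assignments that differ only in one irrelevant variable, so satisfaction is preserved, and induction along the path of changes concludes. I would spell out this induction carefully, including the case where $\pathcond{\procF}{\seExp}{\seCvar}$ is $\bot$ or $\top$, which is trivial.
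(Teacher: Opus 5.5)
Your proof is correct and is the argument the paper has in mind: the paper omits this proof as a direct consequence of the definitions of write sets, path conditions and control dependencies, and your proposal spells out that unfolding. The one step that is not purely definitional is transferring satisfaction of the path condition from agreement on its relevant variables alone. There the global relevant variables agree via $C$, and the local ones agree because the same transaction fixes them, which the paper also assumes in prose. Your chain of single-variable changes over the finitely many condition variables handles that step correctly.
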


\subsubsection{Correctness of Frontrunning and Backrunning Conditions}
\label{subsec:conditions-correctness}

We next establish the semantic correctness properties for the semantic frontrunning and backrunning predicates that we established before. 

First, we make the assumption concerning the connection between written variables and produced observables explicit:

\begin{assumption}[Unique observables]
Let $\config_1$ and $\config_2$ be configurations and $\txv = \contrCall{\procF(\fargs)}$ be a transaction calling function $\procF \in \cFuncs{\contract}$. 
Let $\eventvar_\procF \in \criticalvars{}$ be the event variable of $\procF$
and $\vSet \subseteq \gvars{\contract}$ such that $\eventvar_\procF \in vSet$.
If $\config_1 \trans[\obs]{\txv} \config_2$ for some observable $\obs$ then the following holds
\begin{itemize}
\item $\length{\obs} \leq 1$
\item $\eventvar_\procF \in \writeset{\config_1}{\txv}{\vSet} \Rightarrow \obs = [\obscV{\eventvar_\procF}{\cstate{\conf_2}{\contract}(\eventvar_\procF)}]$
\item $\forall \eventvar \, v.~\obs = [\obscV{\eventvar}{v}] \Rightarrow \eventvar = \eventvar_\procF ~\land~ \eventvar_\procF \in \writeset{\config_1}{\txv}{\vSet}$
\end{itemize}
\label{asm:observables}
\end{assumption}

Note that this assumption requires that every transaction produces at most one observable (first condition) and that it produces an observable exactly when its corresponding event variable $\eventvar_\procF$ is written, which then contains the value assigned to this variable (the last two conditions).
We formulate this condition in terms of write sets since this provides us with a convenient way to characterize when a transaction execution writes a variable. A purely semantic notions (which would consider changes to the values assigned to the event variable in $\config_1$ and $\config_2$ would fall short of adequately characterizing when the same event is recorded two times in a row.)

We first state the core correctness lemma for the semantic frontrunning conditions, which intuitively says that if $\frCond{\config}{\htxv}{\vSet}$ holds for a transaction $\htxv$ then placing an adversarial transaction $\atxv$ before $\htxv$ will not affect the behavior of $\htxv$ (on the variables $\vSet$). 
More specifically, this means that if $\htxv$ writes variables then those transactions will be written to the same values as they would be when executing $\htxv$ without $\atxv$.
Similarly, if executing $\htxv$ leaves variables unchanged then this should be also the case when executing $\atxv$ before.
Note that if the event variables $\criticalvars$ are contained in $\vSet$ then this additionally implies that executing $\htxv$ will produce the same observables (independently of whether $\atxv$ was executed before) because observables are produced exactly if the corresponding event variables are written.

Formally, we capture this property with the following lemma.

\begin{lemma}[Correctness of $\frCond{\cdot}{\cdot}{\cdot}$ for attacker transactions]
Let $\config_1$, $\config_2$ and $\config_3$ be configurations and $\htxv$ be a transaction such that $\sender{\htxv} = \honuser$ for $\honuser \in \honusers$, and $\vSet$ be a set of variables such that $\criticalvars \subseteq \vSet$.
Further, let $\atxv$ be a transaction such that $\sender{\atxv} \neq \honuser$. 
Assume that $\frCond{\config_1}{\htxv}{\vSet}$ hold and that 
$$
    \conf_1 \trans[\aobs]{\atxv} \conf_2 \trans[\hobs]{\htxv} \conf_3 
$$
Then it holds that 
\begin{itemize}
\item For all configurations $\tick{\config_2}$ such that $\conf_1 \trans[\tick{\hobs}]{\htxv} \tick{\conf_2}$, it holds that 
$$\config_3 \equivfor{\writeset{\config_1}{\htxv}{\vSet}} \tick{\conf_2} ~\land~ \hobs = \tick{\hobs}$$
\item It holds that $$\config_2 \equivfor{\vSet \setminus \writeset{\config_1}{\htxv}{\vSet}} \config_3$$
\end{itemize}
\label{lem:fr-correctness-user-attacker}
\end{lemma}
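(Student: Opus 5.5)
The plan is to reduce both claims to Lemma~\ref{lem:dep-soundness} and Lemma~\ref{lem:writeset-agreement}. The key intermediate fact is that the attacker transaction $\atxv$ does not change any global variable on which the behaviour of $\htxv$ on $\vSet$ depends. Concretely, let $\procF$ be the function invoked by $\htxv$ and $\procG$ the one invoked by $\atxv$. Set
\[
\dSet := \gvars{\contract} \cap \bigl(\textstyle\bigcup_{\seCvar \in \vSet} \ddepsi{\procF}{\seCvar}{\userindexvar} \cup \cdepsi{\procF}{\seCvar}{\userindexvar}\bigr) \cup \writeset{\config_1}{\htxv}{\vSet}.
\]
I want to show $\config_1 \equivfor{\dSet} \config_2$. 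Given this, both bullets follow from the two lemmas.

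\textbf{Step 1 (unpacking the hypothesis).} $\frCond{\config_1}{\htxv}{\vSet}$ says that $\confToAssi{\config_1}{\htxv}{\userindexvar}$ satisfies $\frCondInv$ for every $\frWriteCond{\procF}{\seExp}{\seCvar}{\roundPred}$ with $\seCvar\in\vSet$, where $\seExp$ is the expression actually selected by the path satisfied in $\config_1$. From the characteristic properties of the frontrunning condition we get two things.
\begin{itemize}
\item Property~(\ref{eq:fr-inv-pathcond}) gives that $\pathcondi{\procF}{\seExp}{\seCvar}{\userindexvar}$ holds.
\item Property~(\ref{eq:fr-inv-ddeps}), when $\seExp\neq\seCvar$, gives that for every data dependency $y$ (and $\seCvar$ itself) the universally quantified no-write condition $\nowritecondi{\procG}{y}{\nuserindexvar}$ holds. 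We instantiate it with the local variables of $\atxv$, using $\sender{\atxv}\neq\honuser$.
\end{itemize}
By soundness and completeness of symbolic execution, $\atxv$'s concrete path therefore leaves every such $y$ unchanged. Since global-variable assignments are transaction-agnostic (Assumption~1), the evaluation under $\htxv$'s and $\atxv$'s assignments coincides. This handles the data dependencies and the written variables.

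\textbf{Step 2 (control dependencies — the main obstacle).} Property~(\ref{eq:fr-inv-ddeps}) only protects data dependencies. The control dependencies are covered only indirectly: $\frCondInv$ entails the path condition, and, through property~(\ref{eq:fr-pre-cdeps}) together with the round-invariant property, this holds throughout the round. I therefore argue as follows.
\begin{itemize}
\item The precondition is a block-oblivious invariant under $\atxv$, and $\config_1,\config_2$ lie in the same round. Hence Lemma~\ref{lem:inv-preservation}, or directly Definition~\ref{def:bo-inv}, gives that $\frCondInv$, and hence $\pathcondi{\procF}{\seExp}{\seCvar}{\userindexvar}$, still holds at $\config_2$ for each $\seCvar\in\vSet$.
\item By determinism of symbolic execution, $\htxv$ selects the same expression $\seExp$ for every $\seCvar$ in both configurations. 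In particular the write sets agree, i.e.\ the conclusion of Lemma~\ref{lem:writeset-agreement} holds directly.
\item By Step~1 the variables of $\seExp$ coincide in $\config_1$ and $\config_2$.
\end{itemize}
This path-based argument avoids needing equality of the control-dependency variables themselves, which is where I expect care to be needed. A subtle point is that the statement assumes $\frCond{}$ rather than $\frCondR{}$, so I will have to check that the argument really only uses $\inv$ at $\config_2$. If it does not, I will use the pre/invariant pair together with the fact that $\config_1$ is in the round.

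\textbf{Step 3 (conclusions).} For the first bullet, soundness of symbolic execution applied to $\htxv$ from $\config_2$ and from $\config_1$ gives that each $\seCvar$ in the common write set gets value $\seExp$ evaluated on agreeing variables. So $\config_3\equivfor{\writeset{\config_1}{\htxv}{\vSet}}\tick{\config_2}$. Equality of observables then follows from Assumption~\ref{asm:observables}:
\begin{itemize}
\item if $\eventvar_\procF$ is in the common write set, both observables carry its equal value;
\item otherwise both observables are empty.
\end{itemize}
For the second bullet, the write set of $\htxv$ at $\config_2$ equals the one at $\config_1$ by Step~2. Lemma~\ref{lem:nowrite-outside-writeset} applied to $\config_2\trans{\htxv}\config_3$ then yields $\config_2\equivfor{\vSet\setminus\writeset{\config_1}{\htxv}{\vSet}}\config_3$.
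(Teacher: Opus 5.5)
Your Step~1 is sound. The data-dependency no-write conditions are conjuncts of $\frCondInv$ evaluated at $\config_1$, which is exactly where $\atxv$ runs.

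\textbf{The gap is Step~2, and it cannot be closed under the stated hypotheses.} Both bullets need the path conditions selected at $\config_1$ to still hold at $\config_2$. For the second bullet, this is what makes the write set at $\config_2$ a subset of $\writeset{\config_1}{\htxv}{\vSet}$. But the only preservation guarantee for attacker steps, property~(\ref{eq:fr-pre-cdeps}), concerns $\frCondPre$, not $\frCondInv$. Lemma~\ref{lem:inv-preservation} additionally needs $\frCondPre$ at a round-start configuration. The hypothesis $\frCond{\config_1}{\htxv}{\vSet}$ supplies only $\frCondInv$ at $\config_1$, so your fallback has nothing to start from.

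Moreover, $\frCondInv$ really is not invariant. The while loop of Algorithm~\ref{algo:fr-app} adds no-write conditions on control-dependency variables only until $\frCondPre$ becomes invariant. Here is a concrete counterexample.
\begin{itemize}
\item Use the admissible $\getRoundInv(\condVar)=(\bot,\condVar)$ with the trivial merge. The algorithm then returns $\frCondPre=\bot$, which passes every invariance check. So $\frCondInv$ is just the path condition conjoined with the data-dependency no-write conditions.
\item Let $\procF$ be owner-guarded, so non-owners write nothing and the no-write condition on $\eventvar_{\procF}$ becomes $\mathit{owner}=\honuser$. Let $\procF$ emit $E(1)$ if $z=0$ and $E(2)$ otherwise.
\item Let an attacker-callable $\procG$ set $z:=1$.
\item At a $\config_1$ with $\mathit{owner}=\honuser$ and $z=0$, the hypothesis holds. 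Yet $\hobs=[\langle \eventvar_{\procF}:2\rangle]\neq[\langle \eventvar_{\procF}:1\rangle]=\tick{\hobs}$.
\end{itemize}
A variant in which a branch-dependent write hits some $w\in\vSet$ breaks the second bullet in the same way. So the statement is not derivable from the characteristic properties as given. The paper's one-line proof (``soundness and completeness of symbolic execution plus characteristic properties'') glosses over the same point.

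\textbf{What is missing is a hypothesis that matches how the lemma is actually used} (through Lemma~\ref{lem:frcond-through-rounds} inside Lemma~\ref{lem:round-permutation-tx}). Assume $\config_1$ is reached from a configuration $\config_0$ with $\lastBlockNumber(\config_0) \bmod k = 0$ and $\frCondR{\config_0}{\htxv}{\vSet}$, using only $\txB$ and non-$\honuser$ transactions. Apply Lemma~\ref{lem:inv-preservation} together with Lemma~\ref{lem:fr-lift-inv} to that run extended by $\atxv$. This gives $\frAssSet{\procF}{\confToAssi{\config_0}{\htxv}{\userindexvar}}{\vSet}$ at both $\config_1$ and $\config_2$. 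By determinism, these are exactly the conditions for the expressions selected at $\config_1$, and your determinism argument and Step~3 then go through.

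Simply adding $\frCond{\config_2}{\htxv}{\vSet}$ as a hypothesis would not suffice. That formula is built from the expressions selected at $\config_2$, and in the example above it holds, for $E(2)$.

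Finally, drop your opening goal $\config_1\equivfor{\dSet}\config_2$. It is false in general, because the attacker may change a control-dependency variable without changing the branch. Your path-based Step~2 rightly avoids it.
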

\begin{proof}
This property follows directly from the soundness and correctness of symbolic execution, as well as from the characteristic properties of frontrunning conditions.
\end{proof}

We can show a similar correctness lemma for $\txB$ transactions, which shows that the frontrunning condition rules out interference with $\txB$.

\begin{lemma}[Correctness of $\frCond{\cdot}{\cdot}{\cdot}$ for $\txB$ transactions]
Let $\config_1$, $\config_2$ and $\config_3$ be configurations and $\htxv$ be a transaction such that $\sender{\htxv} = \honuser$ for $\honuser \in \honusers$, and $\vSet$ be a set of variables such that $\criticalvars \subseteq \vSet$.
Assume that $\frCond{\config_1}{\htxv}{\vSet}$ and $\round(\config_1) = \round(\config_2)$ hold and that 
$$
    \conf_1 \trans[]{\txB} \conf_2 \trans[\hobs]{\htxv} \conf_3 
$$
Then it holds that 
\begin{itemize}
\item For all configurations $\tick{\config_2}$ such that $\conf_1 \trans[\tick{\hobs}]{\htxv} \tick{\conf_2}$, it holds that 
$$\config_3 \equivfor{\writeset{\config_1}{\htxv}{\vSet}} \tick{\conf_2} ~\land~ \hobs = \tick{\hobs}$$
\item It holds that $$\config_2 \equivfor{\vSet \setminus \writeset{\config_1}{\htxv}{\vSet}} \config_3$$
\end{itemize}
\label{lem:fr-correctness-user-tau}
\end{lemma}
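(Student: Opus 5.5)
The plan is to reduce the lemma to Lemma~\ref{lem:dep-soundness}, Lemma~\ref{lem:writeset-agreement} and Lemma~\ref{lem:nowrite-outside-writeset}, exploiting that $\txB$ changes nothing but the block number. By the semantics of $\txB$ (equivalently, soundness and completeness of $\symExRel{\procTick}$, whose only substitution is $\blocknumvar \mapsto \blocknumvar + 1$), $\conf_1$ and $\conf_2$ agree on every contract variable: $\cstate{\conf_1}{\contract} = \cstate{\conf_2}{\contract}$ and $\conf_2.\blocknumber = \conf_1.\blocknumber + 1$. The premise $\round(\conf_1) = \round(\conf_2)$ says this increment stays inside the round. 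So $\conf_1 \equivfor{\varset} \conf_2$ for every $\varset \subseteq \gvars{\contract} \setminus \{\blocknumvar\}$, and the proof reduces to showing that the block number cannot affect how $\htxv$ treats $\vSet$.

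\textbf{Step 1: data dependencies.} Fix $\seCvar \in \writeset{\conf_1}{\htxv}{\vSet}$ and let $\seExp$ be its expression on the path enabled in $\conf_1$. By definition, $\frCond{\conf_1}{\htxv}{\vSet}$ makes $\confToAssi{\conf_1}{\htxv}{\userindexvar}$ satisfy the invariant component $\frCondInv$ of $\frWriteCond{\procF}{\seExp}{\seCvar}{\roundPred}$. Property~(\ref{eq:fr-inv-ddeps}) ranges over $\procG \in \cFuncs{\contract} \cup \{\procTick\}$, so $\frCondInv$ entails $\nowritecondi{\procTick}{y}{\nuserindexvar}$ for every $y \in \ddepsi{\procF}{\seCvar}{\userindexvar} \cup \{\seCvar\}$. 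Since $\procTick$ writes $\blocknumvar$ unconditionally, its no-write condition for $\blocknumvar$ is $\bot$. Satisfiability of $\frCondInv$ therefore gives $\blocknumvar \notin \ddepsi{\procF}{\seCvar}{\userindexvar}$. For variables of $\vSet$ that $\htxv$ leaves unchanged, the data-dependency set is just the variable itself, which is not $\blocknumvar$ since $\vSet \subseteq \cvars{\contract}$.

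\textbf{Step 2: control flow (the main obstacle).} I must show that $\htxv$ takes a path with the same substitution on $\vSet$ in $\conf_2$ as in $\conf_1$. I would first use property~(\ref{eq:fr-inv-pathcond}): $\frCondInv \sementails \pathcondi{\procF}{\seExp}{\seCvar}{\userindexvar}$. I would then show the path condition still holds at $\confToAssi{\conf_2}{\htxv}{\userindexvar}$, which differs from the assignment at $\conf_1$ only at $\blocknumvar$. The first thing to try is to show that every variable of the path condition that is relevant in the sense of $\relVars{\cdot}$, other than $\blocknumvar$, is unchanged; then Assumption~\ref{lem:agreement-rel-vars}, applied to the condition with $\blocknumvar$ fixed, reduces everything to the block-number coordinate. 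For that coordinate I would argue from the premise $\round(\conf_1)=\round(\conf_2)$ together with the round-invariant property that $\getRoundInv$ guarantees for the path-condition conjunct, checking that the $\procTick$ entries added in the first loop of Algorithm~\ref{algo:fr-app} already rule out a change of path within the round. If this direct argument does not go through, the fallback is a case split. If $\blocknumvar \notin \cdeps{\procF}{\seCvar}$, Lemma~\ref{lem:writeset-agreement} immediately gives $\writeset{\conf_1}{\htxv}{\vSet} = \writeset{\conf_2}{\htxv}{\vSet}$. Otherwise I would use determinism of symbolic execution to show that any path enabled at $\conf_2$ shares the substitution $\seSub$ of the path enabled at $\conf_1$, at least on $\vSet$. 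I expect this step to be where the argument is most delicate, since it is the only place the premise $\round(\conf_1)=\round(\conf_2)$ can enter.

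\textbf{Step 3: conclusion.} With Steps 1 and 2, $\conf_1$ and $\conf_2$ agree on $\depset = \gvars{\contract} \cap \bigcup_{\seCvar\in\vSet}(\ddeps{\procF}{\seCvar}\cup\cdeps{\procF}{\seCvar}) \cup \vSet$, up to the block-number coordinate that Step 2 neutralised. Lemma~\ref{lem:dep-soundness} then yields $\conf_3 \equivfor{\vSet} \tick{\conf_2}$, and in particular agreement on $\writeset{\conf_1}{\htxv}{\vSet}$. It also gives $\hobs = \tick{\hobs}$, because $\criticalvars \subseteq \vSet$ and Assumption~\ref{asm:observables} ties observables to writes of the event variable. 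For the second bullet, Lemma~\ref{lem:writeset-agreement} gives $\writeset{\conf_2}{\htxv}{\vSet} = \writeset{\conf_1}{\htxv}{\vSet}$. Lemma~\ref{lem:nowrite-outside-writeset}, applied to $\conf_2 \trans{\htxv} \conf_3$, then gives $\conf_2 \equivfor{\vSet\setminus\writeset{\conf_1}{\htxv}{\vSet}} \conf_3$.
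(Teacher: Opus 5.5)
\textbf{There is a genuine gap at Step~2.} That step is not merely delicate: it cannot be completed from the stated hypotheses, and each tool you list for it fails.
\begin{itemize}
\item \emph{Round invariants.} The round-invariant property of $(\frCondPre,\frCondInv)$ (Definition~\ref{def:round-invariant}) transports $\frCondPre$ at a round boundary $\blocknumvar = rk$ to $\frCondInv$ at every $rk+i$. The lemma only gives you $\frCondInv$ at $\conf_1$, possibly mid-round. No characteristic property transports $\frCondInv$ from block $\blocknumvar$ to $\blocknumvar+1$.
\item \emph{The $\procTick$ entries.} In the first loop of Algorithm~\ref{algo:fr-app} these only concern $\ddepsi{\procF}{\seCvar}{\userindexvar}\cup\{\seCvar\}$, which is your Step~1. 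The invariant loop ranges over $\cFuncs{\contract}$ without $\procTick$, so nothing excludes $\blocknumvar$ as a control dependency.
\item \emph{Determinism and Lemma~\ref{lem:writeset-agreement}.} Determinism of symbolic execution compares paths satisfied by a single assignment, but the assignments at $\conf_1$ and $\conf_2$ differ at $\blocknumvar$. Lemma~\ref{lem:writeset-agreement} needs agreement on exactly the control dependencies that differ.
\end{itemize}

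In fact the statement is false as written. Take $k=10$ and a contract whose only function $\procF$ reverts unless its sender is $\mathit{owner}$ and $\blocknumvar < \mathit{deadline}$, and otherwise sets $z := 1$. Let $\mathit{owner} = \honuser$, $\mathit{deadline} = 15$, $\vSet = \criticalvars \cup \{z\}$ and $\conf_1.\blocknumber = 14$. The invariant component of $\frWriteCond{\procF}{1}{z}{\roundPred}$ is equivalent to the path condition together with no-write conditions. Those no-write conditions hold, because $\procF$ changes $z$ only when called by $\mathit{owner} = \honuser$ and changes nothing else, and $\procTick$ never touches $z$. The conditions for the unwritten variables are trivial. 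So $\frCond{\conf_1}{\htxv}{\vSet}$ and $\round(\conf_1)=\round(\conf_2)=10$ both hold. Yet $\htxv$ reverts in $\conf_2$ (block $15$), so $\conf_3$ and $\tick{\conf_2}$ disagree on $z \in \writeset{\conf_1}{\htxv}{\vSet}$. The paper's one-line justification appeals to the same ingredients as you and does not close this hole either.

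\textbf{The missing idea} is a hypothesis that carries the path conditions across the $\txB$ step. Consider where the lemma is actually used, via Lemma~\ref{lem:com-user-tau} inside Lemma~\ref{lem:round-permutation-tx}. There $\frCondR{\conf_0}{\htxv}{\vSet}$ holds at the round start $\conf_0$, and $\conf_1$, $\conf_2$ are reached from $\conf_0$ by $\txB$ and non-$\honuser$ transactions. Lemma~\ref{lem:inv-preservation} with Lemma~\ref{lem:fr-lift-inv} then shows that the conjunction $\frAssSet{\procF}{\seAss_0}{\vSet}$ selected at the round start, with $\seAss_0 := \confToAssi{\conf_0}{\htxv}{\userindexvar}$, holds at the assignments of both $\conf_1$ and $\conf_2$.

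With that extra hypothesis Step~2 is short. Each $\pathcondi{\procF}{\seExp}{\seCvar}{\userindexvar}$ in that conjunction holds at $\conf_2$. Completeness and determinism then force the path taken from $\conf_2$ to assign the same $\seExp$ to every $\seCvar \in \vSet$ as the path taken from $\conf_1$. Your Step~1 makes each written $\seExp$ evaluate identically in both.

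\textbf{Two smaller repairs in Step~3.}
\begin{itemize}
\item $\vSet \subseteq \cvars{\contract}$ is not a hypothesis. Moreover $\varfixpoint$, being closed under dependencies inside $\gvars{\contract}$, contains $\blocknumvar$ as soon as some path condition mentions it. So $\conf_3 \equivfor{\vSet} \tick{\conf_2}$ fails in general, and you should conclude only on $\writeset{\conf_1}{\htxv}{\vSet}$.
\item Lemma~\ref{lem:dep-soundness} cannot be applied with a dependency set containing $\blocknumvar$. Argue directly from soundness and completeness of symbolic execution instead; this also gives equality of the write sets needed for the second bullet.
\end{itemize}
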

\begin{proof}
This property follows directly from the soundness and correctness of symbolic execution, as well as from the characteristic properties of backrunning conditions.
\end{proof}

Similar correctness lemmas hold for the semantic backrunning conditions. 
This lemma states in a similar fashion that if $\brCond{\config}{\htxv}{\vSet}$ holds in a configuration then placing the transaction $\htxv$ before an (adversarial) transaction $\atxv$ does not change the effect of $\atxv$ (on $\vSet$).

\begin{lemma}[Correctness of $\brCond{\cdot}{\cdot}{\cdot}$ for honest user transactions]
Let $\config_1$, $\config_2$ and $\config_3$ be configurations and $\htxv$ be a transaction such that $\sender{\htxv} = \honuser$ for $\honuser \in \honusers$, and $\vSet$ be a set of variables such that $\criticalvars \subseteq \vSet$.
Further, let $\atxv$ be a transaction such that $\sender{\atxv} \neq \honuser$. 
Assume that $\brCond{\config_1}{\htxv}{\vSet}$ hold and that 
$$
    \conf_1 \trans[\hobs]{\htxv} \conf_2 \trans[\aobs]{\atxv} \conf_3 
$$
Then it holds that 
\begin{itemize}
\item For all configurations $\tick{\config_2}$ such that $\conf_1 \trans[\tick{\aobs}]{\atxv} \tick{\conf_2}$, it holds that 
$$\config_3 \equivfor{\writeset{\config_1}{\atxv}{\vSet}} \tick{\conf_2} ~\land~ \aobs = \tick{\aobs}$$
\item It holds that $$\config_2 \equivfor{\vSet \setminus \writeset{\config_1}{\atxv}{\vSet}} \config_3$$
\end{itemize}
\label{lem:br-correctness-user-attacker}
\end{lemma}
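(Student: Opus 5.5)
The plan is to mirror the proof of Lemma~\ref{lem:fr-correctness-user-attacker}, with the roles swapped. There, the honest transaction was moved behind the attacker transaction. Here, the attacker transaction $\atxv = \contrCall{\procG(\fargs')}$ (with $\sender{\atxv} \neq \honuser$) is moved behind the honest transaction $\htxv = \contrCall{\procF(\fargs)}$. Everything is reduced to the symbolic level through $\confToAssi{\conf}{\txv}{\indexvar}$, using the soundness and completeness of symbolic execution.

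First, unfold $\brCond{\config_1}{\htxv}{\vSet}$. It states that $\confToAssi{\config_1}{\htxv}{\userindexvar}$ satisfies $\brCondInv$ for every $\brWriteCond{\procF}{\vSet}{\seCvar}{\roundPred}$ with $\seCvar \in \writeset{\config_1}{\htxv}{\gvars{\contract}}$. I would read these as the variables $\seCvar$ that $\htxv$ actually writes along the path selected in $\config_1$; completeness picks out a unique $\seRes{\seSub}{\seCond}$, and determinism makes it unambiguous. By Lemma~\ref{lem:nowrite-outside-writeset}, $\config_1$ and $\config_2$ agree on every global variable outside this write set $W$.

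Next, let $D := \gvars{\contract} \cap \bigcup_{y \in \vSet} \ddeps{\procG}{y} \cup \cdeps{\procG}{y} \cup \vSet$ be the dependency set of Lemma~\ref{lem:dep-soundness} for the attacker call. The goal is to show that the attacker execution behaves identically from $\config_1$ and from $\config_2$ on $\vSet$. I would split into two cases for each $\seCvar \in W \cap D$.

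In the first case, $\seCvar$ is a data dependency of some $y \in \vSet$ in $\procG$, or $\seCvar = y$. The first characteristic property of the backrunning condition then gives that $\brCondInv$ entails $\setforall{\locvar}{\lvars{\procG}{\nuserindexvar}} \sendervari{\nuserindexvar} \neq \honuser \rightarrow \nowritecondi{\procG}{y}{\nuserindexvar}$. Hence, from $\config_1$ the attacker's call cannot write $y$, and this must still be true from $\config_2$.

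In the second case, $\seCvar \in \cdeps{\procG}{y}$. The second characteristic property pins $\atxv$ to a single path condition $\pathcondi{\procG}{\seExp}{y}{\nuserindexvar}$ for a fixed $\seExp$. With Lemma~\ref{lem:writeset-agreement}, this gives equal write sets for $\atxv$ from $\config_1$ and from $\config_2$, and equal assigned expressions.

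The main obstacle is transferring these guarantees from $\config_1$ to $\config_2$. The entailments hold for the assignment at $\config_1$, but the attacker's path is evaluated at $\config_2$, after $\htxv$ has changed $W$. This is exactly where I would use the property $\execinvi{\procF}{\sendervari{\userindexvar} = \honuser}{\brCondPre}{\userindexvar}$ from the characteristic lemma, lifted to the conjunction by Lemma~\ref{lem:br-lift-inv}. It says that $\htxv$ preserves the precondition, so the conjunction still holds at $\config_2$. Together with the round invariant, it yields that $\brCondInv$ holds at both $\config_1$ and $\config_2$.

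Strictly, the hypothesis only states $\brCondInv$ at $\config_1$, whereas preservation is phrased for $\brCondPre$. I would either strengthen the use to the round-level predicate $\brCondR{\cdot}{\cdot}{\cdot}$ that is available in the theorem's context, or argue that the no-write and path-condition conjuncts involved mention only variables that $\htxv$ leaves untouched. The second route is justified because the while-loop of Algorithm~\ref{algo:br} adds $\nowritecondi{\procF}{y}{\userindexvar}$ for such variables.

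Once the entailments hold at both points, the argument concludes as follows. The expressions $\seSub(y)$ for $y \in \writeset{\config_1}{\atxv}{\vSet}$ depend only on variables of $D$. These either lie outside $W$, where $\config_1 \equivfor{} \config_2$, or are excluded by the no-write constraints. So Lemma~\ref{lem:dep-soundness} gives $\config_3 \equivfor{\writeset{\config_1}{\atxv}{\vSet}} \tick{\config_2}$. Equality of observables, $\aobs = \tick{\aobs}$, follows from Assumption~\ref{asm:observables}, since $\criticalvars \subseteq \vSet$. Finally, the second bullet, $\config_2 \equivfor{\vSet \setminus \writeset{\config_1}{\atxv}{\vSet}} \config_3$, follows from Lemma~\ref{lem:nowrite-outside-writeset} applied to $\config_2 \trans{\atxv} \config_3$, combined with the write-set agreement just established.
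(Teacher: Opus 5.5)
Your plan matches the only justification the paper offers for this family of lemmas: soundness and completeness of symbolic execution plus the characteristic properties of Algorithm~\ref{algo:br}. For this lemma itself the paper writes out no proof at all. You also locate the crux correctly: $\brCondInv$ is only known to hold at $\confToAssi{\config_1}{\htxv}{\userindexvar}$, whereas $\atxv$ runs in $\config_2$, after $\htxv$ has changed the variables in its write set $W$. But neither of your bridges closes this gap, and from the stated hypothesis it cannot be closed.

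Your second option (the relevant conjuncts mention only variables $\htxv$ leaves untouched) is not justified. The while loop of Algorithm~\ref{algo:br} adds conjuncts $\nowritecondi{\procF}{y}{\userindexvar}$ only while $\brCondPre$ fails to be invariant under honest calls of $\procF$. So it protects $\brCondPre$, not $\brCondInv$. Moreover, the conjunct from the second characteristic property is present precisely because some $\seCvar \in W$ is a control dependency of $y$ in $\procG$. In general it therefore mentions a variable that $\htxv$ does write.

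Your first option (strengthening to $\brCondR{\cdot}{\cdot}{\cdot}$) proves a different lemma. Its hypothesis is also not what the surrounding argument provides at a mid-round $\config_1$: Lemma~\ref{lem:brcond-through-rounds} yields only the invariant-level predicate there.

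The stated hypothesis really is too weak. $\getRoundInv(\condVar) = (\bot, \condVar)$ is admissible. In that case Algorithm~\ref{algo:br} returns $\brCondPre \equiv \bot$, every characteristic property holds (the third vacuously), and nothing ties $\brCondInv$ to $\procF$. Here is a concrete instance:
\begin{itemize}
\item let $\procF$ flip a boolean $\seCvar$;
\item let $\procG$ set $y := 5$ if $\seCvar$ holds and $y := 7$ otherwise, with neither function emitting;
\item take $\vSet = \criticalvars \cup \{ y \}$.
\end{itemize}
Then $\brCondInv$ merely fixes $\seCvar$ to the selected branch, so $\brCond{\config_1}{\htxv}{\vSet}$ holds whenever $\seCvar$ has that value in $\config_1$. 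Yet $\htxv$ flips $\seCvar$, and $\atxv$ then writes the other constant to $y \in \writeset{\config_1}{\atxv}{\vSet}$, violating the first bullet. Restricting $\procF$ to an owner equal to $\honuser$ makes the same example work for $\vSet$ closed under dependencies.

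The missing ingredient is an extra hypothesis: the same conjunction $\brAssWrite{\procF}{\vSet}{\confToAssi{\config_1}{\htxv}{\userindexvar}}$ is also satisfied by $\confToAssi{\config_2}{\htxv}{\userindexvar}$. This is what the surrounding argument is designed to supply in Lemma~\ref{lem:round-permutation-tx}, from $\brCondR{\cdot}{\htxv}{\vSet}$ at the round start via the third characteristic property and Lemma~\ref{lem:inv-preservation}.

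With that hypothesis your case analysis goes through, but the closing step must be argued per variable. You cannot cite Lemma~\ref{lem:writeset-agreement} or Lemma~\ref{lem:dep-soundness} between $\config_1$ and $\config_2$. These configurations differ on $W$, which may contain control dependencies of $y$ and even variables of $\vSet$ (e.g.\ $\eventvar_{\procF}$ when $\htxv$ emits). So the premises of both lemmas fail exactly in the cases that matter. Instead, argue as follows for each $y \in \vSet$:
\begin{itemize}
\item Either $W \cap \cdepsi{\procG}{y}{\nuserindexvar} = \emptyset$, so by irrelevance every path condition for $y$ in $\procG$ has the same truth value at both configurations.
\item Or the second property, now available at both configurations, pins a single $\pathcondi{\procG}{\seExp}{y}{\nuserindexvar}$.
\end{itemize}
In either case, by determinism $\atxv$ follows the same $y$-path from $\config_1$ and from $\config_2$. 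If that path writes $y$, the first property evaluated at $\config_1$ forces $(\ddepsi{\procG}{y}{\nuserindexvar} \cup \{ y \}) \cap W = \emptyset$, so $\seExp$ evaluates identically in both. Otherwise $y$ is left untouched in both. Soundness and completeness of symbolic execution then give both bullets, and Assumption~\ref{asm:observables} gives $\aobs = \tick{\aobs}$.
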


A similar correctness lemma holds for backrunning conditions with respect to $\txB$ transactions (which similarly to honest user transactions, may not impact the behavior of attacker transactions).

\begin{lemma}[Correctness of $\brCond{\cdot}{\cdot}{\cdot}$ for $\txB$ transactions]
Let $\config_1$, $\config_2$ and $\config_3$ be configurations and $\vSet$ be a set of variables such that $\criticalvars \subseteq \vSet$ and $\honuser \in \honusers$ be a user.
Further, let $\atxv$ be a transaction such that $\sender{\atxv} \neq \honuser$. 
Assume that $\brCond{\config_1}{\txB}{\vSet}$ and $\round(\config_1) = \round(\config_2)$ hold and that 
$$
    \conf_1 \trans[]{\txB} \conf_2 \trans[\aobs]{\atxv} \conf_3 
$$
Then it holds that 
\begin{itemize}
\item For all configurations $\tick{\config_2}$ such that $\conf_1 \trans[\tick{\aobs}]{\atxv} \tick{\conf_2}$, it holds that 
$$\config_3 \equivfor{\writeset{\config_1}{\atxv}{\vSet}} \tick{\conf_2} ~\land~ \aobs = \tick{\aobs}$$
\item It holds that $$\config_2 \equivfor{\vSet \setminus \writeset{\config_1}{\atxv}{\vSet}} \config_3$$
\end{itemize}
\label{lem:br-correctness-tau-attacker}
\end{lemma}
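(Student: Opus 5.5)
The plan is to follow the pattern the paper already uses for Lemma~\ref{lem:br-correctness-user-attacker}, but with the honest transaction replaced by $\txB$. I will treat $\txB$ symbolically as a call of the special function $\procTick$, using $\symExRel{\procTick}$ together with its soundness and completeness lemmas. The assumption $\brCond{\config_1}{\txB}{\vSet}$ unfolds to $\satis{\confToAssi{\config_1}{\txB}{\userindexvar}}{\brAssWrite{\procTick}{\vSet}{\seAss}}$. By definition of $\symExRel{\procTick}$, the only variable written with $\seExp \neq \seCvar$ is the block number $\blocknumvar$, so this conjunction collapses to the single condition $\brCondInv$ of $\brWriteCond{\procTick}{\vSet}{\blocknumvar}{\roundPred}$.

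\textbf{Step 1: what the condition guarantees about $\atxv$.} Write $\atxv = \contrCall{\procG(\fargs)}$ with $\sender{\atxv} \neq \honuser$. From the characteristic properties of the backrunning condition, $\brCondInv$ gives two facts.
\begin{enumerate}
\item For every $y \in \vSet$ with $\blocknumvar \in \ddepsi{\procG}{y}{\nuserindexvar}$ (or $y = \blocknumvar$), the no-write condition $\nowritecondi{\procG}{y}{\nuserindexvar}$ holds for all local variables.
\item For every $y$ with $\blocknumvar \in \cdepsi{\procG}{y}{\nuserindexvar}$, a fixed path condition $\pathcondi{\procG}{\seExp}{y}{\nuserindexvar}$ holds.
\end{enumerate}
These hold in $\config_1$. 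I then need them to still hold in $\config_2$. The round equality $\round(\config_1)=\round(\config_2)$ lets me use the round-invariant property of $(\brCondPre,\brCondInv)$ together with Lemma~\ref{lem:inv-preservation}. This is the main obstacle: the lemma only assumes the invariant part, not the precondition, at $\config_1$. I would handle it by arguing that $\config_1$ and $\config_2$ differ only in $\blocknumvar$, with both values lying in the same round. I would then either read $\brCond{}{}{}$ as the round version $\brCondR{}{}{}$, as in the theorem, or rely on the block-oblivious formulation. If neither works, I would strengthen the lemma's hypothesis to $\brCondR{\config_1}{\txB}{\vSet}$, which is what the soundness theorem actually provides.

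\textbf{Step 2: agreement of the two executions of $\atxv$.} The configurations $\config_1$ and $\config_2$ agree on every global variable except $\blocknumvar$. Fact~(2) fixes the path for each $y$ that is control dependent on $\blocknumvar$. Hence the executions of $\atxv$ from $\config_1$ and from $\config_2$ select branches that assign the same expressions to the variables of $\vSet$. Lemma~\ref{lem:writeset-agreement}, applied to a refined argument where the block-number dependency is neutralised by the fixed path condition, then gives equal write sets. Fact~(1) says the written variables do not depend on $\blocknumvar$ through data flow. Combining these with Lemma~\ref{lem:dep-soundness} yields $\config_3 \equivfor{\writeset{\config_1}{\atxv}{\vSet}} \tick{\conf_2}$. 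Since $\criticalvars \subseteq \vSet$, Assumption~\ref{asm:observables} then gives $\aobs = \tick{\aobs}$.

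\textbf{Step 3: unchanged variables.} The second bullet follows from Lemma~\ref{lem:nowrite-outside-writeset}, applied to $\config_2 \trans{\atxv} \config_3$, together with the write-set equality from Step~2, which lets me replace $\writeset{\config_2}{\atxv}{\vSet}$ by $\writeset{\config_1}{\atxv}{\vSet}$.
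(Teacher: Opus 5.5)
Your route is the paper's own: its proof is a one-line appeal to soundness and completeness of symbolic execution plus the characteristic properties of the backrunning condition. Your Steps~2 and~3 are a correct unpacking of that appeal, \emph{provided} facts~(1) and~(2) hold at $\config_2$ as well as at $\config_1$.

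The gap is in Step~1. You rightly flag that the hypothesis only yields them at $\config_1$, but none of your exits closes this. The real issue is not invariant part versus precondition, but at which configuration the condition is assumed.
\begin{itemize}
\item That $\config_1$ and $\config_2$ differ only in $\blocknumvar$ within one round does not help, because the path conditions in~(2) are conditions on $\blocknumvar$ itself.
\item Block-oblivious invariance (Definition~\ref{def:bo-inv}) restores $\blocknumvar$ after applying a function's update, so it says nothing about how $\brCondInv$ depends on $\blocknumvar$.
\item Assuming $\brCondR{\config_1}{\txB}{\vSet}$ fails for a mid-round $\config_1$. Definition~\ref{def:round-invariant} and Lemma~\ref{lem:inv-preservation} only transport a precondition from a configuration whose block number is a multiple of $k$; this is why Lemma~\ref{lem:brcond-through-rounds} assumes $\lastBlockNumber(\config_1) \bmod k = 0$.
\end{itemize}

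In fact the statement as written is false. Take a contract whose only function $\procG$ emits $E(1)$ if $\blocknumvar = T$ and $E(2)$ otherwise, and let $\vSet = \criticalvars$. Then $\blocknumvar \in \cdepsi{\procG}{\eventvar_{\procG}}{\nuserindexvar}$, so Algorithm~\ref{algo:br} puts the selected path condition, say $\blocknumvar = T$, into $\brCondInv$ and adds nothing else. Let $\config_1$ have block number $T$ with $T \bmod k \neq k-1$. Then all hypotheses hold, yet $\atxv$ calling $\procG$ emits $E(1)$ from $\config_1$ and $E(2)$ from $\config_2$. If the other branch is selected, use block number $T-1$ with $T \bmod k \neq 0$ instead.

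The missing ingredient is to assume the condition at $\config_2$ too, i.e.\ $\brCond{\config_2}{\txB}{\vSet}$. Equivalently, assume $\brCondR{\config_0}{\txB}{\vSet}$ at the first configuration $\config_0$ of the round, with $\config_1$ reached from $\config_0$ inside the round; Lemma~\ref{lem:brcond-through-rounds} then gives the condition at both $\config_1$ and $\config_2$. This is exactly what is available where the lemma is used: the swaps in Lemmas~\ref{lem:round-permutation-tx} and~\ref{lem:round-permutation-tau}, via Lemma~\ref{lem:com-tau-attacker}, explicitly rely on Lemma~\ref{lem:brcond-through-rounds}.

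With facts~(1) and~(2) at both configurations, your Step~2 goes through:
\begin{itemize}
\item For $y \in \vSet$ whose control dependencies include $\blocknumvar$, the same selected expression is forced in both executions.
\item For the other $y$, $\blocknumvar$ is irrelevant to every path condition of $y$. Changing only $\blocknumvar$ therefore preserves each one, and determinism gives the same expression.
\item If $\blocknumvar$ is a data dependency, $y$ is written in neither execution. Otherwise the written expression does not mention $\blocknumvar$, so the values agree.
\end{itemize}
Step~3 then follows as you describe, and the hypothesis $\round(\config_1) = \round(\config_2)$ is no longer needed inside the proof.
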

\begin{proof}
This property follows directly from the soundness and correctness of symbolic execution, as well as from the characteristic properties of backrunning conditions.
\end{proof}

Additionally, we can show the following lemma, which states that provided that $\frCond{\conf}{\htxv}{\vSet}$ holds, it is ensured that the variables from $\vSet$ written by $\htxv$ in $\conf$ can never overlap with variables that would be written by attacker transactions $\atxv$ when executed in $\conf$.

\begin{lemma}[Disjointness of write sets]
Let $\config$ be a configuration and $\htxv$ be a transaction such that $\sender{\htxv} = \honuser$ for $\honuser \in \honusers$, and $\vSet \subseteq \gvars{\contract}$ be a set of variables. 
Assume that $\frCond{\config}{\htxv}{\vSet}$ holds.
Then it holds for all transactions $\atxv$ with $\sender{\atxv} \neq \honuser$ that 
$$
\writeset{\config}{\htxv}{\vSet} \subseteq \vSet \setminus \writeset{\config}{\atxv}{\vSet}
$$
\label{lem:disjointness-writesets}
\end{lemma}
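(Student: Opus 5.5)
The idea is to unfold the definition of $\frCond{\config}{\htxv}{\vSet}$ and then use the characteristic properties of the frontrunning condition for each written variable separately. Fix $\seCvar \in \writeset{\config}{\htxv}{\vSet}$. By the definition of write sets there is a path $\seRes{\seSub}{\seCond} \in \symExReli{\procF}{\userindexvar}$ with $\satis{\confToAssi{\config}{\htxv}{\userindexvar}}{\seCond}$ and $\seSub(\seCvar) = \seExp \neq \seCvar$. This pair $(\seCvar,\seExp)$ belongs to the index set $S$ in the definition of $\frAssSet{\procF}{\seAss}{\vSet}$. Hence $\frCond{\config}{\htxv}{\vSet}$ gives
$$\satis{\confToAssi{\config}{\htxv}{\userindexvar}}{\frCondInv}$$
for $(\frCondPre,\frCondInv) = \frWriteCond{\procF}{\seExp}{\seCvar}{\roundPred}$.

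Since $\seExp \neq \seCvar$, property~(\ref{eq:fr-inv-ddeps}) applies with $y = \seCvar$. It says that $\frCondInv$ entails, for every $\procG \in \cFuncs{\contract}$,
$$\setforall{\locvar}{\lvars{\procG}{\nuserindexvar}} \sendervari{\nuserindexvar} \neq \honuser \rightarrow \nowritecondi{\procG}{\seCvar}{\nuserindexvar}.$$

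Now take any $\atxv = \contrCall{\procG(\fargs')}$ with $\sender{\atxv} \neq \honuser$. I would combine two facts about assignments:
\begin{itemize}
\item By the Assumption on global variables, $\confToAssi{\config}{\atxv}{\nuserindexvar}$ and $\confToAssi{\config}{\htxv}{\userindexvar}$ agree on $\gvars{\contract}$.
\item The formula above quantifies over all local variables of $\procG$ indexed by $\nuserindexvar$, so it can be instantiated with the concrete arguments and sender of $\atxv$.
\end{itemize}
The delicate point is merging these into one assignment satisfying $\nowritecondi{\procG}{\seCvar}{\nuserindexvar}$ under $\confToAssi{\config}{\atxv}{\nuserindexvar}$. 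The formula's free variables are global ones together with $\userindexvar$-indexed locals, and the latter are disjoint from the bound $\nuserindexvar$-indexed locals. So Assumption~\ref{lem:agreement-rel-vars} (agreement on condition variables) transfers satisfaction after the instantiation.

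This yields $\satis{\confToAssi{\config}{\atxv}{\nuserindexvar}}{\pathcondi{\procG}{\seCvar}{\seCvar}{\nuserindexvar}}$, a disjunction over paths $\seRes{\seSub'}{\seCond'}$ with $\seSub'(\seCvar) = \seCvar$. By determinism of symbolic execution, every path $\seRes{\seSub''}{\seCond''}$ whose condition is satisfied by the same assignment has $\seSub''(\seCvar) = \seSub'(\seCvar) = \seCvar$. Therefore $\seCvar \notin \writeset{\config}{\atxv}{\vSet}$. Together with $\seCvar \in \vSet$, this gives $\seCvar \in \vSet \setminus \writeset{\config}{\atxv}{\vSet}$.

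If $\procG$ is $\procTick$ (i.e.\ $\atxv = \txB$), the same argument goes through via the $\procTick$ case of~(\ref{eq:fr-inv-ddeps}). It can also be settled directly from $\symExRel{\procTick}$, which only writes $\blocknumvar \notin \cvars{\contract}$.

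I expect the main obstacle to be the variable-scoping step: making precise that the $\userindexvar$-indexed assignment of $\htxv$ and the $\nuserindexvar$-indexed assignment of $\atxv$ can be combined so that the universally quantified no-write condition is instantiated correctly. I would handle it by treating both index spaces as disjoint parts of one extended assignment. The rest is bookkeeping with the definitions of write sets, path conditions and $\frAssSet{\procF}{\seAss}{\vSet}$.
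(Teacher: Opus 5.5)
Your argument is correct and is the direct consequence the paper has in mind when it omits this proof as immediate. You unfold $\frCond{\config}{\htxv}{\vSet}$ to get the invariant component of $\frWriteCond{\procF}{\seExp}{\seCvar}{\roundPred}$ for each written $\seCvar$, apply property~(\ref{eq:fr-inv-ddeps}) with $y = \seCvar$ (which is exactly why the algorithm adds $\{\seCvar\}$ to the dependency set), instantiate the quantified no-write condition at the attacker transaction's locals, and conclude by determinism of symbolic execution. The one liberty is that you use agreement on global variables across the indices $\userindexvar$ and $\nuserindexvar$; this is slightly stronger than the per-index assumption as literally stated, but the paper relies on it implicitly throughout since global variables are unindexed.
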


Additionally, we can provide semantic versions of the notions of invariants and round invariants, which hold for the frontrunning and backrunning conditions.

We first can show that if $\frCondR{\cdot}{\cdot}{\cdot}$ holds at the beginning of the round, then $\frCond{\cdot}{\cdot}{\cdot}$ holds throughout the round as long as only $\txB$ or attacker transactions get executed. 

\begin{lemma}[Frontrunning Conditions throughout rounds]
Let $\htxv$ be a transaction with $\sender{\htxv} = \honuser$ for some $\honuser \in \honusers$ and $\vSet \subseteq \gvars{\contract}$.
Let $\config_1$, $\config_2$, \dots, $\config_n$ be configurations such that 
$\round(\config_j) = \round(\config_l)$ for $j, l \in \{ 1, \dots, n \}$ and $\lastBlockNumber(\config_1) \bmod{k} = 0$.
Further, let 
$\config_1 \trans{\txv_1} \config_2 \trans{\txv_2} \dots \trans{\txv_{n-1}} \config_{n}$
for transaction $\txv_1, \dots \txv_{n-1}$ such that for all $j \in \{1, \dots n-1 \}$, it holds that
either $\txv_j = \txB$ or 
$\txv_j = \contrCall{\procG(\fargs)}$ for some $\procG \in \cFuncs{\contract}$
and $\sender{\txv_j} \neq \honuser$.
Then if $\frCondR{\config_1}{\htxv}{\vSet}$ then also $\frCond{\config_n}{\htxv}{\vSet}$.
\label{lem:frcond-through-rounds}
\end{lemma}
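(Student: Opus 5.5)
The plan is to reduce the statement to Lemma~\ref{lem:inv-preservation} (preservation of invariants through rounds), applied to the conjunction of round conditions collected in $\frAssSetPre{\procF}{\seAss}{\vSet}$ and $\frAssSet{\procF}{\seAss}{\vSet}$. Write $\htxv = \contrCall{\procF(\fargs)}$ and set $\seAss_j := \confToAssi{\config_j}{\htxv}{\userindexvar}$. By definition, $\frCondR{\config_1}{\htxv}{\vSet}$ says $\satis{\seAss_1}{\frAssSetPre{\procF}{\seAss_1}{\vSet}}$. The goal $\frCond{\config_n}{\htxv}{\vSet}$ says $\satis{\seAss_n}{\frAssSet{\procF}{\seAss_n}{\vSet}}$.

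The first and most delicate step is that the index set $S$ defining $\frAssSet{\procF}{\seAss_n}{\vSet}$ may differ from the one for $\seAss_1$. The set $S$ depends on which path conditions $\seCond$ of $\symExReli{\procF}{\userindexvar}$ hold under the assignment. I would show the set of selected triples $(\seCvar, \seExp)$ is the same at $\seAss_1$ and $\seAss_n$, by induction along the run. Fix a pair $(\seCvar, \seExp)$ selected at $\seAss_1$. By (\ref{eq:fr-inv-pathcond}), $\frCondInv$ entails $\pathcondi{\procF}{\seExp}{\seCvar}{\userindexvar}$. So once $\frCondInv$ is known to hold at $\seAss_n$, the path condition for $\seExp$ still holds there. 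Determinism of symbolic execution then excludes any other expression for $\seCvar$, so no new pairs appear for $\seCvar \in \vSet$. I expect this circularity, where the target set depends on the invariant being proved, to be the main obstacle. I would resolve it by fixing $S$ using $\seAss_1$ throughout, proving the invariant for that fixed conjunction, and only then concluding that the set computed at $\seAss_n$ coincides with it.

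With $S$ fixed, the core step applies Lemma~\ref{lem:inv-preservation} with $\invpre := \frAssSetPre{\procF}{\seAss_1}{\vSet}$ and $\inv := \frAssSet{\procF}{\seAss_1}{\vSet}$. Its hypotheses are supplied as follows:
\begin{itemize}
\item $\roundinv{\roundvar}{\invpre}{\inv}$ is the first statement of Lemma~\ref{lem:fr-lift-inv}.
\item For every attacker step $\txv_j = \contrCall{\procG(\fargs)}$, the second statement of Lemma~\ref{lem:fr-lift-inv} gives $\execinvi{\procG}{\sendervari{\nuserindexvar} \neq \honuser}{\invpre}{\nuserindexvar}$. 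The precondition $\sendervari{\nuserindexvar} \neq \honuser$ holds under $\confToAssi{\config_j}{\txv_j}{\nuserindexvar}$ because $\sender{\txv_j} \neq \honuser$.
\item For $\txv_j = \txB$, the lemma allows the step directly. This is consistent because $\procTick$ was included in the no-write loop of Algorithm~\ref{algo:fr-app}, and the block-oblivious formulation already absorbs block increments within the round.
\item The side conditions on equal rounds and $\lastBlockNumber(\config_1) \bmod k = 0$ are given in the statement.
\end{itemize}

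One technical point remains: Lemma~\ref{lem:inv-preservation} works with assignments for indices over both $\userindexvar$ and $\nuserindexvar$. The honest user's local variables (arguments and sender of $\htxv$) are fixed by $\htxv$. By the Assumption that $\confToAss{\cdot}{\cdot}$ agrees on global variables across transactions, together with Assumption~\ref{lem:agreement-rel-vars}, evaluating the conditions under $\confToAssi{\config_j}{\htxv}{\userindexvar}$ rather than the attacker transaction's assignment is harmless. This works because $\condvars{\frCondPre}, \condvars{\frCondInv} \subseteq \seCvarsi{\contract}{\procF}{\userindexvar}$ by the characteristic properties.

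Conclusion: Lemma~\ref{lem:inv-preservation} yields $\satis{\seAss_n}{\frAssSet{\procF}{\seAss_1}{\vSet}}$. By the first step this conjunction equals $\frAssSet{\procF}{\seAss_n}{\vSet}$, which is exactly $\frCond{\config_n}{\htxv}{\vSet}$.
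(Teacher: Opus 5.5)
Your proposal is correct and follows the paper's route: the paper's proof is just Lemma~\ref{lem:inv-preservation} combined with the round-invariant and block-oblivious-invariant statements of Lemma~\ref{lem:fr-lift-inv}, which is exactly your core step. Your extra argument adds detail the paper's one-line proof leaves implicit: you fix the conjunction at the initial assignment, then use property~(\ref{eq:fr-inv-pathcond}) and determinism of symbolic execution to show the same conjunction is selected at $\config_n$, and you handle the index bookkeeping. That argument tacitly needs some path of the called function to be enabled at $\config_1$, which completeness guarantees because $\htxv$ is executable.
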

\begin{proof}
Follows immediately from Lemma~\ref{lem:inv-preservation} and Lemma~\ref{lem:fr-lift-inv}. 
\end{proof}

We can establish a similar lemma for backrunning conditions. 
Note that backrunning conditions are also invariant w.r.t. the honest user transaction, resulting in a slightly stronger lemma:

\begin{lemma}[Backrunning Conditions throughout rounds]
Let $\htxv$ be a transaction with $\sender{\htxv} = \honuser$ for some $\honuser \in \honusers$ and $\vSet \subseteq \gvars{\contract}$.
Let $\config_1$, $\config_2$, \dots, $\config_n$ be configurations such that 
$\round(\config_j) = \round(\config_l)$ for $j, l \in \{ 1, \dots, n \}$ and $\lastBlockNumber(\config_1) \bmod{k} = 0$.
Further, let 
$\config_1 \trans{\txv_1} \config_2 \trans{\txv_2} \dots \trans{\txv_{n-1}} \config_{n}$
for transaction $\txv_1, \dots \txv_{n-1}$ such that for all $j \in \{1, \dots n-1 \}$, it holds that
either $\txv_j = \txB$ or 
$\txv_j = \htxv$ or
$\txv_j = \contrCall{\procG(\fargs)}$ for some $\procG \in \cFuncs{\contract}$
and $\sender{\txv_j} \neq \honuser$.
Then if $\brCondR{\config_1}{\htxv}{\vSet}$ then also $\brCond{\config_n}{\htxv}{\vSet}$.
\label{lem:brcond-through-rounds}
\end{lemma}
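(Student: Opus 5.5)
The plan mirrors the proof of Lemma~\ref{lem:frcond-through-rounds}: reduce the statement to Lemma~\ref{lem:inv-preservation}, using the round-invariant and block-oblivious-invariant facts about the backrunning round precondition from Lemma~\ref{lem:br-lift-inv}. Write $\htxv = \contrCall{\procF(\fargs)}$ and let $\seAss_1 = \confToAssi{\config_1}{\htxv}{\userindexvar}$. Unfolding the definitions, the hypothesis $\brCondR{\config_1}{\htxv}{\vSet}$ says $\satis{\seAss_1}{\brAssWritePre{\procF}{\vSet}{\seAss_1}}$. The goal $\brCond{\config_n}{\htxv}{\vSet}$ says that $\confToAssi{\config_n}{\htxv}{\userindexvar}$ satisfies $\brAssWrite{\procF}{\vSet}{\confToAssi{\config_n}{\htxv}{\userindexvar}}$.

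The first subtlety is that the conjunction $\textbf{BR}$ is indexed by the assignment itself, through the set of paths $\seRes{\seSub}{\seCond}$ with $\satis{\seAss}{\seCond}$. I would therefore fix the formula $\Psi := \brAssWrite{\procF}{\vSet}{\seAss_1}$ and its precondition $\Psi_p := \brAssWritePre{\procF}{\vSet}{\seAss_1}$, both determined by $\config_1$. I would first prove that $\confToAssi{\config_n}{\htxv}{\userindexvar}$ satisfies $\Psi$. Afterwards I would argue that the index set is unchanged, or at least that every path enabled at $\config_n$ contributes a conjunct already implied by $\Psi$.

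For the first part I would apply Lemma~\ref{lem:inv-preservation} with $(\invpre,\inv) = (\Psi_p,\Psi)$. Lemma~\ref{lem:br-lift-inv} supplies the round invariant $\roundinvi{}{\Psi_p}{\Psi}{}$. It also supplies block-oblivious invariance of $\Psi_p$ under every $\procG\in\cFuncs{\contract}$ invoked with precondition $\sendervari{\nuserindexvar}\neq\honuser$, and under $\procF$ itself. Unlike in the frontrunning case, steps $\txv_j=\htxv$ are allowed here. For these I would invoke the characteristic property $\execinvi{\procF}{\sendervari{\userindexvar} = \honuser}{\brCondPre}{\userindexvar}$ of Algorithm~\ref{algo:br}, instantiated with precondition $\sendervari{\userindexvar}=\honuser$. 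The sender condition is met since $\sender{\htxv}=\honuser$. Steps $\txv_j=\txB$ are handled directly by Lemma~\ref{lem:inv-preservation}. The remaining steps have $\sender{\txv_j}\neq\honuser$, so $\sendervari{\nuserindexvar}\neq\honuser$ is satisfied by their assignment. Since all configurations lie in the same round and $\config_1$ starts it, the lemma yields that $\confToAssi{\config_n}{\htxv}{\userindexvar}$ satisfies $\Psi$.

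The main obstacle is the index-set issue. Intermediate transactions may change the global variables occurring in the path conditions of $\procF$, so at $\config_n$ a different path of $\procF$ may be enabled and $\textbf{BR}$ may range over different written variables $\seCvar$. I would first try to show that the backrunning conditions exclude adversarial writes to the control-relevant variables of $\procF$. If that fails, I would instead note that each conjunct $\brWriteCond{\procF}{\vSet}{\seCvar}{\roundPred}$ depends only on $(\procF,\vSet,\seCvar)$, not on the assignment. It then suffices to show that every $\seCvar$ written at $\config_n$ is also written at $\config_1$. Since the theorem only uses $\textbf{BR}$ for the variables actually written by $\htxv$ inside the round, a safe fallback is to strengthen the statement to the fixed formula $\Psi$. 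I would then check that the subsequent commutation lemmas (Lemma~\ref{lem:br-correctness-user-attacker}) only require $\Psi$ evaluated at the configuration where $\htxv$ executes.
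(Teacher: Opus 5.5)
Your core step is exactly the paper's one-line proof: freeze $\Psi_p=\brAssWritePre{\procF}{\vSet}{\seAss_1}$ and $\Psi=\brAssWrite{\procF}{\vSet}{\seAss_1}$, then push them through Lemma~\ref{lem:inv-preservation} via Lemma~\ref{lem:br-lift-inv}. Your handling of the $\htxv$ steps with $\execinvi{\procF}{\sendervari{\userindexvar} = \honuser}{\brCondPre}{\userindexvar}$ is more accurate than the paper's citation, since Lemma~\ref{lem:br-lift-inv} states its $\procF$-clause only for adversarial senders. But this argument yields only $\satis{\confToAssi{\config_n}{\htxv}{\userindexvar}}{\Psi}$. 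The re-indexing obstacle you flag is real, and the paper's proof simply passes over it. It also cannot be closed: the lemma as literally stated, with the conjunction re-indexed at $\config_n$, is false. Take globals $s,a,c$ and let $\vSet=\{c\}$. Let $\procF$ set $a:=1$ if $s=1$ and do nothing otherwise, let $\procG$ set $s:=1$, and let $\txcode{h}$ set $c:=a$.
\begin{itemize}
\item In a round-initial $\config_1$ with $s=0$, the only enabled path of $\procF$ writes nothing. So $\brAssWritePre{\procF}{\vSet}{\seAss_1}$ is the empty conjunction and $\brCondR{\config_1}{\htxv}{\vSet}$ holds.
\item After an adversarial step $\txv_1=\contrCall{\procG()}$, the enabled path writes $a$, so $\brWriteCond{\procF}{\vSet}{a}{\roundPred}$ enters the conjunction.
\item Since $a\in\ddepsi{\txcode{h}}{c}{\nuserindexvar}$ and $\nowritecondi{\txcode{h}}{c}{\nuserindexvar}$ is the empty disjunction $\bot$, that conjunct's invariant entails $\forall\sendervari{\nuserindexvar}.~\sendervari{\nuserindexvar}\neq\honuser\rightarrow\bot$. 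Hence $\brCond{\config_2}{\htxv}{\vSet}$ fails.
\end{itemize}

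Your first idea fails because Algorithm~\ref{algo:br} never conjoins a path condition of $\procF$. The frontrunning analogue survives re-indexing only because its invariant entails $\pathcondi{\procF}{\seExp}{\seCvar}{\userindexvar}$ (property~(\ref{eq:fr-inv-pathcond})). By determinism of symbolic execution, this fixes the expression $\htxv$ assigns to each $\seCvar\in\vSet$. Your second idea needs exactly the write-set inclusion that the example violates. So only your fallback is provable. However, $\Psi$ alone does not pass your final check either: Lemma~\ref{lem:br-correctness-user-attacker} needs backrunning conditions for the variables $\htxv$ writes at the configuration where it actually executes, not at the round start.

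The repair uses hypotheses that Lemma~\ref{lem:round-permutation-tx} has but this lemma lacks. With $\vSet=\varfixpoint$:
\begin{itemize}
\item $\frCondR{\config_1}{\htxv}{\varfixpoint}$ keeps $\writeset{\cdot}{\htxv}{\varfixpoint}$ constant up to the execution of $\htxv$, by the determinism mechanism just described.
\item For any written $\seCvar\notin\varfixpoint$, $\depclosed{\varfixpoint}$ empties $\dSet$ and $\cSet$ in Algorithm~\ref{algo:br}, so $\brWriteCond{\procF}{\varfixpoint}{\seCvar}{\roundPred}=(\top,\top)$.
\end{itemize}
Add these hypotheses and restrict the conclusion to configurations preceding $\htxv$; then the re-indexed conjunction follows from $\Psi$. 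That corrected statement is what you should prove.
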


\begin{proof}
Follows immediately from Lemma~\ref{lem:inv-preservation} and Lemma~\ref{lem:br-lift-inv}. 
\end{proof}

\begin{remark}
Throughout this paper, we assume that the attacker also only schedules transactions of contract $\contract$. 
This assumption could easily be lifted since transactions of other contracts $\contract' \neq \contract$ let the contract variables of $\contract$ unchanged.
\end{remark}

\subsubsection{Commutativity}

Using the semantic correctness conditions, we can next establish a commutativity lemma, which shows that front- and backrunning conditions together ensure that honest user transactions and attacker transactions commute.

We will consider commutativity with respect to a set $\varfixpoint \subseteq \gvars{\contract}$ of variables, which contains the event variables $\criticalvars$ and which is closed under data and control dependencies of the contract $\contract$. 
Formally, we define closedness as follows:

\begin{definition}[Closedness under dependencies]
Let $\varfixpoint \subseteq \gvars{\contract}$ be a set of variables and $\indexvar$ be some index. 
We say that $\varfixpoint$ is \emph{closed under dependencies} (written $\depclosed{\varfixpoint}$) if it holds that 
\begin{align*}
\gvars{\contract} \cap \bigcup_{y \in \varfixpoint, \procG \in \cFuncs{\contract}}{\ddeps{\procG}{y} \cup \cdeps{\procG}{y} }
    \subseteq \varfixpoint 
\end{align*}
\end{definition}

Note that the data or control dependencies of the different contract functions may contain local variables (e.g., referring to the transaction parameters, such as the transaction sender or the value sent along with the transaction). 
Those variables, however, should not be considered in the set $\varfixpoint$, which ranges over the global variables $\gvars{\contract}$ (since it will denote the set of variables, which will stay uninfluenced by reordering of transactions).

\begin{lemma}[Commutativity (Between user and attacker transactions)]
    Let $\config_1$, $\tick{\config_1}$, $\config_2$, $\tick{\config_2}$, $\config_3$ and $\tick{\config_3}$ be configurations and $\htxv$ be a transaction such that $\sender{\htxv} = \honuser$ for $\honuser \in \honusers$. 
    Let $\varfixpoint \subseteq \gvars{\contract}$ be a set of variables such that $\criticalvars \subseteq \varfixpoint$ and $\depclosed{\varfixpoint}$.

Further, let $\atxv$ be a transaction such that $\sender{\atxv} \neq \honuser$. 
Assume that $\config_1 \equivfor{\varfixpoint} \tick{\config_1}$ and $\frCond{\config_1}{\htxv}{\varfixpoint}$,
$\brCond{\config_1}{\htxv}{\varfixpoint}$, 
$\frCond{\tick{\config_1}}{\htxv}{\varfixpoint}$,
and
$\brCond{\tick{\config_1}}{\htxv}{\varfixpoint}$,
hold. 

Then
\begin{align*}
  \conf_1 \trans[\hobs]{\htxv} \conf_2 \trans[\aobs]{\atxv} \conf_3
  ~\land~ 
  \tick{\conf_1} \trans[\tick{\aobs}]{\atxv} \tick{\conf_2} \trans[\tick{\hobs}]{\htxv} \tick{\conf_3} \\
  \Rightarrow 
  \aobs = \tick{\aobs}
  ~\land~ 
  \hobs = \tick{\hobs}
  ~\land~ 
  \conf_3 \equivfor{\varfixpoint} \tick{\conf_3}
  \end{align*}
for observables $\hobs$, $\tick{\hobs}$, $\aobs$, $\tick{\aobs}$.
\label{lem:com-user-attacker}
\end{lemma}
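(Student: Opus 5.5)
The plan is to compare both orderings against the two ``solo'' executions of $\htxv$ and $\atxv$ from the initial configurations, using the correctness lemmas for the semantic frontrunning and backrunning conditions. Write $W_{\honuser} := \writeset{\config_1}{\htxv}{\varfixpoint}$ and $W_{\astrat} := \writeset{\config_1}{\atxv}{\varfixpoint}$. Since $\config_1 \equivfor{\varfixpoint} \tick{\config_1}$ and $\depclosed{\varfixpoint}$, the control dependencies of every variable in $\varfixpoint$ (restricted to $\gvars{\contract}$) lie in $\varfixpoint$. Lemma~\ref{lem:writeset-agreement} then gives that these write sets coincide with the ones computed at $\tick{\config_1}$. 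Likewise, Lemma~\ref{lem:dep-soundness} (with $\vSet = \varfixpoint$, whose dependency set is again contained in $\varfixpoint$) shows that a solo execution of $\htxv$, respectively $\atxv$, from $\config_1$ and from $\tick{\config_1}$ produces the same observables and configurations agreeing on $\varfixpoint$.

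First I fix the solo executions $\config_1 \trans{\atxv} \Delta_{\astrat}$ and $\tick{\config_1} \trans{\htxv} \Delta_{\honuser}$. I expect their existence to be the main obstacle. For $\htxv$ I would argue that $\frCond{\tick{\config_1}}{\htxv}{\varfixpoint}$ entails the path condition (property~(\ref{eq:fr-inv-pathcond})), so soundness of symbolic execution yields a transition. For $\atxv$ I would use completeness on the step $\tick{\config_1} \trans{\atxv} \tick{\config_2}$ to get a satisfied path condition. By Assumption~\ref{lem:agreement-rel-vars} that path condition is also satisfied at $\config_1$, since its relevant global variables lie in $\varfixpoint$ and the local variables are fixed by the transaction; soundness then gives the step. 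Should this not go through directly, I would add it as an explicit side assumption on validity, as the mempool validity constraints suggest.

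Next I apply Lemma~\ref{lem:fr-correctness-user-attacker} to $\tick{\config_1} \trans{\atxv} \tick{\config_2} \trans{\htxv} \tick{\config_3}$. This gives $\tick{\config_3} \equivfor{W_{\honuser}} \Delta_{\honuser}$, $\tick{\hobs}$ equal to the solo observable, and $\tick{\config_2} \equivfor{\varfixpoint \setminus W_{\honuser}} \tick{\config_3}$. Symmetrically, Lemma~\ref{lem:br-correctness-user-attacker} applied to $\config_1 \trans{\htxv} \config_2 \trans{\atxv} \config_3$ gives $\config_3 \equivfor{W_{\astrat}} \Delta_{\astrat}$, $\aobs$ equal to the solo observable, and $\config_2 \equivfor{\varfixpoint \setminus W_{\astrat}} \config_3$. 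Together with the solo agreements from the first paragraph, this already yields $\hobs = \tick{\hobs}$ and $\aobs = \tick{\aobs}$.

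For $\config_3 \equivfor{\varfixpoint} \tick{\config_3}$ I split $\varfixpoint$ into three parts, using Lemma~\ref{lem:disjointness-writesets} ($W_{\honuser} \cap W_{\astrat} = \emptyset$).
\begin{itemize}
\item For $x \in W_{\honuser}$: $\tick{\config_3}(x) = \Delta_{\honuser}(x) = \config_2(x)$ by dependency soundness, and $\config_2(x) = \config_3(x)$ because $x \notin W_{\astrat}$.
\item For $x \in W_{\astrat}$: $\config_3(x) = \Delta_{\astrat}(x) = \tick{\config_2}(x)$, and $\tick{\config_2}(x) = \tick{\config_3}(x)$ because $x \notin W_{\honuser}$.
\item For the remaining variables: Lemma~\ref{lem:nowrite-outside-writeset} (with write sets transported to $\tick{\config_1}$ via Lemma~\ref{lem:writeset-agreement}) combined with the second bullets above gives $\config_3(x) = \config_2(x) = \config_1(x) = \tick{\config_1}(x) = \tick{\config_2}(x) = \tick{\config_3}(x)$.
\end{itemize}
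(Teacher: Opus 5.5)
Your proposal is correct and is essentially the paper's proof. It uses the same split of $\varfixpoint$ into $W_{\honuser}$, $W_{\astrat}$ and their common complement, driven by Lemmas~\ref{lem:fr-correctness-user-attacker}, \ref{lem:br-correctness-user-attacker}, \ref{lem:dep-soundness}, \ref{lem:nowrite-outside-writeset} and~\ref{lem:disjointness-writesets}. The only cosmetic difference is that you anchor both write sets at $\config_1$ and transport them with Lemma~\ref{lem:writeset-agreement}. The paper instead anchors $W_{\honuser}$ at $\tick{\config_1}$ and derives the no-write facts for the steps $\config_1 \trans{\htxv} \config_2$ and $\tick{\config_1} \trans{\atxv} \tick{\config_2}$ through the solo runs and dependency soundness.

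On the existence of the solo executions, the paper does not derive it. It relies on its standing assumption that every transaction can execute, possibly emitting a failure observable, so your fallback is exactly what is needed. Your direct derivation would not work, for two reasons:
\begin{itemize}
\item $\frCond{\cdot}{\cdot}{\cdot}$ only conjoins conditions for paths already satisfied, so it cannot certify that such a path exists.
\item A single path condition may mention global variables outside $\varfixpoint$, since closedness only controls the relevant variables of the per-variable disjunctions. Hence Assumption~\ref{lem:agreement-rel-vars} does not transfer it from $\tick{\config_1}$ to $\config_1$.
\end{itemize}
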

\begin{proof}
In the following, we will use the shorthands
$\WFset := \writeset{\tick{\config_1}}{\htxv}{\varfixpoint}$
and
$\WGset := \writeset{\config_1}{\atxv}{\varfixpoint}$.
Note that this gives us immediately (using Lemma~\ref{lem:nowrite-outside-writeset}) for $\config_2'$ with $\config_1 \trans{\atxv} \config_2'$
that $\config_1 \equivfor{\NWGset} \config_2'$.
Consequently, using Lemma~\ref{lem:dep-soundness},
from $\config_1 \equivfor{\varfixpoint} \tick{\config_1}$ and $\depclosed{\varfixpoint}$, 
we can conclude that $\config_2' \equivfor{\NWGset} \tick{\config_2}$, 
and so also $\config_1 \equivfor{\NWGset} \tick{\config_2}$
and since $\config_1 \equivfor{\varfixpoint} \tick{\config_1}$
also $\tick{\config_1} \equivfor{\NWGset} \tick{\config_2}$ (NWu1).
Similarly, we know by Lemma~\ref{lem:nowrite-outside-writeset} that for $\tick{\config_2}'$ with 
$\tick{\config_1} \trans{\htxv} \tick{\config_2}'$ we get 
$\tick{\config_1} \equivfor{\NWFset} \tick{\config_2}'$.
So, using Lemma~\ref{lem:dep-soundness},
from $\config_1 \equivfor{\varfixpoint} \tick{\config_1}$ and $\depclosed{\varfixpoint}$, 
we can conclude that $\tick{\config_2'} \equivfor{\NWFset} \config_2$, 
and so also $\tick{\config_1} \equivfor{\NWFset} \config_2$
and since $\tick{\config_1} \equivfor{\varfixpoint} \config_1$
also $\config_1 \equivfor{\NWFset} \config_2$ (NWa1).

Further, by Lemma~\ref{lem:fr-correctness-user-attacker}, we know that 
$\tick{\config_2} \equivfor{\NWFset} \tick{\config_3}$ (NWa2)
and that for $\tick{\config_2}'$ and $\tick{\hobs}'$ with 
$\tick{\config_1} \trans[\tick{\hobs}']{\htxv} \tick{\config_2}'$ it holds 
that $\tick{\config_2}' \equivfor{\WFset} \tick{\config_3}$ and $\tick{\hobs} = \tick{\hobs}'$. 
Using Lemma~\ref{lem:dep-soundness}, from
$\config_1 \equivfor{\varfixpoint} \tick{\config_1}$ and $\depclosed{\varfixpoint}$
we can also conclude that $\tick{\config_2}' \equivfor{\WFset} \config_2$
and $\tick{\hobs}' = \hobs$
and consequently $\tick{\hobs} = \hobs$ and
$\tick{\config_3} \equivfor{\WFset} \config_2$ (Wa).
Similarly, from Lemma~\ref{lem:br-correctness-user-attacker}, we know that 
$\config_2 \equivfor{\NWGset} \config_3$ (NWu2)
and that for $\config_2'$ and $\aobs'$ with 
$\config_1 \trans[\aobs']{\atxv} \config_2'$ it holds 
that $\config_2' \equivfor{\WGset} \config_3$ and $\aobs = \aobs'$. 
Using Lemma~\ref{lem:dep-soundness}, from
$\config_1 \equivfor{\varfixpoint} \tick{\config_1}$ and $\depclosed{\varfixpoint}$,
we can also conclude that
$\config_2' \equivfor{\WGset} \tick{\config_2}$
and $\aobs' = \tick{\aobs}$
and consequently
$\aobs = \tick{\aobs}$ and
$\config_3 \equivfor{\WGset} \tick{\config_2}$ (Wu).
Further, from Lemma~\ref{lem:disjointness-writesets} (combined with Lemma~\ref{lem:writeset-agreement}), we also know that
$\WGset \subseteq \NWFset$ (I1) and consequently also 
$\WFset \subseteq \NWGset$ (I2).
We now show $\config_3 \equivfor{\varfixpoint} \tick{\config_3}$ by showing
$\config_3 \equivfor{\WGset} \tick{\config_3}$, 
$\config_3 \equivfor{\WFset} \tick{\config_3}$, and
$\config_3 \equivfor{(\NWGset) \cap (\NWFset)} \tick{\config_3}$ separately.
\begin{itemize}
\item %
By (NWa2), we have $\tick{\config_2} \equivfor{\NWFset} \tick{\config_3}$
and consequently by (I1) also $\tick{\config_2} \equivfor{\WGset} \tick{\config_3}$.
Together with (Wu) ($\config_3 \equivfor{\WGset} \tick{\config_2}$), this gives us 
$\config_3 \equivfor{\WGset} \tick{\config_3}$.
\item  %
By (NWu2), we have $\config_2 \equivfor{\NWGset} \config_3$ and consequently, by (I2) also 
$\config_2 \equivfor{\WFset} \config_3$. 
Together with (Wa) ($\tick{\config_3} \equivfor{\WFset} \config_2$), this gives us 
$\config_3 \equivfor{\WFset} \tick{\config_3}$.
\item %
Let in the following be $\NWcap := (\NWGset) \cap (\NWFset)$.
We have that $\config_2 \equivfor{\NWcap} \config_3$ (by (NWu2) and $\NWcap \subseteq \NWGset$), 
$\config_2 \equivfor{\NWcap} \config_1$ (by (NWa2) and $\NWcap \subseteq \NWFset$),
$\config_1 \equivfor{\NWcap} \tick{\config_1}$ (by assumption and $\NWcap \subseteq \varfixpoint$), 
$\tick{\config_1} \equivfor{\NWcap} \tick{\config_2}$ (by (NWu1) and $\NWcap \subseteq \NWGset$),
and $\tick{\config_2} \equivfor{\NWcap} \tick{\config_3}$ (by (NWa2) and $\NWcap \subseteq \NWFset$).
Together, this gives us $\config_3 \equivfor{\NWcap} \tick{\config_3}$ (by transitivity of $\equivfor{}$).
\end{itemize}
Note that $\config_3 \equivfor{\WGset} \tick{\config_3}$, 
$\config_3 \equivfor{\WFset} \tick{\config_3}$, and
$\config_3 \equivfor{(\NWGset) \cap (\NWFset)} \tick{\config_3}$
give us $\config_3 \equivfor{\varfixpoint} \tick{\config_3}$ because
\begin{align*}
\WFset \cup \WGset \cup ((\NWGset) \cap (\NWFset)) \\
= (\WFset \cup \WGset) \cup (\varfixpoint \setminus (\WFset \cup \WGset)) = \varfixpoint
\end{align*}
\end{proof}

In addition to the commutativity between the user and the attacker transaction, we also need to establish that also 
(i) user transactions commute with $\txB$ within one round and
(ii) attacker transactions commute with $\txB$ within one round. 
This is crucial since the attacker may freely move transactions between blocks. 

We state the corresponding lemmas analogously to the previous commutativity lemma.

For the commutativity between user transactions and $\txB$ transactions, it is sufficient of the initial configuration satisfies the frontrunning condition, since this condition already ensures that the behavior of the transaction $\htxv$ (i) does not depend directly on the block number and that (ii) indirect dependencies to the block number cannot impact the behavior of $\htxv$ within one round (due to the round invariant property of the frontrunning condition).

\begin{lemma}[Commutativity (Between user and $\txB$ transactions)]
    Let $\config_1$, $\tick{\config_1}$, $\config_2$, $\tick{\config_2}$, $\config_3$ and $\tick{\config_3}$ be configurations and $\htxv$ be a transaction such that $\sender{\htxv} = \honuser$ for $\honuser \in \honusers$. 
    Let $\varfixpoint \subseteq \gvars{\contract}$ be a set of variables such that $\criticalvars \subseteq \varfixpoint$ and $\depclosed{\varfixpoint}$.

Assume that $\config_1 \equivfor{\varfixpoint} \tick{\config_1}$ and $\frCond{\config_1}{\htxv}{\varfixpoint}$,
$\round(\config_1) = \round(\config_3)$ and 
$\round(\tick{\config_1}) = \round(\tick{\config_3})$
and
$\frCond{\tick{\config_1}}{\htxv}{\varfixpoint}$,
hold. 

Then
\begin{align*}
  \conf_1 \trans[\hobs]{\htxv} \conf_2 \trans[]{\txB} \conf_3
  ~\land~ 
  \tick{\conf_1} \trans[]{\txB} \tick{\conf_2} \trans[\tick{\hobs}]{\htxv} \tick{\conf_3} \\
  \Rightarrow 
  \hobs = \tick{\hobs}
  ~\land~ 
  \conf_3 \equivfor{\varfixpoint} \tick{\conf_3}
  \end{align*}
for observables $\hobs$, $\tick{\hobs}$.
\label{lem:com-user-tau}
\end{lemma}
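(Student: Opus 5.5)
I will follow the structure of the proof of Lemma~\ref{lem:com-user-attacker}, but with the attacker transaction replaced by $\txB$. The two facts that make this simpler are that $\txB$ emits no observables and only modifies the block number $\blocknumvar$. Let $\WFset := \writeset{\tick{\config_1}}{\htxv}{\varfixpoint}$. The write set of $\txB$ is empty on contract variables, and touches only $\blocknumvar$ if $\blocknumvar \in \varfixpoint$. I therefore split $\varfixpoint$ into $\WFset$, $\{\blocknumvar\} \cap \varfixpoint$, and the rest $\NWFset \setminus \{\blocknumvar\}$, and show that $\config_3$ and $\tick{\config_3}$ agree on each part.

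\textbf{Block number.} By Lemma~\ref{lem:nowrite-outside-writeset}, $\htxv$ leaves $\blocknumvar$ unchanged, since $\blocknumvar \notin \cvars{\contract}$ and the symbolic execution of contract functions never writes it. $\txB$ increments $\blocknumvar$ by one in both runs. From $\config_1 \equivfor{\varfixpoint} \tick{\config_1}$ we get $\config_3.\blocknumber = \config_1.\blocknumber + 1 = \tick{\config_3}.\blocknumber$, provided $\blocknumvar \in \varfixpoint$. If $\blocknumvar \notin \varfixpoint$, there is nothing to show.

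\textbf{Variables written by $\htxv$ and observables.} Here I apply Lemma~\ref{lem:fr-correctness-user-tau} to the run $\tick{\config_1} \trans{\txB} \tick{\config_2} \trans[\tick{\hobs}]{\htxv} \tick{\config_3}$. Its hypotheses are $\frCond{\tick{\config_1}}{\htxv}{\varfixpoint}$ and $\round(\tick{\config_1}) = \round(\tick{\config_2})$, and the latter follows from $\round(\tick{\config_1}) = \round(\tick{\config_3})$ because $\htxv$ does not change the block number. For $\tick{\config_1} \trans[\tick{\hobs}']{\htxv} \tick{\config_2}'$, the lemma gives $\tick{\config_3} \equivfor{\WFset} \tick{\config_2}'$ and $\tick{\hobs} = \tick{\hobs}'$, and also $\tick{\config_2} \equivfor{\NWFset} \tick{\config_3}$. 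Lemma~\ref{lem:dep-soundness} with $\config_1 \equivfor{\varfixpoint} \tick{\config_1}$ and $\depclosed{\varfixpoint}$ gives $\tick{\config_2}' \equivfor{\varfixpoint} \config_2$ and $\tick{\hobs}' = \hobs$. Hence $\hobs = \tick{\hobs}$ and $\tick{\config_3} \equivfor{\WFset} \config_2$. Since $\txB$ leaves every contract variable unchanged, $\config_2$ and $\config_3$ agree on contract variables, so $\config_3 \equivfor{\WFset} \tick{\config_3}$. A $\blocknumvar \in \WFset$ case cannot arise, because $\htxv$ never writes $\blocknumvar$.

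\textbf{Remaining variables.} On $\NWcap' := \NWFset \setminus \{\blocknumvar\}$ I chain equalities:
\[ \config_3 \equivfor{\NWcap'} \config_2 \equivfor{\NWcap'} \config_1 \equivfor{\NWcap'} \tick{\config_1} \equivfor{\NWcap'} \tick{\config_2} \equivfor{\NWcap'} \tick{\config_3}. \]
The first and fourth steps hold because $\txB$ leaves contract variables unchanged. The second uses Lemma~\ref{lem:nowrite-outside-writeset} together with Lemma~\ref{lem:writeset-agreement}, which transfers $\WFset$ from $\tick{\config_1}$ to $\config_1$ via agreement on control dependencies, a subset of $\varfixpoint$ by closedness. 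The third is the hypothesis, and the fifth is the $\NWFset$ part of Lemma~\ref{lem:fr-correctness-user-tau}. Combining the three parts yields $\config_3 \equivfor{\varfixpoint} \tick{\config_3}$.

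\textbf{Expected obstacle.} The main difficulty is the reliance on Lemma~\ref{lem:fr-correctness-user-tau}: the behaviour of $\htxv$ after the tick must match its behaviour before it even though $\blocknumvar$ changed. This is exactly where the round-invariant property of $\frCond{\cdot}{\cdot}{\cdot}$ and the $\procTick$ no-write conjuncts added in Algorithm~\ref{algo:fr-app} are used. I also need to check carefully that the round hypotheses are the ones that lemma requires.
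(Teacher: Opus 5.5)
Relative to the paper's Lemma~\ref{lem:fr-correctness-user-tau}, your argument is correct and is the paper's proof spelled out; the paper only says the proof is analogous to Lemma~\ref{lem:com-user-attacker} using Lemma~\ref{lem:fr-correctness-user-tau}. The block-number case, the $\WFset$ case via Lemma~\ref{lem:dep-soundness}, and the chain on the remaining variables are all fine, and your use of Lemma~\ref{lem:writeset-agreement} neatly replaces the paper's transfer argument.

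\textbf{The gap.} The genuine gap is the step you flagged yourself, and neither ingredient you name closes it. The $\procTick$ no-write conjuncts only handle a \emph{data} dependency on $\blocknumvar$, since they collapse the condition to $\bot$. A \emph{control} dependency on $\blocknumvar$ is left to the round-invariant property. But that property transports $\frCondInv$ to later blocks only if $\frCondPre$ held at the \emph{start} of the round. The hypotheses give merely $\frCond{\tick{\config_1}}{\htxv}{\varfixpoint}$, i.e.\ $\frCondInv$ at the current block.

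\textbf{Counterexample.} Let $k = 10$, and let $\procF$, callable only by $\honuser$, record $1$ in its event if $\blocknumvar < 5$ and $2$ otherwise. Closedness puts $\blocknumvar$ into $\varfixpoint$. The frontrunning invariant for the first path amounts to $\blocknumvar < 5$, up to conjuncts that hold trivially here; its round precondition can never hold at a round start, but nothing in the statement asks for it. Take $\config_1 = \tick{\config_1}$ at block $4$. All hypotheses hold, since all six configurations lie in round $0$. Yet $\hobs = [\obsone{\eventvar_{\procF}}{1}]$ while $\tick{\hobs} = [\obsone{\eventvar_{\procF}}{2}]$. So Lemma~\ref{lem:fr-correctness-user-tau}, and with it the statement as written, fails, and the paper's one-line proof has the same hole.

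\textbf{What would repair it.} You need the frontrunning invariant for the paths taken in $\tick{\config_1}$ to still hold after the tick, i.e.\ $\satis{\confToAssi{\tick{\config_2}}{\htxv}{\honuser}}{\frAssSet{\procF}{\confToAssi{\tick{\config_1}}{\htxv}{\honuser}}{\varfixpoint}}$. Simply adding $\frCond{\tick{\config_2}}{\htxv}{\varfixpoint}$ is not enough: in the example, the second path's invariant $\blocknumvar \geq 5$ holds there. The stronger fact follows from $\frCondR{\cdot}{\htxv}{\varfixpoint}$ at the start of the round, via Lemmas~\ref{lem:fr-lift-inv} and~\ref{lem:inv-preservation}, provided only $\txB$ and attacker transactions precede $\tick{\config_2}$. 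That is exactly the situation in which this lemma is invoked inside Lemma~\ref{lem:round-permutation-tx}. With that hypothesis added, determinism of symbolic execution makes $\htxv$ follow the same path in $\tick{\config_2}$ as in $\tick{\config_1}$. The $\procTick$ no-write conjuncts keep $\blocknumvar$ out of its data dependencies, and the rest of your proof goes through unchanged.
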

\begin{proof}
Proof is analogous to that of Lemma~\ref{lem:com-user-attacker}, using Lemma~\ref{lem:fr-correctness-user-tau}.
\end{proof}

\begin{lemma}[Commutativity (Between $\txB$ and attacker transactions)]
    Let $\config_1$, $\tick{\config_1}$, $\config_2$, $\tick{\config_2}$, $\config_3$ and $\tick{\config_3}$ be configurations and $\honuser \in \honusers$ a user.
    Let $\varfixpoint \subseteq \gvars{\contract}$ be a set of variables such that $\criticalvars \subseteq \varfixpoint$ and $\depclosed{\varfixpoint}$.

Further, let $\atxv$ be a transaction such that $\sender{\atxv} \neq \honuser$. 
Assume that $\config_1 \equivfor{\varfixpoint} \tick{\config_1}$ and 
$\round(\config_1) = \round(\config_3)$ and 
$\round(\tick{\config_1}) = \round(\tick{\config_3})$ and
$\brCond{\config_1}{\txB}{\varfixpoint}$, 
$\brCond{\tick{\config_1}}{\txB}{\varfixpoint}$
hold. 

Then
\begin{align*}
  \conf_1 \trans[]{\txB} \conf_2 \trans[\aobs]{\atxv} \conf_3
  ~\land~ 
  \tick{\conf_1} \trans[\tick{\aobs}]{\atxv} \tick{\conf_2} \trans[]{\txB} \tick{\conf_3} \\
  \Rightarrow 
  \aobs = \tick{\aobs}
  ~\land~ 
  \conf_3 \equivfor{\varfixpoint} \tick{\conf_3}
  \end{align*}
for observables $\aobs$, $\tick{\aobs}$.
\label{lem:com-tau-attacker}
\end{lemma}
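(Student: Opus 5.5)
The plan mirrors the proof of Lemma~\ref{lem:com-user-attacker}, with $\txB$ in place of the honest transaction $\htxv$. Since $\txB$ only modifies the block number $\blocknumvar$, its write set on $\varfixpoint$ is $\WFset := \writeset{\config_1}{\txB}{\varfixpoint} \subseteq \{\blocknumvar\} \cap \varfixpoint$. Set $\WGset := \writeset{\config_1}{\atxv}{\varfixpoint}$. I will show $\config_3 \equivfor{\varfixpoint} \tick{\config_3}$ by treating three parts of $\varfixpoint$ separately: $\WGset$, $\WFset$, and $\NWcap := (\NWGset) \cap (\NWFset)$. I will also show $\aobs = \tick{\aobs}$.

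\textbf{Step 1 (attacker side).} By Lemma~\ref{lem:br-correctness-tau-attacker}, whose hypotheses $\brCond{\config_1}{\txB}{\varfixpoint}$ and $\round(\config_1) = \round(\config_2)$ follow from $\round(\config_1)=\round(\config_3)$ and monotonicity of block numbers, the following holds. For $\config_2'$ with $\config_1 \trans[\aobs']{\atxv} \config_2'$ we get $\config_2' \equivfor{\WGset} \config_3$, $\aobs = \aobs'$, and $\config_2 \equivfor{\NWGset} \config_3$. Applying Lemma~\ref{lem:dep-soundness} with $\config_1 \equivfor{\varfixpoint} \tick{\config_1}$ and $\depclosed{\varfixpoint}$ transfers this to the tick run: $\config_2' \equivfor{\WGset} \tick{\config_2}$ and $\aobs' = \tick{\aobs}$. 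Hence $\aobs = \tick{\aobs}$ and $\config_3 \equivfor{\WGset} \tick{\config_2}$.

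\textbf{Step 2 ($\txB$ side).} This part is trivial because $\txB$ touches only $\blocknumvar$. First, $\tick{\config_2} \equivfor{\varfixpoint\setminus\{\blocknumvar\}} \tick{\config_3}$ and $\config_1 \equivfor{\varfixpoint\setminus\{\blocknumvar\}} \config_2$. Second, for the block number itself, $\tick{\config_3}.\blocknumber = \tick{\config_1}.\blocknumber + 1$ and $\config_3.\blocknumber = \config_1.\blocknumber+1$, provided $\atxv$ does not write $\blocknumvar$. I will argue this by the definitions: contract functions only update contract variables, and $\blocknumvar$ is a non-writable global. Together with $\config_1 \equivfor{\varfixpoint} \tick{\config_1}$, the block numbers agree if $\blocknumvar \in \varfixpoint$. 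Note also that $\WGset \cap \WFset = \emptyset$, because $\atxv$ never writes $\blocknumvar$.

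\textbf{Step 3 (assembly).} On $\WGset$, combine $\config_3 \equivfor{\WGset} \tick{\config_2}$ with $\tick{\config_2} \equivfor{\WGset} \tick{\config_3}$; the latter holds since $\WGset \cap \{\blocknumvar\} = \emptyset$. On $\WFset$, i.e. $\blocknumvar$, use the block-number computation from Step 2. On $\NWcap$, chain equalities by transitivity:
\begin{itemize}
\item $\config_3 \equivfor{\NWcap} \config_2$ by Step 1;
\item $\config_2 \equivfor{\NWcap} \config_1$, since $\txB$ leaves non-$\blocknumvar$ variables unchanged;
\item $\config_1 \equivfor{\NWcap} \tick{\config_1}$ by assumption;
\item $\tick{\config_1} \equivfor{\NWcap} \tick{\config_2}$, using Lemma~\ref{lem:nowrite-outside-writeset} on $\atxv$ from $\tick{\config_1}$ together with Lemma~\ref{lem:writeset-agreement}, so that the write set from $\tick{\config_1}$ equals $\WGset$;
\item $\tick{\config_2} \equivfor{\NWcap} \tick{\config_3}$, again by the triviality of $\txB$.
\end{itemize}
The union of the three parts is $\varfixpoint$, exactly as in the final computation of Lemma~\ref{lem:com-user-attacker}.

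\textbf{Main obstacle.} The delicate point is Step 1's transfer across the two runs. In the untick run $\atxv$ executes after $\txB$, i.e. at block number $b+1$; in the tick run it executes at $b$. The backrunning condition $\brCond{\config_1}{\txB}{\varfixpoint}$ is exactly what makes $\atxv$'s effect on $\varfixpoint$ independent of this shift. It does so through its no-write and fixed-path-condition parts for variables depending on $\blocknumvar$, and the same-round hypotheses make the round-invariant semantics applicable. I would discharge this by appealing directly to Lemma~\ref{lem:br-correctness-tau-attacker}, so the remaining work is the bookkeeping of Lemma~\ref{lem:dep-soundness} and Lemma~\ref{lem:writeset-agreement} to move from $\config_1$ to $\tick{\config_1}$.
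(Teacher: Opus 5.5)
Your proposal is correct and takes essentially the paper's route. The paper's proof just says the lemma is analogous to Lemma~\ref{lem:com-user-attacker}, with Lemma~\ref{lem:br-correctness-tau-attacker} supplying the attacker-side equalities, and you carry out that analogy, replacing the frontrunning-correctness and write-set-disjointness steps (left implicit in the paper) with the valid observation that $\txB$ only increments $\blocknumvar$ while contract calls never change it.
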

\begin{proof}
Proof is analogous to that of Lemma~\ref{lem:com-user-attacker}, using Lemma~\ref{lem:br-correctness-tau-attacker}.
\end{proof}

Based on the commutativity lemmas, we can now establish results for permutations between transaction sequences where 
(i) a (single) user transaction may be arbitrarily placed and
(ii) where an attacker may shift the block boundaries.

In the following, we will consider permutations $\perm$ on lists to be functions that apply finite sequences of pairwise swap operations among neighboring elements to a list. 

We will use the notation $\txlist \equaluptotx{x_1, \dots, x_n} \txlist'$ 
to denote that two lists $\txlist$, $\txlist'$ are equal up to the set $S = \{ x_1, \dots, x_n \}$, meaning that 
$\filter{\txlist}{\lambda x.~ x \not \in S} = \filter{\txlist'}{\lambda x.~ x \not \in S}$
where $\filter{\txlist}{P}$ denotes the list $\txlist$ with all elements not satisfying a predicate $P$ removed.

Note that we generally assume that all (faithfully signed) transactions can execute, but may indicate a potential failure with a dedicated observable. So, in particular, the prior execution of another transaction may not impact the executability of a transaction but may only alter its effect. 
This implies in particular that if a transaction sequence $\txlist$ is executable, then also every permutation $\perm(\txlist)$ of the same transaction sequence is executable.

The following lemma states that permuting a transaction sequence $\txlist$ of at most $k$ blocks and containing a single transaction $\htxv$ of the honest user, in a way such that the order of attacker transactions in $\perm(\txlist)$ stays unaffected, will not change the effects of executing this sequence in configurations that satisfy the frontrunning and backrunning conditions (provided that they agree on a set $\varfixpoint$ of variables that contain the event variables and that is closed under dependencies). 

\begin{lemma}[Permutation within rounds (with attacker transactions)]
    \label{lem:round-permutation-tx}
Let $\contract$ be a contract with variables $\cvars{\contract}$ and $\honuser \in \honusers$ be a user.
Let $\varfixpoint \subseteq \gvars{\contract}$ be a set of variables such that $\criticalvars \subseteq \varfixpoint$ and $\depclosed{\varfixpoint}$.
    Let $\conf, \tick{\conf}$ be configurations with $\conf \equivfor{\varfixpoint} \tick{\conf}$ 
    and $\lastBlockNumber(\conf) = \lastBlockNumber(\tick{\conf}) = r * \round(\conf)$ for some $r \in \mathbb{N}$.
    Let $\htxv$ be a transaction such that $\sender{\htxv} = \honuser$
    and $\frCondR{\conf}{\htxv}{\varfixpoint}$,
     $\frCondR{\tick{\conf}}{\htxv}{\varfixpoint}$,
    $\brCondR{\conf}{\htxv}{\varfixpoint}$, 
    $\brCondR{\tick{\conf}}{\htxv}{\varfixpoint}$, \break
    $\brCondR{\conf}{\txB}{\varfixpoint}$, 
    and  $\brCondR{\tick{\conf}}{\txB}{\varfixpoint}$.
   Then it holds that if 
$\conf \ssteps{\obslist}{\txlist} \conf'$
for some lists of observables $\obslist$, configurations $\conf'$ 
and transaction list $\txlist$ with $\length{\filter{\txlist}{\lambda x.~ x = \txB}} < k$ and $\filter{\txlist}{\lambda x.~ \sender{x} = \honuser} = [\htxv]$,
then also
for all permutations $\perm$ 
such that $\perm(\txlist) \equaluptotx{\htxv, \txB} \txlist$
there exist $\obslist'$, $\tick{\conf}'$ such that 
$\tick{\conf} \ssteps{\obslist'}{\perm(\txlist)} \tick{\conf}'$ and 
$\conf' \equivfor{\varfixpoint} \tick{\conf}'$
and $\obslist' = \perm(\obslist)$.
\end{lemma}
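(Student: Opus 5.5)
The plan is to argue by induction on the number of adjacent swaps making up $\perm$. Since $\perm(\txlist) \equaluptotx{\htxv, \txB} \txlist$, the relative order of the attacker transactions is preserved. Hence $\perm$ can be realized as a finite sequence of adjacent swaps, each exchanging two neighbours of one of three kinds: $(\htxv, \atxv)$, $(\htxv, \txB)$ or $(\txB, \atxv)$. A swap of two $\txB$'s is trivial, and no swap ever exchanges two attacker transactions.

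I would first prove a single-swap lemma. Write $\txlist = \txlist_1 \cdot [a, b] \cdot \txlist_2$ and let $\txlist'$ be the same list with $a$ and $b$ swapped. Suppose $\conf \equivfor{\varfixpoint} \tick{\conf}$ and both start in the same round boundary. Then executing $\txlist$ from $\conf$ and $\txlist'$ from $\tick{\conf}$ gives final configurations that agree on $\varfixpoint$, and observable lists that are related by the same swap.

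The single-swap lemma is proved in three stages.

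\begin{enumerate}
\item \textbf{The common prefix $\txlist_1$.} Every step is executed on configurations that agree on $\varfixpoint$. By $\depclosed{\varfixpoint}$ and Lemma~\ref{lem:dep-soundness}, each step produces identical observables and preserves $\equivfor{\varfixpoint}$. A $\txB$ step preserves agreement trivially, because both block numbers start equal.
\item \textbf{The swap point.} I need the instantaneous conditions at the intermediate configurations $\conf_j, \tick{\conf}_j$.
\begin{itemize}
\item If $\htxv$ has not yet been executed, $\txlist_1$ contains only $\txB$ and attacker transactions. Lemma~\ref{lem:frcond-through-rounds} turns $\frCondR{\conf}{\htxv}{\varfixpoint}$ into $\frCond{\conf_j}{\htxv}{\varfixpoint}$.
\item Lemma~\ref{lem:brcond-through-rounds}, which also tolerates $\htxv$ itself in the prefix, gives $\brCond{\conf_j}{\htxv}{\varfixpoint}$. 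The analogous statement with $\txB$ in place of $\htxv$ gives $\brCond{\conf_j}{\txB}{\varfixpoint}$.
\item The round hypotheses $\round(\conf_1)=\round(\conf_3)$ hold because fewer than $k$ $\txB$'s occur and we start at a round boundary.
\end{itemize}
With these conditions I apply Lemma~\ref{lem:com-user-attacker}, Lemma~\ref{lem:com-user-tau} or Lemma~\ref{lem:com-tau-attacker}, according to the kind of pair. Each yields equal observables per transaction and $\varfixpoint$-agreement after the pair. Executability of the permuted sequence is assumed in the paper, since all transactions can execute.
\item \textbf{The common suffix $\txlist_2$.} This is again the dependency-soundness argument of stage 1.
\end{enumerate}

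Chaining the swaps then gives $\conf' \equivfor{\varfixpoint} \tick{\conf}'$. For the observables, each swap permutes the corresponding pair of observables, so the composite is $\obslist' = \perm(\obslist)$.

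The main obstacle is the induction over swaps. After the first swap the intermediate sequence is no longer the original $\txlist$, so I must check that the hypotheses used at the swap point still hold along the permuted sequence. The key point is that they are stated only in terms of the round-start configurations $\conf, \tick{\conf}$ and the $\frCondR$/$\brCondR$ preconditions, which are unchanged. The round lemmas only require that the executed prefix consists of $\txB$, attacker transactions and possibly $\htxv$ within one round. That requirement is invariant under the swaps, because every intermediate list has the same multiset of transactions and the same count of $\txB$'s.

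A subtle point is applying $\frCond{}{}{}$ when $\htxv$ is swapped backwards past an attacker transaction that itself follows $\htxv$. There I need $\frCond{\conf_j}{\htxv}{\varfixpoint}$ at the configuration just before the pair, whose prefix excludes $\htxv$, so Lemma~\ref{lem:frcond-through-rounds} still applies. I would handle this by always taking the instantaneous condition at the configuration immediately preceding the swapped pair.
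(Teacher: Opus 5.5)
Your overall route matches the paper's. You realize $\perm$ by adjacent swaps of kinds $(\htxv,\atxv)$, $(\htxv,\txB)$ and $(\txB,\atxv)$, apply Lemmas~\ref{lem:com-user-attacker}, \ref{lem:com-user-tau} and \ref{lem:com-tau-attacker} at the swap point, and take the instantaneous conditions from the round-persistence lemmas. The step that fails is your claim that ``the analogous statement with $\txB$ in place of $\htxv$'' gives $\brCond{\conf_j}{\txB}{\varfixpoint}$ even when the prefix contains $\htxv$. Your obstacle paragraph relies on this when it says the round lemmas tolerate $\htxv$ in the prefix. The extra tolerance in Lemma~\ref{lem:brcond-through-rounds} is for the honest transaction whose \emph{own} backrunning condition is being propagated, and Algorithm~\ref{algo:br} checks that invariance explicitly. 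For $\txB$, Lemma~\ref{lem:br-lift-inv} with $\procF = \procTick$ only gives block-oblivious invariance of the precondition under $\procTick$ and under calls whose sender differs from $\honuser$. Nothing says that executing $\htxv$ preserves $\brCondR{\conf}{\txB}{\varfixpoint}$. With an arbitrary decomposition this case does arise. Take $\txlist = [\htxv, \txB, \atxv]$ and $\perm(\txlist) = [\htxv, \atxv, \txB]$. The only swap is $(\txB,\atxv)$ behind the prefix $[\htxv]$, and Lemma~\ref{lem:com-tau-attacker} then needs the $\txB$-backrunning condition at a configuration reached through $\htxv$, which you cannot derive.

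The repair is to choose the swap sequence rather than take an arbitrary one.
\begin{enumerate}
\item Bubble $\htxv$ to the end of $\txlist$.
\item Transform the $\htxv$-free remainder into that of $\perm(\txlist)$ using only $(\txB,\atxv)$ swaps. This is possible because the two agree up to $\txB$ and contain equally many $\txB$'s.
\item Bubble $\htxv$ back to its target position.
\end{enumerate}
In every swap of this sequence the common prefix consists only of $\txB$'s and attacker transactions, because $\htxv$ is either in the swapped pair or last. The conditions for $\htxv$ then come from Lemmas~\ref{lem:frcond-through-rounds} and~\ref{lem:brcond-through-rounds}. The condition for $\txB$ comes from Lemma~\ref{lem:inv-preservation} combined with Lemma~\ref{lem:br-lift-inv}. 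No condition is ever needed after $\htxv$ has executed.

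The rest of your argument carries over unchanged: prefix and suffix via Lemma~\ref{lem:dep-soundness}, round bookkeeping with fewer than $k$ $\txB$'s, and chaining by transitivity using that both $\conf$ and $\tick{\conf}$ satisfy the round-start preconditions. The paper's sketch fixes no swap order either, so it implicitly needs the same choice.
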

\begin{proof}
We note that requiring $\perm(\txlist) \equaluptotx{\htxv, \txB} \txlist$ ensures that such permutation $\perm$ only needs to perform pairwise swaps involving $\htxv$ or $\txB$.
Consequently, we prove the lemma by induction on the list of such swap operations performed by $\perm$. 
For each of the pairwise swaps involving $\ctx{\contract}$ or $\txB$, we can easily conclude the existence of the permuted transaction sequence with the desired properties from Lemmas~\ref{lem:com-user-attacker},~\ref{lem:com-user-tau},~\ref{lem:com-tau-attacker}, and relying on the fact that backrunning and frontrunning conditions persist through rounds as given by Lemmas~\ref{lem:frcond-through-rounds} and~\ref{lem:brcond-through-rounds}. 
\end{proof}

\begin{lemma}[Permutation within rounds (with $\txB$)]
    \label{lem:round-permutation-tau}
    Let $\contract$ be a contract with variables $\cvars{\contract}$ and $\honuser \in \honusers$ be a user.
Let $\varfixpoint \subseteq \gvars{\contract}$ be a set of variables such that $\criticalvars \subseteq \varfixpoint$ and $\depclosed{\varfixpoint}$.
    Let $\conf, \tick{\conf}$ be configurations with $\conf \equivfor{\varfixpoint} \tick{\conf}$ 
    and $\lastBlockNumber(\conf) = \lastBlockNumber(\tick{\conf}) = r * \round(\conf)$ for some $r \in \mathbb{N}$.
    Assume that $\brCondR{\conf}{\txB}{\varfixpoint}$, 
    and  $\brCondR{\tick{\conf}}{\txB}{\varfixpoint}$ hold.
   Then it holds that if 
$\conf \ssteps{\obslist}{\txlist} \conf'$
for some lists of observables $\obslist$, configurations $\conf'$ 
and transaction list $\txlist$ with $\length{\txlist} = k$,
then also
for all permutations $\perm$ 
such that $\perm(\txlist) \equaluptotx{\txB} \txlist$
there exist $\obslist'$, $\tick{\conf}'$ such that 
$\tick{\conf} \ssteps{\obslist'}{\perm(\txlist)} \tick{\conf}'$ and 
$\conf' \equivfor{\varfixpoint} \tick{\conf}'$
and $\obslist' = \perm(\obslist)$.
\end{lemma}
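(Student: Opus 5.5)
The plan mirrors the proof of Lemma~\ref{lem:round-permutation-tx}. The condition $\perm(\txlist) \equaluptotx{\txB} \txlist$ says that $\perm$ only moves the $\txB$ transactions relative to the contract transactions. The contract transactions keep their relative order. So $\perm$ can be written as a finite sequence of adjacent swaps, each exchanging a $\txB$ with a neighbouring contract transaction $\txv_j$. Two neighbouring $\txB$s need never be swapped, since such a swap is the identity.

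I proceed by induction on the number of such swaps. The invariant carried along is: the prefix up to the swap position, run from $\conf$ and from $\tick{\conf}$, ends in configurations that agree on $\varfixpoint$, and the observable lists agree up to the permutation applied so far. The base case is the identity permutation. There I apply Lemma~\ref{lem:dep-soundness} transaction by transaction. Because $\depclosed{\varfixpoint}$ holds, each step preserves $\equivfor{\varfixpoint}$ and yields equal observables. For $\txB$, preservation holds because it only increments $\blocknumvar$ and both runs start at the same block number.

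For the inductive step, consider one swap of a $\txB$ with an adjacent contract transaction $\txv_j$ at some intermediate point of the round. The configurations $\conf_1$ and $\tick{\conf_1}$ reached just before the swap agree on $\varfixpoint$. I then apply Lemma~\ref{lem:com-tau-attacker}, in one direction or the other, which requires three things.
\begin{enumerate}
\item $\brCond{\conf_1}{\txB}{\varfixpoint}$ and $\brCond{\tick{\conf_1}}{\txB}{\varfixpoint}$. I obtain these from the assumed $\brCondR{\conf}{\txB}{\varfixpoint}$ and $\brCondR{\tick{\conf}}{\txB}{\varfixpoint}$ via Lemma~\ref{lem:brcond-through-rounds}. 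Its hypotheses hold because the round starts at a block number divisible by $k$ and, before the swap point, only $\txB$ and non-honest contract transactions occur.
\item Round equality, $\round(\conf_1) = \round(\conf_3)$. This follows because $\length{\txlist} = k$, so the list contains fewer than $k$ $\txB$s in total, or at most the last one ends the round. Hence every intermediate configuration up to the swap lies in the starting round.
\item The swapped transaction is attacker-like, i.e.\ $\sender{\txv_j} \neq \honuser$.
\end{enumerate}
The conclusion of the lemma is that the two orders yield equal observables and $\varfixpoint$-equivalent configurations. The suffix after the swap is identical in both runs, so Lemma~\ref{lem:dep-soundness} propagates the equivalence to the end. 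Executability of the permuted sequence follows from the standing assumption that transactions are always executable.

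The main obstacle is the third requirement. The statement places no restriction on the senders in $\txlist$, yet Lemma~\ref{lem:com-tau-attacker} only covers attacker transactions. My first attempt would be to read the lemma as intended for the purely adversarial part of a round, with all contract transactions non-honest. This is the situation in which it is used: the honest transaction has already been handled by Lemma~\ref{lem:round-permutation-tx}, or the simulator knows the mempool is empty. I would make that reading explicit. If an honest $\htxv$ does occur, I would fall back on Lemma~\ref{lem:com-user-tau}. That, however, needs $\frCond{\cdot}{\htxv}{\varfixpoint}$, which is not among the hypotheses here, so I would note the restriction rather than force it.

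A secondary subtlety is the boundary case where the last $\txB$ closes the round. A swap moving a transaction past it could leave the round, so I would check that the round-equality hypotheses of Lemma~\ref{lem:com-tau-attacker} still apply there, or restrict the swaps to exclude it.
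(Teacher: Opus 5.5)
Your proposal follows the same route as the paper's proof. Both argue by induction over adjacent swaps of $\txB$ with contract transactions, discharge each swap with Lemma~\ref{lem:com-tau-attacker}, and obtain $\brCond{\cdot}{\txB}{\varfixpoint}$ at the swap point from Lemma~\ref{lem:brcond-through-rounds}. Your two caveats are left implicit in the paper's proof as well: every contract transaction must have sender $\neq \honuser$, and the round-closing $\txB$ must never be swapped, since $\length{\txlist}=k$ is really meant as $k$ blocks. Both caveats hold in the lemma's only application, the $\confPred$-round with empty mempool, where both lists consist of $k$ complete attacker blocks ending in $\txB$.
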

\begin{proof}
We note that requiring $\perm(\txlist) \equaluptotx{\txB} \txlist$ ensures that such permutation $\perm$ only needs to perform pairwise swaps involving $\txB$.
Consequently, we prove the lemma by induction on the list of such swap operations performed by $\perm$. 
For each of the pairwise swaps involving $\txB$, we can easily conclude the existence of the permuted transaction sequence with the desired properties from Lemma~\ref{lem:com-tau-attacker}, and relying on the fact that backrunning aconditions persist through rounds as given by Lemma~\ref{lem:brcond-through-rounds}. 
\end{proof}

\subsubsection{Simulator}
We formally define a generic simulator used for our soundness proof. 
The general idea behind the simulator is that at the beginning of every round, it publishes the mempool as the first block and then uses this knowledge to reconstruct the attacker's behavior for the respective round. 
To define our simulator, we first define the following function:

$\itattacker{\astrat}{\run}{\mempool}{n} := $ %
\begin{align*}
    \hspace{1.5cm}
    \begin{cases}
        \run & , n = 0 \\
        \itattacker{\astrat}{\run \xrightarrow[]{\bblock}^* \conf}{\mempool'}{n-1} & , n >  0 \\
        &\land~ \bblock = \astrat(\run, \mempool) \\
        & \land~ \conf_{\run} \xrightarrow[]{\bblock}^* \conf \\
        & \land~ \mempool' = \mempool \backslash \bblocks
    \end{cases}
\end{align*}

Intuitively, $\itattacker{\astrat}{\run}{\mempool}{n}$ invokes the attacker strategy $\astrat$ $n$-times with mempool $\mempool$ on run $\run$. 

\begin{lemma}[$\itattackerN$ Correctness]
    Let $\astrat$ be an attacker strategy, $\cstrat = \roundstrat{\rstrat}$ the user strategy for a round strategy $\rstrat$ 
    and $\arun$ a run such that 
    $\length{\getBlocks(\arun)} = r * k$ for some $r \in \mathbb{N}$ and 
    $\combine{\astrat}{\cstrat} \semantics \arun$ and $n \in \mathbb{N}$ such that $n \leq k$.
    Then there exists a run $\tick{\arun} = \itattacker{\astrat}{\run}{\cstrat(\arun)}{n}$ 
    such that $\length{\getBlocks(\tick{\arun})} = n$ and 
    $\combine{\astrat}{\cstrat} \semantics \concat{\arun}{\tick{\arun}}$.
    \label{lem:step-correctness}
\end{lemma}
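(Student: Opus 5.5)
Since $\itattackerN$ is defined by recursion on $n$, I will prove the claim by induction on $n$. I will strengthen the hypothesis to a form that survives the recursion: for every $m \leq n$, the run $\arun_m := \concat{\arun}{\itattacker{\astrat}{\arun}{\cstrat(\arun)}{m}}$ satisfies $\combine{\astrat}{\cstrat} \semantics \arun_m$. It also has exactly $m$ additional blocks, and the mempool argument passed in the $m$-th recursive call equals $\cstrat(\arun_m)$.

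The base case $n = 0$ is immediate. $\itattacker{\astrat}{\arun}{\cstrat(\arun)}{0}$ returns the unextended run, so it contributes zero blocks. The hypothesis $\combine{\astrat}{\cstrat} \semantics \arun$ gives the conformance claim, and the mempool invariant holds trivially.

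For the step $m \to m+1$ with $m+1 \leq k$, I would unfold one recursive call. The current mempool is $\mempool_m = \cstrat(\arun_m)$ by the invariant. The next block is $\bblock = \astrat(\arun_m, \mempool_m) = \astrat(\arun_m, \cstrat(\arun_m))$, and it is a valid transaction sequence ending in $\txB$ by the definition of symbolic scheduling strategies. So there is a configuration $\conf$ with $\conf_{\arun_m} \xrightarrow[]{\bblock}^* \conf$, and the inductive inference rule for $\combine{\astrat}{\cstrat} \semantics \cdot$ applies directly. This yields $\combine{\astrat}{\cstrat} \semantics \arun_{m+1}$, and because exactly one $\txB$ is appended, the block count grows by one. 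The side condition $\txv_i = \txB \leftrightarrow i = j$ comes from the well-formedness of $\astrat$'s output.

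The main obstacle is re-establishing the mempool invariant $\mempool_m \backslash \bblock = \cstrat(\arun_{m+1})$. This is where I need $\length{\getBlocks(\arun)} = r k$ and $m+1 \leq k$. Together they ensure that $\arun_m$ and $\arun_{m+1}$ lie in the same round as $\arun$, that is, $\round(\arun_{m+1}) = \round(\arun)$; reaching block $rk + k$ would only complete the round. With the round unchanged, the definition of $\roundstrat{\rstrat}$ makes $\cstrat(\arun_{m+1})$ equal to $\rstrat(\config_{\arun}) \backslash \getTxs(\run_2)$, where $\run_2$ is the whole suffix generated since the start of the round. This is $(\rstrat(\config_{\arun}) \backslash \getTxs(\arun_m\text{-suffix})) \backslash \bblock = \mempool_m \backslash \bblock$. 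This is exactly the update $\mempool' = \mempool \backslash \bblocks$ performed by $\itattackerN$, modulo reading $\bblocks$ as the block $\bblock$ just produced. Honest transactions in $\bblock$ are drawn from $\mempool_m$, so the set difference is well defined.

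One subtle point still needs care: the equation $\cstrat(\arun_m)$ for $m = k$ is only needed for the call that is not made. For $m = k$ the recursion stops, so I will not need the invariant once the round boundary is crossed.

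The inclusion-time guarantee of $\astrat$ needs no separate argument. It is part of what makes $\astrat$ a valid scheduling strategy, so it is inherited automatically by applying the rule.
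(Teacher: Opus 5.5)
Your proposal is correct and follows the paper's argument. The paper proves the lemma by induction on $n$, using that the round-based strategy $\roundstrat{\rstrat}$ keeps resubmitting $\rstrat(\conf_{\arun})$ minus already-included transactions throughout the round; your explicit invariant that the mempool threaded through $\itattackerN$ equals $\cstrat$ of the current intermediate run (correctly restricted to $m < k$, as you note) just spells out what the paper's one-line proof leaves implicit.
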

\begin{proof}
    Follows by induction on $n$, exploiting the property of round-based user strategy $\cstrat$ to keep on scheduling the same transactions within the same round.
\end{proof}

Next, we define a function that will allow us to reconstruct an attacker run, from a simulator run, when assuming that the simulator published the mempool as the first block in each round. 
{\begin{align*}\small
    &\getAttackrun{\astrat}{\arun}{\srun}{i} := \\
    &
    \begin{cases}
        \arun & ,\length{\getBlocks(\srun)} = 0 \\
        \itattacker{\astrat}{\arun}{\bblocks[0]}{i} & ,\bblocks = \getBlocks(\srun) \\
        & \land \length{\bblocks} > 0 \land \length{\bblocks} < k \\ 
        \getAttackrun{\astrat}{\itattacker{\astrat}{\arun}{\bblocks_1[0]}{k}}{\run_2}{i} & ,\srun = \concat{\run_1}{\run_2} \\
        &  \land \bblocks_1 = \getBlocks(\run_1) \\
        & \land \bblocks_2 = \getBlocks(\run_2) \\
        & \land \length{\bblocks_1} = k \\
        & \confPred(\config_{\run_1}) \\
        \getAttackrun{\astrat}{\itattacker{\astrat}{\arun}{\nil}{k}}{\run_2}{i} & ,\srun = \concat{\run_1}{\run_2} \\
        &  \land \bblocks_1 = \getBlocks(\run_1) \\
        & \land \bblocks_2 = \getBlocks(\run_2) \\
        & \land \length{\bblocks_1} = k \\
        & \neg \confPred(\config_{\run_1})
    \end{cases}
\end{align*}}%
Intuitively, this function processes the simulator run $\srun$ from the left, taking chunks of blocks of size $k$ and produces the corresponding attacker run by using $\itattacker{\cdot}{\cdot}{\cdot}{k}$ to invoke the attacker $k$-times on the mempool extracted from the first block of the simulator run (if the condition $\confPred$ holds) or the empty mempool (if $\confPred$ does not hold) in each chunk.

\begin{lemma}[$\getAttackrunN$ Correctness]
    Let $\astrat$, $\sstrat$ be attacker strategies and $\cstrat = \roundstrat{\rstrat}$ be the user strategy for a round strategy $\rstrat$
    and $\srun$ be a run such that $\combine{\sstrat}{\cstrat} \semantics \srun$
    and $n \in \mathbb{N}$ such that $n \leq k$.
    Let $\confPred$ be a condition on configurations.
    Further, let $\length{\getBlocks(\srun)} = r * k + i$ for some $r, i \in \mathbb{N}$ and 
    $i < k$.
    Then there exists $\arun = \getAttackrun{\astrat}{\arun}{\srun}{n}$
    such that $\combine{\astrat}{\cstrat} \semantics \arun$
    and $\length{\getBlocks(\srun)} = r * k + n$.
    \label{lem:arun-correctness}
\end{lemma}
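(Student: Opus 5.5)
The plan is an induction on the number $r$ of completed rounds in $\srun$, following the recursive definition of $\getAttackrunN$. I read the statement as starting from the initial run, so that $\arun = \getAttackrun{\astrat}{\initconf}{\srun}{n}$, and the conclusion is that this run contains $r \cdot k + n$ blocks.

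\textbf{Strengthened hypothesis.} The induction hypothesis has to be stronger than the statement. To apply Lemma~\ref{lem:step-correctness} to a chunk, the mempool handed to $\itattackerN$ must be exactly the mempool $\cstrat(\arun_j)$ that the real-world user strategy publishes at the start of that round in the attacker run. So I would prove the following additional claims simultaneously.
\begin{enumerate}[(a)]
\item After each completed chunk $j \le r$, the reconstructed run $\arun_j$ satisfies $\combine{\astrat}{\cstrat} \semantics \arun_j$.
\item $\arun_j$ has exactly $j \cdot k$ blocks.
\item The last configuration of $\arun_j$ agrees on $\varfixpoint$ with the configuration that closes the $j$-th round of $\srun$.
\item When the chunk is handled with the mempool-publishing mode ($\confPred$ holds), the first block $\bblocks_j[0]$ of round $j$ in $\srun$ contains exactly $\rstrat$ applied to the round-initial configuration. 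When $\confPred$ fails, $\rstrat$ yields $\emptyset$.
\end{enumerate}

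\textbf{Inductive steps.} The base case, an empty $\srun$, is immediate.

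For the step, suppose the invariant holds up to round $j$. By (c) and condition (1) of Theorem~\ref{theorem:soundness-simplified}, $\rstrat$ returns the same transaction set on the round-initial configurations of both runs. Hence $\cstrat(\arun_j)$ coincides with the mempool the simulator published in $\bblocks_j[0]$, or with $\nil$ in the non-$\confPred$ case. Lemma~\ref{lem:step-correctness} then applies with $n = k$ (or with $n$ for the final incomplete chunk). It yields that $\itattacker{\astrat}{\arun_j}{\cdot}{k}$ adds exactly $k$ blocks and that $\combine{\astrat}{\cstrat} \semantics \concat{\arun_j}{\tick{\arun}}$. This settles (a) and (b) for $j+1$.

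\textbf{Main obstacle: re-establishing (c).} I expect this to be the hardest step. The attacker's round and the simulator's round must be shown to execute a permutation of the same transactions. The simulator's round consists of the published mempool block followed by the same attacker blocks, minus transactions already published. The inclusion-time guarantee $k$ ensures that $\htxv$ appears in both runs within the round.

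I would then apply Lemma~\ref{lem:round-permutation-tx}, which moves $\htxv$ past attacker transactions and $\txB$, and Lemma~\ref{lem:round-permutation-tau}, which shifts block boundaries. Their hypotheses $\frCondR{}{}{}$ and $\brCondR{}{}{}$ come from condition (2) of the theorem, and their $\varfixpoint$-agreement hypothesis from (c) for round $j$. The output is $\varfixpoint$-agreement at the end of round $j+1$. If $\confPred$ fails, no honest transaction exists and the two rounds coincide verbatim.

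\textbf{Final assembly.} The last, incomplete chunk of $i < k$ simulator blocks is dispatched by the second clause of $\getAttackrunN$, again through Lemma~\ref{lem:step-correctness} with parameter $n$. This gives the required block count $r \cdot k + n$ together with conformance.
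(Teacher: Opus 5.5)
\textbf{The proposal does not prove the lemma as stated.} Your invariants (c) and (d) depend on assumptions the lemma does not grant. So does the step ``$\cstrat(\arun_j)$ coincides with the mempool the simulator published in $\bblocks_j[0]$''. You need $\sstrat$ to be the generic simulator $\gensstrat$, $\confPred$ to be the specific predicate of Theorem~\ref{theorem:soundness-simplified}, $\rstrat$ to satisfy that theorem's conditions (1) and (2), and a dependency-closed $\varfixpoint$. The lemma quantifies over an arbitrary $\sstrat$, an arbitrary $\confPred$ and an arbitrary round strategy.

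What you actually prove is the $\varfixpoint$-agreement induction of Lemma~\ref{lem:soundness}, run from $\srun$ towards $\arun$. That is the lemma which, in the paper, \emph{invokes} this one.

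Your diagnosis is still correct, and sharper than the paper's. Lemma~\ref{lem:step-correctness} only covers $\itattackerN$ applied to the true mempool $\cstrat(\arun_j)$. For arbitrary $\sstrat$, nothing ties $\bblocks_j[0]$ or $\nil$ to that mempool. In fact the general statement is false. Take $k \geq 2$ and $\rstrat(\initconf) = \{\htxv\}$. Let $\sstrat$ open with the block $[\txB]$, which is allowed because inclusion is only forced after $k$ blocks. Let $\astrat$ include $\htxv$ whenever it sees it. Then $\getAttackrun{\astrat}{\initconf}{\srun}{1}$ queries $\astrat$ on $[\txB]$ and returns a block without $\htxv$. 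But every run conforming to $\combine{\astrat}{\cstrat}$ begins with $\astrat(\initconf, [\htxv])$.

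The paper's proof is a size induction on $\length{\getBlocks(\srun)}$: it peels off $k$-block chunks and applies Lemma~\ref{lem:step-correctness} with the induction hypothesis. It silently assumes exactly this mempool match.

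The repair that keeps the paper's structure is to make the match an explicit hypothesis: at every round start, the mempool that $\getAttackrunN$ extracts equals $\cstrat$ of the run reconstructed so far. The chunk induction then needs no $\varfixpoint$ or permutation reasoning. The premise is discharged inside Lemma~\ref{lem:soundness}, whose induction hypothesis already carries $\conf_{\arun} \equivfor{\varfixpoint} \conf_{\srun}$. The alternative is to state your strengthened lemma with those hypotheses made explicit.

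\textbf{The final assembly fails even under your extra hypotheses.}
\begin{itemize}
\item \emph{The case $i = 0$.} Your reading of the block count as referring to $\arun$ is the right one. But if $i = 0$ there is no incomplete chunk. The recursion ends in the first clause of $\getAttackrunN$ and returns $r \cdot k$ blocks, so the count $r \cdot k + n$ is false for every $n > 0$. The statement must force $n = 0$ when $i = 0$.
\item \emph{The incomplete-chunk clause ignores $\confPred$.} It always passes $\bblocks[0]$ to $\itattackerN$. In a trailing $\neg\confPred$ round, that block is an attacker block replayed by $\gensstrat$, not $\nil = \cstrat(\arun_r)$. So your (d) gives nothing there, and Lemma~\ref{lem:step-correctness} does not apply.
\item \emph{The full-chunk clauses test $\confPred$ at the wrong point.} They test $\confPred(\conf_{\run_1})$, which is evaluated at the \emph{end} of the chunk. $\gensstrat$ chooses its mode at the start of the round. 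Your (d) silently identifies the two, but they can disagree, and then $\getAttackrunN$ feeds $\astrat$ the wrong mempool.
\end{itemize}
These are defects in the definition of $\getAttackrunN$ and in the statement. No strengthening of the invariant absorbs them; they must be fixed explicitly before your induction closes.
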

\begin{proof}
    Follows by size induction on the length of $\length{\getBlocks(\srun)}$, distinguishing the cases 
    $\length{\getBlocks(\srun)} < k$ and $\length{\getBlocks(\srun)} = k * r + j$ for $r,j \in \mathbb{N}$ for $j < k$.
    For the individual cases, we use Lemma~\ref{lem:step-correctness} in addition to the inductive hypothesis to conclude the proof.
\end{proof}

With these functions in place, we can now define the simulator: 

{
\begin{align*}
    \gensstrat(\srun, \mempool)
    := \begin{cases}
    \mempool & , \length{\getBlocks(\srun)} \bmod{k} = 0 ~\land~ \confPred(\srun) \\
    \bblock & 
    , \srun = \concat{\srunpre}{\srunpost} \land \confPred(\srunpre) \\
    & \land \exists r \in \mathbb{N}. \length{\getBlocks(\srunpre)} = k*r \\
    & \land \getBlocks(\srunpost) = [\amempool{0}]\\
    & \land \arun = \getAttackrun{\astrat}{\initconf}{\srun}{2} \\
    & \land \concat{\ablockspre}{\concat{[\ablock{0}]}{[\ablock{1}]}} =  \getBlocks(\arun) \\
    & \land \bblock = \ablock{0} \backslash \amempool{0} \cdot \ablock{1} \backslash \amempool{0} \\
    \bblock & 
    , \srun = \concat{\srunpre}{\srunpost} \land  ~\land~ \confPred(\srunpre)\\
    & \land \exists r \in \mathbb{N}. \length{\getBlocks(\srunpre)} = k*r \\
    & \land \length{\getBlocks(\srunpost)} = i \land i < k \land i > 1 \\  
    & \land \concat{[\amempool{0}]}{\blocks} = \getBlocks(\srunpost) \\
    & \land \arun = \getAttackrun{\astrat}{\initconf}{\srun}{i+1} \\
    & \land \concat{\ablockspre}{[\ablock{0}]} = \getBlocks(\arun) \\
    & \land \bblock = \ablock{0} \backslash \amempool{0} \\
    \bblock&, 
    \srun = \concat{\srunpre}{\srunpost} \land  ~\land~ \neg \confPred(\srunpre)\\
    & \land \exists r \in \mathbb{N}. \length{\getBlocks(\srunpre)} = k*r \\
    & \land \length{\getBlocks(\srunpost)} = i \land i < k  \\ 
    & \land \arun = \getAttackrun{\astrat}{\initconf}{\srun}{i+1} \\
    & \land \concat{\ablockspre}{[\bblock]} = \getBlocks(\arun) \\
    \end{cases}
    \end{align*}
}

Here $\initconf$ denotes the initial configuration of all runs (representing a fresh state).
Note that $\getAttackrun{\astrat}{\initconf}{\srun}{j}$ simulates the attacker behavior for all complete rounds and additional $j$ steps. 
For this reason, the function is invoked with $j=2$ in the second case (to produce two attacker blocks for the newly started round) and $j = i+1$ in the third case (to simulate the attacker block once ahead of the current simulated run $\srun$).

Also, note that it is evident that the simulator satisfies the substitution criterion, since its definition does not make any adaptive use of the mempool.

The simulator satisfies the key property that for a round-based user strategy $\cstrat$, within one round, it executes the same transactions as the attacker, with the only difference that the (single) user transaction scheduled by $\cstrat$ may occur in a different position. 
All other transactions are executed in the same order. 

We capture this behavior with the following core property:

\begin{lemma}[Correctness of $\gensstrat$ (in $\confPred$ rounds)]
    Let $\honuser$ be an honest user and $\cstrat = \roundstrat{\rstrat}$ be the user strategy for a round strategy $\rstrat$ (for $\honuser$) and let $\astrat$ be an attacker strategy. 
    Let $\confPred$ be a condition on configurations.
    Let $k > 1$, $\txv$ be a transaction and $\arunpre$ and $\arun$ be an attacker run
    such that $\combine{\astrat}{\cstrat} \semantics \concat{\arunpre}{\arun}$
    and $\length{\getBlocks(\arunpre)} = r* k$ for some $r \in \mathbb{N}$
    and $\length{\getBlocks(\arun)} = i$ for $i \in \mathbb{N}$ with $i \leq k$.
    Let $\arun = \conf_{\arunpre} \ssteps{}{\txlist} \aconf$ for some configuration $\aconf$ and transaction list $\txlist$.
    Further, let $\srunpre$ be a run such that 
    $\combine{\gensstrat}{\cstrat} \semantics \srunpre$,
    and $\length{\getBlocks(\srunpre)} = r * k$
    and $\arunpre = \getAttackrun{\astrat}{\initconf}{\srunpre}{0}$
    and $\rstrat(\config_{\arunpre}) = \rstrat(\config_{\srunpre}) = \{ \txv \}$
    and $\confPred(\srunpre)$.
    Then there exists a run 
    \[
        \srun = \conf_{\srunpre} \ssteps{}{\txlist'} \sconf 
    \]
    for some $\sconf$
    and transaction list $\txlist'$
    such that $\combine{\gensstrat}{\cstrat} \semantics \concat{\srunpre}{\srun}$, 
    $\length{\getBlocks(\srun)} = i$, 
    $\txlist \equaluptotx{\txv, \txB} \txlist'$
    and $\arun = \getAttackrun{\astrat}{\initconf}{\srun}{i \bmod k}$ and
    if $i > 0$ then $\getBlocks(\srun)[0] = [\txv]$.
    \label{lem:sim-correctness}
\end{lemma}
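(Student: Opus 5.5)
I would prove the statement by induction on the number $i$ of blocks of the attacker's partial round $\arun$, building $\srun$ block by block and keeping an invariant. The invariant at stage $i$ has four parts. First, $\combine{\gensstrat}{\cstrat} \semantics \concat{\srunpre}{\srun}$ with $\length{\getBlocks(\srun)} = i$. Second, the attacker run can be recovered from the simulator run, $\arun = \getAttackrun{\astrat}{\initconf}{\srun}{i \bmod k}$; here $\getAttackrun{\cdot}{\cdot}{\cdot}{\cdot}$ is read relative to the prefix $\srunpre$, using $\arunpre = \getAttackrun{\astrat}{\initconf}{\srunpre}{0}$. Third, the transaction lists satisfy $\txlist \equaluptotx{\txv, \txB} \txlist'$ restricted to the transactions processed so far. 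Fourth, there is a bookkeeping fact about which part of the mempool $\amempool{0} = [\txv]$ the attacker has already included.

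\textbf{Base and first block.} For $i = 0$ take $\srun$ to be the empty extension; all claims hold trivially. For $i = 1$, note that $\length{\getBlocks(\srunpre)} = r k$ and $\confPred(\srunpre)$ hold. The first case of $\gensstrat$ therefore fires and publishes the current mempool. Since $\cstrat = \roundstrat{\rstrat}$ and $\rstrat(\config_{\srunpre}) = \{\txv\}$, the definition of round-based strategies gives that this mempool is exactly $[\txv]$. Hence $\getBlocks(\srun)[0] = [\txv]$, which is the last claim of the lemma. For the recovery claim I unfold $\getAttackrunN$ on $\concat{\srunpre}{\srun}$. The complete rounds reproduce $\arunpre$ (by hypothesis). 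The partial round calls $\itattacker{\astrat}{\arunpre}{[\txv]}{1}$. On the attacker side, $\cstrat(\arunpre) = \rstrat(\config_{\arunpre}) = \{\txv\}$, so the attacker's real first block $\ablock{0}$ is computed from that same mempool. Lemma~\ref{lem:step-correctness} then identifies the two.

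\textbf{Inductive step.} Suppose $1 \le i < k$ and the claims hold for $i$; I add the next block. The second or third case of $\gensstrat$ applies, depending on whether $i = 1$ or $i > 1$. That case calls $\getAttackrun{\astrat}{\initconf}{\srun}{i+1}$, and by the invariant this returns $\arun$ extended by one more attacker block. I take that block to be the attacker's $(i{+}1)$-st block, and justify this with Lemma~\ref{lem:arun-correctness} together with determinism of $\astrat$. For the mempool, the attacker at step $j$ sees $[\txv]$ minus the blocks it has already produced: round-based strategies keep resubmitting $\txv$ until it is included. The simulator reconstructs exactly this through $\mempool' = \mempool \backslash \bblocks$ in $\itattackerN$. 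The simulator then emits the attacker block with $\amempool{0}$ removed (in the $i=1$ case it emits $\ablock{0}\backslash\amempool{0}\cdot\ablock{1}\backslash\amempool{0}$). Since $\amempool{0} = [\txv]$, the simulator's transactions equal the attacker's except for the position of $\txv$ and the block boundaries. This preserves $\txlist \equaluptotx{\txv, \txB} \txlist'$. The run is well formed for two reasons: every simulator block ends with $\txB$, and executability is preserved under permutation, as assumed in the paper.

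\textbf{Main obstacle.} The hard part is the block-count and offset bookkeeping in the $i=1$ case. There the simulator's first block is $[\txv]$ alone, so the simulator is one block "ahead" in consuming the mempool but one block "behind" in attacker blocks. Its second block must therefore merge $\ablock{0}$ and $\ablock{1}$ correctly, up to $\txB$ placement. To keep this clean I would first show that $\txv$ occurs exactly once in $\txlist$. This follows from the uniqueness of honest transactions in scheduling strategies together with the $k$-inclusion guarantee. With that fact, the filtering with $\backslash\amempool{0}$ removes precisely this one occurrence. I would then check, by counting, that the number of $\txB$ transactions in $\txlist'$ equals $i$, which gives $\length{\getBlocks(\srun)} = i$.
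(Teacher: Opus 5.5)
Your route is essentially the paper's: induction on $i$, where the first block of the round is the published mempool $[\txv]$, the second block replays $\ablock{0}$ and $\ablock{1}$ with $\amempool{0}$ removed, and each later block replays one attacker block recovered through $\getAttackrunN$.

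\textbf{The gap at $i=1$.} This is exactly the lag you describe under ``main obstacle''. After one block, $\txlist = \ablock{0}$, which may contain attacker-crafted transactions. The first case of $\gensstrat$, however, forces $\txlist'$ to consist of $\txv$ (and $\txB$) only, and since $\gensstrat$ is a function you have no freedom in choosing $\srun$. Filtering out $\txv$ and $\txB$ therefore leaves the attacker transactions of $\ablock{0}$ on one side and the empty list on the other. So $\txlist \equaluptotx{\txv, \txB} \txlist'$ is false at $i=1$ whenever $\ablock{0}$ contains an attacker transaction.

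This has two consequences for your argument:
\begin{itemize}
\item Your third invariant does not hold at stage $1$.
\item The step from $i=1$ to $i=2$ cannot be argued as preservation, because there the simulator appends two attacker blocks while the attacker appends one.
\end{itemize}

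\textbf{The fix.} At $i=1$, either drop that conjunct or record the one-block lag explicitly in the invariant. Then establish $i=2$ directly from the second case of $\gensstrat$. From there your block-by-block argument goes through for the steps with $2 \le i < k$. The paper's proof also asserts the $i=1$ case ``immediately'' and treats $i=2$ as a separate base case, so the defect lies in the statement itself. It is harmless downstream, because Lemma~\ref{lem:soundness} only applies this lemma with $i = k \geq 2$.

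\textbf{A smaller point.} Your auxiliary claim that $\txv$ occurs \emph{exactly} once in $\txlist$ is false for partial rounds. The inclusion guarantee only forces $\txv$ into a block after it has been pending for $k$ blocks, so for $i<k$ the attacker may withhold it entirely. Uniqueness of honest transactions gives ``at most once'', and that is all you need. Both $\backslash\amempool{0}$ and the filter in $\equaluptotx{\txv, \txB}$ discard $\txv$ whether or not it occurs, so you should replace that step rather than try to prove it.
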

\begin{proof}
This proof is carried out by induction on $i \in \mathbb{N}$.
The case $i = 0$ follows trivially from the assumption by choosing $\srun = \conf_{\srunpre}$.
In the case $i = 1$, we can choose $\srun = \conf_{\srunpre} \ssteps{}{[\txv]} \sconf$ for a corresponding $\sconf$ (which is guaranteed to exist since $\txv$ is the output of $\cstrat$) to derive the desired properties immediately.
In the case $i = 2$, we can choose $\srun = \conf_{\srunpre} \ssteps{}{[\txv]} \sconf \ssteps{}{\bblock} \sconf'$ for $\bblock$ as defined in the simulator definition. The claim follows then immediately from the definition of the simulator. 
In the case $i > 2$ the reasoning follows immediately from the inductive hypothesis, and the definitions of $\gensstrat$ and $\getAttackrunN$.
\end{proof}

\begin{lemma}[Correctness of $\gensstrat$ (in $\neg \confPred$ rounds)]
    Let $\honuser$ be an honest user and $\cstrat = \roundstrat{\rstrat}$ be the user strategy for a round strategy $\rstrat$ (for $\honuser$) and let $\astrat$ be an attacker strategy. 
    Let $\confPred$ be a condition on configurations.
    Let $k > 1$, $\txv$ be a transaction and $\arunpre$ and $\arun$ be an attacker run
    such that $\combine{\astrat}{\cstrat} \semantics \concat{\arunpre}{\arun}$
    and $\length{\getBlocks(\arunpre)} = r* k$ for some $r \in \mathbb{N}$
    and $\length{\getBlocks(\arun)} = i$ for $i \in \mathbb{N}$ with $i \leq k$.
    Let $\arun = \conf_{\arunpre} \ssteps{}{\txlist} \aconf$ for some configuration $\aconf$ and transaction list $\txlist$.
    Further, let $\srunpre$ be a run such that 
    $\combine{\gensstrat}{\cstrat} \semantics \srunpre$,
    and $\length{\getBlocks(\srunpre)} = r * k$
    and $\arunpre = \getAttackrun{\astrat}{\initconf}{\srunpre}{0}$
    and $\rstrat(\config_{\arunpre}) = \rstrat(\config_{\srunpre}) = \emptyset$
    and $\neg \confPred(\srunpre)$.
    Then there exists a run 
    \[
        \srun = \conf_{\srunpre} \ssteps{}{\txlist} \sconf 
    \]
    for some $\sconf$
    such that $\combine{\gensstrat}{\cstrat} \semantics \concat{\srunpre}{\srun}$, 
    $\length{\getBlocks(\srun)} = i$, 
    and 
    \break $\arun = \getAttackrun{\astrat}{\initconf}{\srun}{i \bmod k}$.
    \label{lem:sim-correctness-not-conf-pred}
\end{lemma}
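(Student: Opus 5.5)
I would prove this by induction on $i$, mirroring the proof of Lemma~\ref{lem:sim-correctness}. The $\neg\confPred$ case is simpler: the simulator never publishes the mempool as a first block. It only replays the attacker on the empty mempool, so the simulated run should be literally the same transaction list $\txlist$, not merely a permutation.

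\textbf{Base case.} For $i = 0$, I take $\srun = \conf_{\srunpre}$. The conformance $\combine{\gensstrat}{\cstrat} \semantics \concat{\srunpre}{\srun}$ is the hypothesis on $\srunpre$. The equality $\arun = \getAttackrun{\astrat}{\initconf}{\srun}{0}$ is read off the assumption $\arunpre = \getAttackrun{\astrat}{\initconf}{\srunpre}{0}$, since both runs are empty extensions.

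\textbf{Inductive step.} Assume the claim for $i < k$, and let $\arun$ have $i+1$ blocks, with last block $\bblock = \astrat(\arun_i, \cstrat(\arun_i))$.

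\emph{Emptiness of the mempool.} First I show that $\cstrat(\arun_i) = \emptyset$. By the definition of $\roundstrat{\rstrat}$, the mempool inside a round consists of $\rstrat(\config_{\arunpre})$ minus the transactions already included. This set is empty because $\rstrat(\config_{\arunpre}) = \emptyset$. For the same reason, $\cstrat$ yields the empty mempool at every prefix of the simulated round, since $\rstrat(\config_{\srunpre}) = \emptyset$.

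\emph{Identifying the simulator's block.} By the inductive hypothesis there is $\srun_i$ with the same transaction list $\txlist_i$ and $\arun_i = \getAttackrun{\astrat}{\initconf}{\srun_i}{i}$. Because $\neg\confPred(\srunpre)$ holds, the simulator on $\concat{\srunpre}{\srun_i}$ falls into the fourth case of its definition. There it computes $\getAttackrun{\astrat}{\initconf}{\concat{\srunpre}{\srun_i}}{i+1}$ and outputs the last block of that run. Unfolding $\getAttackrunN$ gives two things:
\begin{itemize}
\item on the completed rounds of $\srunpre$, it returns $\arunpre$ by assumption;
\item on the current partial chunk, it calls $\itattacker{\astrat}{\arunpre}{\nil}{i+1}$, because this round is a $\neg\confPred$ round and so the $\nil$ mempool is used.
\end{itemize}
Since the real mempools are also empty, Lemma~\ref{lem:step-correctness} identifies this call with $\arun_i$ extended by $\astrat(\arun_i, \nil) = \bblock$. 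So the simulator outputs exactly $\bblock$, and appending it gives $\srun_{i+1}$ with transaction list $\txlist$.

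\emph{Conformance and the remaining claim.} It remains to check that $\combine{\gensstrat}{\cstrat} \semantics \concat{\srunpre}{\srun_{i+1}}$, i.e.\ that $\bblock$ is a valid scheduling output:
\begin{itemize}
\item it is executable, since the configurations agree: both runs apply the same transactions from the same starting configuration, as $\config_{\srunpre} = \config_{\arunpre}$ by construction of $\getAttackrunN$;
\item the inclusion guarantee is vacuous, because the mempool is empty;
\item knowledge derivability holds, because no honest terms appear.
\end{itemize}
The identity $\arun = \getAttackrun{\astrat}{\initconf}{\srun}{(i+1) \bmod k}$ then follows by unfolding $\getAttackrunN$ once more. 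In the case $i+1 = k$, it follows via the full-chunk clause with $\neg\confPred$.

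\textbf{Main obstacle.} The delicate point is the bookkeeping in $\getAttackrunN$. I need the recursion to decompose $\concat{\srunpre}{\srun_i}$ at the round boundary of $\srunpre$. I also need the $\confPred$ test to be evaluated on $\config_{\srunpre}$ for that chunk, so that the $\nil$ branch is taken.

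I would address this with a small auxiliary claim, proved by induction on $r$: $\getAttackrun{\astrat}{\initconf}{\concat{\srunpre}{\run}}{j} = \itattacker{\astrat}{\getAttackrun{\astrat}{\initconf}{\srunpre}{0}}{\nil}{j}$ whenever $\run$ has fewer than $k$ blocks in a $\neg\confPred$ round. With this claim in hand, everything else is direct unfolding of definitions.
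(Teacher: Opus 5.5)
Your skeleton is exactly the paper's proof: induction on $i$, the base case $\srun = \conf_{\srunpre}$, and the inductive step by unfolding the fourth clause of $\gensstrat$ and $\getAttackrunN$. The paper gives nothing beyond that. However, two of the details you add are wrong as stated.

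\emph{Configuration equality.} Your claim $\config_{\srunpre} = \config_{\arunpre}$ does not follow from $\arunpre = \getAttackrun{\astrat}{\initconf}{\srunpre}{0}$, and it is false in general. Consider any earlier round in which $\confPred$ held and $\rstrat$ scheduled a transaction. There $\gensstrat$ placed that transaction in the round's first block, while $\astrat$ may have placed it anywhere. So the two prefixes differ by in-round permutations, and their final configurations need not coincide. Lemma~\ref{lem:soundness} only ever obtains $\equivfor{\varfixpoint}$, and it needs the interaction conditions and the permutation lemmas even for that.

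You do not need this claim. Executability of $\txlist$ from $\conf_{\srunpre}$ follows from the paper's standing assumption that every transaction executes, with failures visible only in observables.

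Similarly, knowledge derivability does not hold ``because no honest terms appear''. The transactions of $\astrat$ may reuse honest terms revealed in earlier rounds, and these must be available in the simulator's own prefix. That rests on the invariant carried in Lemma~\ref{lem:soundness}: round-wise agreement of transactions and observables, and $\equivfor{\varfixpoint}$ at round boundaries. It does not follow from this lemma's hypotheses. The paper omits this check entirely.

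\emph{The auxiliary claim.} Your auxiliary claim is false for $\getAttackrunN$ as literally defined.
\begin{itemize}
\item For an empty remainder, the first clause returns the accumulated run and ignores $j$.
\item For a non-empty partial chunk, the second clause calls $\itattacker{\astrat}{\cdot}{\bblocks[0]}{j}$. It feeds the simulator's own first block of the round as the mempool, regardless of $\confPred$.
\item The full-chunk clauses test $\confPred(\config_{\run_1})$, which is the configuration at the end of the chunk, not at its start.
\end{itemize}
Read literally, the fourth clause of $\gensstrat$ at $i = 0$ outputs the last block of $\arunpre$, and it is undefined for $r = 0$. So the lemma itself already fails at $i = 1$.

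Your step, like the paper's ``follows from the definitions'', is valid only for the evidently intended $\getAttackrunN$. That version uses $\nil$ as the mempool on the partial chunk of a $\neg\confPred$ round, makes $j$ attacker invocations even when that chunk is empty, and tests $\confPred$ at the chunk's initial configuration. State this correction explicitly rather than presenting it as an unfolding. Once you do, your auxiliary claim is a routine induction on $r$, generalised over the accumulator run, and the rest of your argument goes through.
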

\begin{proof}
This proof is carried out by induction on $i \in \mathbb{N}$.
The case $i = 0$ follows trivially from the assumption by choosing $\srun = \conf_{\srunpre}$.
For $i > 0$, the claim follows immediately from the inductive hypothesis, and the definitions of $\gensstrat$ and $\getAttackrunN$.
\end{proof}

\begin{lemma}[Correctness of $\gensstrat$ (in $\confPred$ rounds, no transaction)]
    Let $\honuser$ be an honest user and $\cstrat = \roundstrat{\rstrat}$ be the user strategy for a round strategy $\rstrat$ (for $\honuser$) and let $\astrat$ be an attacker strategy. 
    Let $\confPred$ be a condition on configurations.
    Let $k > 1$, $\txv$ be a transaction and $\arunpre$ and $\arun$ be an attacker run
    such that $\combine{\astrat}{\cstrat} \semantics \concat{\arunpre}{\arun}$
    and $\length{\getBlocks(\arunpre)} = r* k$ for some $r \in \mathbb{N}$
    and $\length{\getBlocks(\arun)} = i$ for $i \in \mathbb{N}$ with $i \leq k$.
    Let $\arun = \conf_{\arunpre} \ssteps{}{\txlist} \aconf$ for some configuration $\aconf$ and transaction list $\txlist$.
    Further, let $\srunpre$ be a run such that 
    $\combine{\gensstrat}{\cstrat} \semantics \srunpre$,
    and $\length{\getBlocks(\srunpre)} = r * k$
    and $\arunpre = \getAttackrun{\astrat}{\initconf}{\srunpre}{0}$
    and $\rstrat(\config_{\arunpre}) = \rstrat(\config_{\srunpre}) = \emptyset$
    and $\confPred(\srunpre)$.
    Then there exists a run 
    \[
        \srun = \conf_{\srunpre} \ssteps{}{\txlist'} \sconf 
    \]
    for some $\sconf$
    and transaction list $\txlist'$
    such that $\combine{\gensstrat}{\cstrat} \semantics \concat{\srunpre}{\srun}$, 
     $\length{\getBlocks(\srun)} = i$, 
    $\txlist \equaluptotx{\txB} \txlist'$
    and 
    \break $\arun = \getAttackrun{\astrat}{\initconf}{\srun}{i \bmod k}$.
    \label{lem:sim-correctness-not-conf-pred-no-tx}
\end{lemma}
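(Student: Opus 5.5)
The proof goes by induction on $i \in \mathbb{N}$, $i \leq k$, following Lemmas~\ref{lem:sim-correctness} and~\ref{lem:sim-correctness-not-conf-pred}. Since $\rstrat(\config_{\srunpre}) = \emptyset$, the round-based strategy $\cstrat = \roundstrat{\rstrat}$ submits nothing during the whole round. So for every prefix of the current round, in both the attacker world and the simulator world, $\cstrat$ returns the empty mempool $\nil$.

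The base case $i = 0$ is immediate: choose $\srun = \conf_{\srunpre}$. Then $\txlist = \txlist' = []$, $\combine{\gensstrat}{\cstrat} \semantics \srunpre$ holds by assumption, and $\getAttackrun{\astrat}{\initconf}{\srunpre}{0} = \arunpre$ holds by hypothesis.

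For the first block I use that $\confPred(\srunpre)$ holds and $\length{\getBlocks(\srunpre)} = r k$. Hence the first case of $\gensstrat$ fires and publishes the (empty) mempool, so the first simulated block is just the block-finalizing transaction $\txB$.

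For the subsequent blocks I unfold the second and third cases of $\gensstrat$. These obtain $\arun$ via $\getAttackrun{\astrat}{\initconf}{\srun}{\cdot}$, which in $\confPred$ rounds calls $\itattacker{\astrat}{\cdot}{\bblocks[0]}{\cdot}$ with $\bblocks[0]$ the empty first block. By Lemma~\ref{lem:step-correctness} and Lemma~\ref{lem:arun-correctness}, these reconstructed attacker blocks coincide with the actual blocks of $\arun$. The reason is that $\astrat$ is queried on the same runs with the same (empty) mempools that $\cstrat$ produced in the real world. Moreover, the removal $\cdot \backslash \amempool{0}$ removes nothing, because $\amempool{0}$ contains no user transactions. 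The simulated blocks are therefore exactly the attacker blocks, possibly regrouped: the two attacker blocks $\ablock{0}$ and $\ablock{1}$ are concatenated into one simulator block, and the empty first block contributes only a $\txB$.

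In the inductive step I will show that appending the next simulator block preserves $\txlist \equaluptotx{\txB} \txlist'$. The added non-$\txB$ transactions are exactly those of the next attacker block, as computed by $\getAttackrunN$ with index $i+1$. Equality of the $\getAttackrunN$ reconstruction with $\arun$ then follows as in Lemma~\ref{lem:sim-correctness-not-conf-pred}. Validity of $\combine{\gensstrat}{\cstrat} \semantics \concat{\srunpre}{\srun}$ holds because every block is a $\gensstrat$ output on a mempool produced by $\cstrat$. All the block contents are executable, since per the paper's convention executability is preserved by reordering.

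I expect the main obstacle to be the index alignment between the simulator's block count and the attacker's block count. The first simulator block carries no attacker content, while the second carries two attacker blocks. I would therefore state the invariant explicitly as ``the non-$\txB$ transactions of the first $i$ simulator blocks equal the non-$\txB$ transactions of the first $i$ attacker blocks.'' The intended justification is that the empty mempool block and the doubled block compensate each other, and I would verify this by a careful unfolding of the $i = 1$ and $i = 2$ cases before the general step. This invariant is stated only up to $\txB$, which is why the lemma uses $\equaluptotx{\txB}$ and not literal equality. If the $i = 1$ case does not balance directly, I would strengthen the hypothesis to track that the first $\astrat$ block in a user-free round can be identified with the simulator's regrouped output, and then re-check that claim against the definition of $\gensstrat$.
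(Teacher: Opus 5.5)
Your skeleton (induction on $i$, base case $\srun = \conf_{\srunpre}$, first block from the first case of $\gensstrat$, later blocks from the other cases plus determinism of $\astrat$ on the empty mempool) matches the paper's one-line induction. The problem is the point you flag as the ``main obstacle.'' It is a genuine failure, and neither careful unfolding nor a stronger induction hypothesis can fix it, because the lemma's conclusion is itself false at $i = 1$.

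Since $\gensstrat$ and $\cstrat$ are functions, the one-block extension $\srun$ is forced. We have $\length{\getBlocks(\srunpre)} \bmod k = 0$, $\confPred(\srunpre)$ and $\cstrat(\srunpre) = \nil$, so the first case yields $\txlist' = [\txB]$. The attacker's first block of the round, however, is $\txlist = \ablock{0} = \astrat(\arunpre, \nil)$. Nothing prevents $\astrat$ from placing its own crafted calls there; take, e.g.\ an attacker that inserts one of its own contract calls into every block. Then $\txlist \equaluptotx{\txB} \txlist'$ fails. So does your invariant ``the non-$\txB$ transactions of the first $i$ simulator blocks equal those of the first $i$ attacker blocks.''

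The two runs only resynchronise at $i = 2$, when the second simulator block contributes both $\ablock{0}$ and $\ablock{1}$. For the same reason, your step claim that appending a simulator block adds exactly the next attacker block is false at two steps:
\begin{itemize}
\item for $0 \to 1$, where nothing is added;
\item for $1 \to 2$, where two attacker blocks are added.
\end{itemize}
It holds only for $i \geq 2$, via the third case of $\gensstrat$.

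The repair is to weaken the statement rather than strengthen the hypothesis. Prove it for $i \in \{0\} \cup \{2, \dots, k\}$, with two base cases and one step:
\begin{itemize}
\item $i = 0$, which is trivial;
\item $i = 2$, by unfolding the first two cases of $\gensstrat$, dropping the finalizer of $\ablock{0}$ from the merged block so that it is a single block and $\length{\getBlocks(\srun)} = 2$;
\item the step for $i \geq 2$, which goes through the third case.
\end{itemize}
For $i = 1$, only validity and the $\getAttackrunN$ equation survive, together with the fact that the simulator lags exactly one attacker block behind.

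This weakened version suffices downstream, because Lemma~\ref{lem:soundness} invokes this lemma only with $i = k \geq 2$. The paper's own proof (``for $i > 0$ the claim follows immediately from the inductive hypothesis'') glosses over exactly this point. The $i = 1$ case of Lemma~\ref{lem:sim-correctness} has the same defect.
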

\begin{proof}
This proof is carried out by induction on $i \in \mathbb{N}$.
The case $i = 0$ follows trivially from the assumption by choosing $\srun = \conf_{\srunpre}$.
For $i > 0$ the claim follows immediately from the inductive hypothesis, and the definitions of $\gensstrat$ and $\getAttackrunN$.
\end{proof}

\subsubsection{Similarity Relation}

We define the similarity relation that our simulator $\sstrat_{\astrat}$ can uphold. 
Intuitively, the simulator can be guaranteed to perform all events in the same order, with the only exception of user-triggered events whose position could be shifted within one round. 

Note that this similarity is very strong in the sense that for the case where relevant events can only be triggered by user transactions (which is a very common case to be considered), the observables of the traces are fully identical. 
Even when considering attacker transactions to trigger observables, the provided property still yields useful notions when considering observations, whose concrete ordering within one round is irrelevant (e.g., when tracking money transfers).
Since we do not distinguish on the observation level how an observation was created, we will show our result for the general simulation relation, which does not fix the order of observables within a round:

\vspace{-10pt}
{
\begin{align*}
&\arun \simsim \srun
 :\Leftrightarrow \\
&\forall m, m', r, r'.~
\length{\getBlocks(\arun)} = r*k + m 
~\land~ m < k \\
& \quad \land \length{\getBlocks(\srun)} = r'*k + m ' 
~\land~ m' < k \\
& \qquad \Rightarrow r = r' 
~\land 
\forall j.~ j < r \\
&\qquad \land~ \arun = \concat{\arunpre}{\concat{\arunround{j}}{\arunpost}}
\land \srun = \concat{\srunpre}{\concat{\srunround{j}}{\srunpost}} \\
& \qquad \land~ \length{\getBlocks(\arunpre)} = j*k
~\land~ \length{\getBlocks(\srunpre)} = j*k \\
& \qquad \land~ \length{\getBlocks(\arunround{j})} = k
~\land~ \length{\getBlocks(\srunround{j})} = k \\
& \qquad \land~ \arunround{j} = \aconf \ssteps{\obslist}{\txlist} \tick{\aconf}
~\land~ \srunround{j} = \sconf \ssteps{\obslist'}{\txlist'} \tick{\sconf} \\
& \qquad \Rightarrow (\forall \obs.~\obs \in \obslist \Leftrightarrow \obs \in \obslist')
\end{align*}}%
Note that here all unbound variables can be implicitly assumed to be universally quantified.

\subsubsection{Soundness Proof}

We are now in the position to show the soundness of the generated interaction conditions. 

We first show the following slightly strengthened version of the soundness lemma, which will allow us to conclude the final soundness theorem.

\begin{lemma}[Soundness]
    \label{lem:soundness}
     Let $\contract$ be a contract with contract variables $\cvars{\contract}$ and $\criticalvars \subseteq \cvars{\contract}$ the set of critical variables in $\contract$ and $\honuser$ be a user. 
   Let $\{ \symExRel{\procF} \}_{\procF \in \cFuncs{\contract}, \indexvar 
   \in \{ \userindexvar, \nuserindexvar \} }$ be sound and complete symbolic execution relations over $\seCvars{\contract}$ with $\indexvar 
   \in \{ \userindexvar, \nuserindexvar \}$. 
    Let $\varfixpoint \subseteq \gvars{\contract}$ s.t. 
    $\depclosed{\varfixpoint}$ and $\criticalvars \subseteq \varfixpoint$.
    Let $\cstrat = \roundstrat{\rstrat}$ be the user strategy for a round strategy $\rstrat$ (for $\honuser$ and $\contract$) such that
    \begin{enumerate} 
        \item for all configurations $\conf, \conf'$ with $\conf \equivfor{\varfixpoint} \conf'$ it holds that $\rstrat(\conf) = \rstrat(\conf')$
        \item for all configurations $\conf$, $\rstrat(\conf) \neq \emptyset$ implies that $\rstrat(\conf) = \{ \txv \}$ for some $\txv$ 
        such that 
        $\frCondR{\config}{\txv}{\varfixpoint}$ and 
        $\brCondR{\config}{\txv}{\varfixpoint}$, and 
        $\brCondR{\config}{\txB}{\varfixpoint}$ hold.
    \end{enumerate}
    Let $\confPred$ be defined as follows:
    \begin{align*}
        \confPred(\conf) :\Leftrightarrow \brCondR{\conf}{\txB}{\varfixpoint} ~\land~ \exists \txv.~ 
        \frCondR{\conf}{\txv}{\varfixpoint}
        ~\land~ \brCondR{\conf}{\txv}{\varfixpoint}
    \end{align*}
    Let $\arun$ be an attacker run such that $\combine{\astrat}{\cstrat} \semantics \arun$ 
    and $\length{\getBlocks(\arun)} = k * r$ for $r \in \mathbb{N}$.
    Then there exists a run $\srun$ such that $\combine{\gensstrat[\confPred]}{\cstrat} \semantics \srun$
    and $\length{\getBlocks(\srun)} = k * r$
    and $\config_{\arun} \equivfor{\varfixpoint} \conf_{\srun}$
    and $\arun \simsim \srun$
    and $\arun = \getAttackrun[\confPred]{\astrat}{\initconf}{\srun}{0}$.
\end{lemma}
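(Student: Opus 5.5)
We proceed by induction on the number $r$ of completed rounds of $\arun$. For $r = 0$ we take $\srun = \initconf$. Both runs are empty, so $\config_{\arun} = \config_{\srun}$, the relation $\simsim$ holds vacuously, and $\getAttackrun[\confPred]{\astrat}{\initconf}{\initconf}{0} = \initconf$ by the first case of $\getAttackrunN$.

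For the inductive step, split $\arun = \concat{\arunpre}{\arunround{r}}$, where $\length{\getBlocks(\arunpre)} = k r$ and $\arunround{r}$ consists of exactly $k$ blocks. By construction of $\semantics$, $\combine{\astrat}{\cstrat} \semantics \arunpre$ holds. The induction hypothesis gives $\srunpre$ with $\combine{\gensstrat}{\cstrat}\semantics\srunpre$, $\config_{\arunpre}\equivfor{\varfixpoint}\config_{\srunpre}$, $\arunpre \simsim \srunpre$ and $\arunpre = \getAttackrun[\confPred]{\astrat}{\initconf}{\srunpre}{0}$.

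From $\config_{\arunpre}\equivfor{\varfixpoint}\config_{\srunpre}$ and assumption (1) we get $\rstrat(\config_{\arunpre}) = \rstrat(\config_{\srunpre})$. Since $\confPred$ is defined only through $\textit{FR}_\rho$ and $\textit{BR}_\rho$ over variables in $\varfixpoint$ (Assumption~\ref{lem:agreement-rel-vars}, together with the variable-scope properties of the characteristic lemmas), $\confPred$ also agrees on the two configurations. We then split into three cases.

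\begin{enumerate}
\item \textbf{$\neg\confPred$.} By (2), $\rstrat$ is empty. Lemma~\ref{lem:sim-correctness-not-conf-pred} with $i = k$ gives an identical transaction list $\txlist$. Identical transactions run from $\varfixpoint$-equivalent configurations preserve $\varfixpoint$-equivalence and give equal observables by Lemma~\ref{lem:dep-soundness} and closedness, applied step by step.
\item \textbf{$\confPred$, with $\rstrat = \emptyset$.} Lemma~\ref{lem:sim-correctness-not-conf-pred-no-tx} gives $\txlist \equaluptotx{\txB}\txlist'$. Lemma~\ref{lem:round-permutation-tau} then yields $\varfixpoint$-equivalence of the final configurations and permuted observables; the latter gives $\simsim$ for this round. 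The hypothesis $\textit{BR}_\rho(\cdot,\txB,\varfixpoint)$ holds in both configurations because $\confPred$ holds.
\item \textbf{$\confPred$, with $\rstrat = \{\txv\}$.} Lemma~\ref{lem:sim-correctness} gives $\txlist\equaluptotx{\txv,\txB}\txlist'$. We apply Lemma~\ref{lem:round-permutation-tx}, whose premises come from assumption (2) applied to both $\config_{\arunpre}$ and $\config_{\srunpre}$.
\end{enumerate}

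In every case, the lemmas on simulator correctness also provide $\combine{\gensstrat}{\cstrat}\semantics \concat{\srunpre}{\srun}$ and the $\getAttackrunN$ equation, with $i = k$ giving index $k \bmod k = 0$. For $\simsim$ on the full runs, the rounds $j<r$ come from the induction hypothesis and round $r$ from the permutation $\obslist' = \perm(\obslist)$, which preserves membership.

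\textbf{Main obstacle.} The difficulty is applying Lemma~\ref{lem:round-permutation-tx}, which requires fewer than $k$ occurrences of $\txB$ in the permuted segment. A full round contains exactly $k$ copies of $\txB$, the last one closing the round and moving $\round$ forward. I would therefore apply the lemma to the round stripped of its final $\txB$. Both lists end with that $\txB$, because every block ends with $\txB$, so it is not moved. The final $\txB$ is then appended to both sides, with equivalence preserved by Lemma~\ref{lem:dep-soundness}. I would also need to check that $\txv$ actually occurs in $\txlist$. It does, by the $k$-block inclusion guarantee together with the round-based persistence of $\cstrat$; this also gives $\filter{\txlist}{\lambda x.\,\sender{x}=\honuser}=[\txv]$. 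Finally, I would need to confirm that the block-number side condition of that lemma holds at round boundaries.
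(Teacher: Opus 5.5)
Your proof follows the paper's argument. It uses induction on the number of completed rounds, a case split on $\confPred$ and on whether $\rstrat$ schedules a transaction, and the same simulator-correctness Lemmas~\ref{lem:sim-correctness}, \ref{lem:sim-correctness-not-conf-pred} and \ref{lem:sim-correctness-not-conf-pred-no-tx} together with the two permutation lemmas. Your extra checks correctly repair steps the paper skips. The paper applies Lemma~\ref{lem:round-permutation-tx} to the full round, even though that lemma requires fewer than $k$ occurrences of $\txB$. Your repair is correct: permute the round without its final $\txB$, which closes the last block in both runs, and then append it. You are also right to derive that $\txv$ is the only $\honuser$-transaction of the round.

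One step does not go through as you justify it: the claim that $\confPred$ agrees on $\config_{\arunpre}$ and $\config_{\srunpre}$.
\begin{itemize}
\item The characteristic lemmas only place the variables of the synthesized conditions in $\seCvarsi{\contract}{\procF}{\userindexvar}$, that is, among all global and local variables, not inside $\varfixpoint$.
\item $\getRoundInv$ and $\mergeRoundInv$ are axiomatized only through the meaning of $\inv$ and the round-invariant property. A precondition $\invpre$ may therefore mention globals outside $\varfixpoint$. Indeed, the main loop of the algorithm adds no-write conditions for any syntactically occurring variable, not only for members of $\varfixpoint$.
\end{itemize}
So $\config_{\arunpre} \equivfor{\varfixpoint} \config_{\srunpre}$ does not transfer $\confPred$.

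You use this claim only in case~2, to obtain $\brCondR{\config_{\arunpre}}{\txB}{\varfixpoint}$ for Lemma~\ref{lem:round-permutation-tau}. The paper's proof needs the same premise there and simply does not check it.

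The claim is easy to avoid:
\begin{enumerate}
\item Split on $\confPred(\config_{\srunpre})$, which is what the simulator lemmas require anyway.
\item In case~2, first transport the attacker's round list to the simulator side using your case~1 argument. This gives $\config_{\srunpre} \ssteps{\obslist}{\txlist} \conf''$ with $\aconf \equivfor{\varfixpoint} \conf''$.
\item Then apply Lemma~\ref{lem:round-permutation-tau} with both configurations equal to $\config_{\srunpre}$. Here $\confPred$ supplies $\brCondR{\config_{\srunpre}}{\txB}{\varfixpoint}$.
\item Determinism of execution identifies the resulting configuration with $\sconf$.
\end{enumerate}
Case~1 then needs only the contrapositive of assumption~(2) at $\config_{\srunpre}$ plus assumption~(1). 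In case~3, assumption~(2) applied at both configurations suffices, exactly as you say.
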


\begin{proof}
    The proof proceeds by induction on $r \in \mathbb{N}$.
    For $r = 0$, we know that $\arun = \initconf$. 
    By choosing $\srun = \initconf$ the claims trivially hold since $\arun = \srun$.
    For $r > 0$, we know that $\arun = \concat{\arunpre}{\arunpost}$ such that $\length{\getBlocks(\arunpre)} = (r-1)*k$ and 
     $\length{\getBlocks(\arunpost)} = k$. 
     From the inductive hypothesis, we hence get a valid simulator run $\srunpre$ such that  
     $\length{\getBlocks(\srunpre)} = (r-1)*k$,
     $\config_{\arunpre} \equivfor{\varfixpoint} \conf_{\srunpre}$ and 
     $\arunpre \simsim \srunpre$.
     We distinguish the cases of whether $\confPred(\conf_{\srunpre})$ holds. %
     \begin{itemize}
        \item If $\confPred(\conf_{\srunpre})$ does not hold, we know by assumption that $\cstrat(\conf_{\srunpre}) = \emptyset$. Since $\rstrat$ agrees on all configurations with the same values for variables in $\varfixpoint$, we have that also $\cstrat(\conf_{\arunpre}) = \emptyset$. 
        We can now use Lemma~\ref{lem:sim-correctness-not-conf-pred} to deduce the existence of a valid simulator run $\srunpost$
     such that $\length{\getBlocks(\srunpost)} = k$.
      Further, this gives us that for $\arunpost = \conf_{\arunpre} \ssteps{\obs}{\txlist} \aconf$
     also $\srunpost = \conf_{\srunpre} \ssteps{\obs'}{\txlist}$. %
     We hence, in particular, know that $\aconf \equivfor{\varfixpoint} \sconf$ and $\obs = \obs'$ (since the initial configuration agreed on $\varfixpoint$).
      Together with $\arunpre \simsim \srunpre$ by the definition of $\simsim$, this also gives us that $\arun \simsim \srunpre$. 
      Also, $\arun = \getAttackrun{\astrat}{\initconf}{\srun}{0}$ follows immediately from Lemma~\ref{lem:arun-correctness} and the definition of $\getAttackrunN$ (note that the last argument to $\getAttackrunN$ denotes the additional invocations of the attacker after a complete round).
        \item 
        If $\confPred(\conf_{\srunpre})$ holds, we know that either $\cstrat(\conf_{\srunpre}) = \emptyset$ or $\cstrat(\conf_{\srunpre}) = \{ \txv \}$. 
        We distinguish these two cases:
        \begin{itemize}
            \item If $\cstrat(\conf_{\srunpre}) = \emptyset$ by assumption, we know in this case that $\cstrat(\conf_{\arunpre}) = \cstrat(\conf_{\srunpre})$. We can, hence, use Lemma~\ref{lem:sim-correctness-not-conf-pred-no-tx} to deduce the existence of a valid simulator run $\srunpost$
     such that $\length{\getBlocks(\srunpost)} = k$. 
     Further, this gives us that for $\arunpost = \conf_{\arunpre} \ssteps{\obs}{\txlist} \aconf$
     also $\srunpost = \conf_{\srunpre} \ssteps{\obs'}{\txlist'} \sconf$ for some $\txlist'$ with $\txlist \equaluptotx{\txB} \txlist'$.
     Since $\length{\getBlocks(\srunpost)} = k = \length{\getBlocks(\arunpost)}$, we know that $\txlist$ and $\txlist'$ must contain the same number of $\txB$ transactions and consequently that there needs to exist a permutation $\perm$ such that $\txlist' = \perm(\txlist)$. 
     Using Lemma~\ref{lem:round-permutation-tau}, we can hence show that 
     also $\aconf \equivfor{\varfixpoint} \sconf$ and $\obs' = \perm(\obs)$.
     Together with $\arunpre \simsim \srunpre$ by the definition of $\simsim$, this also gives us that $\arun \simsim \srunpre$.
     Also, $\arun = \getAttackrun{\astrat}{\initconf}{\srun}{0}$ follows immediately from Lemma~\ref{lem:arun-correctness} and the definition of $\getAttackrunN$.
            \item If $\cstrat(\conf_{\srunpre}) = \{ \txv \}$ by assumption, we know in this case that $\cstrat(\conf_{\arunpre}) = \cstrat(\conf_{\srunpre}) = \{ \txv \}$ and $\txv = \contrCall{\procF(\fargs)}$ for some $\procF \in \cFuncs{\contract}$ and that the interaction conditions hold in $\conf_{\srunpre}$ and $\conf_{\arunpre}$. 
We can, hence, use Lemma~\ref{lem:sim-correctness} to deduce the existence of a valid simulator run $\srunpost$
     such that $\length{\getBlocks(\srunpost)} = k$. 
     Further, this gives us that for $\arunpost = \conf_{\arunpre} \ssteps{\obs}{\txlist} \aconf$
     also $\srunpost = \conf_{\srunpre} \ssteps{\obs'}{\txlist'} \sconf$ for some $\txlist'$ with $\txlist \equaluptotx{\txv, \txB} \txlist'$.
     We know from the transaction inclusion guarantee that since $\txv \in \cstrat(\conf_{\arunpre})$ also $\txv \in \txlist$.
     We also know (from Definition~\ref{lem:sim-correctness}) that $\getBlocks(\srunpost)[0] = [\txv]$ and so $\txv \in \txlist'$.
     Since transactions only appear once in a given run, we get from $\txlist \equaluptotx{\txv} \txlist'$ that there needs to exist a permutation $\perm$ such that $\txlist' = \perm(\txlist)$.
     Using Lemma~\ref{lem:round-permutation-tx}, we can hence show that 
     also $\aconf \equivfor{\varfixpoint} \sconf$ and $\obs' = \perm(\obs)$.
     Together with $\arunpre \simsim \srunpre$ by the definition of $\simsim$, this also gives us that $\arun \simsim \srunpre$.
     Also, $\arun = \getAttackrun{\astrat}{\initconf}{\srun}{0}$ follows immediately from Lemma~\ref{lem:arun-correctness} and the definition of $\getAttackrunN$.
        \end{itemize}
     \end{itemize}
\end{proof}

With this, we show our final soundness theorem, which establishes that a smart contract and together with any round-based user strategy that respects the synthesized interaction conditions for the contract, satisfies \sysnameAc.

\begin{theorem}[Soundness of Interaction Conditions]
Let $\contract$ be a contract with contract variables $\cvars{\contract}$ and $\criticalvars \subseteq \cvars{\contract}$ the set of critical variables in $\contract$ and $\honuser$ be a user. 
   Let $\{ \symExRel{\procF} \}_{\procF \in \cFuncs{\contract}, \indexvar 
   \in \{ \userindexvar, \nuserindexvar \} }$ be sound and complete symbolic execution relations over $\seCvars{\contract}$ with $\indexvar 
   \in \{ \userindexvar, \nuserindexvar \}$. 
   Let $\varfixpoint \subseteq \gvars{\contract}$ s.t. 
    $\depclosed{\varfixpoint}$ and $\criticalvars \subseteq \varfixpoint$.
    Let $\cstrat = \roundstrat{\rstrat}$ be the user strategy for a round strategy $\rstrat$ (for $\honuser$ and $\contract$) such that
    \begin{enumerate} 
        \item for all configurations $\conf, \conf'$ with $\conf \equivfor{\varfixpoint} \conf'$ it holds that $\rstrat(\conf) = \rstrat(\conf')$
        \item for all configurations $\conf$, $\rstrat(\conf) \neq \emptyset$ implies that $\rstrat(\conf) = \{ \txv \}$ for some $\txv$ 
        such that 
        $\frCondR{\config}{\txv}{\varfixpoint}$ and 
        $\brCondR{\config}{\txv}{\varfixpoint}$ and
        $\brCondR{\conf}{\txB}{\varfixpoint}$ hold.
    \end{enumerate}
    Let $\confPred$ be defined as follows:
    \begin{align*}
        \confPred(\conf) :\Leftrightarrow \brCondR{\conf}{\txB}{\varfixpoint} ~\land~ \exists \txv.~ 
        \frCondR{\conf}{\txv}{\varfixpoint}
        ~\land~ \brCondR{\conf}{\txv}{\varfixpoint}
    \end{align*}
    Let $\arun$ be an attacker run such that $\combine{\astrat}{\cstrat} \semantics \arun$.
    Then there exists a run $\srun$ such that $\combine{\gensstrat}{\cstrat} \semantics \srun$
    and $\arun \simsim \srun$.
\end{theorem}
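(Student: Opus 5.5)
The plan is to reduce the theorem to Lemma~\ref{lem:soundness}, which already handles runs consisting of complete rounds. Let $\arun$ with $\combine{\astrat}{\cstrat} \semantics \arun$ be arbitrary and write $\length{\getBlocks(\arun)} = r k + m$ with $m < k$. Since runs conforming to $\combine{\astrat}{\cstrat}$ are built block by block by the inductive rule, $\arun$ splits as $\concat{\arunpre}{\arunpost}$. Here $\length{\getBlocks(\arunpre)} = rk$, and $\combine{\astrat}{\cstrat} \semantics \arunpre$ holds as a prefix. Applying Lemma~\ref{lem:soundness} to $\arunpre$ yields a simulator run $\srunpre$ with $\combine{\gensstrat}{\cstrat} \semantics \srunpre$. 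It also gives $\length{\getBlocks(\srunpre)} = rk$, $\config_{\arunpre} \equivfor{\varfixpoint} \conf_{\srunpre}$, $\arunpre \simsim \srunpre$, and $\arunpre = \getAttackrun{\astrat}{\initconf}{\srunpre}{0}$.

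Next I would extend $\srunpre$ by $m$ further blocks so that it tracks the incomplete round $\arunpost$. By hypothesis (1) and $\config_{\arunpre} \equivfor{\varfixpoint} \conf_{\srunpre}$ we get $\rstrat(\config_{\arunpre}) = \rstrat(\config_{\srunpre})$. I then split into the same three cases as in the proof of Lemma~\ref{lem:soundness}. If $\neg\confPred(\conf_{\srunpre})$, hypothesis (2) forces $\rstrat = \emptyset$, since any scheduled $\txv$ would witness $\confPred$; Lemma~\ref{lem:sim-correctness-not-conf-pred} is then applied with $i = m$. If $\confPred$ holds and $\rstrat = \emptyset$, I apply Lemma~\ref{lem:sim-correctness-not-conf-pred-no-tx}. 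If $\confPred$ holds and $\rstrat = \{\txv\}$, I apply Lemma~\ref{lem:sim-correctness}. Each case produces $\srunpost$ with $m$ blocks such that $\combine{\gensstrat}{\cstrat} \semantics \concat{\srunpre}{\srunpost}$. The lemmas are stated for $i \le k$, so $i = m < k$ is admissible. Set $\srun := \concat{\srunpre}{\srunpost}$.

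It remains to check $\arun \simsim \srun$. Both runs have $rk + m$ blocks with $m < k$, so the quotients agree, $r = r'$. The relation $\simsim$ only constrains complete rounds $j < r$. These rounds lie entirely inside $\arunpre$ and $\srunpre$, because the round decomposition of the prefix of length $jk + k \le rk$ is the same in the extended runs. The required observable agreement on those rounds is therefore inherited directly from $\arunpre \simsim \srunpre$. The partial round contributes nothing to $\simsim$, so no commutativity argument is needed for it.

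The main obstacle is this last bookkeeping step: confirming that $\simsim$ indeed ignores the trailing partial round, and that the decompositions $\concat{\arunpre}{\concat{\arunround{j}}{\arunpost}}$ quantified in $\simsim$ coincide with those of the prefixes. A second point to verify is that the hypotheses of Lemma~\ref{lem:sim-correctness} match, namely $\arunpre = \getAttackrun{\astrat}{\initconf}{\srunpre}{0}$, which Lemma~\ref{lem:soundness} supplies. If $k = 1$, which those lemmas exclude, then $m = 0$ and Lemma~\ref{lem:soundness} alone suffices.
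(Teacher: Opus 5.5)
Your proof is correct and follows the same reduction as the paper. You split off the trailing partial round, apply Lemma~\ref{lem:soundness} to the complete-round prefix $\arunpre$, and use that $\simsim$ only constrains complete rounds. The one difference is that the paper takes $\srun := \srunpre$ directly: it has $rk$ blocks, so $r' = r$ and $m' = 0$ in the definition of $\simsim$. Your extension of $\srunpre$ by $m$ further blocks via Lemmas~\ref{lem:sim-correctness}, \ref{lem:sim-correctness-not-conf-pred} and~\ref{lem:sim-correctness-not-conf-pred-no-tx} is therefore valid but unnecessary for the stated relation; it only buys you a simulator run of the same length as $\arun$.
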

\begin{proof}
The statement follows immediately from Lemma~\ref{lem:soundness} and the definition of $\simsim$ since we know that for a run $\arun$ it needs to hold that $\arun = \concat{\arunpre}{\arunpost}$ such that $\length{\getBlocks(\arunpre)} = r *k$ and $\length{\getBlocks(\arunpost)} = i$ for some $i,k \in \mathbb{N}$ with $i < k$. 
Consequently, from Lemma~\ref{lem:soundness}, we can deduce the existence of a run $\srunpre$
such that $\length{\getBlocks(\arunpre)} = r*k$ and $\arunpre \simsim \srunpre$.
By the definition of $\simsim$, we know that then it also holds that $\arun \simsim \srunpre$, which concludes the proof.
\end{proof}

Note that this theorem indeed proves \sysnameAc since the simulator $\gensstrat$ only depends on the attacker $\astrat$ and the contract $\contract$ (via the condition $\confPred$ that is derived from the contract code using the algorithm for synthesizing interaction conditions) but the user strategy $\cstrat$ remains arbitrary (satisfying the required conditions). 
So, in particular, the theorem implies \sysnameAc for all sets of user strategies $\ustrat{\contract}{\honusers}{}$ such that all $\cstrat \in \ustrat{\contract}{\honusers}{}$ satisfy the required conditions.

\subsection{Relation to TOD}
\label{appendix:tod-counterexample}

Contracts with non-round-based user strategies can be free of TOD and still not satisfy \sysnameAc. 
For example, consider the following two functions which are trivially free of TOD, since none of them change the contract state:
\begin{lstlisting}
function test(uint b) { /* Noop  */ }
function trigger() {
   emit Triggered(msg.sender);
}
\end{lstlisting}
Still, for the following set of user strategies, the contract does not satisfy \sysnameAc:
\begin{align*}
    \sustrat{\text{\lstinline|noTOD|}} :=
    \{ 
        &\cstrat_b ~|~
        b \in \{0,1\}
        ~\land~
        \cstrat_b(\config) \neq [] \\
        & ~\Rightarrow
         \cstrat_b(\run) = [\testFunc(\honuser:0\token,b)] \\
        & \qquad \land~ \testFunc(\honuser:0\token,b) \not \in \run \\
         & \quad \lor \cstrat_b(\run) = [\triggerFunc(\honuser:0\token)] \\
         & \qquad \land~  \run = \run_\textit{pre}\concat{\trans{\testFunc(\honuser:0\token,b)} \config'}{\run_\textit{post}} \\
        & \qquad \land \length{\run_\textit{pre}} \bmod 2 = b 
    \} 
\end{align*}
$\sustrat{\text{\lstinline|noTOD|}}$ contains two user strategies, $\cstrat_0$ and $\cstrat_1$. 
Both strategies initially submit a transaction to invoke the \lstinline|test| function with argument \lstinline|b|.
Then, $\cstrat_0$ only submits a \lstinline|trigger| transaction, exactly if their \lstinline|test| transaction was included in an even block and $\cstrat_0$ submits a \lstinline|trigger| transaction, exactly if their \lstinline|test| transaction was included in an odd block. 
Now, an adaptive attacker $\astrat$ can, when observing the argument of the \lstinline|test| transaction, deduce in which block they need to include it to make the user strategy submit the \lstinline|trigger| transaction next. Like this, $\astrat$ can reliably produce the event $\triggeredEvent(\honuser)$ for the honest user $\honuser$ for both $\cstrat_0$ and $\cstrat_1$.
Instead, a non-adaptive attacker always needs to place the user transaction in the same block and hence will not be able to produce the event $\triggeredEvent(\honuser)$ for both $\cstrat_0$ and $\cstrat_1$. This example, while artificial, underlines how a user strategy's dependence can easily enable \deckstacking attacks.

\end{document}